\newtheorem{define}{Definition}[chapter]
\newtheorem{statement}{Statement}[chapter]
\newcommand{\Kt}{K_{\text{tot}}}
\newcommand{\Ka}{K_a}
\newcommand{\br}[1]{\left(#1\right)}
\newcommand{\Fig}[1]{Fig.~\textup{\ref{#1}}}
\DeclareMathOperator*{\argmax}{arg\,max}
\DeclareMathOperator*{\argmin}{arg\,min}
\newcommand{\bY}{\mathbf{Y}}
\newcommand{\norm}[1]{\left\| #1 \right\|}
\newcommand{\Pb}[1]{\Pr\left[#1 \right]}
\newcommand{\Exp}[1]{\exp\left(#1 \right)}
\newcommand{\determ}[1]{\left|#1\right|}
\newcommand\cn[1]{\mathcal{CN}\left(#1\right)}
\newcommand\prob[1]{\Pr\left[#1\right]}
\newcommand\Span[1]{\text{span}\left\{#1\right\}}
\newcommand\rscalar[1]{\mathsf{#1}}
\newcommand\rvec[1]{\mathsf{\mathbf{#1}}}
\newcommand\rmat[1]{\bm{\mathsf{#1}}}
\newcommand\dvec[1]{\bm{#1}}
\newcommand\dset[1]{\mathcal{#1}}
\newcommand\vdset[1]{\bm{\mathcal{#1}}}
\newcommand\dmat[1]{\bm{#1}}
\newcommand\logdet[1]{\log\left|#1\right|}
\DeclareRobustCommand{\stirling}{\genfrac\{\}{0pt}{}}
\newcommand{\norms}[1]{\norm{#1}^2}
\newcommand{\sbr}[1]{\left[#1\right]}
\newcommand{\brs}[1]{\left[#1\right]}
\newcommand{\brc}[1]{\left\{#1\right\}}
\newcommand\ind[3]{ {#1}^{(#2)}_{#3} }
\newcommand\beg[2]{ {#1}_{[#2]}}
\newcommand{\Gens}{\mathcal{G}\br{M,n,P}}
\newcommand{\Gensp}{\mathcal{G}\br{M,n,P'}}
\newcommand{\Gensr}{\mathcal{G}\br{2^k,n',P}}
\newcommand{\Bens}{\mathcal{B}\br{M,n,P}}
\newcommand{\pM}{p_{m}} 
\newcommand{\pF}{p_{f}} 
\newcommand{\bx}{\mathbf{x}}
\newcommand{\code}{\mathcal{C}}
\newcommand{\rbz}{\rvec{z}}
\newcommand{\dbz}{\dvec{z}}
\newcommand{\rbcm}{\rvec{s}_{\dset{M}}}
\newcommand{\dbcm}{\dvec{s}_{\dset{M}}}
\newcommand{\rbcmp}{\rvec{s}_{\dset{M}'}}
\newcommand{\dbcmp}{\dvec{s}_{\dset{M}'}}
\newcommand{\rbcf}{\rvec{s}_{\dset{F}}}
\newcommand{\rbcfp}{\rvec{s}_{\dset{F}'}}
\newcommand{\rbcc}{\rvec{s}_{\dset{C}}}
\newcommand{\rbct}{\rvec{s}_{\dset{T}}}
\newcommand{\dbct}{\dvec{s}_{\dset{T}}}
\newcommand{\cov}{\mathrm{Cov}}
\newcommand{\bzero}{\dvec{0}}
\newcommand{\bI}{\dmat{I}}
\newcommand{\be}{\pmb{\eta}}
\newcommand{\bA}{\dmat{A}}
\newcommand{\bB}{\dmat{B}}
\newcommand{\rank}{\mathrm{rank}}
\newcommand{\supp}{\mathrm{supp}}
\newcommand{\wt}[1]{\mathrm{wt}_H\left\{#1\right\}}
\newcommand{\1}{\mathbbm{1}}
\algnewcommand\algorithmicinput{\textbf{Input:}}
\algnewcommand\INPUT{\item[\algorithmicinput]}
\algnewcommand\algorithmicoutput{\textbf{Output:}}
\algnewcommand\OUTPUT{\item[\algorithmicoutput]}
\algnewcommand{\LineComment}[1]{\State \(\triangleright\) #1}
\def\PgfColorA{0, 0, 0}
\def\PgfColorB{27,158,119}
\def\PgfColorC{217,95,2}
\def\PgfColorD{117,112,179}
\def\PgfColorE{231,41,138}
\def\PgfColorF{102,166,30}
\def\PgfColorG{230,171,2}
\def\PgfColorH{166,118,29}
\definecolor{MainColorA}{RGB}{\PgfColorA}
\definecolor{MainColorB}{RGB}{\PgfColorB}
\definecolor{MainColorC}{RGB}{\PgfColorC}
\definecolor{MainColorD}{RGB}{\PgfColorD}
\definecolor{MainColorE}{RGB}{\PgfColorE}
\definecolor{MainColorF}{RGB}{\PgfColorF}
\definecolor{MainColorG}{RGB}{\PgfColorG}
\definecolor{MainColorH}{RGB}{\PgfColorH}
\newtheorem{corollary}{Corollary}[chapter]
\newtheorem{definition}{Definition}[chapter]
\newtheorem{example}{Example}[chapter]
\newtheorem{lemma}{Lemma}[chapter]
\newtheorem{proposition}{Proposition}[chapter]
\newtheorem{remark}{Remark}[chapter]
\newtheorem{theorem}{Theorem}[chapter]
\title{\Huge \sf Unsourced Random Access \vspace{1in}}
\author{
\begin{tabular}{ll}
{\sf Kirill Andreev} & \href{mailto:k.andreev@skoltech.ru}{k.andreev@skoltech.ru} \\[2mm]
{\sf Pavel Rybin},   & \href{mailto:p.rybin@skoltech.ru}{p.rybin@skoltech.ru}     \\[2mm]
{\sf Alexey Frolov}, & \href{mailto:al.frolov@skoltech.ru}{al.frolov@skoltech.ru}
\end{tabular}
}
\date{\sf Skolkovo Institute of Science and Technology}
\DeclareAcronym{3gpp}{
short=3GPP,
long=Third-Generation Partnership Project
}
\DeclareAcronym{5gnr}{
short=5G NR,
long=Fifth-generation New Radio
}
\DeclareAcronym{ack}{
short=ACK,
long=acknowledgment
}
\DeclareAcronym{amp}{
short=AMP,
long=approximate message passing
}
\DeclareAcronym{aoa}{
short=AoA,
long=angle-of-arrival
}
\DeclareAcronym{arq}{
short=ARQ,
long=Automatic Repeat reQuest
}
\DeclareAcronym{asr}{
short=ASR,
long=approximate support recovery
}
\DeclareAcronym{awgn}{
short=AWGN,
long=additive white Gaussian noise
}
\DeclareAcronym{bac}{
short=BAC,
long=binary adder channel
}
\DeclareAcronym{bch}{
short=BCH,
long=Bose-Chaudhuri-Hocquenghem
}
\DeclareAcronym{bp}{
short=BP,
long=belief propagation
}
\DeclareAcronym{bpsk}{
short=BPSK,
long=binary phase-shift keying
}
\DeclareAcronym{bs}{
short=BS,
long=base station
}
\DeclareAcronym{ccs}{
short=CCS,
long=coded compressed sensing
}
\DeclareAcronym{cdma}{
short=CDMA,
long=code division multiple-access
}
\DeclareAcronym{cnop}{
short=CNOP,
long=check-node operation
}
\DeclareAcronym{crc}{
short=CRC,
long=cyclic redundancy check
}
\DeclareAcronym{crdsa}{
short=CRDSA,
long=contention resolution diversity slotted ALOHA
}
\DeclareAcronym{cs}{
short=CS,
long=compressed sensing
}
\DeclareAcronym{csi}{
short=CSI,
long=channel state information
}
\DeclareAcronym{csma}{
short=CSMA,
long=carrier-sense multiple-access
}
\DeclareAcronym{csmaca}{
short=CSMA/CA,
long=carrier-sense multiple-access with collision avoidance
}
\DeclareAcronym{css}{
short=CSS,
long=chirp spread spectrum
}
\DeclareAcronym{dbpsk}{
short=DBPSK,
long=differential binary phase-shift keying
}
\DeclareAcronym{de}{
short=DE,
long=density evolution
}
\DeclareAcronym{dft}{
short=DFT,
long=discrete Fourier transform
}
\DeclareAcronym{dsss}{
short=DSSS,
long=direct-sequence spread spectrum
}
\DeclareAcronym{far}{
short=FAR,
long=false alarm rate
}
\DeclareAcronym{fbl}{
short=FBL,
long=finite blocklength
}
\DeclareAcronym{fcfs}{
short=FCFS,
long={first-come, first-serve}
}
\DeclareAcronym{fdma}{
short=FDMA,
long=frequency division multiple-access
}
\DeclareAcronym{fft}{
short=FFT,
long=fast Fourier transform
}
\DeclareAcronym{ft}{
short=FT,
long=Fourier transform
}
\DeclareAcronym{gmac}{
short=GMAC,
long=Gaussian MAC
}
\DeclareAcronym{gura}{
short=G-URA,
long=URA over Gaussian MAC
}
\DeclareAcronym{idma}{
short=IDMA,
long=interleave division multiple-access
}
\DeclareAcronym{iot}{
short=IoT,
long=Internet of Things
}
\DeclareAcronym{irsa}{
short=IRSA,
long=irregular repetition slotted ALOHA
}
\DeclareAcronym{irsab}{
short=IRSA-B,
long={basic IRSA protocol}
}
\DeclareAcronym{irsaf}{
short=IRSA-F,
long={IRSA with per-frame preambles}
}
\DeclareAcronym{irsas}{
short=IRSA-S,
long={IRSA with per-slot preambles}
}
\DeclareAcronym{jsc}{
short=JSC,
long=joint successive cancellation
}
\DeclareAcronym{ldpc}{
short=LDPC,
long=low-density parity check
}
\DeclareAcronym{lds}{
short=LDS,
long=low-density signature
}
\DeclareAcronym{llr}{
short=LLR,
long=log-likelihood ratio
}
\DeclareAcronym{lmmse}{
short=LMMSE,
long=linear MMSE
}
\DeclareAcronym{lpwan}{
short=LPWAN,
long=low-power wide area network
}
\DeclareAcronym{mac}{
short=MAC,
long=multiple-access channel
}
\DeclareAcronym{mf}{
short=MF,
long=matched filter
}
\DeclareAcronym{mimo}{
short=MIMO,
long={multiple input, multiple output}
}
\DeclareAcronym{ml}{
short=ML,
long=maximum likelihood
}
\DeclareAcronym{mmse}{
short=MMSE,
long=minimum mean squared error
}
\DeclareAcronym{mmv}{
short=MMV,
long=multiple measurement vector
}
\DeclareAcronym{mpa}{
short=MPA,
long=message passing algorithm
}
\DeclareAcronym{mse}{
short=MSE,
long=mean squared error
}
\DeclareAcronym{mtc}{
short=MTC,
long=machine-type communications
}
\DeclareAcronym{mud}{
short=MUD,
long=multi-user detector
}
\DeclareAcronym{musa}{
short=MUSA,
long=multi-user shared access
}
\DeclareAcronym{nnls}{
short=NNLS,
long=non-negative least squares
}
\DeclareAcronym{noma}{
short=NOMA,
long=non-orthogonal multiple-access
}
\DeclareAcronym{odma}{
short=ODMA,
long=on-off division multiple access
}
\DeclareAcronym{ofdm}{
short=OFDM,
long=orthogonal frequency-division multiplexing
}
\DeclareAcronym{ofdma}{
short=OFDMA,
long=orthogonal frequency-division multiple-access
}
\DeclareAcronym{omp}{
short=OMP,
long=orthogonal matching pursuit
}
\DeclareAcronym{pan}{
short=PAN,
long=personal-area network
}
\DeclareAcronym{pexit}{
short=PEXIT,
long=protograph extrinsic information transfer
}
\DeclareAcronym{po}{
short=PO,
long=physical uplink shared channel occasion
}
\DeclareAcronym{pupe}{
short=PUPE,
long=per-user probability of error
}
\DeclareAcronym{pusch}{
short=PUSCH,
long=physical uplink shared channel
}
\DeclareAcronym{qos}{
short=QoS,
long=quality of service
}
\DeclareAcronym{qpsk}{
short=QPSK,
long=quadrature phase-shift keying
}
\DeclareAcronym{ra}{
short=RA,
long=random access
}
\DeclareAcronym{rach}{
short=RACH,
long=random access channel
}
\DeclareAcronym{rcb}{
short=RCB,
long=random coding bound
}
\DeclareAcronym{rfid}{
short=RFID,
long=radio-frequency identification
}
\DeclareAcronym{rip}{
short=RIP,
long=restricted isometry property
}
\DeclareAcronym{rm}{
short=RM,
long=Reed-Muller
}
\DeclareAcronym{rs}{
short=RS,
long=Reed-Solomon
}
\DeclareAcronym{rsma}{
short=RSMA,
long=rate-splitting multiple-access
}
\DeclareAcronym{sa}{
short=SA,
long=slotted ALOHA
}
\DeclareAcronym{sc}{
short=SC,
long=successive cancellation
}
\DeclareAcronym{scl}{
short=SCL,
long=successive cancellation list
}
\DeclareAcronym{scma}{
short=SCMA,
long=sparse-coded multiple-access
}
\DeclareAcronym{sf}{
short=SF,
long=spreading factor
}
\DeclareAcronym{sic}{
short=SIC,
long=successive interference cancellation
}
\DeclareAcronym{snr}{
short=SNR,
long=signal-to-noise ratio
}
\DeclareAcronym{sinr}{
short=SINR,
long=signal-to-interference-plus-noise ratio
}
\DeclareAcronym{soic}{
short=SoIC,
long=soft SIC
}
\DeclareAcronym{sparcs}{
short=SPARCs,
long=sparse regression codes
}
\DeclareAcronym{spc}{
short=SPC,
long=single-parity-check
}
\DeclareAcronym{tdma}{
short=TDMA,
long=time-division multiple-access,
}
\DeclareAcronym{tin}{
short=TIN,
long=treat interference as noise
}
\DeclareAcronym{tinsic}{
short=TIN-SIC,
long=TIN followed by SIC
}
\DeclareAcronym{ue}{
short=UE,
long=user equipment
}
\DeclareAcronym{ura}{
short=URA,
long=unsourced random access
}
\DeclareAcronym{vnop}{
short=VNOP,
long=variable-node operation
}
\begin{document}

\maketitle
\tableofcontents

\newpage

\printacronyms
\chapter*{Abstract}
Current wireless networks are designed to optimize spectral efficiency for human users, who typically require sustained connections for high-data-rate applications like file transfers and video streaming. However, these networks are increasingly inadequate for the emerging era of \ac{mtc}. With a vast number of devices exhibiting sporadic traffic patterns consisting of short packets, the grant-based multiple access procedures utilized by existing networks lead to significant delays and inefficiencies. To address this issue the \ac{ura} paradigm has been proposed. This paradigm assumes the devices to share a common encoder thus simplifying the reception process by eliminating the identification procedure. The \ac{ura} paradigm not only addresses the computational challenges but it also considers the \ac{ra} as a coding problem, i.e., takes into account both medium access protocols and physical layer effects. In this monograph we provide a comprehensive overview of the \ac{ura} problem in noisy channels, with the main task being to explain the major ideas rather than to list all existing solutions.

\chapter*{Notation}
Throughout the monograph we use the following notations and abbreviations:

\begin{longtable}{p{.15\textwidth}  p{.75\textwidth} } 
$x$, $X$ & deterministic scalar value \\
$\dvec{x}$ & deterministic column-vector \\
$\dmat{X}$ & deterministic matrix \\
$\dmat{I}_n$ & $n \times n$ identity matrix \\
$\mathrm{diag}$ & diagonal matrix \\
$\norm{\dvec{x}}_2$ & Euclidean norm of vector $\dvec{x}$ \\
$\mathrm{supp}\br{\dvec{x}}$ & support of vector $\dvec{x}$ \\
$\mathrm{wt}  \br{\dvec{x}}$ & the number of nonzero components in vector $\dvec{x}$ \\
$\mathbb{N}$ & set of natural numbers  \\
$\mathbb{C}$ & set of complex numbers  \\
$\mathbb{F}_q$ & finite field with $q$ elements \\
$\dset{X}$ & set \\
$\bm{\mathcal{X}}$ & sequence of sets \\
$\bigsqcup$ & disjoint union \\
$[n]$ & $[n] = \{1,\dots,n\}$, where $n \in \mathbb{N}$ \\
$\dvec{a}_\dset{I}$ & $\dvec{a}_\dset{I} = (a_{i_1},\ldots,a_{i_s})$, where $\dvec{a} = \br{a_1, \ldots, a_n}$ and $\dset{I} = \{i_1, \ldots, i_s\} \subseteq [n]$ with $i_1 < \ldots < i_s$ \\
$\dmat{A}_\dset{I}$ & $\dmat{A}_\dset{I} = \br{\dvec{a}_{i_1},\ldots,\dvec{a}_{i_s}}$, where $\dmat{A} = \br{\dvec{a}_1, \ldots, \dvec{a}_n}$ and $\dset{I} = \{i_1, \ldots, i_s\} \subseteq [n]$ with $i_1 < \ldots < i_s$ \\
$\cn{\bm{0},\dmat{I}_n}$ & circularly symmetric complex standard normal distribution \\
$\mathrm{Bern}(p)$ & Bernoulli distribution with parameter $p$ \\
$\mathrm{Unif}([Q])$ & uniform distribution on $[Q]$\\
$\rscalar{x}$, $\rscalar{X}$ & random scalar value \\
$\bm{\mathsf{x}}$ & random column-vector \\
$\rmat{X}$ & random matrix \\
$E$ & event \\
$E^c$ & complementary event to $E$ \\
$\1_{E}$ & indicator of the event $E$ \\
$\prob{E}$ & probability of the event $E$  \\
$\mathbb{E}$ & expectation operator  \\
$H(\rscalar{x})$ & entropy of discrete random variable $\rscalar{x}$  \\
$I(\rscalar{x}, \rscalar{y})$ & mutual information of random variables $\rscalar{x}$ and $\rscalar{y}$ \\
$h(p)$ & $h(p)=-p\log_2(p)-(1-p)\log_2(1-p)$, $0\leq p\leq 1$, where $h(1)=h(0)=0$ \\
$\tau(x)$ & \ac{bpsk} modulation $\tau\br{x} = \br{1-2x}\sqrt{P}$ \\
w.l.o.g. & ``without loss of generality'' \\
r.v. & ``random variable'' \\
i.i.d. & ``independent identically distributed'' \\
p.m.f. & ``probability mass function'' \\
\end{longtable}
\chapter{Massive \acl{mtc}}\label{label:chap1}

\Acf{mtc} dramatically change traffic patterns. Instead of focusing on peak data rates and low latencies, massive connectivity becomes a key requirement. The \ac{mtc} concept involves a massive number of autonomous devices and sensors being connected to a gateway: a node (or a set of nodes) responsible for data collection. \Ac{mtc} is a crucial component of the \ac{iot} paradigm, which defines the infrastructure and scenarios for interconnecting devices rather than humans. \Ac{iot} encompasses various tasks, including monitoring, remote and automated control, data collection, and data-related services. \ac{mtc} is a communication technology specifically designed to support this paradigm, enabling connectivity for a vast number of devices. In this monograph, we focus on the communication aspects and use the terms \ac{iot} and \ac{mtc} interchangeably.

\Ac{iot} applications encompass a wide range of use cases, including:
\begin{itemize}
\item Environmental and health monitoring.
\item Smart homes, cities, and industries.
\item Road traffic monitoring and tracking to improve efficiency and safety.
\end{itemize}

Typical \ac{mtc} transmissions involve short measurement reports generated either regularly or sporadically, resulting in additional requirements.

\begin{figure}[t]
\centering
\includegraphics{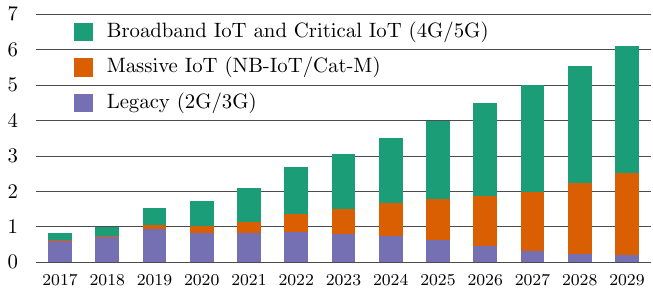}
\caption{Predicted number of worldwide \ac{iot} cellular subscriptions (billions) in accordance with Ericsson mobility report~\cite{Ericsson202311} (Nov. 2023).\label{fig:iot_growth}}
\end{figure}

\begin{enumerate}
\item Improved \emph{battery life} is essential. Wiring a large number of devices to the electricity grid would require expensive cabling, making it preferable for these devices to be autonomous. Moreover, monitoring the battery status of thousands (or even millions) of devices may also be prohibitively costly. Therefore, the battery lifetime must match the device lifetime (approximately 10 years). As a result, energy-efficient solutions must utilize simple radio-frequency devices with low-complexity signal processing algorithms.
\item There is a need for \emph{improved coverage}. Many sensors may be located in so-called deep indoor environments (e.g., building basements), leading to significant signal loss between the transmitter and receiver.
\item \emph{Low cost} is a critical factor. The challenge of low cost is twofold:
\begin{itemize}
\item since \ac{iot} devices generate only small amounts of data, subscription fees should be much lower compared to those for ordinary smartphones, minimizing operational expenses;
\item the massive deployment of \ac{iot} devices necessitates a low cost per device, reducing capital expenses.
\end{itemize}
\end{enumerate}

According to Ericsson's forecast~\cite{Ericsson202311}, the number of cellular \ac{iot} devices (see~Fig.~\ref{fig:iot_growth}) is expected to reach 6.1 billion by 2029, while the total number of connected devices across all \ac{iot} technologies is projected to reach approximately 39 billion. Currently, the \ac{iot} industry exhibits a compound annual growth rate of approximately $16\%$. Given this rapid growth, standardization plays a crucial role in ensuring sustainable \ac{iot} development.

\section{\acl{iot} standardization}

There are various standards for \ac{iot}, each fulfilling different requirements specified above. We distinguish three main branches of \ac{iot} technologies: short-range, wide-area, and cellular. 

Short-range \ac{iot} encompasses a variety of technologies, including \ac{rfid} and \acp{pan} such as Bluetooth and Zigbee. These technologies typically operate in unlicensed spectrum and within a very short range. Massive connectivity in this case is limited by the small coverage area (on the order of several meters) of the corresponding radio devices~\cite{Gratton2013}.

In contrast, wide-area technologies have the potential to connect millions of autonomous devices to a single \ac{bs}. In this short overview, we focus on Sigfox and LoRa technologies, which are described in Section~\ref{sec:iot_wan}.

\subsection{Short-range \acs{iot}}

Short-range, or \ac{pan}, provides a communication environment for various \ac{iot} applications, such as communication between wearable devices, short-range location tags, home controllers, and more. A typical network consists of at most a few dozen devices, making communication between them relatively simple in terms of channel access. The low communication range is beneficial for energy efficiency and security.

An extreme example of energy efficiency (particularly regarding remote device battery life) is \ac{rfid} technology, where the energy required to transmit data is induced by an interrogation pulse from a nearby reader device. The short communication range also prevents signals from being detected over large distances between the transmitter and receiver, which significantly simplifies security protocols.

Due to the absence of massive connectivity issues and the wide variety of \ac{pan} technologies, we will
not consider them in the remainder of this monograph.

\subsection{Wide-area \acs{iot}}\label{sec:iot_wan}

To fulfill the requirements described above, several solutions are commonly employed in wide-area \ac{iot} systems -- \acp{lpwan}. To improve system range, narrowband signals are typically utilized. Since thermal noise power is proportional to the processed bandwidth, narrowband signals can tolerate a greater link loss for the same transmitter power and hence enable coverage of a wider area. Additionally, due to the extended communication distances, narrowband signals benefit from the reduced frequency selectivity of the wireless channel. Consequently, there is no need for complex multi-carrier modulations or high-complexity signal processing algorithms at the transmitter.

To simplify remote devices, wide-area \ac{iot} communication systems often have limited or even completely absent downlink functionality. This limitation reduces the ability to coordinate transmitting devices and typically eliminates the possibility of employing \ac{arq} mechanisms. Additionally, the use of \ac{arq} in scenarios involving sporadic data transmission by a massive number of devices can overwhelm the control channel.

To address this issue, transmissions can be repeated multiple times, potentially at different frequencies, to exploit both frequency and time diversity, assuming the retransmission delay exceeds the channel coherence time. Further transmitter simplifications often include the use of constant-envelope modulation formats, which do not require expensive or power-inefficient signal amplifiers.

At the receiver, additional solutions are applied to enhance performance. The simplest way to increase diversity is through \emph{multiple receptions}. When multiple nodes collect the transmitted data, many can detect, demodulate, and decode the message, thereby enhancing \emph{spatial diversity}.

These typical solutions are implemented in Sigfox and LoRa, the most popular wide-area \ac{iot} technologies, which are described below. Both approaches rely on distributed resource coordination mechanisms\footnote{These mechanisms are often referred to as medium access control (MAC). In this monograph, however, we interpret the abbreviation \acs{mac} as \acl{mac}.} based on ALOHA and \ac{csmaca}, rather than the centralized coordination employed in cellular systems.

\subsubsection{Sigfox -- an ultra-narrowband \acs{iot} technology}

The main feature of Sigfox is its extremely narrow transmission bandwidth of just $100$ Hz. This narrow bandwidth significantly reduces in-band thermal noise to very low levels (approximately $-154$ dBm), which is a key enabler of its long-range communication. Downlink functionality is extremely limited, with a complete absence of synchronization, coordination, and \ac{arq} mechanisms.

To send a packet, a device selects a transmission frequency within a $192$ kHz band and transmits the packet, followed by two replicas at different randomly selected frequencies to enhance diversity. Multiple reception is achieved by deploying multiple \ac{bs}s, which continuously scan the entire $192$ kHz bandwidth to detect uplink messages.

The transmitted packets are exceptionally short. Each packet consists of:
\begin{itemize}
\item a $4$-byte preamble,
\item a $2$-byte frame-synchronization sequence,
\item a $4$-byte device identifier,
\item a payload of up to $12$ bytes,
\item a variable-length hash code for packet authentication within the Sigfox network, and
\item a $2$-byte \ac{crc}.
\end{itemize}
Uplink packets are typically modulated using \ac{dbpsk}. Differential modulation is chosen to allow for non-coherent detection. The combination of simple modulation and coding, along with a very low sampling rate, results in a highly cost-effective solution. Additionally, the high link budget enables a reduction in transmit power, thereby fulfilling battery efficiency requirements with ease.

Downlink messages (if configured) are triggered by a transmitting device in the form of a callback. The \ac{bs}'s response has a fixed delay and is transmitted at the reception frequency of the request plus a predefined frequency shift. The payload size for a downlink message is fixed at $8$ bytes.

\subsubsection{LoRa}

The LoRa (Long Range) protocol is another \ac{iot} alternative. Similar to the previously described ultra-narrowband solutions, this communication standard assumes a \ac{bs} and end devices connected to it, managed by a simple medium access protocol. LoRa supports wider communication bandwidths ($125$ or $500$ kHz) based on the \ac{css} technique, which utilizes linear frequency modulation.

Let $B$ denote the total transmission bandwidth at a carrier frequency $f_c$, and let $T$ represent the symbol duration. The instantaneous frequency $f(t), \quad t \in [0, T]$, of the \ac{css} signal changes linearly with a rate of $\nicefrac{B}{T}$, wrapping around when it reaches the edges of the transmitted bandwidth:
$$
f(t) = f_c - \frac{B}{2} + \left\{ \frac{B}{T}t + \frac{B}{M}i\right\} \, \text{mod} \, B,
$$
where $i \in [M]$ is the transmitted message index, and the transmission rate is $\log_2 M$ bits per symbol. Different values of $M$ correspond to different \ac{sf} values~\cite{LoraTutorial2024}.

Different \ac{sf}s are supported by LoRa, and signals corresponding to different \ac{sf}s are almost orthogonal~\cite{LoraTutorial2024}. This property reduces interference between transmissions with different \ac{sf}s, thereby improving the communication range. The medium access mechanism also allows for downlink transmissions. The LoRa standard supports three different classes of devices.

Devices of the first class (A) behave similarly to Sigfox: they send data as it becomes available, and downlink messages can be sent during \emph{receive windows}, whose timings are configurable. After an uplink transmission, the device waits for the start of a downlink message within two receive windows. If no downlink message is detected, the device enters sleep mode. For this type of device, the sender spends most of its time in sleep mode, enabling long battery life. Class-A functionality is basic and must be supported by all devices.

Devices of the second class (B) extend the downlink functionality by opening periodic receive windows (or ping slots). To manage this functionality, periodic beacons are sent by the network to maintain synchronization.

Finally, devices of the third class (C) enhance the capabilities of Class-A devices by keeping the receive windows open unless transmitting an uplink message. As a result, Class-C devices can receive downlink messages almost any time, offering very low latency for downlink transmissions.

Typical use cases for Class-A devices include sensors that periodically report measurements or send data triggered by an alarm event. Class-B devices are useful for applications requiring measurements on request, while Class-C devices are ideal for remote control mechanisms powered by a continuous power source.

\subsection{Cellular \acs{iot}}\label{sec:cellular_iot_overview}

Cellular systems are highly attractive for massive deployments due to their wide coverage, straightforward subscription procedures, operation over licensed spectrum with effective interference management, and robust security protocols.

However, cellular systems employ centralized coordination algorithms (as opposed to distributed \ac{csmaca}), which are better suited for high data rates among a fixed and relatively small number of active users within the coverage area of a single \ac{bs}.

Cellular networks did not support machine-type devices prior to \ac{3gpp} Release 12. Earlier releases assumed that any device connecting to the \ac{bs} would support the full bandwidth of $20$ MHz. This large bandwidth requirement was not suitable for achieving the long battery life needed by \ac{iot} devices.

\subsubsection{Different device types}

The \ac{3gpp} standards define various categories of devices based on their capabilities. These categories are represented by numbers, where higher numbers indicate devices that support higher peak uplink and downlink rates, a greater number of supported antennas, and so on. However, these categories primarily pertain to devices operated by humans, while the requirements for \ac{iot} devices can differ significantly.

\ac{3gpp} Release 13 introduced the Cat-M category (with ``M'' denoting \ac{mtc}). This user equipment category was the first narrowband device type, supporting a bandwidth of $6$ resource blocks ($1.08$ MHz). This new device type required novel approach to control channel design.

Further optimization for \ac{mtc} devices was achieved in Release 13 with the introduction of the NB (narrowband) device category. Devices in this category reduced the total supported bandwidth to $200$ kHz. Additionally, optimizations enabled narrowband transmissions with bandwidths reduced to a single subcarrier, referred to as single-tone transmission. The subcarrier bandwidth for these transmissions is either $15$ kHz or $3.75$ kHz.

The \ac{5gnr} standard introduced broadband \ac{iot}, allowing sensing devices to transmit larger amounts of data. In Release 17, the RedCap (Reduced Capabilities) network type was introduced. The RedCap standard aims to support all industrial applications by enabling broadband communication services for machine-type devices. This standard assumes the utilization of up to $20$ MHz bandwidth in the frequency range below $6$ GHz.

\subsubsection{\Acl{ra} procedures in cellular systems}

To initiate a connection with a cellular network, each device must proceed with \acf{ra} procedure. The introduction of \ac{iot} devices and their massive deployments could overload the control channel. To address this issue, a new RA procedure should be developed, aiming to reduce the overall communication overhead.

As specified in \ac{3gpp} TS 138.321, the original RA procedure follows a four-way handshake, as illustrated in~\Fig{fig:3pp_ra_timing}\acuse{bs}\acuse{ue}. This handshake consists of the following phases:
\begin{enumerate}
\item \ac{ra} through preamble transmission to identify users.
\item Resource allocation provided by the \ac{bs}.
\item Data transmission using orthogonal resources assigned to the identified users.
\item Final \acf{ack}. 
\end{enumerate}

The described above procedure separates the \ac{ra} phase from data transmission. Starting with \ac{3gpp} Release 16, this procedure was simplified, requiring only a two-way handshake. In this updated procedure, the preamble transmission also announces the resources to be used for data transmission, which follows immediately. Users select preambles from a predefined orthogonal set. Data is then transmitted during specified positions of the \ac{po}s. If the \ac{bs} successfully receives the data, it sends an \acl{ack}. Otherwise, the traditional four-way handshake is performed. This transmission scheme is depicted \acuse{rach}\acuse{pusch} in~\Fig{fig:2step_ra_tx} (see the detailed description in~\cite{agostini2024evolution5gnewradio}).

\section{Challenges for the next-generation cellular systems}

\begin{figure}
\centering
\includegraphics{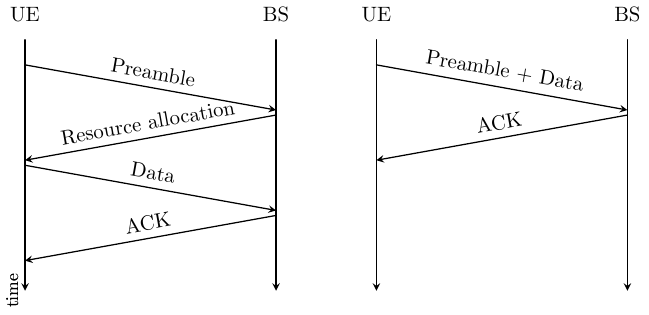}
\caption{Four-step \ac{ra} (left) and two-step \ac{ra} procedures specified in \ac{3gpp} TS 138.321. A time diagram corresponds to message exchange between \acf{ue} and \acf{bs}.\label{fig:3pp_ra_timing}}
\end{figure}
\begin{figure}
\begin{center}
\includegraphics{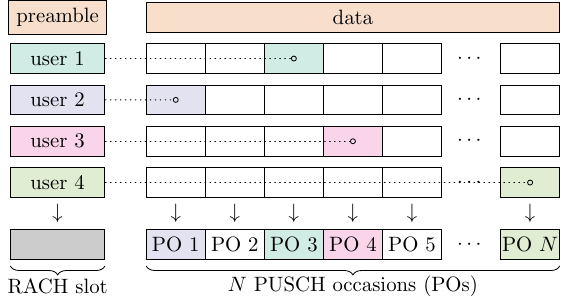}
\caption{Two-step \ac{ra} procedure with data transmission.\label{fig:2step_ra_tx}}
\end{center}
\end{figure}
\begin{figure}
\centering
\includegraphics{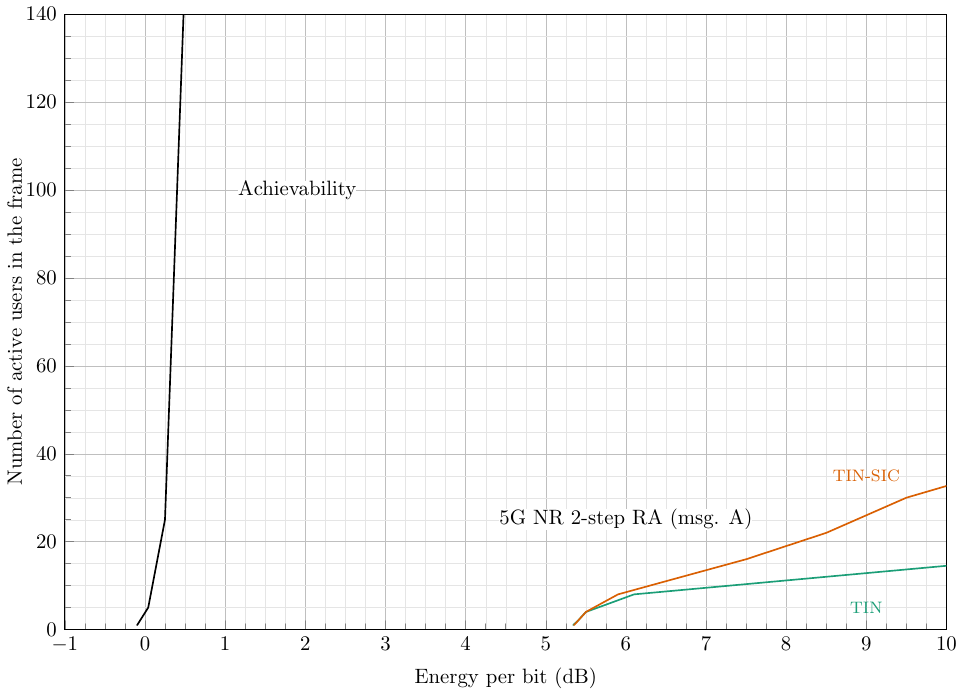}
\caption{Energy efficiency of the proposed 2-step \ac{ra} procedure in \ac{5gnr} versus achievability bound on energy efficiency~\cite{polyanskiy2017perspective}. \acs{awgn} channel model, reference signals are not considered~\cite{agostini2024evolution5gnewradio}.}
\label{fig:chap1:num_results}
\end{figure}
\begin{table}
\caption{Simulation parameters for numerical comparison presented in~\Fig{fig:chap1:num_results}}
\label{tab:chap_1:sim_params}
\centering
\begin{tabular}{c|c}
Parameter & Value \\
\hline
Preamble length & $2\times 139$ (A1 configuration) \\
Error-correcting code & $(500, 100)$ \acs{ldpc} (\ac{5gnr} base graph 2) \\
Modulation & \Acf{qpsk} \\
Decoding algorithm & \acs{tin} / \acs{tinsic} \\
Pilot configuration & Pilot-free \\
Number of \acp{po} & $64$ \\
Overall frame length & $16278$ channel uses
\end{tabular}
\end{table}

The key challenge for the next generation of radio-access networks is managing the massive number of infrequently communicating sensors. Current solutions are inadequate due to their reliance on centralized resource allocation, which orthogonalizes access for different users. For \ac{mtc}, this approach is unacceptable as it results in significant control-layer overhead and latency. Therefore, a new communication solution is required.

As evidence, we will first demonstrate the performance of the two-step \ac{ra} procedure presented above. Following the exposition in~\cite{agostini2024evolution5gnewradio}, we reproduce several numerical results from the referenced manuscript. The objective of the numerical setup is to highlight the significant gap between the energy efficiency of the two-step \ac{ra} procedure and the achievable energy efficiency in a massive \ac{ra} scenario (see Theorem~\ref{th:polyanskiy_bound}). Energy efficiency is defined as the minimum energy required to transmit a single information bit (or energy per bit) under certain \ac{qos} constraints. The exact definition of energy efficiency will be provided in Chapter~\ref{chap3}.

The technical details of this numerical experiment are as follows. The preamble dictionary consists of $64$ Zadoff-Chu sequences with varying lengths. For short preambles, the length is $139$, while for long preambles, it is $839$. Each preamble corresponds to a \ac{po}. Within a \ac{po}, data is transmitted using \acf{ldpc} codes as specified in \ac{5gnr} standards. The \acf{awgn} channel model is considered, and reference signals are not required in this setup. The system parameters outlined in Table~\ref{tab:chap_1:sim_params}\acuse{qpsk}, and the energy efficiency as a function of the number of simultaneously active users is presented in~\Fig{fig:chap1:num_results}.

Current schemes that are part of existing standards exhibit significantly lower energy efficiency compared to theoretical bounds. Moreover, the two-step \ac{ra} procedure proposed by \ac{3gpp} remains optional.  The lack of energy-efficient schemes has motivated many researchers to extensively study this new massive \ac{mtc} scenario.

In this monograph, we outline the core ideas behind these theoretical bounds and provide a brief overview of the challenges in designing low-complexity schemes. We demonstrate that some of these schemes closely approach the achievability bounds.

In our introductory example in~\Fig{fig:chap1:num_results}, we considered the simple case of a Gaussian channel. However, real wireless channels are affected by multipath propagation, which introduces additional random effects during signal transmission. In the subsequent chapters, we address various challenges posed by real propagation environments.

\section{Monograph organization}

The monograph is organized as follows:
\begin{itemize}
\item Chapter~\ref{chap2} explores \ac{mac} problems and their formulations, emphasizing the differences between classical \ac{mac} scenarios and the \acf{ura} setup.
\item Chapter~\ref{chap3} defines the \ac{ura} problem and introduces \ac{pupe}, the primary measure of the system's operational quality. It also revisits the definition of energy efficiency (see~\eqref{eq:ch3_energy_efficiency}) previously discussed in~\Fig{fig:chap1:num_results}. Additionally, this chapter establishes a connection between the \ac{ura} problem and the well-known \ac{cs} problem.
\item Chapter~\ref{chap4} investigates the fundamental limits of energy efficiency under \ac{pupe} constraints in the Gaussian channel.
\item Chapter~\ref{chap5} focuses on low-complexity schemes for the Gaussian channel.
\item Chapter~\ref{chap6} examines more realistic channels with fading effects, specifically the quasi-static Rayleigh fading channel. It covers scenarios where the \ac{bs} is equipped with either a single antenna or multiple antennas.
\item Chapter~\ref{chap7} concludes the monograph by highlighting the remaining challenges and open problems.
\end{itemize}

\chapter{Classical multiple-access problem statements}\label{chap2}

The study of \acp{mac} originated in Shannon's paper~\cite{Shannon1961}. Today, various variants of the \ac{mac} problem exist. These variants differ in terms of noise models (noiseless, noisy, or worst-case/adversarial), probability of error (zero-error, vanishing, per-user, or joint) the presence of feedback, user activity, and other factors. Following~\cite{gallager1985_perspectiveMAC, polyanskiy2017perspective}, in this chapter, we classify \ac{mac} problems based on user activity.

We focus on the vanishing probability of error case and briefly describe the zero-error results in Section~\ref{sec:ch2_zero}. We consider two main setups from the literature: all-active users (information-theoretic approach) and partial user activity. It is important to distinguish between cases where users utilize different encoders or codebooks (e.g.,\cite{ahlswede1973multi, liao1972coding, SCMA}) and cases where users share the same encoder. The scenario with all-active users utilizing different encoders is typical for \textit{coordinated multiple access}, i.e., a form of transmission where channel access is managed by a centralized scheduler that assigns resources to the users. The scenario of partial user activity and same encoders is typical for \textit{uncoordinated multiple access} or \textit{\acl{ra}} \cite{Roberts1975, Capetanakis79, tsybakov1985, Casini2007, liva2011graph, liva2015}. In the latter case, the total number of users is irrelevant, and it is common to assume it to be infinite, which aligns well with the massive access scenario. Note that for the case of partial user activity, we distinguish between the total number of users and the number of active users.

The classification and, consequently, the organization of this chapter is depicted in Fig.~\ref{fig:chap2_structure}.

We start with zero-error decoding, as presented in Section~\ref{sec:ch2_zero}. Then, we consider two different cases depending on the user activity. We present the case of all-active users in Section~\ref{ch2:sec_all_active}. Another important branch is partial user activity, as described in Section~\ref{ch2:sec_par_active}. Next, we consider the so-called many-access channel, as presented in Section~\ref{sec:ch2_manymac}. This is an extreme asymptotic case, where the number of active users grows simultaneously with the block length. This setup is the closest to the \ac{ura} model. Finally, we discuss the main differences and justify the need for a new information-theoretic framework.

\begin{figure}[t]
\centering
\includegraphics{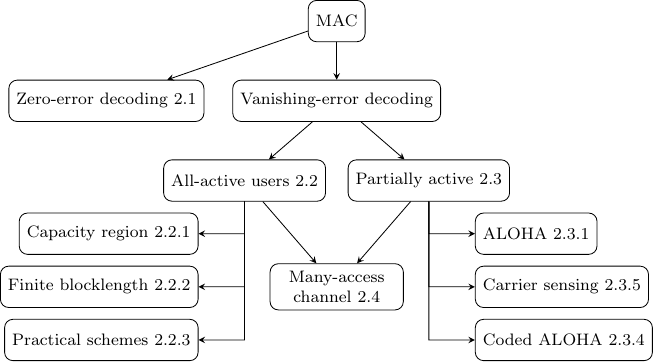}
\caption{Classification of \ac{mac} problems based on user activity. References to corresponding section of this chapter are added.\label{fig:chap2_structure}}
\end{figure}

\section{Zero-error decoding}\label{sec:ch2_zero}

A significant amount of effort has been devoted to evaluating the zero-error capacity for various natural and relatively simple \ac{mac} models. One of the most popular models is the noiseless adder \ac{mac}, where users send $0$ and $1$, and the channel output equals their ordinary sum. This problem was first studied in the context of group testing~\cite{Renyi1963, Soderberg1063, Lindstrom1964}, and was rediscovered nearly twenty years later within \ac{mac} theory~\cite{Weldon1979}. Notably, one of the most intriguing topics in \ac{mac} theory is determining the zero-error capacity for the simplest case of the two-user \ac{bac}, with the current best upper bound being $0.5753$, as reported in~\cite{COHEN2001152}.

The majority of zero-error coding schemes focus on the case of partially active users ($T$-of-$N$ schemes for activity detection). These solutions can be viewed as Sidon sets~\cite{Erdos1941}, specifically for scenarios where the input alphabet consists of binary or $q$-ary vectors ($\{0,1\}^n$ or $\{0,1,\ldots,q-1\}^n$). Relevant work includes $B_2$-sequences~\cite{LINDSTROM1969402, LINDSTROM1972261} and subsequent studies~\cite{BarDavid1993, Ericson88, Furedi99, DyaRyk81}. The construction proposed in~\cite{BarDavid1993} has been utilized as a component of several low-complexity \ac{ura} schemes (see Chapter~\ref{chap5}).

For an overview of the latest results, we refer the reader to~\cite{OrdentlichZeroOverview}.

\section{\acs{mac} with all-active users}\label{ch2:sec_all_active}

Classical information theory provides an exact solution for the case of a constant number of users, an infinite blocklength $n\rightarrow \infty$, and a vanishing probability of error, as described in~\cite{liao1972coding, ahlswede1973multi}.

Consider a fixed number of transmitters, $K$, sending their messages over a shared channel. The receiver observes a noisy linear combination of the transmitted signals. Assuming a discrete-time memoryless channel:
\begin{equation} \label{eq:MAC_statement}
\mathbf{y} = \sum\limits_{k = 1}^{K}\mathbf{x}_k + \mathbf{z}, \quad \mathbf{z}\sim\mathcal{N}\br{0, \mathbf{I}_n}, \quad \left\|\mathbf{x}_k\right\|^2 \leq nP_k
\end{equation}
where $P_k$ is the transmit power of the $k$-th user, and the terms $\mathbf{x}_k$ and $\mathbf{z}$\footnote{For simplicity, we consider real degrees of freedom.} represent the transmitted signal of the $k$-th user and the noise, respectively. This setup is known as the \ac{gmac}.

\subsection{Capacity region}\label{sec:ch2_cap_region}

The most straightforward way to manage $K$ simultaneously active users in a shared channel is to allocate them orthogonal resources. Orthogonality can be achieved in the time, frequency, code, or spatial domains. The corresponding strategies are known as \ac{tdma}, \ac{fdma}, \ac{cdma}, and spatial multiplexing, respectively.

As we see in Section~\ref{sec:ch2_low_complexity}, some of these strategies may, in certain cases, achieve optimal performance. Nevertheless, in many scenarios, an orthogonalization strategy is suboptimal.

In 1973, Ahlswede~\cite{ahlswede1973multi} and Liao~\cite{liao1972coding} derived a coding theorem for the \ac{mac}. These theorems state that given a $K$-user \ac{mac} channel with transmit powers $P_1, \ldots, P_K$, any tuple of rates $R_1, \ldots, R_K$ is achievable if
\begin{equation}\label{eq:cap_region}
\sum\limits_{k \in \dset{K}}R_k \leq C\br{\sum\limits_{k \in \dset{K}}P_k} \quad \forall \dset{K} \subseteq [K],
\end{equation}
where $C\br{P}$ is Shannon's capacity, given by
\begin{equation}\label{eq:cap_shannon}
 C\br{P} = \frac{1}{2}\log_2\br{1 + P}.
\end{equation}

\begin{figure}[t]
\centering
\includegraphics{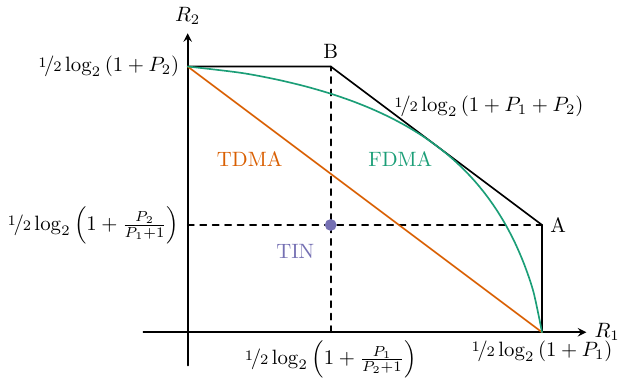}
\caption{A $2$-user \ac{mac} for an \ac{awgn} channel with real degrees of freedom. The information-theoretic capacity region is represented by a black polygon passing through points A and B. The capacity region of the \ac{tdma} scheme without power control is shown by an orange line, while the \ac{fdma} capacity region is depicted by a green line. The capacity region of \acs{tin} is a single point, denoted by a blue dot.\label{fig:MAC2user}}
\end{figure}

An illustration for the case $K=2$ is provided in Fig.~\ref{fig:MAC2user}. It is worth noting that user $1$ can increase
their data rate by increasing $P_1$ without affecting user $2$'s data rate until point B is reached. The line A--B of the resulting capacity region corresponds to the condition
\begin{equation}
\label{eq:MAC_sum_rate}
R_1 + R_2 = C\br{P_1 + P_2},
\end{equation}
which represents a point-to-point communication channel with a total transmit power of ${P_1 + P_2}$ and is referred to as the \emph{dominant facet}.

\subsection{\Acl{fbl}}\label{sec:ch2_fbl}

In this section, we examine a normal approximation introduced in~\cite{ppv2010, ppv2011} and applied to the capacity region of the $K$-user \ac{mac}, as given in~\eqref{eq:cap_region}. The core concept of the normal approximation is \emph{channel dispersion}. For an \ac{awgn} channel with transmit power $P$, the dispersion is given by~\cite[Theorem 54]{ppv2010}:

\begin{equation}\label{eq:channel_dispersion_su}
V\br{P} = \frac{\log^2 e}{2} \cdot \frac{P\br{P + 2}}{\br{1 + P}^2}.
\end{equation}

The following theorem~\cite[Theorem 3]{molavianjazi2014second} considers the joint probability of error $\varepsilon$ for a $K$-user \ac{mac} in the \acf{fbl} regime. The authors consider the codebook generated from the uniform distribution on the $n$-dimensional sphere, which is typically referred to as a \emph{power shell} (see Appendix~\ref{a3:cwbook:sphere}).

\begin{theorem}{}\label{th:chap2:fbl_cap_region}
A second-order achievable rate region with power shell inputs for the $K$-user \ac{gmac} with a power constraint $P_u$ for each user $u \in [K]$ is the set of all rate tuples $\br{R_1,\ldots,R_K}$ satisfying
\begin{equation}\label{eq:cap_region_fbl}
R_{\dset{K}} < C\br{P_{\dset{K}}} - \sqrt{\frac{V\br{P_{\dset{K}}} + V_c\br{P_{\dset{K}}}}{n}}Q^{-1}\br{\lambda_{\dset{K}}\varepsilon} + \mathcal{O}\br{\frac{1}{n}},
\end{equation}
for all subsets $\dset{K} \subseteq [K]$ and any choice of non-negative coefficients $\lambda_{\dset{K}}$ that sum to one:
$$
\sum\limits_{\dset{K}\subseteq [K]}\lambda_{\dset{K}} = 1.
$$
Here, $P_{\dset{K}}$ is the total transmit power of users in subset $\dset{K}$:
$$
P_{\dset{K}} = \sum\limits_{u\in\dset{K}}P_u,
$$
$V\br{\cdot}$ is the channel dispersion defined in~\eqref{eq:channel_dispersion_su}, and $V_c\br{\cdot}$ is the channel cross-dispersion given by:
$$
V_c\br{P_{\dset{K}}} = \frac{\log^2e}{2} \cdot \frac{\br{P_{\dset{K}}}^2-\sum_{u\in \dset{K}}P_u^2}{\br{1 + P_{\dset{K}}}^2}.
$$
\end{theorem}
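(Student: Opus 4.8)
The plan is to establish achievability by random coding over the power shell, combined with a threshold decoder and a refined central-limit analysis of the associated information density. First I would draw, for each user $u \in [K]$, a codebook of $\lfloor 2^{nR_u}\rfloor$ codewords independently and uniformly from the shell $\{\mathbf{x}_u : \|\mathbf{x}_u\|^2 = nP_u\}$, and fix a threshold decoder that accepts a message tuple only if the joint information density of every subset exceeds a prescribed level. Because the channel \eqref{eq:MAC_statement} is a noisy sum, a decoding error can occur only when the decoder confuses the transmitted tuple with one that differs exactly on some nonempty subset $\mathcal{K} \subseteq [K]$ of coordinates; by the random-coding symmetry I may condition on a fixed transmitted tuple and decompose the error event as $\bigcup_{\emptyset \neq \mathcal{K}\subseteq [K]} E_{\mathcal{K}}$.

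For each subset $\mathcal{K}$, the relevant quantity is the information density $i(\mathbf{x}_{\mathcal{K}};\mathbf{y}\mid \mathbf{x}_{\mathcal{K}^c})$ of the effective point-to-point channel obtained by treating the codewords of $\mathcal{K}^c$ as known side information: its input is the superposition $\sum_{u\in\mathcal{K}}\mathbf{x}_u$ and its noise is $\mathbf{z}$. A Feinstein/dependence-testing argument adapted to the \ac{mac} bounds $\Pr[E_{\mathcal{K}}]$ by the probability that this information density falls below $nR_{\mathcal{K}}$, plus a term exponentially small in the gap, so the problem reduces to a sharp tail estimate for $i(\mathbf{x}_{\mathcal{K}};\mathbf{y}\mid\mathbf{x}_{\mathcal{K}^c})$.

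Next I would identify the mean and variance of that information density. Its expectation is $nC(P_{\mathcal{K}})$, exactly the dominant-facet value underlying \eqref{eq:cap_region}, while its variance splits into two contributions: the usual single-user \ac{awgn} term $nV(P_{\mathcal{K}})$ coming from the additive noise, and a correction $nV_c(P_{\mathcal{K}})$ coming from the fluctuations of the superimposed signal energy $\|\sum_{u\in\mathcal{K}}\mathbf{x}_u\|^2 = \sum_{u\in\mathcal{K}}\|\mathbf{x}_u\|^2 + 2\sum_{u<v}\langle \mathbf{x}_u,\mathbf{x}_v\rangle$. Since the per-user energies are deterministic on the shell, the randomness lives entirely in the pairwise inner products, whose variances accumulate to $(P_{\mathcal{K}})^2 - \sum_{u\in\mathcal{K}}P_u^2 = 2\sum_{u<v}P_uP_v$, reproducing exactly the cross-dispersion numerator. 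Allocating the total error budget $\varepsilon$ across the $2^K-1$ subsets through the weights $\lambda_{\mathcal{K}}$ and a union bound, and then inverting the Gaussian tail, yields the rate constraint \eqref{eq:cap_region_fbl} with $Q^{-1}(\lambda_{\mathcal{K}}\varepsilon)$ for each $\mathcal{K}$.

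The hard part will be making the Gaussian approximation rigorous with a genuinely $\mathcal{O}(1/n)$ remainder, uniformly over subsets. Because the shell distribution renders the coordinates of each codeword dependent, the information density is a sum of dependent terms and the classical i.i.d. Berry--Esseen theorem does not apply directly; one instead needs a central limit theorem for smooth functions of the codewords (of the type developed in \cite{molavianjazi2014second}) that simultaneously controls the third absolute moment and the non-lattice regularity required to convert the tail bound into the $Q^{-1}$ expression. Handling the \emph{joint} fluctuations of several independent spherical inputs inside a single effective channel — rather than a single shell — is the technical crux, and it is precisely this coupling that produces $V_c$ alongside $V$.
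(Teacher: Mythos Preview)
The paper does not supply its own proof of this theorem; it is quoted verbatim from \cite[Theorem~3]{molavianjazi2014second} and used only as background. Your proposal is a faithful outline of the argument in that reference: random coding on the power shell, a dependence-testing/threshold decoder, decomposition of the error event by the subset $\mathcal{K}$ of users in error, identification of the mean $nC(P_{\mathcal{K}})$ and variance $n\bigl(V(P_{\mathcal{K}})+V_c(P_{\mathcal{K}})\bigr)$ of the relevant information density, and allocation of the error budget via $\lambda_{\mathcal{K}}$. Your diagnosis of the technical crux is also correct: the spherical input makes the information density a non-i.i.d.\ sum, and \cite{molavianjazi2014second} handles this with a central limit theorem for functions of random vectors rather than a direct Berry--Esseen bound. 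There is nothing to compare against in the present monograph, but as a summary of the cited proof your sketch is accurate.
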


To gain some intuition about this theorem, let us consider the sum-rate case, which corresponds to the dominant facet, $\dset{K} = [K]$, with equal power allocation $P_u = P$ for all $u$. In this case, the sum-rate is given by:
$$
R^\star = C\br{KP} - \sqrt{\frac{V\br{KP} + V_c\br{K, P}}{n}}Q^{-1}\br{\varepsilon} + \mathcal{O}\br{\frac{1}{n}},
$$
where
$$
V_c\br{K, P} = \frac{\log_2 e}{2} \cdot \frac{K\br{K - 1} P^2}{\br{1 + KP}^2}.
$$
One can observe an increased dispersion value over the dominant facet.

\subsection{Practical schemes}\label{sec:ch2_low_complexity}

In this section, we consider practical coding schemes that achieve points on (or the entirety of) the dominant facet of the capacity region and may be implemented with low computational complexity through appropriate single-user code selection.

\subsubsection{Orthogonal schemes}

Let us begin with orthogonal schemes by considering \ac{tdma}, \ac{fdma}, and the orthogonal case of \ac{cdma}. To proceed with our discussion, recall the capacity region presented in~\Fig{fig:MAC2user}.

\paragraph{\Ac{tdma}.}
Given $P_i$ as the transmission power of the $i$-th user ($i = 1, 2$), the corresponding capacities of the point-to-point channels are $C\br{P_1}$ and $C\br{P_2}$. In \ac{tdma}, orthogonalization is performed in the time domain. Let $\alpha$ be the fraction of resources allocated to user 1, meaning that $1 - \alpha$ is the fraction allocated to user 2. For a \ac{mac} with \ac{tdma}, the set of achievable sum rates is given by:
\begin{equation}\label{eq:cap_region_tdma}
\alpha C\br{P_1} + \br{1 - \alpha}C\br{P_2},    
\end{equation}
where $\alpha C\br{P_1}$ and $\br{1 - \alpha}C\br{P_2}$ form a pair of achievable individual transmission rates. Equation~\eqref{eq:cap_region_tdma} represents a straight line, as illustrated in~\Fig{fig:MAC2user}. In terms of the capacity region, this scheme is strictly suboptimal for all $\alpha \in (0, 1)$. Suboptimality means that the achievable rate pairs lie strictly inside the capacity region.

\paragraph{\Ac{fdma}.}
In the \ac{tdma} strategy, each transmission occupies the entire bandwidth allocated for the shared channel. In contrast, \ac{fdma} allows for better utilization of transmit power spectral density by occupying only a fraction of the allocated bandwidth. Given a bandwidth fraction $\alpha$, the channel capacity for a fraction $\alpha$ of the total frequency resources is:
$$
C_f\br{P, \alpha} = \frac{\alpha}{2}\log_2\br{1 + \frac{P}{\alpha}}.
$$
For the previously discussed two-user \ac{mac} with the \ac{fdma} strategy, the achievable individual transmission rate pairs are:
$$
R_1 = C_f\br{P_1, \alpha}, \quad R_2 =  C_f\br{P_2, 1-\alpha},
$$
which corresponds to the green curve shown in~\Fig{fig:MAC2user}.

This transmission scheme becomes optimal (touching the dominant facet of the capacity region) when:
\begin{equation}\label{eq:fdma_optimal_point}
\alpha^\star = \frac{P_1}{P_1 + P_2},
\end{equation}
due to the increased transmit power spectral density, compared to \ac{tdma} without power control as considered previously.

\begin{remark}[\ac{tdma} with power control]
In the previous discussion, we considered \ac{tdma} without power control. According to~\eqref{eq:MAC_statement}, the energy transmitted by the first user is limited to $\alpha n P_1$, and by the second user to $\br{1 - \alpha} n P_2$. Hence, to fulfill the energy constraint from~\eqref{eq:MAC_statement}, and given a fraction of allocated resources equal to $\alpha$, the first user can transmit with an instantaneous power of $\nicefrac{P_1}{\alpha}$, and the second user with $\nicefrac{P_2}{1-\alpha}$. As a result, the \ac{tdma} scheme with power control (the term described in~\cite{elgamal2011book}) has the same capacity region as \ac{fdma}.
\end{remark}

\paragraph{\Ac{cdma}.}
To achieve orthogonality in the \emph{code} domain, the transmitted signal can be \emph{spread} using a spreading sequence $\mathbf{a}^{\br{i}}$. We refer the reader to \eqref{eq:ch5_spread_rkon} and Section~\ref{sec:encoder} for details. To detect a user's signal from the received sequence, the receiver employs a \ac{mf} matched to the spreading sequence $\mathbf{a}^{\br{i}}$ assigned to the $i$-th user. Given orthogonal spreading sequences, i.e., $\left(\mathbf{a}^{\br{i}}\right)^T \cdot \mathbf{a}^{\br{j}} = 0$, applying a \ac{mf} converts the two-user \ac{mac} channel into a \ac{tdma} channel, as described above, with $\alpha = \nicefrac{1}{2}$\footnote{This can be easily verified by expressing the spread signal in the form~\eqref{eq:ch5:spread_reception} and~\eqref{eq:ch5:spread_matrix}, keeping in mind that the $K$ spreading sequences form a $K\times K$ orthogonal matrix. Then, a simple basis change transforms~\eqref{eq:ch5:spread_reception} into a \ac{tdma} channel (identity spreading matrix) without altering the noise properties.}. Thus, the capacity region of \ac{cdma} with orthogonal spreading sequences reduces to a single point on the \ac{tdma} capacity region. In the case of $P_1=P_2$, and following~\eqref{eq:fdma_optimal_point}, this point lies on the dominant facet of the capacity region. Note that the maximum number of orthogonal sequences is limited by their length, meaning that the maximum number of users that can be orthogonalized equals the length of the spreading sequence.

Additionally, the spreading technique also helps to perform time-domain channel equalization in the presence of multipath propagation with delay spread, using a RAKE receiver~\cite{Tse_Viswanath_2005}.

\subsubsection{\Acl{tin}}

To recover all $K$ transmitted codewords from the received signal~\eqref{eq:MAC_statement} in a non-orthogonal regime, a joint decoder is required. However, the complexity of joint decoding can be prohibitively high for practical applications. In practice, particularly in a non-symmetrical \ac{mac} -- where the received powers $P_i$ are unequal -- the user with the strongest received power is more likely to be successfully decoded by a \emph{single-user} decoder, even when treating other users' signals as noise. This approach is known as \acf{tin}.

Applying an information-theoretic analysis to \ac{tin} in the two-user \ac{mac}, the noise and interference\footnote{Under the assumption of random coding with i.i.d. Gaussian codebooks, the sum of noise and interference will also follow a Gaussian distribution.} power experienced by the first user's \ac{tin} decoder is $1 + P_2$, resulting in a maximum achievable communication rate of $R_1$:
$$
R_1 = \frac{1}{2}\log_2\br{1 + \frac{P_1}{1 + P_2}}, \quad R_2 = \frac{1}{2}\log_2\br{1 + \frac{P_2}{1 + P_1}},
$$
where $R_2$ is obtained similarly. Note that in this case, the relative received power of each user does not affect the result. This rate pair is represented by the blue dot in Fig.~\ref{fig:MAC2user}. This scenario was previously discussed in the context of \ac{cdma} with pseudo-orthogonal spreading sequences.

Notably, depending on the values of $P_1$ and $P_2$, the \ac{tin} point can lie either inside or outside the capacity region of the \ac{tdma} scheme. The reference case in~\Fig{fig:MAC2user} corresponds to equal power allocation.

Now, let us analyze a $K$-user symmetric \ac{mac} and consider the achievable sum rate as $K\rightarrow \infty$ in the case where a \ac{tin} strategy is applied to the codewords of each user:
$$
\lim_{K\rightarrow \infty}\frac{K}{2}\log_2\br{1 + \frac{P}{\br{K - 1}P + 1}}\approx \frac{\log_2 e}{2}.
$$
This sum rate is limited and falls far below the dominant facet of the capacity region, where the maximum sum rate is equal to $C\br{KP}$, which is unbounded.

\subsubsection{\Acl{sic}}

Can the previously discussed \ac{tin} strategy be improved without significantly increasing computational complexity?

After performing the \ac{tin} step and applying capacity-achieving codes, the receiver successfully decodes a codeword. This codeword can then be subtracted from the received sequence. For instance, if user $1$ is decoded using the \ac{tin} approach, the residual signal -- after subtracting the first user's codeword -- will consist only of user $2$'s signal and noise. This effectively reduces the problem to a single-user decoding scenario in the \ac{awgn} channel, where user $2$ can achieve the rate $C(P_2)$. The corresponding rate pair is given by:
$$
R_1 = \frac{1}{2}\log_2\br{1 + \frac{P_1}{1 + P_2}}, \quad R_2 = C\br{P_2}.
$$
This rate pair corresponds to point $B$ in the capacity region shown in \Fig{fig:MAC2user}. Similarly, point $A$ can be achieved by reversing the order in which the \ac{tin} step and subtraction are applied.

For the $K$-user \ac{mac}, the \ac{tin} step followed by subtraction can be repeated multiple times. This method is known as \acf{sic}. After each \ac{sic} step, the dimensionality of the problem is reduced.

In the two-user \ac{mac} example, applying \ac{sic} allows us to achieve points $A$ and $B$ on the dominant facet of the capacity region. To reach any point along the line segment A--B, a \ac{tdma} strategy can be employed, alternating between different decoding orders of the \ac{tin} step followed by \ac{sic}. As a result, we obtain a low-complexity scheme capable of achieving any point on the dominant facet of the capacity region.

\subsubsection{Non-orthogonal \ac{cdma}}

Orthogonal sequences allow us to use a single-user decoder to solve a multiple-access problem. However, the spreading sequences may not always be orthogonal. In this case, the system load may be higher compared to orthogonal spreading, at the cost of additional interference. In the presence of interference, a \ac{mud} may outperform a single-user one. The problem of \ac{mud} in \ac{cdma} has been studied in~\cite{verdu1986minimum, lupas1989linear}. In~\cite{honig1995blind}, the idea of blind detection was proposed -- a scenario in which the receiver knows the target spreading sequence but does not know the interfering spreading sequences.

Regarding the relationship between \ac{cdma} and the capacity region, the authors of~\cite{Massey1994} proposed the design of spreading sequences for synchronous overloaded \ac{cdma} (where the number of active users exceeds the length of the spreading sequence) in an equal-power setting and estimated the achievable sum rate. The unequal-power case was considered in~\cite{Viswanath1999}. Finally, \ac{mud} combined with \ac{sic}, as proposed in~\cite{varanasi1997optimum}, achieved full capacity region.

For randomly generated sequences, the authors of~\cite{TseHanly99, VerduShamai99} analyzed capacity for random binary and spherical sequences under joint decoding and linear \ac{mud}s. Additionally, an iterative decoding scheme for turbo codes with spreading was proposed in~\cite{softSIC_wang_poor}. Lastly, let us highlight the work of Montanari and Tse in~\cite{Montanari2006}, where they proposed message-passing algorithms to design near-optimal \ac{mud}s. This approach later evolved into \ac{lds} \ac{cdma}~\cite{LDS} and \ac{scma}, as proposed in~\cite{SCMA}.

\subsubsection{\Acl{rsma}}

In the previous discussion, we found that the dominant facet of the capacity region can be achieved using the \ac{tinsic} strategy with time-sharing. However, is there a simpler strategy that can be achieved using only single-user codes? As proposed in~\cite{rimoldi1996rate}, this can be accomplished through \ac{rsma}.

To explain the main idea, let us redefine Shannon's capacity~\eqref{eq:cap_shannon} by explicitly considering the noise power $\sigma^2$:
$$
C_\sigma\br{P, \sigma^2} = \frac{1}{2}\log_2\br{1 + \frac{P}{\sigma^2}}
$$
As we observed previously, the maximum sum-rate $R_\Sigma$ of the two-user \ac{mac} is achieved on the dominant facet, resulting in
\begin{equation}\label{eq:cap_superposition}
R_\Sigma = C_\sigma\br{P_1 + P_2, \sigma^2} = C_\sigma\br{P_2, \sigma^2} + C_\sigma\br{P_1, P_2 + \sigma^2},
\end{equation}
where the last equation is referred to as the \emph{chain rule}. The physical meaning of this chain rule is as follows: In the first step, we apply a \ac{tin} step, resulting in $R_1 = C_\sigma\br{P_1, P_2 + \sigma^2}$, followed by a \ac{sic} step, yielding $R_2 = C_\sigma\br{P_2, \sigma^2}$. This strategy allows us to achieve one corner point of the capacity region pentagon.

An interesting observation is that the chain rule~\eqref{eq:cap_superposition} allows us to split a high-rate code into two codes with lower rates. Then, the \ac{tinsic} strategy can be used to decode two superimposed codes. As proposed in~\cite{rimoldi1996rate}, any point in the two-user \ac{mac} capacity region can be achieved by splitting the higher-rate code of the user with the larger received power into two superimposed codes. Then, a \ac{tinsic} strategy can be applied to decode all three codes.

To illustrate this, let us consider the case where $P_1 > P_2$, and let us choose some $0 < \delta < P_1$. Next, we define
\begin{align*}
p_1= & \delta, \\
p_2= & P_2, \\
p_3= & P_1 - \delta.
\end{align*}
Then, let $r_i = C_\sigma\br{p_i, \sigma^2}$ be the rate achievable by the $i$-th code. The sum rate of the three codes equals
$$
r_1 + r_2 + r_3 = C_\sigma\br{p_1 + p_2 + p_3, \sigma^2} = C_\sigma\br{P_1 + P_2, \sigma^2} = R_\Sigma.
$$
Note also that for any $\delta$, there is a unique $r_1$ value. Hence, any point on the dominant facet can be achieved.

Furthermore, the result above was generalized to the case of a $K$-user \ac{mac} in~\cite{rimoldi1996rate}. It was shown that a total of $2K - 1$ superimposed codes are required to achieve an arbitrary point on the dominant facet.

\subsubsection{Schemes based on polar codes}

\begin{figure}
\begin{center}
\includegraphics{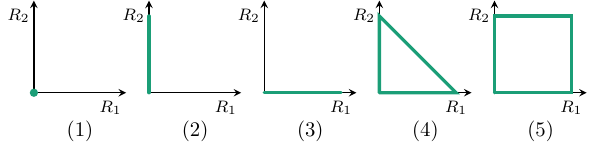}
\caption{Five extreme channels for two-user \ac{mac} polarization~\cite{Telatar2user}. The figure illustrates: (1) a completely useless channel, where the capacity region is a single point at $\br{0, 0}$; (2, 3) two channels that are useless for one of the two users; (4) a \ac{mac} with a \ac{tdma}-like capacity region; and (5) a channel where both users can transmit in parallel.\label{fig:MAC2user_polarization}}
\end{center}
\end{figure}

Several works present theoretical results on the achievability of the capacity region using polar coding. In~\cite{Telatar2user}, it was shown that two users with a uniform data source, using Ar{\i}kan's construction, may achieve a point on the dominant facet of the capacity region. The resulting \ac{mac} polarizes into five extreme channels, with the corresponding capacity regions illustrated in~\Fig{fig:MAC2user_polarization}. Next, it was shown that a polar coding technique allows the achievement of any point in the capacity region of a two-user \ac{mac} \cite{Arikan2012}.\footnote{A Slepian-Wolf source coding problem was considered, which is a problem dual to the two-user \ac{mac}. The solution is based on monotone chain rule expansions.} Finally, a method for improving the performance of two-user \ac{mac} polar coding through list decoding was described in \cite{Onay2013}.

Theoretical results for two-user \ac{mac} polarization were further extended to the $K$-user \ac{mac} case in~\cite{TelatarMuser}. Moreover, the achievability of the entire uniform rate region\footnote{For a \ac{mac}, the uniform rate region is the achievable region when the input distributions are uniform.} for the $K$-user \ac{mac} was proposed in~\cite{Mahdavifar2016}.

At the same time, there are no efficient decoding or optimization methods for the \ac{fbl} case.

\subsubsection{Schemes based on \ac{ldpc} codes}

Finally, let us consider the capacity-approaching scheme based on \ac{ldpc} codes presented in~\cite{LDPC_2user}. The authors employed a \ac{mpa} to iteratively decode both codes, and each \ac{ldpc} code was optimized for degree distributions using \ac{de}. Their results showed that the resulting decoding threshold is only $0.18$ dB away from Shannon's limit.

Moreover, another capacity-approaching scheme is based on spatially coupled \ac{ldpc} codes~\cite{Kudekar2011}, whose authors validated their claim using \ac{de}.

\section{\acs{mac} with partial user activity}\label{ch2:sec_par_active}

Now, let us consider \ac{ra} protocols. In this scenario, users generate messages at random time instances and attempt to transmit them over a shared channel.

In 1970, Abramson proposed the first random access protocol, which is also the simplest, known as ALOHA~\cite{Abramson70}. The core idea is straightforward: whenever a new packet arrives, the transmitter immediately sends it. If two or more transmissions occur simultaneously, the signals interfere, making them undecodable and resulting in a collision. This original protocol is known as pure ALOHA.

In this non-synchronized system, a collision occurs even when transmitted messages partially overlap. To mitigate this, time synchronization was later introduced~\cite{Roberts1975}, giving rise to what is now known as \ac{sa}. An example illustrating how synchronization reduces the number of collisions is shown in \Fig{fig:ch2_aloha}. Thus, in the subsequent analysis, we consider only the \ac{sa} system.

Below are some key assumptions of the \ac{sa} model. Based on these assumptions, various throughput optimization techniques have been developed.

\begin{figure}
\centering
\includegraphics{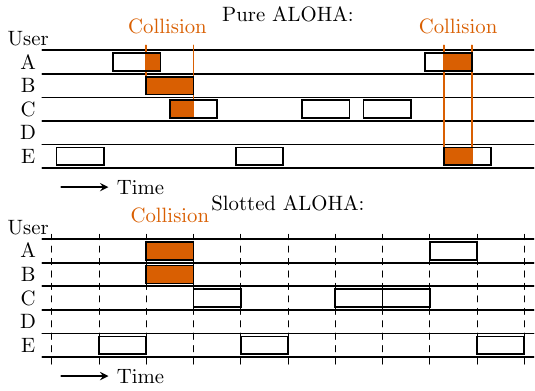}
\caption{Pure (top) and slotted (bottom) ALOHA. Packets are generated at the same time instances, resulting in five packet collisions during transmission in the case of pure ALOHA. In \ac{sa}, transmission is allowed only in the nearest slot after a packet appears. As a result, the number of collided packets is reduced to two. Slot boundaries are marked by dashed lines.\label{fig:ch2_aloha}}
\end{figure}

\subsection{\Acl{sa} model assumptions}\label{sec:ch2_saloha}

As we demonstrate later, the use of collision resolution algorithms can significantly enhance the performance of \ac{sa}. We now outline the assumptions for \ac{sa} used in the study of collision resolution algorithms.

\begin{enumerate}
\item Slotted transmission system: Each message has a fixed duration and is transmitted within a unit slot length.
\item Collision: Occurs when two or more simultaneous transmissions happen within a slot, and successful reception is impossible.
\item Immediate feedback: The system provides feedback indicating whether the slot was empty (no transmission), a success (a single packet was transmitted), or a collision (two or more packets were transmitted).
\end{enumerate}

To work with a mathematical model of ALOHA, we make the following additional assumptions:
\begin{enumerate}
\item An infinite number of transmitters.
\item The arrival rate of incoming messages follows a Poisson point process with intensity $\lambda$. This approximation is reasonable given a large number of independent sources.
\end{enumerate}

\begin{remark}[Worst-case performance estimation in terms of collisions]
The assumption of an infinite number of transmitters combined with a finite message rate allows for estimating worst-case performance in terms of collisions. Indeed, if the number of transmitters were finite, a message arriving at a transmitting station would be queued rather than causing a collision.
\end{remark}

\begin{remark}[Correlated transmissions]
We should also note that correlated transmissions (which are typical in sensor networks~\cite{Kalor2018}) may lead to significantly worse performance than a Poisson model assuming independent transmissions.
\end{remark}

\begin{remark}[Collision model considerations] The ALOHA model does not account for physical-layer coding techniques. The previously mentioned \ac{sic}, which is a physical-layer coding technique, could significantly alter system performance. Nevertheless, the above assumptions provide a simplified setup for studying collision resolution algorithms.
\end{remark}

\begin{remark}[Collisions and capture effect]
In the case of real signal propagation models and \acf{awgn}, a phenomenon known as the \emph{capture effect} occurs~\cite{Paolini2017capture}. When two messages collide and arrive with different received power levels, the \ac{tin} decoding process may allow for partial message recovery, where at least the user with the higher \ac{sinr} value is successfully decoded.
\end{remark}

\subsection{Throughput and stability of \acl{sa}}

Given a unit slot length, it is reasonable to define throughput as the average number of successfully delivered packets per slot. For an \ac{sa} system without any collision resolution algorithms or retransmissions, the throughput depends on the input data rate, $\lambda$, as $\lambda e^{-\lambda}$, which reaches its maximum at $\lambda = 1$, yielding $\nicefrac{1}{e}\approx 0.3679$ (following the Poisson traffic model proposed in Section~\ref{sec:ch2_saloha}). Thus, only about 36.8\% of slots contain a non-collided transmission, while the remaining slots are either empty or experience collisions.

To define stability, let us introduce a retransmission mechanism. When a transmitter has channel feedback, collisions can be detected immediately. In this case, the packet becomes \emph{backlogged} for future retransmission. Recall that the number of transmitters is infinite, and a backlogged packet is not affected by subsequent messages. Different retransmission algorithms may influence the system's throughput.

Let us first specify the simplest retransmission policy: any backlogged packet is transmitted with probability $p$ in the subsequent slot, independently of past slots and other packets. The value of $p$ can be adjusted based on collision statistics. Small values of $p$ lead to high delays, whereas large values result in excessive collisions.

This system can be represented as a Markov chain, where the state is the number of backlogged messages, $k$. Given $\lambda$ and $p$, the set of transition probabilities can be easily derived~\cite{gallager1985_perspectiveMAC}. It turns out that the resulting Markov chain is non-ergodic~\cite{Kaplan1979}.

The \emph{drift} of the system is defined as the expected increment $D_k\br{\lambda, p}$ in the number of backlogged packets for each state $k$ during a slot duration. When $D_k > 0$, the number of packets scheduled for retransmission is expected to increase. By analyzing $D_k$ for different values of $k$, we can identify regimes where the system becomes unstable, meaning the number of backlogged packets grows without bound. Consequently, the system's throughput approaches zero.

The problem of ALOHA stabilization was first studied in~\cite{Hajek1982}. The authors examined a class of algorithms that control $p$ based solely on collision feedback. The resulting control algorithm was stable for $\lambda < \nicefrac{1}{e}$. Note that for $\lambda > \nicefrac{1}{e}$, the drift value is always positive~\cite{gallager1985_perspectiveMAC}. Hence, the throughput of the \ac{sa} system is upper-limited by $\nicefrac{1}{e}$, and collisions remain a crucial throughput-limiting factor. In the next section, we describe advanced collision resolution algorithms that may increase the throughput of the \ac{sa} system.

\subsection{Lower and upper bounds on throughput in \acl{sa}}

\begin{figure}
\centering
\includegraphics{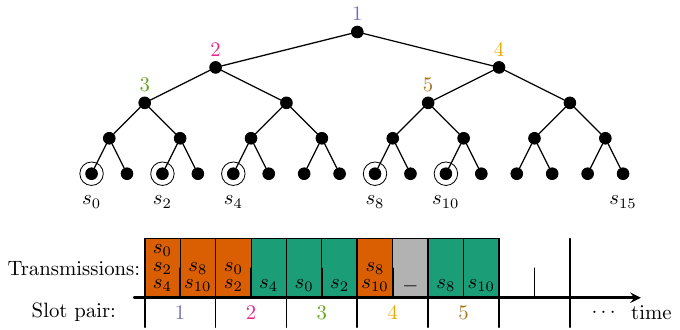}
\caption{Example of the tree-split algorithm, reproduced from~\cite{Capetanakis79}. (Top) Alignment of all stations into a tree. Active stations $s_0$, $s_2$, $s_4$, $s_8$, and $s_{10}$ are marked by circles. (Bottom) Sequence of transmissions during the collision resolution algorithm. Slots with collisions are marked by orange rectangles with active station indices, collision-free slots are represented by green rectangles, and empty slots are shown in grey. The slot pair index and the corresponding root node are indicated by colored numbers.}
\label{fig:sa_tree_collision}
\end{figure}

To achieve higher throughput and lower delay, the following modifications can be made to the retransmission strategy outlined above. First, the immediate transmission of a newly arrived packet can be delayed if previous collisions remain unresolved. Second, more advanced collision resolution policies can be employed. In this section, we briefly review various explored methods for throughput optimization.

We begin with the tree-split algorithm proposed in~\cite{Capetanakis79, Hayes1978, TsyMik78}. The idea of this algorithm is to arrange all stations in a binary tree, as shown in~\Fig{fig:sa_tree_collision}.\footnote{For simplicity, we consider the case where the number of stations is a power of two.} When a collision occurs, the system enters a collision resolution regime. In this regime, all subsequent slots are grouped into pairs, and a tree-traversing algorithm (e.g., depth-first search) is initiated. The goal of this algorithm is to mark all leaf nodes (stations) as visited (i.e., collisions resolved).

To achieve this, the algorithm starts at the root node. Stations from the left and right branches of the root transmit in the first and second slots, respectively. If a collision-free transmission or an empty slot occurs, all nodes connected to the corresponding branch are marked as visited (resolved). The tree-traversing algorithm continues splitting until all nodes are marked as visited.

Subsequent modifications of tree-based algorithms, such as more than two groups at the initial collision resolution step, resulted in a throughput of $0.46$~\cite{Massey1981}, which remains stable for smaller input rates.

The second family of algorithms is based on the \ac{fcfs} algorithms, where each arriving packet is marked with a timestamp, and packets whose timestamps are within a certain window are allowed to transmit~\cite{TsyMik80}. The window edges move in accordance with channel feedback.

The state-of-the-art algorithm for the classical \ac{sa} setup was proposed in~\cite{tsybakov1985}, achieving a throughput of up to $0.4877$. Finally, the state-of-the-art upper bound for classical \ac{sa} throughput is given in~\cite{tsybakov1987}.
\begin{equation}\label{eq:sa_throughput_bounds}
0.4877 \overset{\text{\cite{tsybakov1985}}}{\leq} R \overset{\text{\cite{tsybakov1987}}}{\leq} 0.5683.
\end{equation}
Note that by leveraging the \ac{sic} technique, a throughput of 0.693 can be achieved using the tree-based algorithm, as reported in~\cite{Giannakis2007}. However, the use of \emph{coded} \ac{sa}, as presented in the next section, may further improve the system's throughput.

\subsection{Coded \acl{sa}}\label{ch2:coded_aloha}

Let us introduce time diversity into the ALOHA system: a single transmission is followed by the transmission of a \emph{replica} of the original message~\cite{Casini2007}, which is sent in a randomly selected slot within a \emph{frame}. In this setup, the frame serves to separate different bursts and limit the distance between two replicas. This approach is known as \ac{crdsa}. If one of the two replicas does not experience a collision, the position of the second replica can be determined from the successfully decoded message, allowing the second replica to be subtracted from another slot, potentially resolving additional collisions -- a process known as the \ac{sic} step.

This idea was further improved in~\cite{liva2011graph}, where an arbitrary number of replicas was allowed and chosen randomly from a specific distribution. All replicas are transmitted in randomly selected slots within a frame. The parameters of this distribution were optimized, significantly improving the resulting throughput, which can reach approximately $0.97$ for large frames and close to $0.8$ in practical implementations -- a significant improvement over ordinary \ac{sa}, as shown in~\eqref{eq:sa_throughput_bounds}. This method is known as \ac{irsa}. An example of \ac{irsa} is shown in~\Fig{fig:chap5:aloha_tx} and~\Fig{fig:chap5:aloha_rx} and described in Section~\ref{ch5:sec:coded_sa}.

The transmission and decoding processes in the \ac{irsa} system can be described using a bipartite graph, known as a Tanner graph. The vertex set of the graph consists of user nodes and slot nodes. An edge exists between a user node and a slot node if a transmission occurs from the corresponding user in the corresponding slot. The random family of these bipartite graphs can be characterized by a corresponding degree distribution. Hence, density evolution~\cite{RichardsonUrbanke2008}, similar to \ac{ldpc} optimization, can be applied here.

\subsection{Carrier sensing}\label{sec:ch2_csma}

\Ac{csma} was proposed in~\cite{Kleinrock1975}. The main idea of carrier sensing is to quickly detect when the slot is idle and immediately proceed to the next slot. This idle-detection time depends on the practical implementation of the system and may include hardware switching time, propagation delays, etc. There are two differences compared to the \ac{sa} system described previously. The first is that an idle slot has a duration of $\alpha \ll 1$. The second is that a packet arriving during another transmission is immediately backlogged, and its transmission starts with probability $p$ after each subsequent idle slot. Packets arriving during an idle slot are transmitted in the next slot, as usual.

The analysis of the system can be performed via a Markov chain~\cite{gallager1985_perspectiveMAC}. The drift value $D_k$ can be expressed as follows:
$$
D_k = \lambda \alpha + \lambda \br{1 - e^{-\lambda \alpha}\br{1 - p}^k}- \br{\lambda \alpha+ \frac{pk}{1 - p}}e^{-\lambda \alpha}\br{1 - p}^k.
$$
For $\lambda\br{1 + \alpha} < 1$, the value of $D_k$ reaches its minimum at
$$
p^\star = \frac{1 - \lambda\br{1 + \alpha}}{k - \lambda\br{1 + \alpha}}.
$$
$D_k$ is negative for all values of $k$ (with $p^\star$ substituted) as soon as
$$
\lambda\br{1 + \alpha} <e^{-1 + \lambda}.
$$
Applying a power series expansion over $1 - \lambda$, the throughput approaches one for small $\alpha$:
$$
\lambda < 1 - \sqrt{2\alpha}.
$$
Moreover, within \ac{csma}, the difference between slotted and pure ALOHA disappears, with the maximum throughput of pure ALOHA with \ac{csma} being equal to $1 - 2\sqrt{\alpha}$~\cite{gallager1985_perspectiveMAC}.

\section{Many-access channels}\label{sec:ch2_manymac}

Finally, let us consider the information-theoretic approach, in which the number of users is allowed to grow with the blocklength. This line of research is the closest to the \ac{ura} direction. Henceforth, we refer to this paradigm as \emph{many-user information theory} and to the corresponding channels as \emph{many-access channels}.

We present the many-access channel results in three steps, following their evolution in the literature. In the first step (see Section~\ref{sec:many_bac}), the asymptotics over blocklength $n\rightarrow \infty$ were considered first, followed by the limit of the user count $K\rightarrow \infty$. Next, in Section~\ref{sec:many_mac1}, we consider the case where both $K$ and $n$ tend to infinity simultaneously, providing additional insights into the relationship between $K$ and $n$. However, in this regime, each user expends infinite energy to send a message with a fixed number of information bits, leading to an infinite $E_b/N_0$. Finally, this issue is addressed in Section~\ref{sec:many_mac2}, where the main conclusion is that achieving zero probability of joint decoding error becomes infeasible in this case.

\subsection{Noiseless and noisy many-access \ac{bac}}\label{sec:many_bac}

We begin with the paper \cite{Weldon1979}, assuming that all $K$ users in the system are active. The $K$ messages emanating from the $K$ sources are encoded independently using $K$ binary block codes\footnote{Here, we consider the case of different encoders or codebooks.}
$$
\mathcal{C}_1, \mathcal{C}_2, \ldots, \mathcal{C}_K, \quad \mathcal{C}_i \subseteq \{0,1\}^n, \quad |\mathcal{C}_i| = M_i
$$
of the same length $n$.

The goal is to maximize the sum rate
\[
R_\Sigma(K) = \sum\limits_{i=1}^K R_i, \quad R_i = \frac{\log_2 M_i}{n}, i \in [K].
\]

The authors consider a noiseless \ac{bac}, where the number of users increases with the blocklength. The $i$-th user sends an integer $X_i \in \{0, 1\}$, and the output symbol $Y$ is the real sum of the $K$ inputs, i.e., the channel output is
$$
Y = \sum\nolimits_{i=1}^K X_i \in \{0,1,\ldots,K\}.
$$
The first result is the capacity region for such a channel. Here, we provide only the sum rate inequality (dominant facet inequality):
\[
R_\Sigma(K) \leq C_\Sigma(K) = \max\limits_{P_{X_1} \otimes \ldots \otimes P_{X_K}} I(X_1, \ldots, X_K; Y) = \sum\limits_{i=0}^K \frac{\binom{K}{i}}{2^K} \log_2 \frac{2^K}{\binom{K}{i}},
\]
where the maximization is performed over independent distributions $P_{X_i}$, and the maximum is achieved for $P_{X_i} = \textrm{Bern}(1/2)$, $i \in [K]$.

The asymptotics can be summarized as follows.

\begin{theorem}
The maximal achievable rate of a $K$-user uniquely decodable code for the binary $K$-user noiseless \ac{bac} is asymptotically equal to $\frac{1}{2} \log_2(\pi e K/2)$ with increasing $K$.
\end{theorem}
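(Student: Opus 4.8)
The plan is to recognize the exact sum-rate expression stated just above the theorem as the Shannon entropy of a symmetric binomial random variable, and then to pin down the asymptotics of that entropy through a local central limit (Gaussian) approximation. The converse and achievability — that the maximal sum rate of a uniquely decodable code for the noiseless \ac{bac} equals $C_\Sigma(K)$ — are already contained in the capacity-region statement preceding the theorem, so only the asymptotic evaluation of $C_\Sigma(K)$ remains. With the optimal input $P_{X_i}=\mathrm{Bern}(1/2)$ the output $Y=\sum_{i=1}^K X_i$ is $\mathrm{Binomial}(K,1/2)$, and
\[
C_\Sigma(K) = \sum_{i=0}^K \frac{\binom{K}{i}}{2^K}\log_2\frac{2^K}{\binom{K}{i}}
\]
is precisely $H(Y)=-\sum_i p_i\log_2 p_i$ with $p_i=\binom{K}{i}2^{-K}$. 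It therefore suffices to show $H(Y)=\tfrac{1}{2}\log_2\br{\pi e K/2}+o(1)$, i.e.\ that the entropy of this discrete law agrees asymptotically with the differential entropy of a Gaussian of variance $\sigma^2=K/4$.

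Second, I would invoke the local central limit theorem for the binomial: writing $i=K/2+t$, for $t$ in a central window one has
\[
p_i = \frac{1}{\sqrt{2\pi\sigma^2}}\,\exp\br{-\frac{(i-K/2)^2}{2\sigma^2}}\br{1+o(1)}, \qquad \sigma^2 = \frac{K}{4},
\]
uniformly. Substituting $-\log_2 p_i \approx \tfrac{1}{2}\log_2(2\pi\sigma^2)+\frac{(i-K/2)^2}{2\sigma^2}\log_2 e$ and averaging over $Y$, the constant term contributes $\tfrac{1}{2}\log_2(2\pi\sigma^2)$, while the quadratic term contributes $\frac{\Ex{(Y-K/2)^2}}{2\sigma^2}\log_2 e=\tfrac{1}{2}\log_2 e$ because $\mathrm{Var}(Y)=\sigma^2$. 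Adding these gives $\tfrac{1}{2}\log_2\br{2\pi e\sigma^2}=\tfrac{1}{2}\log_2\br{\pi e K/2}$, which is exactly the claimed asymptotic.

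The main obstacle is not the leading-order computation but the rigorous control of the approximation. One must verify that the Gaussian estimate holds uniformly over a central window wide enough to carry essentially all of the entropy, and that the tails $|i-K/2|\gtrsim\sqrt{K\log K}$ contribute only $o(1)$ to $H(Y)$; this can be handled by bounding $\sum_{\text{tail}} p_i\log_2(1/p_i)$ using Hoeffding/Chernoff estimates for the binomial together with the fact that $\log_2(1/p_i)=O(K)$ while the tail mass decays super-polynomially. A clean route that avoids citing the local CLT is to apply Stirling's formula directly to $\log_2\binom{K}{i}$ and convert the resulting sum into a Riemann-sum approximation of
\[
\int \frac{1}{\sqrt{2\pi\sigma^2}}\,e^{-x^2/(2\sigma^2)}\br{\tfrac{1}{2}\log_2(2\pi\sigma^2)+\frac{x^2}{2\sigma^2}\log_2 e}\,dx,
\]
whose value is again $\tfrac{1}{2}\log_2\br{2\pi e\sigma^2}$. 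Either way, the genuine work lies in showing the neglected error terms are $o(1)$ relative to the $\Theta(\log_2 K)$ main term.
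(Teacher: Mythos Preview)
The paper does not supply its own proof of this theorem; it is quoted from \cite{Weldon1979}, with the surrounding text providing only the converse inequality $R_\Sigma(K)\le C_\Sigma(K)$ before the theorem and the explicit difference-matrix constructions (which give $R_\Sigma(K)/C_\Sigma(K)\to 1$) after it. Your proposal correctly identifies and carries out the missing analytic step, namely $C_\Sigma(K)=H(\mathrm{Binomial}(K,1/2))=\tfrac12\log_2(2\pi e\cdot K/4)+o(1)=\tfrac12\log_2(\pi e K/2)+o(1)$ via the local CLT / Stirling route; this is the standard argument and it is sound, including your remarks on controlling the tails.

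One small correction to the framing: you write that both converse and achievability are ``already contained in the capacity-region statement preceding the theorem,'' but that display gives only the converse $R_\Sigma(K)\le C_\Sigma(K)$. Achievability for \emph{uniquely decodable} (zero-error) codes does not follow from the capacity region --- for finite $K$ the zero-error sum rate can sit strictly below $C_\Sigma(K)$ (cf.\ the two-user discussion in Section~\ref{sec:ch2_zero}). It is supplied instead by the constructions the paper describes immediately after the theorem. With that attribution adjusted, your entropy computation completes the proof.
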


Finally, the authors proposed uniquely decodable codes for the many-access \ac{bac}. The key ingredient of the construction is the difference matrix $\dmat{D}$ of size $K \times n$ over $\{0, 1,-1\}$, where the rows of $\dmat{D}$ are linearly independent over $\{0, 1,-1\}$. Given $\dmat{D}$, it is possible to construct $K$ user codes $\mathcal{C}_1, \mathcal{C}_2, \ldots, \mathcal{C}_K$ that are uniquely decodable\footnote{Note that in this case, we have zero-error decoding.}. Each user code consists of two codewords, i.e., $M_i = 2$, $i \in [K]$.

The most significant result is that the sum rate of such codes,
\[
R_\Sigma(K) = \frac{K}{n} = 1 + \frac{1}{2} \log_2 n 
\]
achieves $R_\Sigma(K)$ asymptotically, i.e.,
$$
\lim\limits_{K\to\infty} \frac{R_\Sigma(K)}{C_\Sigma(K)} = 1.
$$

The authors also generalized the construction for noisy channels. In this case, the sum rate is given by
\[
R_\Sigma(K) = \frac{K}{n} = 1 + \frac{1}{2} \log_2 \frac{n}{d_{\min}}, 
\]
where $d_{\min}$ is the minimal $L_1$ distance of the $K$-user code, defined as
\[
d_{\min} = \min \norm{\sum\limits_{i=1}^K \dvec{c}_i - \sum\limits_{i=1}^K \dvec{c}'_i}_1,
\]
where the minimum is taken over all pairs $(\dvec{c}_1, \ldots, \dvec{c}_K) \ne (\dvec{c}'_1, \ldots, \dvec{c}'_K)$ with $\dvec{c}_i, \dvec{c}'_i \in \mathcal{C}_i$, $i \in [K]$. 

\subsection{Many-access \ac{gmac}}\label{sec:many_mac1}

Now, let us proceed with the inspirational papers of X.~Chen, D.~Guo, and their co-authors on many-access channels~\cite{guoMAC, guoRandom, Guo2017}. The authors proposed a new paradigm, referred to as many-user information theory, in which the number of users is allowed to grow with the blocklength. Notably, this is one of the first works to address both the partial activity case and the issue of a growing number of users as the blocklength increases.

The paper~\cite{Guo2017} highlights an important observation: even in works that consider a growing number of users, the blocklength is typically sent to infinity before the number of users. Indeed,~\cite{Weldon1979} follows exactly this approach. We note that while the capacity computed in this manner can still be used in the converse bound, it does not necessarily apply to the achievability bound. The reason is as follows: the argument that joint typicality holds with probability 1 as the blocklength grows to infinity does not directly extend to models where the number of users also grows to infinity. Thus, the achievability bound derivation must be reworked.

Now, let us consider the model and the main result. Let $K$ be the total number of users in the system. The users share the same message set $[M]$ but use different encoders $f_i: [M] \bigcup \{0\} \to \mathbb{R}^n$. Each user is active with probability $\alpha^{\br{n}}$, and both this probability and the message selection strategy are identical for all users ($i \in [K]$):
\[
\Pr\left[ W_i = W\right] = \left\{ \begin{array}{cc}
     1 - \alpha^{(n)}, & W=0, \\
     \alpha^{(n)}/M, & W \in [M]. 
\end{array}\right.
\]

Suppose each user accesses the channel independently with identical probability $\alpha^{\br{n}}$ during any given block. The messages are encoded and sent via the Gaussian many-access channel:
\begin{equation}\label{eq:gmac_for_converse}
\rvec{y} = \sum\limits_{i=1}^K f_i(W_i) + \rvec{z},
\end{equation}
where $\rvec{z} \sim \mathcal{N}\br{\dvec{0}, \dmat{I}_n}$ is Gaussian noise. Note that $\norms{f_i(W_i)} \leq Pn$. If user  $i$ is inactive, it is assumed to transmit the all-zero codeword, i.e., $f_i(0) = \dvec{0}$.

The authors focus on the \emph{joint} probability of error:
\[
P_J^{(n)} = \Pr\left[ \mathrm{dec}(\rvec{y}) \ne (W_1, \ldots, W_K)\right], 
\]
where the decoder must return zeros for users who were inactive.

The main result is stated in~\cite[Theorem 1]{Guo2017}. This theorem considers different regimes; for the reader’s convenience, we do not reproduce it in full but instead provide the result relevant to the regime of interest. We focus on the case where the average number of active users ($\Ka$) grows linearly with the blocklength ($n$), i.e.,
\[
K_{a}^{(n)} = \alpha^{(n)} K^{(n)} = \mu n,
\]
where $\mu$ is some constant. At the same time, the total number of users can scale faster, e.g., as $n^2$.

For this case, the symmetric message-length capacity (in bits) is given by:
\[
k^{(n)} \sim \left( \frac{1}{2 \mu} \log_2(1 + \mu n P) - \frac{h(\alpha^{(n)})}{\alpha^{(n)}} \right)_+,
\]
which means there exists a sequence of codebooks with message lengths (in bits) arbitrarily close to $k_n$ such that as $n \to \infty$, the average probability of error $P_J^{(n)}$ vanishes. The capacity is achieved by i.i.d. Gaussian inputs.

Note that $f^{(n)} \sim g^{(n)}$ if and only if
\[
\lim\limits_{n \to \infty} \frac{f^{(n)}}{g^{(n)}} = 1.
\]
We use equivalence rather than equality, since if $k^{(n)}$ is the message-length capacity, then $k^{(n)} + o(k^{(n)})$ is also considered capacity.

\paragraph{Overview of the results.}
The expression for $k_n$ consists of two parts:
\begin{itemize}
\item The first term resembles the classical capacity expression and depends only on the average number of active users.
\item The second term represents the loss due to the need to identify active users and depends on the ratio of the average number of active users to the total number of users.
\end{itemize}

\begin{example}
 Let us consider the example from~\cite{Guo2017}. Assume $P = 10$ dB and $K_{a}^{(n)} = n/4$. We consider three cases: (a) $K^{(n)} = n$, (b) $K^{(n)} = n^2$ and (c) $K^{(n)} = n^3$. The corresponding message-length capacities are:
\begin{itemize}
\item[(a)] $k^{(n)} \sim 2 \log_2(1 + 5 n / 2) - 4 h (1/4)$
\item[(b)] $k^{(n)} \sim 2 \log_2(1 + 5 n / 2) - \log_2 (4 e n)$
\item[(c)] $k^{(n)} = 0$.
\end{itemize}
\end{example}

\begin{remark}[Infinite energy per bit]
The authors of~\cite{Guo2017} assume a fixed power $P$. However, the energy required to send one information bit tends to infinity as $n$ grows. Indeed, consider the case where all users are active:
\[
K_{a}^{(n)} = K^{(n)} = \mu n.
\]
The energy per information bit is given by:
\[
\frac{Pn}{\frac{1}{2 \mu} \log_2(1 + \mu n P)} \to \infty.
\]
\end{remark}

\begin{remark}
By \emph{energy per bit}, we consider the ratio $E_b/N_0$, where $E_b$ represents the energy required to transmit one information bit and $N_0$ denotes the noise power spectral density. Through appropriate scaling of both signal and noise, we may assume $N_0 = 1$ without loss of generality, making the quantities $E_b$ and $E_b/N_0$ equivalent. Therefore, these terms may be used interchangeably throughout this monograph. Furthermore, given $n$ channel uses and a transmit \emph{power} $P$, the total transmitted \emph{energy} equals $Pn$.
\end{remark}

\subsection{Finite energy and joint probability of error}\label{sec:many_mac2}

Recall that energy efficiency is of critical importance for massive \ac{mtc}. Namely we are interested in  \emph{finite} energy per information bit. We have the following result~\cite{Poly2018sk}.

\begin{proposition}\label{prop:p_joint}
Assume the case where all users are active, i.e., $K_{a} = K$, and each user aims to transmit $k = 1$ bit over the channel \eqref{eq:gmac_for_converse} with finite energy $\mathcal{E}$. Then we have
\[
1 - P_J \leq \frac{\mathcal{E} \log e + \log2}{\log K}.  
\]

Thus, if $K \to \infty$, then the success probability $1-P_J \to 0$, implying that the users cannot reliably transmit even a single bit.
\end{proposition}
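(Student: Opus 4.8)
The plan is to prove this as a converse (impossibility) bound built around a single structural observation: since every codeword obeys $\norm{f_i\br{W_i}}^2 \le \mathcal{E}$, flipping any one user's bit perturbs the noiseless superposition $s\br{w}=\sum_{i=1}^K f_i\br{w_i}$ by at most $2\sqrt{\mathcal{E}}$. Hence each of the $2^K$ hypotheses has $K$ near-identical ``neighbours'' differing from it in exactly one coordinate, and the decoder must resolve the true message from a cloud of $K+1$ signals confined to a ball of radius $2\sqrt{\mathcal{E}}$. This local $K$-ary detection problem, not the full $2^K$-ary one, is what governs performance, which is why $\log K$ rather than $K\log 2$ appears in the denominator.

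First I would set up the local problem. For a fixed $w$, consider the uniform ensemble over $w$ and its $K$ flip-neighbours $w\oplus e_j$, whose signals are $s\br{w}+\br{\pm d_j}$ with $d_j=f_j\br{1}-f_j\br{0}$ and $\norm{d_j}^2\le 4\mathcal{E}$. Subtracting the known constant $s\br{w}$ turns this into single-user \ac{awgn} detection of one of $\sim K$ messages with bounded energy $E$; the average success of the \emph{actual} joint decoder restricted to this neighbourhood cannot exceed the Bayes-optimal success for the neighbourhood, since extra side information only helps. I would then bound the latter by Fano's inequality in the sharp form $1-P_e\le\br{I+\log 2}/\log K$ together with the energy-limited mutual-information estimate $I\br{\text{input};\rvec{y}}\le\tfrac{E}{2}\log e$, which follows from $I\le\tfrac n2\log_2\br{1+E/n}$ (a consequence of~\eqref{eq:cap_shannon}) and $\log\br{1+x}\le x\log e$.

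The step that makes the argument yield a bound on the \emph{global} error is a double-counting over neighbourhoods. Averaging the per-neighbourhood Fano bound over all centres $w$, and using that the flip-neighbour relation is symmetric so that every message lies in exactly $K+1$ neighbourhoods, the local averages reassemble exactly into $\tfrac{1}{2^K}\sum_w \Pb{\mathrm{dec}\br{\rvec{y}}=w\mid W=w}=1-P_J$. Combining with the per-neighbourhood estimate yields $1-P_J\le\br{\mathcal{E}\log e+\log 2}/\log K$, and letting $K\to\infty$ gives $1-P_J\to 0$.

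The main obstacle I anticipate is bookkeeping the energy constant so that the coefficient of $\mathcal{E}$ comes out as $1$ rather than a spurious factor: the crude estimate $\norm{d_j}^2\le 4\mathcal{E}$ must be sharpened (for example by centering at the neighbourhood mean, or by using the average rather than the worst-case per-user energy) so as to feed exactly $E=2\mathcal{E}$ into $\tfrac E2\log e$. The two genuinely load-bearing ingredients are the energy-limited mutual-information inequality and the double-counting identity; everything else is the standard reduction of a one-shot detection problem to Fano's inequality.
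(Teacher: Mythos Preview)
Your proposal is correct and is essentially the paper's argument: both collapse the $2^K$-ary test to a local $K$-ary one via the observation that a single bit flip moves the noiseless sum by at most $2\sqrt{\mathcal{E}}$, then finish with Fano plus the energy-limited bound $I\le\tfrac n2\log\br{1+E/n}\le\tfrac{E}{2}\log e$. The paper packages the reduction as a \emph{genie} argument rather than double counting: a genie hands the decoder a vector $\dvec{W}'$ with $d\br{\dvec{W},\dvec{W}'}=1$, so the genie-aided decoder need only identify which of the $K$ coordinates was flipped; since side information cannot hurt, $1-P_J\le\Pr[U=\hat U]$ for this $K$-ary problem, and the same Fano-plus-capacity step applies directly. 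Your averaging over neighbourhood centres together with the symmetry of the flip relation is exactly this genie step unpacked, so the two routes are equivalent --- the genie formulation is simply the one-line version that sidesteps the double-counting bookkeeping you describe. The constant-factor worry you flag (the crude estimate gives $\norms{d_j}\le 4\mathcal{E}$) is present in either framing and, as you already note, does not affect the conclusion $1-P_J\to 0$.
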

\begin{proof}

First, let us show that the standard Fano-inequality-based converse does not work. We have (recalling that $M=2$)
\[
1 - P_J \leq \frac{\frac{n}{2} \log\left( 1 + \frac{K}{n} \mathcal{E} \right) + \log 2}{K \log M} = \frac{1}{2 \log2} \frac{n}{K} \log\left( 1 + \frac{K}{n} \mathcal{E} \right) + \frac{1}{K},
\]
and we may obtain different relationships depending on the ratio $K/n$ (e.g., in the case $K = \mu n$).

Now, let us introduce a \emph{genie} that provides us with the vector
$$
\dvec{W}' = (W'_1, \ldots, W'_K) \in \{0,1\}^K
$$
such that
\[
d(\dvec{W}, \dvec{W}') = 1,
\]
where $\dvec{W} = (W_1, \ldots, W_K)$ represents the transmitted messages, and $d(\dvec{a}, \dvec{b})$ is the Hamming distance between vectors $\dvec{a}$ and $\dvec{b}$.

We note that the genie’s help reduces the number of hypotheses. Initially, there are $2^K$ possible messages, but with the genie's assistance, this number is reduced to $K$, meaning that we only need to determine the error position. Thus,
\[
1 - P_J \leq \Pr\left[ U = \hat{U} \right],
\]
where $U$ is the index of the user whose message was received in error.

Next, we reformulate the problem in terms of the modified received signal:
\[
Y' = Y - \sum\limits_{i=1}^K f_i(W'_i) = f_U(W_U) - f_U(W'_U) + \dvec{z}.
\]
The goal is now to identify $U$.

Consider the new codebook:
\[
\dmat{C} = [\dvec{c}_1, \ldots, \dvec{c}_K],
\]
of size $n \times K$, where
$$
\dvec{c}_i = f_i(W'_i \oplus 1) - f_i(W'_i),
$$
and note that $\norms{\dvec{c}_i} \leq 2\mathcal{E}$.

Applying Fano's inequality to this new problem, we obtain
\[
\Pr\left[ U = \hat{U} \right] \leq \frac{\frac{n}{2} \log\left( 1 + \frac{2\mathcal{E}}{n}  \right) + \log 2}{\log{K}} \leq  \frac{\mathcal{E} \log e + \log 2}{\log{K}}.
\]
\end{proof}

Thus, the probability of joint decoding error cannot be made arbitrarily small under a finite energy constraint.
\begin{remark}
Given these results, in the subsequent discussion, we switch from joint decoding to the \ac{pupe} criterion. The asymptotic results for finite energy are studied in Appendix~\ref{app:asymptotics}.
\end{remark}

\section{Outcomes}

In this monograph, we focus on uncoordinated multiple access, as this form of channel access is both natural and advantageous for massive \ac{mtc} scenarios involving a large number of infrequently transmitting devices. Coordinated multiple access is a separate problem and falls beyond the scope of this monograph, although some tools from this line of research (e.g., decoding approaches) are employed within the \ac{ura} framework.

In this chapter, we examined different approaches to the \ac{mac} problem, discussing both information-theoretic and random-access methods. The former focuses on physical layer effects but does not account for the randomness of user activity. Moreover, it assumes a fixed number of users while the blocklength tends to infinity. In contrast, the latter provides deep insights into collision resolution algorithms by considering the randomness of user activity, but most works largely ignore the impact of noise at the physical layer. Some papers do take noise into consideration (see, e.g., \cite[Appendix]{liva2011graph}), but none of them aim to formulate the corresponding system requirements.

To bridge this gap, the \ac{ura} paradigm was introduced in~\cite{polyanskiy2017perspective}. The \ac{ura} framework not only addresses the computational challenges posed by a vast number of devices but also treats \ac{ra} as a coding problem---incorporating both medium access protocols and physical layer effects. The \ac{ura} paradigm extends previous work in several ways: it incorporates \ac{fbl} performance, allows for an unbounded total number of users, and prioritizes energy per bit over the traditional notion of rate under vanishing error probability.

At the same time, we emphasize that \ac{ura} is an information-theoretic framework that offers a new perspective on random access, providing a fundamental benchmark to assess the performance of various uncoordinated access schemes. \ac{ura} does not represent a new method of accessing a shared medium.

The following chapters provide a detailed overview of the \ac{ura} problem, covering fundamental limits, low-complexity schemes, and its connection to the \ac{cs} problem.

\chapter{\acs{ura} problem statement and \acl{cs} interpretation}\label{chap3}

Recall that the main problem is to provide multiple access to a massive number of uncoordinated and infrequently communicating devices. In this chapter, we highlight the most important aspects of this problem and provide the corresponding system model and information-theoretic formulation, as proposed by Y.~Polyanskiy in \cite{polyanskiy2017perspective}.

\section{System model}\label{chap3:sys_model}

We consider the scenario of partial user activity. Assume that there are $\Kt \gg 1$ users\footnote{In what follows, $\Kt$ is of no importance, and one may assume $\Kt = \infty$.} in the system, but only $K_a \ll K_\text{tot}$ of them are active at any given time. Throughout this chapter, we assume that the number of active users ($K_a$) is fixed and known at the receiver. The papers~\cite{Durisi2021, Durisi2022} extend these results to the scenario where the number of active users is random and unknown. This extension will be discussed briefly in Chapter~\ref{chap4}. It is worth noting that this extension is a major aspect of any practical scheme, since the receiver lacks a priori knowledge of the number of active users. Instead, it must estimate this quantity from the received signal. All users utilize the same message set $[M]$ and aim to transmit $k = \log_2 M$ bits. Each user selects a message to transmit uniformly and independently of the other users. Communication occurs in a frame-synchronized manner, with each frame consisting of $n$ channel uses.

The main features of this problem formulation are as follows:

\begin{enumerate}
\item \emph{Large number of active users with short data packets.} Unlike the classical \ac{mac} problem formulation, where $\Ka$ is fixed while $n, k \to \infty$, we consider a scenario in which $k$ is fixed while $\Ka$ and $n$ are large.

\item \emph{Users share the same encoder.} Given the vast number of devices, assigning different encoders to each user would result in prohibitive receiver complexity. The receiver would not know which decoder to use and, in the worst case, would need to try all of them. Therefore, a promising strategy is to employ a common encoder $f: [M] \to \mathbb{R}^n$ for all users, leading to a random-access scenario. A natural power constraint,
\[
\left\|f(W)\right\|^2_2 \leq nP, \quad W \in [M],
\]
is also imposed.  

Since users are indistinguishable, the receiver is required only to recover the transmitted messages without identifying their senders. In other words, decoding is performed up to permutation.\footnote{Note that we only consider channels that are permutation-invariant, allowing this type of decoding.} Such schemes are referred to as \acf{ura}, since the source of the message is irrelevant.  

Users may include their identity as part of the $k$-bit payload, but this is not mandatory. For example, a fire sensor only needs to transmit the coordinates of a fire without an explicit identifier.

\item \emph{User-centric probability of error or \ac{pupe}.\footnote{To justify the choice of \ac{pupe}, see Proposition~\ref{prop:p_joint}, which shows that achieving a small joint probability of error is infeasible.}} Let
$$
\dset{T} = \{W_1, W_2, \ldots, W_{K_a}\}
$$
denote the set of transmitted messages. The decoder outputs a set ${\dset{R}} \subseteq [M]$, and we measure the probability of error as follows:
\begin{equation}
\label{eq:ch3:p_missed}
P_e = \mathbb{E}\left[ \frac{| \dset{T} \backslash \dset{R}|}{\Ka} \right] = \frac{1}{K_a} \sum\limits_{i=1}^{K_a} \prob{W_i \not\in {\dset{R}}}.
\end{equation}
Tracking $P_e$ is sufficient when the output list size is fixed (e.g., the usual assumption for bound derivations is $|{\dset{R}}| = K_a$). However, when $|{\dset{R}}|$ is a random variable (as in practical schemes), we also define the \emph{false alarm rate (FAR)}:
\begin{equation}
\label{eq:ch3:p_false}
P_f = \mathbb{E}\left[ \frac{| \dset{R} \backslash \dset{T}|}{|\dset{R}|} \right].
\end{equation}

\item \emph{Energy efficiency.} Since the devices are autonomous and battery-powered, the goal is to minimize the energy per bit while ensuring a maximum tolerable \ac{pupe} of $P_e \leq \varepsilon$:
\begin{equation}\label{eq:ch3_energy_efficiency}
\min\limits_{\text{s. t. } P_e \leq \varepsilon}{\frac{E_b}{N_0}}, \quad \text{where} \quad \frac{E_b}{N_0} = \frac{Pn}{2k}.
\end{equation}
\end{enumerate}

In this monograph, we focus on the Gaussian channel, as it provides key insights into the \ac{ura} problem without unnecessary complications. The fading model and its main implications will be discussed briefly in Chapter~\ref{chap6}. 

Let $\dvec{x}^{(i)} = f(W_i)$, for $i \in [\Ka]$. The output of the Gaussian channel, $\rvec{y} \in \mathbb{R}^n$, is given by:
\begin{equation}
\label{eq:original_cs}
\rvec{y}=\sum_{i=1}^{K_a} \dvec{x}^{(i)} +\rvec{z},
\end{equation}
where $\rvec{z} \sim \mathcal{N}\br{\bzero,\bI_{n}}$ is the \ac{awgn}.

\begin{remark}
In \ac{ura} analysis, we focus on the non-asymptotic regime, as users utilize the same codebook of fixed size, and we cannot increase the number of users indefinitely (since $|\dset{T}| \leq M$). We provide the asymptotic analysis of the bounds described in Chapter~\ref{chap4} in Appendix~\ref{app:asymptotics}. For this purpose, we shift to a different codebook scenario (see Section~\ref{sec:many_mac2}).
\end{remark}

\begin{remark}
Note that the factor $2$ in \eqref{eq:ch3_energy_efficiency} corresponds to a real-valued \ac{awgn} channel. In the case of a complex-valued channel (Chapter~\ref{chap6}), this factor should be omitted.
\end{remark}

\begin{remark}
We note that \ac{ura} does not introduce a new way of accessing the channel. For example, the ALOHA system presented in Section~\ref{ch2:sec_par_active} is a good example of a \ac{ura} scheme. Instead, \ac{ura} provides a new information-theoretic model that incorporates both medium access protocols and physical layer effects, focusing on energy per bit rather than the more classical throughput and rate under vanishing error. This model applies to already existing schemes, such as ALOHA.
\end{remark}

\begin{remark}
In this manuscript, we assume perfect synchronicity, as proposed in the original work~\cite{polyanskiy2017perspective}. We consider that synchronization is supported by additional mechanisms such as beaconing and proper frame structuring. In the case of perfect synchronicity, the analysis becomes clearer and more straightforward.

At the same time, we note that keeping autonomous low-power devices synchronized is a challenging problem. Asynchronous \ac{ura} has been addressed in various research papers~\cite{Amalladinne2019, fengler2023advantagesasynchronyunsourcedmac, Chen2023Async, Decurninge2022Async}; however, even the fundamental limits for this setup remain unknown. We mention this as an interesting open problem in Chapter~\ref{chap7}.
\end{remark}

\section{\acs{ura} as an \acl{asr} problem}\label{sec:ura_as_cs}

In some situations, it is beneficial to represent sets\footnote{or multi-sets if the same message was chosen by multiple users.} of messages as corresponding indicator vectors. Consider a multi-set:
$$
\dset{T} = \{W_1, W_2, \ldots, W_{K_a}\}, \quad W_i \in [M], \quad i \in \Ka.
$$
We can represent this set as a vector:
\[
\dvec{u} = [u_1, u_2, \ldots, u_M]^T,
\]
where $u_i$ denotes the number of times message $i$ was chosen or the multiplicity of $i$ in $\dset{T}$.

\begin{example}
Let $M = 5$, $\Ka = 8$, and
$$
\dset{T} = \{1, 2, 2, 3, 3, 5, 5, 5\}.
$$
The corresponding vector $\dvec{u}$ is given by:
\[
\dvec{u} = [1, 2, 2, 0, 3]^T.
\]
\end{example}

It is worth mentioning that the \ac{ura} problem formulation allows for a standard \ac{cs} interpretation~\cite{Donoho2006CS, CandesTao2006}. Recall that $M$ is the number of messages and $n$ is the blocklength. We can represent the channel output as
\begin{equation}\label{eq:ura_as_cs}
\rvec{y} = \dmat{X} \dvec{u} + \rvec{z},
\end{equation}
where:
\begin{itemize}
\item $\mathbf{y}$ is the channel output, representing a noisy mixture of users' codewords,
\item $\mathbf{X} = [ \dvec{x}_1, \ldots, \dvec{x}_M] = [x_{i,j}]$ is a codebook of size $n \times M$ with $M = 2^k$, where $\dvec{x}_W = f(W)$ for $W \in [M]$,
\item $\mathbf{u}$ represents the activity vector, and $\rvec{z}$ is the \ac{awgn} vector. Note that $|\mathrm{supp}(\dvec{u})| \leq \Ka$ and $\sum\nolimits_{i=1}^M u_i = \Ka$.
\end{itemize}

The matrix $\dmat{X}$ can be viewed as a sensing matrix, and $\dvec{u}$ is a sparse vector (with sparsity defined by the number of active users) to be reconstructed. More precisely, our problem is the \acf{asr} problem~\cite{Reeves2013LB}, as we only need to determine the support of $\dvec{u}$ up to some distortion defined by the desired \ac{pupe}:
\[
P_e = \mathbb{E}\brs{\frac{| \mathrm{supp}({\dvec{u}}) \backslash \mathrm{supp}(\hat{\dvec{u}})|}{\Ka}} \leq \varepsilon,
\]
where $\hat{\dvec{u}}$ is the recovered activity vector or the indicator vector of the set $\dset{R}$.

\begin{figure}[t]
\centering
\includegraphics{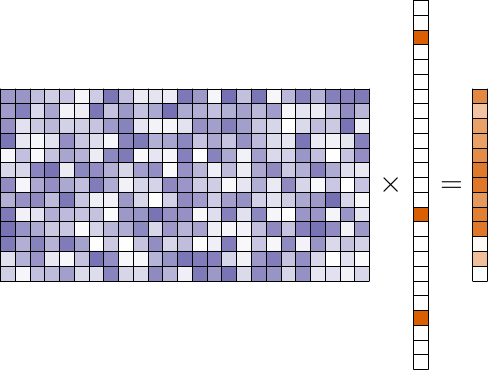}
\includegraphics{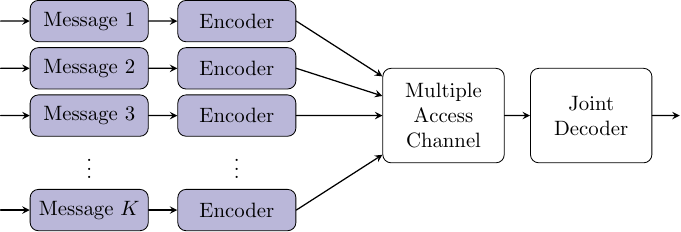}
\caption{Equivalence of the \ac{cs} problem (top) and the same-codebook multiple-access problem (bottom). The \ac{cs} problem is given by~\eqref{eq:ura_as_cs}. Orange squares represent the nonzero elements of the users' activity vector. This figure is reproduced from~\cite{Poly2021tut}}.\label{fig_cs_illustration}
\end{figure}

\begin{remark}\label{remark:msg_unique}
The number of unique messages $|\dset{T}|$ can be less than $K_a$. Assuming $W_i \sim \mathrm{Unif}([M])$, $i \in [\Ka]$, we can estimate the probability of this event as follows:
\[
p' = \prob{\left|\dset{T}\right| < K_a} = 1 - \prod\limits_{i=0}^{K_a-1} \left( 1 - \frac{i}{M} \right) \leq \frac{\binom{K_a}{2}}{M}.
\]
In this monograph, we consider $\log_2 M \approx 100$ bits and $1 \leq \Ka \leq 500$. For these parameters, $p'$ is negligible, allowing us to disregard the case where $|\dset{T}| < \Ka$ and thus work with $\dvec{u}, \hat{\dvec{u}} \in \{0,1\}^M$, where $|\mathrm{supp}({\dvec{u}})|= |\mathrm{supp}(\hat{\dvec{u}})| = \Ka$.
\end{remark}

We establish a schematic correspondence between the \ac{ura} and \ac{asr} problems in~\Fig{fig_cs_illustration}. The sensing matrix is represented by a codebook shared among all users. This matrix is multiplied by the activity vector $\dvec{u}$ (with nonzero elements marked by orange squares). The resulting vector, after being corrupted by noise, produces the channel output.

\begin{remark}
A key observation in related research is that the dimensionality of this problem is enormous---on the order of $2^{100}$ or greater. This characteristic makes the application of standard \ac{cs} algorithms infeasible.
\end{remark}

\section{Converse bounds}

In this section, we describe two converse bounds for the \ac{ura} problem. The first is a single-user converse, with the only difference being that it accounts for the output list of messages. The second utilizes a capacity versus rate-distortion argument.

\begin{theorem}[Single-user converse~\cite{Poly2021tut}]
\label{the:converse_su}
Consider $K_a$ users transmitting messages $W_i \in [M]$, $i \in [K_a]$, via a Gaussian channel within $n$ channel uses, under a maximum tolerable \ac{pupe} $\varepsilon$. Given that the decoder’s output list size is $\ell_0$, then
\begin{equation}\label{eq:converse_ura}
nP \geq \left( Q^{-1}(\ell_0/M) + Q^{-1}(\varepsilon)\right)^2, \quad Q(x) = \int\limits_{x}^{+\infty} \frac{1}{\sqrt{2 \pi} } e^{-y^2/2} dy,
\end{equation}
where $Q^{-1}\br{\cdot}$ is the inverse of the $Q$-function defined above.
\end{theorem}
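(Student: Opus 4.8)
The plan is to reduce the $\Ka$-user problem to a single-user list-decoding problem by means of a genie, and then to bound that single-user problem with a binary hypothesis test in the style of the meta-converse. First I would introduce a genie that reveals to the decoder the codewords $\dvec{x}^{(2)}, \ldots, \dvec{x}^{(\Ka)}$ of every user except user $1$. Subtracting them from $\rvec{y}$ in \eqref{eq:original_cs} leaves the single-user observation $\rvec{y}' = \dvec{x}^{(1)} + \rvec{z}$ with $\norms{\dvec{x}^{(1)}} \leq nP$. The genie can only help, and the setup is symmetric in the users, so $P_e \leq \varepsilon$ forces $\prob{W_1 \in \dset{R}} \geq 1 - \varepsilon$ for the size-$\ell_0$ list $\dset{R} = \dset{R}(\rvec{y}')$ returned by the (genie-aided) decoder. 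It therefore suffices to show that any size-$\ell_0$ list decoder on the single-user AWGN channel of power $nP$ that succeeds with probability at least $1-\varepsilon$ must obey \eqref{eq:converse_ura}.

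Next I would fix the auxiliary output law $Q_{\rvec{y}} = \mathcal{N}\br{\bzero, \bI_n}$ corresponding to pure noise, and read, for each message $w \in [M]$, the event $\{w \in \dset{R}\}$ as a binary test of $P_{\rvec{y}' \mid W_1 = w} = \mathcal{N}\br{\dvec{x}_w, \bI_n}$ against $Q_{\rvec{y}}$, with acceptance region $A_w = \{\rvec{y}' : w \in \dset{R}\}$. Averaging over the uniform choice of $W_1$ produces the two quantities that drive the argument: the true-positive average $\tfrac{1}{M}\sum_w P_{\rvec{y}' \mid w}(A_w) = \prob{W_1 \in \dset{R}} \geq 1-\varepsilon$, and the false-positive average $\tfrac{1}{M}\sum_w Q_{\rvec{y}}(A_w) = \tfrac{1}{M}\,\mathbb{E}_{Q}\card{\dset{R}} = \ell_0/M$, where the last equality uses only that the list has size $\ell_0$ and that under $Q_{\rvec{y}}$ the list is independent of $W_1$.

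Then I would apply the Neyman-Pearson lemma test by test. The log-likelihood ratio between $\mathcal{N}\br{\dvec{x}_w, \bI_n}$ and $\mathcal{N}\br{\bzero, \bI_n}$ depends on $\rvec{y}'$ only through its projection onto $\dvec{x}_w$, so each test collapses to a one-dimensional Gaussian mean-shift with shift $\norm{\dvec{x}_w} \leq \sqrt{nP}$. Hence any region of false-positive level $\beta_w := Q_{\rvec{y}}(A_w)$ has true-positive at most $g(\beta_w)$, where $g(\beta) = Q\br{Q^{-1}(\beta) - \sqrt{nP}}$ (the power constraint is handled by monotonicity: larger shift only improves the ROC, so I may replace $\norm{\dvec{x}_w}$ by $\sqrt{nP}$). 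A direct computation gives $g'(\beta) = \Exp{\sqrt{nP}\,Q^{-1}(\beta) - nP/2}$, which is decreasing in $\beta$, so $g$ is concave on $[0,1]$. Concavity plus Jensen applied to the two averages yields $1 - \varepsilon \leq g(\ell_0/M)$, i.e. $Q^{-1}(1-\varepsilon) \geq Q^{-1}(\ell_0/M) - \sqrt{nP}$; using $Q^{-1}(1-\varepsilon) = -Q^{-1}(\varepsilon)$, rearranging, and squaring in the nontrivial regime where the right-hand side is positive gives exactly \eqref{eq:converse_ura}.

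The main obstacle I anticipate is the aggregation step. The Neyman-Pearson bound is available only per message, whereas both the PUPE constraint and the list-size constraint are averages over messages; the entire argument therefore hinges on the concavity of the Gaussian ROC $g$, which lets the per-message inequalities be combined through a single Jensen inequality. Establishing (or citing) that concavity, together with the monotonicity in the power constraint that justifies replacing $\norm{\dvec{x}_w}$ by $\sqrt{nP}$, is the only genuinely analytic point; the genie reduction, the reduction of the $n$-dimensional test to a scalar mean-shift, and the final algebra are routine.
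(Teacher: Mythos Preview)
Your proposal is correct and follows the same route as the paper's sketch: a symmetry plus genie reduction to a single-user AWGN list-decoding problem, followed by the meta-converse of \cite{ppv2011} with auxiliary law $Q_{\rvec{y}} = \mathcal{N}(\bzero,\bI_n)$. Where the paper simply asserts that the single-user bound ``extends naturally'' to list size $\ell_0$, you actually carry out that extension via the per-message Neyman--Pearson bound, the concavity of the Gaussian ROC $g(\beta)=Q\br{Q^{-1}(\beta)-\sqrt{nP}}$, and Jensen's inequality to average over messages; this is exactly the intended argument made explicit.
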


\begin{proof}
We provide a sketch of the proof:
\begin{itemize}
\item In \cite{ppv2011}, it was shown that any single-user channel code over a Gaussian channel with parameters $(n,M,P)$ and error probability $P_e \triangleq \Pr[W \ne \hat{W}] \leq \varepsilon$ must satisfy
\[
nP \geq \left( Q^{-1}\br{1/M} + Q^{-1}(\varepsilon)\right)^2.
\]
\item This argument extends naturally to show that list-$\ell_0$ decodable codes (where $P_e \triangleq \Pr[W \not\in \dset{R}] \leq \varepsilon$ and $|\dset{R}| = \ell_0$) must satisfy~\eqref{eq:converse_ura}.
\item \emph{Symmetry argument}: It suffices to obtain a lower bound on the probability that a particular user's message is not in the decoded list.
\item \emph{Genie argument}: Assume the decoder has access to the interference from all other users. The equivalent channel is given by:
\[
\rvec{y}' = \dvec{x}^{(1)} + \rvec{z},
\]
to which the bound above applies.
\end{itemize}
\end{proof}

\begin{theorem}[Multi-user converse~\cite{polyanskiy2017perspective}]
\label{the:converse_ura}
Consider $K_a$ users transmitting messages $W_i \in [M]$, $i \in [K_a]$, via a Gaussian channel within $n$ channel uses, under a maximum tolerable \ac{pupe} $\varepsilon$. Given that the decoder’s output list size is $\ell_0$, then
$$
\log_2{M} \leq 
\frac{1}{\left(1 - \varepsilon\right)}\left(
\frac{n}{K_a} C(\Ka P) + h\left(\varepsilon\right) + \left(1 - \varepsilon\right)\log_2{\ell_0}\right),
$$
where $C(x) = \nicefrac{1}{2} \log_2 (1 + x)$.
\end{theorem}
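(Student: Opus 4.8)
The plan is to combine a list-decoding Fano inequality applied \emph{per user} with a single Gaussian maximum-entropy bound on the \emph{aggregate} mutual information; this is the ``capacity versus rate-distortion'' argument alluded to above. Throughout I treat $W_1,\dots,W_{\Ka}$ as i.i.d.\ uniform on $[M]$, so that the collision event may be discarded at negligible cost (its probability is bounded in Remark~\ref{remark:msg_unique}).

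First I would establish a list Fano inequality for a single user. Fixing user $i$, writing $P_e^{(i)} = \prob{W_i \notin \dset{R}}$ with $|\dset{R}| = \ell_0$, and introducing the error indicator $E_i = \1\{W_i \notin \dset{R}\}$, I expand $H(W_i \mid \rvec{y}) \leq H(W_i, E_i \mid \rvec{y})$ and split on $E_i$: when $E_i = 0$ the message lies in a set of size $\ell_0$, and when $E_i = 1$ in a set of size at most $M$. This yields
\[
H(W_i \mid \rvec{y}) \leq h\br{P_e^{(i)}} + \br{1 - P_e^{(i)}}\log_2 \ell_0 + P_e^{(i)}\log_2 M .
\]
Since $W_i$ is uniform, $H(W_i) = \log_2 M$, and rearranging gives the per-user bound $I(W_i; \rvec{y}) \geq \br{1 - P_e^{(i)}}\br{\log_2 M - \log_2 \ell_0} - h\br{P_e^{(i)}}$.

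Next I would aggregate. Summing over $i$, using $\sum_i \br{1 - P_e^{(i)}} \geq \Ka(1-\varepsilon)$ from the PUPE constraint (legitimate because $\log_2 M \geq \log_2 \ell_0$), and invoking concavity together with monotonicity of $h$ on $[0,\tfrac12]$ to get $\sum_i h\br{P_e^{(i)}} \leq \Ka\, h(\varepsilon)$, I obtain
\[
\sum_{i=1}^{\Ka} I(W_i; \rvec{y}) \geq \Ka(1-\varepsilon)\br{\log_2 M - \log_2 \ell_0} - \Ka\, h(\varepsilon).
\]
Because the $W_i$ are independent, the chain rule and subadditivity of conditional entropy give $\sum_i I(W_i;\rvec{y}) \leq I(W_1,\dots,W_{\Ka};\rvec{y})$, so it remains only to upper bound the joint mutual information.

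For that final bound I would use the Gaussian structure. With $\rvec{s} = \sum_i \dvec{x}^{(i)}$ and $\rvec{y} = \rvec{s} + \rvec{z}$, we have $I(W_1,\dots,W_{\Ka};\rvec{y}) = h(\rvec{y}) - h(\rvec{z})$, and the maximum-entropy inequality gives $h(\rvec{y}) \leq \tfrac{n}{2}\log_2\br{2\pi e \cdot \tfrac1n \Ex{\norms{\rvec{y}}}}$. The one delicate point, and the main obstacle, is controlling $\Ex{\norms{\rvec{s}}}$: expanding $\Ex{\norms{\sum_i \dvec{x}^{(i)}}}$ produces cross terms scaling like $\Ka^2 P$ rather than $\Ka P$ whenever the codebook is not zero-mean, which would ruin the bound. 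I would remove them by noting that the mean codeword $\bar{\dvec{x}} = \tfrac1M\sum_{W}f(W)$ and the count $\Ka$ are both known at the receiver; subtracting the deterministic offset $\Ka\bar{\dvec{x}}$ preserves mutual information and leaves a centered effective codebook $g(W) = f(W) - \bar{\dvec{x}}$ with $\Ex{\norms{g(W)}} \leq nP$, so the cross terms vanish and $\tfrac1n\Ex{\norms{\rvec{s} - \Ka\bar{\dvec{x}}}} \leq \Ka P$. This gives $I(W_1,\dots,W_{\Ka};\rvec{y}) \leq n\,C(\Ka P)$. Combining with the aggregate lower bound, dividing by $\Ka$, and solving for $\log_2 M$ yields the claimed inequality.
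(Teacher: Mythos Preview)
Your proof is correct and follows essentially the same route as the paper: per-user list-Fano inequality, sum over users using independence of the $W_i$ to pass to the joint mutual information, and then upper-bound $I(W_1,\dots,W_{\Ka};\rvec{y})$ by $n\,C(\Ka P)$. The only cosmetic difference is in this last step: the paper invokes the memoryless-channel decomposition and the per-symbol GMAC sum-rate bound (which dispatches your ``cross-term'' worry implicitly, since variance rather than second moment enters and concavity of $\log$ handles the averaging over coordinates), whereas you make the codebook centering explicit before applying maximum entropy---the two arguments are equivalent.
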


\begin{proof}
Consider the $i$-th user. Let $\varepsilon_i = \prob{\rscalar{W}_i \not\in \dset{R}}$. Using Fano’s list inequality, we obtain:
\begin{equation}\label{eq:list_fano}
H\left(\rscalar{W}_i | \rvec{y} \right) \leq h(\varepsilon_i) + \left(1 - \varepsilon_i \right) \log_2{\ell_0} + \varepsilon_i \log_2{\left(M - \ell_0\right)}.
\end{equation}

Summing inequalities~\eqref{eq:list_fano} over all $i \in [K_a]$ and noting that 
$$
H\left(\rscalar{W}_i | \rvec{y} \right) = H(\rscalar{W}_i) - I(\rscalar{W}_i; \rvec{y}),
$$
we obtain:
\begin{flalign*}
\sum\limits_{i=1}^{K_a}I\left(\rscalar{W}_i;\rvec{y} \right) &\geq
\sum\limits_{i=1}^{K_a} H(\rscalar{W}_i) \\
& - \sum\limits_{i=1}^{K_a} h\left(\varepsilon_i\right) - \left(1 - \sum\limits_{i=1}^{K_a} \varepsilon_i \right) \log_2{\ell_0} \\
& - \sum\limits_{i=1}^{K_a} \varepsilon_i \log_2{\left(M - \ell_0\right)}.
\end{flalign*}

Note that:
\begin{flalign*}
H(\rscalar{W}_i) &= \log_2M, \quad \text{for} \ i \in [K_a], \\
\sum\limits_{i=1}^{K_a} \varepsilon_i &= K_a \varepsilon, \\
\sum\limits_{i=1}^{K_a} h\left(\varepsilon_i\right) &\geq K_a h(\varepsilon).
\end{flalign*}

Using the Markov chain:
\[
(\rscalar{W}_1, \ldots, \rscalar{W}_{\Ka}) \to (\rvec{x}^{(1)}, \ldots, \rvec{x}^{(\Ka)}) \to \rvec{y} \to \dset{R},
\]
we derive:
\begin{flalign*}
\sum\limits_{i=1}^{\Ka}I\left(\rscalar{W}_i;\rvec{y} \right) &\leq I(\rscalar{W}_1, \rscalar{W}_2, \ldots, \rscalar{W}_{\Ka}; \rvec{y}) &\text{(independent messages)} \\
& \leq I(\rvec{x}^{(1)}, \rvec{x}^{(2)}, \ldots, \rvec{x}^{(\Ka)}; \rvec{y}) &\text{(data processing ineq.)}\\
& \leq \sum\limits_{j=1}^n I(\rscalar{x}^{(1)}_j, \rscalar{x}^{(2)}_j, \ldots, \rscalar{x}^{(\Ka)}_j; \rscalar{y}_j) &\text{(memoryless channel)}\\
& \leq n C(\Ka P).
\end{flalign*}
\end{proof}

\begin{remark}
A very similar problem was considered in \cite{Reeves2013LB} and presented in \cite{Poly2021tut}. The main difference is that \cite{Reeves2013LB, Poly2021tut} consider $\dset{T}$ chosen uniformly from $\binom{M}{K_a}$ variants (or equivalently, the indicator vector $\dvec{u} \in \{0,1\}^M$, where $|\mathrm{supp}(\dvec{u})| = K_a$). The converse presented above remains valid even when messages are repeated. For the regime of interest, the two converses coincide.
\end{remark}

We refer the reader to Section~\ref{chap4:num_res}, where different achievability bounds are compared, highlighting the gap between achievability and converse bounds presented in Theorems~\ref{the:converse_su} and~\ref{the:converse_ura}. The results are shown in~\Fig{chap4:num_res}. Note that the converse is depicted as a single line, representing the maximum of the bounds from Theorems~\ref{the:converse_su} and~\ref{the:converse_ura}.

\section{Parameters of interest}\label{sec:params}
We need to fix the parameters $(n, k, \Ka)$. Consider a typical LoRa network:
\begin{itemize}
    \item Payload size: $k \approx 100$ bits.
    \item A message transmitted using LoRa with a spreading factor of 11 occupies approximately $k \cdot\nicefrac{2^{11}}{11}$ complex channel uses, resulting in $n \approx 3\times 10^4$ real channel uses.
\end{itemize}
Following~\cite{polyanskiy2017perspective}, we use the following parameters:
\begin{itemize}
    \item Frame length: $n = 3\times 10^4$ (real channel uses),
    \item User payload: $k = 100$ bits,
    \item Number of active users: $K_a = 1, \dots, 500$,
    \item Target \ac{pupe}: $\varepsilon \in \{0.001, 0.05, 0.1\}$.
\end{itemize}
The goal is to find the minimal $E_b/N_0$ in~\eqref{eq:ch3_energy_efficiency} such that $P_e \leq \varepsilon$.

We also note that users share a large number of channel uses compared to the number of information bits, leading to an individual coding rate of approximately $1/300$.

\chapter{\acl{gura}: achievability bounds}\label{chap4}

In this chapter, we consider achievability bounds for the \acf{gura}. Specifically, we are interested in the tradeoff between energy efficiency ($E_b/N_0$) and the number of active users ($\Ka$). The term ``achievability'' implies that we do not account for decoding complexity and instead consider the potential capabilities under an infeasible decoding algorithm.

We begin the chapter by defining the \ac{ml} decoding rule and establishing a general setup for evaluating achievability bounds, as presented in Section~\ref{ch5:ml_rule_statement}. Within this setup, we assume that the number of active users is known at the receiver.

In Section~\ref{ch5:sec_tricks}, we introduce Gallager's $\rho$-trick and Fano's trick -- key components of the subsequent analysis. Next, in Section~\ref{ch4:sec_ach_bounds}, we derive three achievability bounds. The first bound, presented by Y. Polyanskiy~\cite{polyanskiy2017perspective}, is based on Gallager's $\rho$-trick (see Theorem~\ref{th:polyanskiy_bound}). This bound has become a foundational result in \ac{ura} research. The second bound is based on Fano's trick (see Theorem~\ref{th:gauss_codebook}). We find that this bound provides a similar energy efficiency estimate to the first one. However, a straightforward modification of this bound enables us to estimate energy efficiency under binary codebooks (see Theorem~\ref{th:binary_codebook}). Notably, we observe that switching from Gaussian codebooks to binary does not lead to a loss in achievable energy efficiency estimates.

Next, we describe an achievability bound that combines both tricks. Originally designed to evaluate the performance of \ac{sparcs}, this bound is presented in Theorem~\ref{th:comb_tricks}. However, these results are also applicable to the \ac{ura} problem.

In Section~\ref{ch4:further_comments}, we discuss additional research on achievable energy efficiency analysis. We present results for a random number of active users, where energy efficiency depends on the probability of error and the \acf{far}, as discussed in Section~\ref{ch4:sec:random_ka}. Finally, in Section~\ref{ch4:sec:gordon_lemma}, we explore the application of Gordon's lemma to the \ac{ura} problem. We conclude this chapter with a numerical analysis of the described bounds, as presented in Section~\ref{chap4:num_res}.

\section{\Acl{ml} decoding rule}\label{ch5:ml_rule_statement}

Recall that $\dset{T} = \{W_1, W_2, \ldots, W_{K_a}\}$ is the set of transmitted messages, and ${\dset{R}} \subseteq [M]$, with $|{\dset{R}}| = K_a$, is the set of received messages. The decoder's task is to recover the set of transmitted messages up to permutation. Note that
\[
\Pb{\bigcup\nolimits_{i \ne j} \{ W_i = W_j \}} \leq \frac{\binom{K_a}{2}}{M},
\]
which is negligible for the system parameters of interest (see Remark~\ref{remark:msg_unique}). Thus, we do not consider multi-sets in this section.

Next, we introduce helpful notation. For $\dset{I} \subseteq [M]$, we define
$$
\dvec{s}_{\dset{I}} = \sum\limits_{W \in \dset{I}}f\br{W} = \sum\limits_{W \in \dset{I}} \dvec{x}_W.
$$
We now formulate the \ac{ml} decoding rule:
\begin{equation}
\label{eq:ura_awgn_decoding_rule}
\dset{R} = \argmin_{\dset{R}' \subseteq [M], \left| \dset{R}'\right| = K_a} \norms{\rvec{y} - \dvec{s}_{\dset{R}'}}.
\end{equation}

\begin{remark}
We note that the decoder mentioned above is optimal for the joint error probability rather than for \ac{pupe}. A \ac{pupe}-optimal decoder should compute
\[
\Pr\left[ W \in \dset{T} \: \middle| \: \dvec{y}\right] \propto \sum\limits_{\dset{T}': W \in \dset{T}'} p\left(\dvec{y} \: \middle| \: \dvec{s}_{\dset{\dset{T}'}}\right) \Pr\left[\dset{T}'\right]
\]
for all $W \in [M]$, and then output the top-$\Ka$ messages. The choice of the joint decoder was motivated by its suitability for analysis.
\end{remark}

\subsection{Probabilistic method, \texorpdfstring{$t$}{t}-error events and \texorpdfstring{$P_e$}{Pe} estimate}\label{sec:pt_naive}

It is not possible to analyze $P_e$ for a specific codebook. Therefore, we utilize the so-called probabilistic method~\cite{prob_method} or random coding~\cite{Shannon_MTC_1948}. Specifically, we consider the ensemble $\mathcal{G}$ of codebooks $\rmat{X}$ specified by the probability distribution  $P_{\rmat{X}}$\footnote{To utilize the symmetry property, we consider only ensembles with i.i.d. codewords.} and compute
\[
{P}_e = \mathbb{E}_{\rmat{X}} [P_e(\rmat{X})].
\]

We can then assert that there exists a codebook $\dmat{X}^* \in \Gens$ such that $P_e(\dmat{X}^*) \leq {P}_e$. Due to symmetry when averaging over codebooks, we assume w.l.o.g. that $\dset{T} = [K_a]$. Let $\dset{C} = \dset{T} \bigcap {\dset{R}}$, $\dset{M} = \dset{T} \backslash {\dset{R}}$ and $\dset{F} = {\dset{R}} \backslash \dset{T}$ be the sets of correctly received, missed, and falsely detected messages, respectively (see~Fig.~\ref{fig:hat_s_structure}).

\begin{figure}[t!]
\centering
\includegraphics{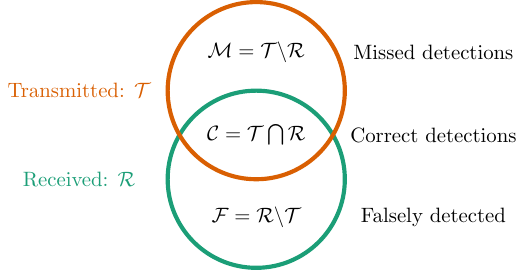}
\caption{Correctly detected, missed, and falsely detected codewords. The set of transmitted messages has size $\left|\dset{T}\right| = K_a$. Throughout this chapter, we assume $\left|\dset{R}\right| = K_a$ for all bounds discussed, which implies that the receiver knows the number of transmitted messages. In this case, we assume that $\left|\dset{M}\right|=\left|\dset{F}\right|$.}
\label{fig:hat_s_structure}
\end{figure}

Following the approach of \cite{polyanskiy2017perspective}, we consider $t$-error events $\dset{E}_t = \{|\dset{M}| = t\}$ and use the following estimate:
\begin{equation}\label{eq:chapter4:pupe_definition}
P_e \leq \sum_{t = 1}^{K_a} \frac{t}{K_a} \Pb{\dset{E}_t} + p_0,
\end{equation}
where
\begin{equation}\label{eq:chapter4:p0}
p_0 = \prob{\left[ \bigcup\limits_{i \ne j} \{ W_i = W_j \} \right] \cup \left[ \bigcup\limits_{i \in [K_a]} \{ \norm{f(W_i)}^2 > Pn \} \right]}
\end{equation}
is the probability that different users select the same codewords from the codebook or that the signal power exceeds $Pn$. Thus, in what follows, we do not consider colliding messages. For ensembles of Gaussian codebooks considered in Section~\ref{ch4:sec_ach_bounds}, the probability of power violation, $p_0$, is bounded by~\eqref{eq:p0_bound}. For binary codebooks considered in Section~\ref{ch4:sec:binary_codebook}, we have $p_0=0$. Recall that the number of active users is assumed to be known at the receiver.

Let us examine the error event in more detail. In the case of \ac{ml} decoding, a decoding error occurs if there exists a set of messages ${\dset{R}}$ such that the sum of the corresponding codewords is closer to the received sequence $\rvec{y}$ than the sum of the codewords corresponding to the transmitted messages $\dset{T}$:
$$
\norms{\rvec{y} - \rvec{s}_{\dset{T}}} \geq \norms{\rvec{y} - \rvec{s}_{{\dset{R}}}}.
$$

Taking into account that $\rvec{y} = \rvec{s}_{\dset{T}} + \rbz$, $\rvec{s}_{\dset{T}} = \rbcm + \rbcc$ and $\rvec{s}_{\dset{R}} = \rbcf + \rbcc$, we obtain the following decoding error event:
\begin{equation}\label{eq:pol_bound_error_event}
\dset{E}(\dset{M}, \dset{F}) = \brc{\norms{\rbz} - \norms{\rbcm - \rbcf + \rbz} \geq 0}.
\end{equation}

Note that
\[
\dset{E}_t = \bigcup_{\dset{M} \subseteq \dset{T}, \dset{F} \subseteq [M] \backslash [K_a]} \dset{E}(\dset{M}, \dset{F}).
\]

Let us begin with the naive approach and apply the union bound in a straightforward manner. We have
\begin{eqnarray*}
\Pb{\dset{E}_t} &=& \Pb{ \bigcup_{\dset{M}, \dset{F}} \dset{E}(\dset{M}, \dset{F}) } = \mathbb{E}_{\rmat{X}} \left[ \Pb{ \bigcup_{\dset{M}, \dset{F}} \dset{E}(\dset{M}, \dset{F}) \: \middle| \: \rmat{X} } \right] \\
&\leq& \binom{K_a}{t} \binom{M-K_a}{t} \mathbb{E}_{\rmat{X}} \left[ \Pb{ \dset{E}(\dset{M}', \dset{F}') \: \middle| \: \rmat{X} } \right],
\end{eqnarray*}
where $\dset{M}' = [t]$, $\dset{F}' = [K_a + t]\backslash[K_a]$. The last step follows from the union bound and the symmetry property (as we are averaging over the codebook).

We note that the second binomial coefficient is extremely large (recall that $M$ is on the order of $2^{100}$). Thus, the naive approach significantly overestimates the desired probability. In what follows, we describe several approaches to tighten the union bound.

\section{Useful tricks to tighten the union bound}\label{ch5:sec_tricks}

In this section, we describe all the tools needed for the proofs. We begin with approaches to improve the union bound, as proposed by Gallager\footnote{also known as Gallager's $\rho$-trick}~\cite{gallager_trick} and Fano~\cite{fano1961transmission}.

\subsection{Gallager's trick}\label{sec:gallager_trick}

Gallager's $\rho$-trick is a method to improve the union bound, which can be formulated as follows. Let $\dset{E}_{i}$, for $i \in [m]$ be the events. Then, for any $0 \leq \rho \leq 1$ we have
\[
\prob{\bigcup_{i=1}^m \dset{E}_{i}} \leq \br{\sum\limits_{i=1}^m\prob{\dset{E}_{i}}}^{\rho}.
\]

It is reasonable to apply this method at intermediate steps in estimating conditional probabilities. Specifically, one needs to find a suitable random variable $V$ such that
\[
\Pr\left[\dset{E}_i \: \middle| \: V \right] \leq \exp\left[-F(V)\right]
\]
for some function $F(V)$. 

Then the $\rho$-trick leads to the following estimate:
\[
\prob{\bigcup_{i=1}^m \dset{E}_{i}} \leq m^\rho \mathbb{E}_V\left[\exp\left[- \rho F(V)\right]\right].
\]

\subsection{Fano's trick}\label{sec:fano_trick}

Let us explain the idea using a simple single-user case. Let $\mathcal{C} = \{\dvec{x}_1, \ldots, \dvec{x}_M\}$ and
\[
\rvec{y} = \dvec{x}_1 + \rvec{z},
\]
where $\dvec{x}_1 \in \mathcal{C}$ and $\rvec{z}$ is the noise vector.

Consider the \ac{ml} or minimum Euclidean distance decoding. The error event is $\mathcal{E} = \bigcup\nolimits_{m=2}^M \mathcal{E}_m$, where $\mathcal{E}_m = \{\norms{\rvec{y} - \dvec{x}_1} \geq \norms{\rvec{y} - \dvec{x}_m}\}$.

Let us consider why the union bound overestimates the probability. This occurs when several $\dvec{x}_m$ are closer to $\rvec{y}$ than $\dvec{x}_1$. Clearly, this situation becomes more likely when the norm of the noise is large.

Fano's method can be described as follows. Let us choose a so-called ``good'' region $\dset{B}$. We proceed as follows:
\begin{equation}\label{eq:fano_trick_def}
\Pb{\dset{E}} \leq \Pb{\dset{E} \bigcap \dset{B}} + \Pb{\dset{B}^c},
\end{equation}
and then apply the union bound for the first term only.

In the single-user example, a reasonable choice is $\dset{B} = \{\norms{z} \leq \beta n\}$ for some $\beta > 0$. In the general case, the choice of $\dset{B}$ poses a significant challenge.

\section{Achievability bounds} \label{ch4:sec_ach_bounds}

In this section, we present three achievability bounds. All of these bounds consider a Gaussian codebook and a paradigm of averaging over multiple codebooks generated from the \emph{ensemble}. To proceed, let us begin with the definition of the ensemble.

\begin{define}\label{def:gauss_codebook}
Let $\Gens$ be the ensemble of Gaussian codebooks of size $n \times M$, where each element is sampled i.i.d. from $\mathcal{N}\br{0,P}$.
\end{define}

The key difference between the approaches presented below lies in how the union bound is tightened. We utilize Gallager's $\rho$-trick~\cite{polyanskiy2017perspective}, Fano's trick~\cite{Glebov2023GMAC}, and their combination presented in~\cite{BarronSparcs2019}. We note that the problem and approaches considered in this section share many similarities with the optimal decoder analysis of \ac{sparcs}~\cite[Chapter 2]{BarronSparcs2019}.

Fix $P' < P$ and consider $\Gensp$. The following estimate is valid for $p_0$ (see~\eqref{eq:chapter4:p0}) 
\begin{equation}\label{eq:p0_bound}
p_0 \leq \overline{p}_0 \triangleq \frac{\binom{K_a}{2}}{M}+K_a\Pb{\frac{1}{n}\sum_{i=1}^{n}\xi_i^2> \frac{P}{P'}}, \:\: \xi_i \overset{i.i.d}{\sim} \mathcal{N}(0,1).    
\end{equation}

\subsection{The bound based on Gallager's \texorpdfstring{$\rho$-}{}trick}

We start with the bound that utilizes the Gallager's $\rho$-trick and the chain rule:
\[
\Pr\nolimits_{\rscalar{x}, \rscalar{y}, \rscalar{z}}
\left[\mathcal{E}(\rscalar{x}, \rscalar{y}, \rscalar{z})\right] =
\mathbb{E}_{\rscalar{z}}\left[
\mathbb{E}_{\left.\rscalar{y} \: \middle| \: \rscalar{z}\right.} \left[ \Pr\nolimits_{\left.\rscalar{x} \: \middle| \: \rscalar{y}, \rscalar{z}\right.}\left[\mathcal{E}\left(\rscalar{x}, \rscalar{y}, \rscalar{z}\right) \: \middle| \: \rscalar{y}, \rscalar{z}\right] \right]
\right]. 
\]

In what follows, we work with independent random vectors, so there is no need to deal with expectations over conditional distributions.

\begin{theorem}[Y.~Polyanskiy, \cite{polyanskiy2017perspective}]\label{th:polyanskiy_bound}
Fix $P' < P$. There exists a codebook $\dmat{X}^* \in \Gensp$ satisfying the power constraint $P'$ and providing $P_e$ using eq.~\eqref{eq:chapter4:pupe_definition}, where $p_{0}$ is bounded by~\eqref{eq:p0_bound}, and $\Pb{\dset{E}_t} \leq p_t$, where
\[
p_t = \inf\limits_{0 \leq \rho_{1},\rho_{2} \leq 1} \exp{\brs{-n\br{-\rho_{1}\rho_{2}R_{1} - \rho_{1}R_{2} + E_{0}\br{\rho_{1}, \rho_{2}} }}},
\]
\[
R_1 = \frac{1}{n}\log \binom{M - \Ka}{t}, \quad 
R_2 = \frac{1}{n}\log \binom{\Ka}{t},
\]
\[
E_0\br{\rho_{1}, \rho_{2}} = \frac{1}{2}\br{\rho_1 a + \log(1-2b\rho_1)},
\]
\[
a = \rho_{2} \log (1+2P't\lambda) + \log (1+2P't\mu), \quad
b = \rho_{2} \lambda - \frac{\mu}{1+2P't\mu},
\]
\[
\mu = \frac{\rho_{2} \lambda}{1+2P't\lambda}, \quad 
\lambda = \frac{P't -1 + \sqrt{D}}{4 (1+ \rho_{1} \rho_{2}) P' t}, \quad
D = (P't-1)^2 + 4P't \frac{1 + \rho_{1} \rho_{2}}{1 + \rho_{2}}.
\]
\end{theorem}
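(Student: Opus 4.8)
The plan is to bound the averaged conditional probability $\mathbb{E}_{\rmat{X}}[\Pb{\dset{E}(\dset{M}',\dset{F}')\mid \rmat{X}}]$ for the canonical $t$-error event, then pay the two binomial factors $\binom{\Ka}{t}$ and $\binom{M-\Ka}{t}$ via Gallager's $\rho$-trick rather than the naive union bound. Following the chain-rule decomposition stated just before the theorem, I would condition first on the noise $\rbz$, then on $\rvec{y}$ given $\rbz$, and finally treat the randomness of the codewords in $\rbcm$ and $\rbcf$. The key structural observation is that the error event \eqref{eq:pol_bound_error_event} can be rewritten as $\norms{\rbz}\geq\norms{\rbcm-\rbcf+\rbz}$, so everything reduces to the distribution of the Gaussian vector $\rbcm-\rbcf$, which (conditioned on distinct messages) is a sum of $2t$ independent $\mathcal{N}(0,P')$ codewords, hence each coordinate is $\mathcal{N}(0,2P't)$. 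The two $\rho$ parameters appear because the union bound must be tightened \emph{twice}: $\rho_1$ absorbs the outer union over false-alarm sets (the large factor $\binom{M-\Ka}{t}$, i.e.\ rate $R_1$), and $\rho_2$ absorbs the inner union over which transmitted messages were missed (factor $\binom{\Ka}{t}$, rate $R_2$). This nesting is exactly why the exponent carries the product $\rho_1\rho_2 R_1$ together with $\rho_1 R_2$.

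The concrete steps I would carry out: (i) apply Gallager's trick with exponent $\rho_2$ to the inner union over $\dset{M}$ and obtain a conditional bound of the form $\exp[-F(\cdot)]$ after introducing a free tilting parameter $\lambda$ in the Chernoff/Gaussian-integral step; (ii) apply Gallager's trick again with exponent $\rho_1$ to the outer union over $\dset{F}$; (iii) compute the resulting Gaussian moment-generating integrals explicitly. Because $\rbz$, $\rbcm$, $\rbcf$ are jointly Gaussian, each expectation is a determinant of a small covariance matrix, and taking logarithms per channel use yields the closed-form $E_0(\rho_1,\rho_2)=\tfrac12(\rho_1 a+\log(1-2b\rho_1))$. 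The quantities $a$ and $b$ are precisely the log-MGF and its tilt evaluated at the Gaussian parameters $2P't\lambda$ and $2P't\mu$, and $\mu$ is forced by stationarity of the inner tilt. Finally I would optimize over $\lambda$: setting the derivative of the exponent to zero gives the quadratic whose positive root is the stated $\lambda=\frac{P't-1+\sqrt{D}}{4(1+\rho_1\rho_2)P't}$ with discriminant $D=(P't-1)^2+4P't\frac{1+\rho_1\rho_2}{1+\rho_2}$, and minimizing over $\rho_1,\rho_2\in[0,1]$ produces the infimum $p_t$.

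The main obstacle is step (iii) together with the $\lambda$-optimization: keeping track of which Gaussian covariance appears at each tilting stage, verifying that the inner optimization fixes $\mu$ in terms of $\lambda$ as claimed, and confirming that the saddle-point equation in $\lambda$ collapses to a single quadratic. A secondary technical point is the positivity/validity of the integrals, namely ensuring $1-2b\rho_1>0$ and $1+2P't\lambda,\,1+2P't\mu>0$ over the optimization range so that the Gaussian MGFs converge; I would verify this at the optimizing $\lambda$ rather than uniformly, since only the optimizer enters the final bound. The power-constraint bookkeeping (using $P'<P$ so that $p_0$ is controlled by \eqref{eq:p0_bound} and the per-codeword norm violation is absorbed) is routine once the exponent is in hand, and the existence of a good codebook $\dmat{X}^*$ follows from the standard averaging argument already invoked in Section~\ref{sec:pt_naive}.
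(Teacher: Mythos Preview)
Your overall strategy (two nested applications of Gallager's $\rho$-trick, Chernoff with a tilting parameter $\lambda$, then closed-form Gaussian MGFs) is the right one, but you have the nesting backwards, and this is not a harmless relabeling. In the paper the \emph{outer} trick, with exponent $\rho_1$, is applied to the union over missed sets $\dset{M}\subseteq[\Ka]$ (conditioning on $\rbz$), and the \emph{inner} trick, with exponent $\rho_2$, is applied to the union over false-alarm sets $\dset{F}\subseteq[M]\setminus[\Ka]$ (conditioning additionally on $\rbcmp$). That order yields the factor $\binom{\Ka}{t}^{\rho_1}\binom{M-\Ka}{t}^{\rho_1\rho_2}$ and hence the exponent $-\rho_1 R_2-\rho_1\rho_2 R_1$ that the theorem states. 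Your plan (inner $\rho_2$ over $\dset{M}$, outer $\rho_1$ over $\dset{F}$) would instead produce $-\rho_1\rho_2 R_2-\rho_1 R_1$, and the downstream formulas for $a,b,\mu,\lambda,D$ would all change: note in particular that $D$ carries the factor $\tfrac{1+\rho_1\rho_2}{1+\rho_2}$, which arises precisely because $\rho_2$ sits at the \emph{inner} level where the Chernoff parameter $\lambda$ is optimized after integrating out $\rbcfp$.

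A second, smaller point: the intermediate conditioning is not on $\rvec{y}$ but on $\rbcmp$, the sum of the $t$ missed codewords. You also do not lump $\rbcm-\rbcf$ into a single $\mathcal{N}(0,2tP')$ vector; the two pieces are integrated out separately and in order (Chernoff over $\rbcfp$ with tilt $\lambda$ gives $E_1(\dbcmp,\dbz)$; raising to $\rho_2$ and integrating over $\rbcmp$ gives the term $b\norms{\dbz}-na$; raising to $\rho_1$ and integrating over $\rbz$ gives the $\tfrac12\log(1-2b\rho_1)$). Once you fix the nesting, the $\lambda$-optimization and the verification of the positivity conditions go through as you describe.
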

\begin{proof}
If we look at \eqref{eq:chapter4:pupe_definition}, it is clear that we need to deal with $\Pr[\mathcal{E}_t]$, where the randomness is induced by $\rvec{z}$ and the random matrix $\rmat{X}$. Our goal is to show that $\Pr[\mathcal{E}_t] \leq p_t$, where $p_t$ is given in the theorem statement.

Let
\[
\dset{E}(\dset{M}) = \bigcup_{\dset{F} \subseteq [M] \backslash [K_a]} \dset{E}(\dset{M}, \dset{F}).
\]
Let $0 \leq \rho_{1}, \rho_{2} \leq 1$, and we can represent $\Pr[\mathcal{E}_t]$ as follows:
\begin{flalign}
\Pb{\mathcal{E}_t} &= \mathbb{E}_{\rvec{z}}\sbr{\Pr[\mathcal{E}_t \: \middle| \: \rvec{z}]} & \nonumber\\
&= \mathbb{E}_{\rbz} \sbr{\Pb{ \bigcup\nolimits_{\dset{M} \subseteq [\Ka]} \dset{E}(\dset{M}) \: \middle| \: \dbz }} & \nonumber\\
&\leq \mathbb{E}_{\rbz} \sbr{ \br{\Pb{ \bigcup\nolimits_{\dset{M} \subseteq [\Ka]} \dset{E}(\dset{M}) \: \middle| \: \dbz } }^{\rho_1} } & \text{($\rho$-trick)} & \nonumber \\
&\leq \mathbb{E}_{\rbz} \sbr{ \br{ \sum\nolimits_{\dset{M} \subseteq [\Ka]} \Pb{ \dset{E}(\dset{M}) \: \middle| \: \dbz } }^{\rho_1} } & \text{(union bound)} \nonumber\\
&= \mathbb{E}_{\rbz} \sbr{ \br{ \sum\nolimits_{\dset{M} \subseteq [\Ka]} \mathbb{E}_{\rbcm} \sbr{ \Pb{ \dset{E}(\dset{M}) \: \middle| \: \dbz, \dbcm } }}^{\rho_1} } & \nonumber\\
&= \mathbb{E}_{\rbz} \sbr{ \br{ \binom{\Ka}{t} \mathbb{E}_{\rbcmp} \sbr{ \Pb{ \dset{E}(\dset{M}') \: \middle| \: \dbz, \dbcmp } }}^{\rho_1} }, \label{eq:pol_et} &
\end{flalign}
where the last transition is valid due to symmetry. W.l.o.g., $\dset{M}' = [t]$ in the last expression and in what follows.

Now consider
$$
\Pb{\dset{E}(\dset{M}') \: \middle| \: \dbz, \dbcmp} = \Pb{ \bigcup_{\dset{F} \subseteq [M] \backslash [K_a]} \dset{E}(\dset{M}, \dset{F}) \: \middle| \: \dbz, \dbcmp}
$$
\begin{flalign}
&\leq \br{ \Pb{ \bigcup_{\dset{F} \subseteq [M] \backslash [K_a]} \dset{E}(\dset{M}', \dset{F}) \: \middle| \: \dbz, \dbcmp } }^{\rho_2} & \text{($\rho$-trick)} \nonumber \\
&\leq \br{ \sum\nolimits_{\dset{F} \subseteq [M] \backslash [K_a]} \Pb{ \dset{E}(\dset{M}', \dset{F}) \: \middle| \: \dbz, \dbcmp } }^{\rho_2} & \text{(union bound)} \nonumber \\
&\leq \br{ \binom{M-\Ka}{t} \Pb{ \dset{E}(\dset{M}', \dset{F}') \: \middle| \: \dbz, \dbcmp } }^{\rho_2}, \label{eq:pol_eM}&
\end{flalign}
where the last transition is again due to symmetry. W.l.o.g., $\dset{F}' = [\Ka+t]\backslash [\Ka]$ in the last expression and in what follows.

We start with $\Pr\left[\dset{E}(\dset{M}', \dset{F}') \: \middle| \: \dbcm, \dbz\right]$. In accordance with~\eqref{eq:pol_bound_error_event}, we have 
\[
\Pr\left[\dset{E}(\dset{M}', \dset{F}') \: \middle| \: \dbz, \dbcmp\right] = \Pr \left[\norms{\dbz} - \norms{\dbcmp - \rbcfp + \dbz} \geq 0 \: \middle| \: \dbz, \dbcmp\right].
\]

Let $\lambda > 0$. Applying the Chernoff bound (Lemma~\ref{lemma:chernoff}), we obtain
\[
\Pr\left[\dset{E}(\dset{M}', \dset{F}') \: \middle| \: \dbz, \dbcmp\right] \leq \mathbb{E}_{\rbcfp} \sbr{\exp\left[ \lambda \left( \norms{\dbz} - \norms{\dbcmp - \rbcfp + \dbz} \right) \right]}.
\]
and the vectors $\dbz$ and $\dbcmp$ are fixed. Using Corollary~\ref{cor:quad_form_rv_corollary}, we get
\[
\Pr\left[\dset{E}(\dset{M}', \dset{F}') \: \middle| \: \dbz, \dbcmp\right] \leq \exp\left[ E_1(\dbcmp, \dbz) \right],
\]
where
\[
E_1(\dbcmp, \dbz) = \lambda \left( \norm{\dbz}^2 - \frac{\norm{\dbcmp + \dbz}^2}{1 + 2tP' \lambda} \right).
\]
Substituting this result into~\eqref{eq:pol_eM}, we obtain
\[
\Pr\left[\dset{E}(\dset{M}') \: \middle| \: \dbz, \dbcmp\right] \leq \binom{M-K_a}{t}^{\rho_2} \exp\left[\rho_2 E_1(\dbcmp, \dbz) \right].
\]
Taking here expectation\footnote{We again apply Corollary~\ref{cor:quad_form_rv_corollary}.} over $\rbcmp$ (see~\eqref{eq:pol_et}), we get
\[
\Pr\left[\dset{E}(\dset{M}') \: \middle| \: \dbz\right] \leq \binom{M-K_a}{t}^{\rho_{2}} \exp\left[b \norm{\dbz}^2 - na \right].
\]
Substituting this result into~\eqref{eq:pol_et}, we finally obtain
\[
\Pr\left[\dset{E}_t \: \middle| \: \dbz\right] \leq \binom{K_a}{t}^{\rho_1} \binom{M-K_a}{t}^{\rho_1 \rho_2} \exp\left[\rho_1 (b \norm{\dbz}^2 - na) \right].
\]
Taking expectation over $\rbz$, we obtain the theorem statement. 
\end{proof}

\begin{remark}
One may ask whether the provided sequence of expectations is optimal. In the chain rule, we can use any sequence, for example, by first computing the expectation over $\rbz$. Let us provide some intuition. There are $\binom{M-\Ka}{t}$ possible choices for $\dset{F}$, and this binomial coefficient is extremely large. Therefore, it is reasonable to start with $\rbcf$, as done in the proof, since Gallager's trick will hopefully significantly reduce this large coefficient.
\end{remark}

\subsection{The bound based on Fano's trick}

Let us fix $0 \leq \alpha < 1$, $\beta \geq 0$. We introduce the following ``good'' region:
\begin{equation}\label{eq:ball_prj}
\dset{B} = \bigcap\limits_{\dset{M} \subseteq [K_a]} \dset{B}(\dset{M}),
\quad
\dset{B}(\dset{M}) = \brc{\alpha\norms{\rbcm + \rbz} + \beta n > \norms{\rbz}}.
\end{equation}

\begin{figure}[t]
\begin{center}
\includegraphics{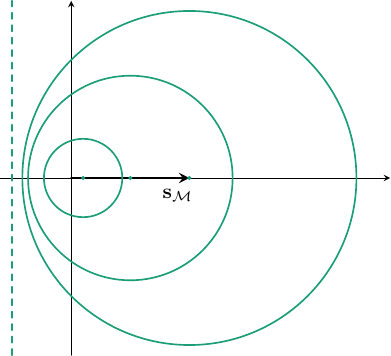}
\end{center}
\caption{Illustration of the ball presented in~\eqref{eq:ball_prj} for $\beta = 0$ and different values of $\alpha$ such that $\nicefrac{\alpha}{1 - \alpha} \in\left\{0.1, 0.5, 1.0\right\}$. Each ball is represented by a green line, and a larger radius corresponds to a larger value of $\alpha$. The dashed line corresponds to the extreme case $\alpha=1$. For $\beta > 0$, the radius of each ball increases, and the dashed line shifts to the left. Note that the zero vector is always inside the ball for any $0 < \alpha \leq 1$  and $\beta \geq 0$. Hence, the intersection of these balls is never empty.\label{ch4:fig_ball}}
\end{figure}

\begin{remark}[Visualization of the ``good'' region]
Let us consider a geometric interpretation of the region~\eqref{eq:ball_prj}. Suppose the vector $\rbcm$ is fixed (since we consider a union over all $\binom{\Ka}{t}$ missed codewords in the subsequent theorem proof). Thus, given $\rbcm$, the condition~\eqref{eq:ball_prj} for a single $\dset{B}(\dset{M})$ can be rewritten as follows (see~\Fig{ch4:fig_ball}).

For $0 < \alpha < 1$, we have:
$$
\left\{\left\|\rbz - \mathbf{v}_0\right\|^2 < r^2\right\}, \quad \mathbf{v}_0 = \frac{\alpha}{1 - \alpha}\rbcm, \quad r^2 = \frac{\alpha\left\|\rbcm\right\|^2 + \beta n\br{1 - \alpha}}{\br{1 - \alpha}^2},
$$
which defines a ball.

Note that $\alpha$ controls both the center $\mathbf{v}_0$ and the radius $r$ of the sphere, while $\beta$ affects only the radius. Additionally, each ball contains the zero point; hence, their intersection is non-empty.

For $\alpha = 1$, we have:
$$
\left\{\rbcm^T\cdot \rbz > -\frac{\left\|\rbcm\right\|^2 +\beta n}{2}\right\}, \quad \alpha = 1,
$$
which defines a hyperplane.
\end{remark}

\begin{remark}
Let us provide some comments on the choice of the ``good'' region:
\begin{itemize}
\item Adding $\rbcc$ to the region description does not lead to a performance improvement, since correctly received codewords do not appear in~\eqref{eq:pol_bound_error_event}. However, the use of $\rbcc$ may be beneficial in the case of fading, as shown in~\eqref{eq:ball}.
\item The region description does not include $\rbcf$ to avoid a large binomial coefficient $\binom{M-K_a}{t}$. Indeed, if $\rbcf$ were included in~\eqref{eq:ball_prj}, then calculating $\Pb{\dset{B}^c}$ in eq.~\eqref{eq:fano_trick_def} would involve taking the complement of an intersection of a large number of events for $\dset{F} \subseteq [M] \setminus [\Ka]$. After applying a union bound, this large binomial coefficient would inevitably appear.
\end{itemize}
\end{remark}

\begin{theorem}[Achievability based on Fano's trick~\cite{Glebov2023GMAC}] \label{th:gauss_codebook}
Fix $P' < P$. There exists a codebook $\dmat{X}^* \in \Gensp$ satisfying the power constraint $P'$ and providing $P_e$ using eq.~\eqref{eq:chapter4:pupe_definition}, where $p_{0}$ is bounded by~\eqref{eq:p0_bound}, and $\Pb{\dset{E}_t} \leq p_{t}$, where
\[
p_{t} = \inf\limits_{\alpha, \beta \geq 0}\br{p_{1,t} + p_{2,t}},
\]
\[
p_{1,t} = \inf\limits_{u,v > 0, \lambda_{\bA} > 0}\exp\brs{-n \br{-R_{1} - R_{2} + \frac{1}{2}\log\det\br{\bI_{3} - 2\bA} - v\beta}},
\]
\[
p_{2,t} = \inf\limits_{\delta > 0, \lambda_{\bB} > 0}\exp\brs{-n\br{-R_{2} + \frac{1}{2}\log\det\br{\bI_{2} - 2\delta\bB} + \delta\beta}},
\]
\[
R_1 = \frac{1}{n}\log \binom{M - \Ka}{t}, \quad 
R_2 = \frac{1}{n}\log \binom{\Ka}{t},
\]
$$
\bA = \br{\begin{array}{rrr}
\br{\alpha - 1}v & \br{\alpha v - u}\sqrt{P't} & u\sqrt{P't} \\
\br{\alpha v - u}\sqrt{P't} & \br{\alpha v - u} P't & uP't \\
u\sqrt{P't} & uP't & -uP't \\
\end{array}},
$$
$$
\bB = \br{\begin{array}{rr}
1 - \alpha & -\alpha\sqrt{P't} \\
-\alpha\sqrt{P't} & -\alpha P't \\
\end{array}}.
$$
By $\lambda_{\bA}$ and $\lambda_{\bB}$, we mean the minimum eigenvalues of $\bI_{3} - 2\bA$ and $\bI_{2} - 2\delta\bB$ accordingly.
\end{theorem}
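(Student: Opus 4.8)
The plan is to prove the bound through Fano's trick \eqref{eq:fano_trick_def} applied to the $t$-error event, splitting $\Pb{\dset{E}_t} \le \Pb{\dset{E}_t \cap \dset{B}} + \Pb{\dset{B}^c}$ with the good region $\dset{B}$ of \eqref{eq:ball_prj}, and then identifying the two summands with $p_{1,t}$ and $p_{2,t}$ respectively. Throughout, $\rbz \sim \mathcal{N}(\bzero,\bI_n)$, while $\rbcm$ and $\rbcf$ are sums of $t$ independent codewords and hence distributed as $\mathcal{N}(\bzero, tP'\bI_n)$, all three mutually independent. The matrices $\bA$ and $\bB$ will emerge as the quadratic forms controlling the relevant moment-generating functions, and the eigenvalue constraints $\lambda_\bA,\lambda_\bB > 0$ are exactly the conditions for those MGFs to be finite.

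First I would dispose of the tail term $\Pb{\dset{B}^c}$. Since $\dset{B}$ is an intersection of the regions $\dset{B}(\dset{M})$ (here over the $\binom{\Ka}{t}$ missed sets of size $t$ relevant to $\dset{E}_t$), its complement is the corresponding union of the $\dset{B}(\dset{M})^c$; a union bound and codebook symmetry reduce matters to a single $\Pb{\dset{B}(\dset{M}')^c}$ at the cost of the factor $\binom{\Ka}{t} = e^{nR_2}$. The event $\dset{B}(\dset{M}')^c = \brc{\norms{\rbz} - \alpha\norms{\rbcm + \rbz} - \beta n \ge 0}$ is a quadratic inequality in $(\rbz,\rbcm)$; applying the Chernoff bound (Lemma~\ref{lemma:chernoff}) with parameter $\delta$ and computing the per-coordinate Gaussian MGF via Corollary~\ref{cor:quad_form_rv_corollary} produces $\det(\bI_2 - 2\delta\bB)^{-1/2}$ per coordinate together with the deterministic factor $e^{-\delta\beta n}$. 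Confirming that the $2 \times 2$ form in the standardized pair $(\rbz,\rbcm/\sqrt{tP'})$ is exactly $\bB$ then yields $p_{2,t}$.

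Next I would bound $\Pb{\dset{E}_t \cap \dset{B}}$. Writing $\dset{E}_t = \bigcup_{\dset{M},\dset{F}} \dset{E}(\dset{M},\dset{F})$ and exploiting $\dset{B} \subseteq \dset{B}(\dset{M})$ for each relevant $\dset{M}$, I obtain $\dset{E}_t \cap \dset{B} \subseteq \bigcup_{\dset{M},\dset{F}} \br{\dset{E}(\dset{M},\dset{F}) \cap \dset{B}(\dset{M})}$, so a union bound and symmetry cost $\binom{\Ka}{t}\binom{M-\Ka}{t} = e^{n(R_1+R_2)}$ and leave one joint probability $\Pb{\dset{E}(\dset{M}',\dset{F}') \cap \dset{B}(\dset{M}')}$. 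The decisive step is to dominate the product of the two indicators multiplicatively: on $\dset{E}(\dset{M}',\dset{F}')$, via \eqref{eq:pol_bound_error_event}, the quantity $\norms{\rbz} - \norms{\rbcm - \rbcf + \rbz}$ is nonnegative, and on $\dset{B}(\dset{M}')$ the quantity $\alpha\norms{\rbcm+\rbz} + \beta n - \norms{\rbz}$ is positive, so for any $u,v > 0$ the joint indicator is at most $\exp\br{u\br{\norms{\rbz} - \norms{\rbcm - \rbcf + \rbz}} + v\br{\alpha\norms{\rbcm + \rbz} + \beta n - \norms{\rbz}}}$. Taking the expectation over the three independent Gaussians factorizes over the $n$ coordinates into a $3 \times 3$ Gaussian MGF whose quadratic-form matrix is exactly $\bA$; the $\beta n$ contribution pulls out as $e^{v\beta n}$ (accounting for the $-v\beta$ inside $p_{1,t}$), and Corollary~\ref{cor:quad_form_rv_corollary} turns the remainder into $\det(\bI_3-2\bA)^{-n/2}$, reproducing $p_{1,t}$.

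The main obstacle is the careful bookkeeping of the two quadratic forms: one must track the coefficients of $\rbz,\rbcm,\rbcf$ through the two-parameter exponent and verify that they collapse to the prescribed entries of $\bA$ and $\bB$, while simultaneously keeping $\bI_3 - 2\bA$ and $\bI_2 - 2\delta\bB$ positive definite so that Corollary~\ref{cor:quad_form_rv_corollary} applies --- this is the content of the $\lambda_\bA,\lambda_\bB > 0$ side conditions in the infima. A secondary but conceptually essential point is the deliberate exclusion of $\rbcf$ from the good region in \eqref{eq:ball_prj}: had the false codewords entered $\dset{B}$, the complement $\dset{B}^c$ would reintroduce the enormous factor $\binom{M-\Ka}{t}$ into $p_{2,t}$ after its union bound, nullifying the gain of Fano's trick. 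Finally I would add the two estimates, optimize freely over $\alpha,\beta \ge 0$ (and internally over $u,v,\delta$) to reach $p_t = \inf_{\alpha,\beta}\br{p_{1,t} + p_{2,t}}$, substitute into \eqref{eq:chapter4:pupe_definition} with $p_0$ bounded by \eqref{eq:p0_bound}, and invoke the probabilistic method to extract a codebook $\dmat{X}^* \in \Gensp$ attaining the stated $P_e$.
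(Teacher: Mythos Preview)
Your proposal is correct and follows essentially the same route as the paper: Fano's trick with the region \eqref{eq:ball_prj}, union bound plus symmetry to fix $\dset{M}',\dset{F}'$, the two-parameter Chernoff bound on the joint event, and exploitation of the i.i.d.\ coordinate structure to collapse the $3n\times 3n$ quadratic form to the $3\times 3$ matrix $\bA$ (and analogously $\bB$). One small correction: the Gaussian MGF you need is that of a general quadratic form $\be^T\bA\be$, so the right tool is Lemma~\ref{lem:quad_form_rv} rather than Corollary~\ref{cor:quad_form_rv_corollary}, which handles only the special shape $-\gamma\norms{\sqrt{a}\rvec{z}+\dvec{b}}$.
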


\begin{proof}
We start by applying Fano's trick~\eqref{eq:fano_trick_def} for $\dset{E}_t$, where $\dset{B}$ is given by~\eqref{eq:ball_prj}.

Let us start with $\Pb{\dset{E}_t \bigcap \dset{B}}$. We have
\begin{flalign}
&\Pb{\dset{E}_t \bigcap \dset{B}} \leq \Pb{ \bigcup_{\dset{M}, \dset{F}} \dset{E}(\dset{M}, \dset{F}) \bigcap \dset{B}(\dset{M})} \nonumber\\
&\leq \binom{K_a}{t} \binom{M-K_a}{t} \Pr\sbr{ \dset{E}(\dset{M}', \dset{F}') \bigcap \dset{B}(\dset{M}')}, \nonumber
\end{flalign}
where $\dset{M}' = [t]$, $\dset{F}' = [K_a + t]\backslash[\Ka]$. The last transition is due to the union bound and symmetry.

Consider $\Pr \sbr{ \dset{E}(\dset{M}', \dset{F}') \bigcap \dset{B}(\dset{M}')}$. Recall that
\[
\dset{E}(\dset{M}', \dset{F}') = \brc{\norms{\rbz} - \norms{\rbcmp - \rbcfp + \rbz} \geq 0}
\]
and
\[
\dset{B}(\dset{M}') = \brc{\alpha\norms{\rbcmp + \rbz} + \beta n > \norms{\rbz}}.
\]

Let us introduce the vector
$$
\be^{T} = \br{\rbz^{T},\, \rbcmp^{T}/\sqrt{P't},\, \rbcfp^{T}/\sqrt{P't}}, \quad \be \sim \mathcal{N}(\bzero, \bI_{3n}).
$$
Using this notation, we can rewrite the events $\dset{E}(\dset{M}', \dset{F}')$ and $\dset{B}(\dset{M}')$ as follows:
\[
\dset{E}(\dset{M}', \dset{F}') = \brc{\be^{T}\bA_{e}\be \geq 0}
\]
and
\[
\dset{B}(\dset{M}') = \brc{\be^{T}\bA_{r}\be + \beta n > 0},
\]
where
$$
\bA_{e} = \br{\begin{array}{rrr}
 \bzero & -\sqrt{P't}\bI_{n} & \sqrt{P't}\bI_{n} \\
 -\sqrt{P't}\bI_{n} & -P't\bI_{n} & P't\bI_{n} \\
 \sqrt{P't}\bI_{n} & P't\bI_{n} & -P't\bI_{n}
\end{array}},
$$
$$
\bA_{r} = \br{\begin{array}{rrr}
 \br{\alpha - 1}\bI_{n} & \alpha\sqrt{P't}\bI_{n} & \bzero \\
 \alpha\sqrt{P't}\bI_{n} & \alpha P't\bI_{n} & \bzero \\
 \bzero & \bzero & \bzero 
\end{array}}.
$$

Now, we apply the Chernoff bound (Lemma~\ref{lemma:chernoff}):
$$
\Pr \sbr{ \dset{E}(\dset{M}', \dset{F}') \bigcap \dset{B}(\dset{M}')} \leq \inf\limits_{u,v > 0}\mathbb{E}_{\be}\exp\brs{\be^{T} \bA_{n} \be + v\beta n },
$$
where $\bA_{n} = u\bA_{e} + v\bA_{r}$.

Using Lemma~\ref{lem:quad_form_rv} (with $\dvec{b}$ equal to the zero vector), we obtain
$$
\Pr \sbr{ \dset{E}(\dset{M}', \dset{F}') \bigcap \dset{B}(\dset{M}')} \leq \inf\limits_{u,v > 0}\exp\brs{-\frac{1}{2}\log\det\br{\bI_{3n} - 2\bA_{n}} + v\beta n},
$$
when $\lambda_{\min}\br{\bI_{3n} - 2\bA_{n}} > 0$.

Let us note that the matrix $\bA_{n}$ can be represented as $\bI_{n} \otimes \bA$ by reordering its columns and rows. So, we can represent $\br{\bI_{3n} - 2\bA_{n}}$ as $\bI_{n} \otimes \br{\bI_{3} - 2\bA}$, and it is clear that
$$
-\nicefrac{1}{2}\log\det\br{\bI_{3n} - 2\bA_{n}} = -\nicefrac{n}{2}\log\det\br{\bI_{3} - 2\bA}.
$$
Thus, $\Pb{\dset{E}_t \bigcap \dset{B}} \leq p_{1,t}$.

Similarly, we deal with 
\begin{eqnarray*}
\Pb{ \mathcal{B}^c} \leq \binom{K_a}{t} \Pr \left[ \mathcal{B}^c(\dset{M}') \right] \leq p_{2,t}. 
\end{eqnarray*}
This concludes the proof.
\end{proof}

\subsection{Combination of the tricks}

One may notice that Gallager's and Fano's tricks can be applied jointly. In this section, we present one possible variant, which is a straightforward modification of the bound for \ac{sparcs} provided in~\cite{BarronSparcs2019}.

\begin{theorem}[Achievability based on the combination of Fano's and Gallager's tricks~\cite{BarronSparcs2019}]
\label{th:comb_tricks}
Fix $P' < P$. There exists a codebook $\dmat{X}^* \in \Gensp$ satisfying the power constraint $P'$ and providing $P_e$ using eq.~\eqref{eq:chapter4:pupe_definition}, where $p_{0}$ is bounded by~\eqref{eq:p0_bound}, and $\Pb{\dset{E}_t} \leq p_{t}$, where
\[
p_{t} = \inf\limits_{\alpha, \tau \geq 0}\br{p_{1,t} + p_{2,t}},
\]
\[
p_{1,t} = \inf\limits_{0 \leq \gamma \leq 1} \exp\brs{-n \br{-\gamma R_{1} - R_{2} + E_{1}\br{\gamma}}}, 
\]
\[
p_{2,t} = \inf\limits_{\delta > 0} \exp\brs{ -n E_{2}\br{\delta} },
\]
\[
R_1 = \frac{1}{n}\log \binom{M - \Ka}{t}, \quad 
R_2 = \frac{1}{n}\log \binom{\Ka}{t},
\]
\[
E_{1}\br{\gamma} = \frac{1}{2}\br{\gamma \log(1+P't) + \log\br{1 - \gamma^2 (1-\rho_e^2)} -\gamma \tau},
\]
\[
E_{2}\br{\delta} = \frac{1}{2} \br{\log\left(1 - \delta^2 (1-\rho_r^2) \right) + \tau \delta},
\]
\[
\rho_e^2 = \frac{(1 + \alpha P' t)^2}{(1 + \alpha^2 \Ka P') (1 + t P')}, \quad
\rho_r^2 = \frac{1}{1+\alpha^2 \Ka P'}.
\]
\end{theorem}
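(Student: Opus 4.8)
The plan is to run Fano's trick~\eqref{eq:fano_trick_def} and then, \emph{inside} the good region, tighten the union over the false sets with Gallager's $\rho$-trick, exactly in the spirit of the optimal-decoder analysis of \ac{sparcs}. Two features dictate the shape of the answer. First, for a fixed pair $(\dset{M}',\dset{F}')$ the decoding error event~\eqref{eq:pol_bound_error_event} reads $\norms{\rbz}\geq\norms{\rvec{w}-\rbcfp}$ with $\rvec{w}=\rbcmp+\rbz$, so after conditioning on $(\rbz,\rbct)$ the only remaining randomness is the false sum $\rbcfp\sim\mathcal{N}(\bzero,tP'\bI_n)$; this is where the $\gamma$-trick will act and generate the factor $\binom{M-\Ka}{t}^{\gamma}=e^{n\gamma R_1}$. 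Second, I would take the good region $\dset{B}$ to depend only on the \emph{total} received signal, built from the statistic $\rvec{v}=\rbz+\alpha\rbct$ (with $\rbct=\rbcm+\rbcc$ the sum of all $\Ka$ transmitted codewords, and $\tau$ a threshold) — the full-signal analogue of~\eqref{eq:ball_prj}, for which $\alpha=1$ gives exactly the received vector $\rvec{y}=\rbct+\rbz$. Because this $\dset{B}$ does not depend on $\dset{M}$, no union over the $\binom{\Ka}{t}$ choices of missed set is paid when bounding $\Pb{\dset{B}^c}$, which is precisely why $p_{2,t}$ carries no $R_2$ factor and why $\Ka$ (rather than $t$) enters $\rho_e,\rho_r$.

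For the first term I would bound $\Pb{\dset{E}_t\cap\dset{B}}\leq\binom{\Ka}{t}\Pb{\dset{E}(\dset{M}')\cap\dset{B}}$ by the union bound over $\dset{M}$ together with symmetry, giving the full $R_2$. Conditioning on $(\rbz,\rbct)$ and applying the $\gamma$-trick to $\dset{E}(\dset{M}')=\bigcup_{\dset{F}}\dset{E}(\dset{M}',\dset{F})$ yields $\binom{M-\Ka}{t}^{\gamma}$ times the $\gamma$-th power of the single-false-codeword probability $\Pb{\dset{E}(\dset{M}',\dset{F}')\mid\rbz,\rbct}$, which is controlled by a Chernoff bound and contributes the $\tfrac{\gamma}{2}\log(1+P't)$ term. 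The residual expectation over $(\rbz,\rbct)$, \emph{restricted to} $\dset{B}$, is the crux: the good region provides a pointwise bound on $\norms{\rbz}$ in terms of $\rvec{v}$, and feeding this into the exponent both renders the Gaussian moment finite and produces the threshold term $-\gamma\tau$. Since all forms factor across the $n$ coordinates (the relevant matrices are $\bI_n\otimes(\cdot)$, as in the proof of Theorem~\ref{th:gauss_codebook}), the exponent single-letterizes to $nE_1(\gamma)$, with $\rho_e$ emerging as the per-letter correlation between the region statistic $\rvec{v}$ and the error statistic $\rvec{w}$; a direct moment computation gives $\mathbb{E}[vw]=1+\alpha P't$, $\mathbb{E}[v^2]=1+\alpha^2\Ka P'$, and $\mathbb{E}[w^2]=1+tP'$, i.e.\ exactly $\rho_e^2$.

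For the second term I would bound $\Pb{\dset{B}^c}$ by a single Chernoff estimate (Lemma~\ref{lemma:chernoff}) on the quadratic form defining $\dset{B}^c$, with no union bound at all. Writing this form in the per-letter pair $(\rbz,\rbct)$ and evaluating the Gaussian moment-generating function via the determinant identity of Lemma~\ref{lem:quad_form_rv} (the $2\times2$ analogue of the $\bB$-block of Theorem~\ref{th:gauss_codebook}) yields $\tfrac12\log\br{1-\delta^2(1-\rho_r^2)}$, while the threshold contributes $\tfrac12\tau\delta$; here $\rho_r^2=(1+\alpha^2\Ka P')^{-1}$ is the per-letter correlation between $\rvec{v}$ and $\rbz$ itself. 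This gives $\Pb{\dset{B}^c}\leq p_{2,t}$. Combining the two pieces through~\eqref{eq:fano_trick_def}, optimizing the inner parameters $\gamma,\delta$ and the outer parameters $\alpha,\tau$, and summing over $t$ via~\eqref{eq:chapter4:pupe_definition} then completes the proof.

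I expect the main obstacle to be the constrained moment in the first term: unlike Theorems~\ref{th:polyanskiy_bound} and~\ref{th:gauss_codebook}, here the good-region indicator and the Gallager power are entangled, so one must design $\dset{B}$ (equivalently, the way $\rvec{v}$ bounds $\norms{\rbz}$) so that a \emph{single} threshold parameter $\tau$ simultaneously controls the exponent in $E_1$ and the tail in $E_2$. One must also verify the admissibility conditions $\gamma^2(1-\rho_e^2)<1$ and $\delta^2(1-\rho_r^2)<1$ — that is, positive-definiteness of the per-letter matrices — under which the Gaussian moments, and hence $p_{1,t}$ and $p_{2,t}$, are finite.
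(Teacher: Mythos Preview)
Your proposal is correct and matches the paper's approach in all essential respects: the good region $\dset{B}$ depends only on $(\rbz,\rbct)$ through $\rvec{v}=\rbz+\alpha\rbct$ (so no $R_2$ penalty in $p_{2,t}$), the union over $\dset{M}$ is paid in full, Gallager's $\gamma$-trick handles the union over $\dset{F}$ after conditioning on $(\rbz,\rbct)$, and your per-letter correlation calculation for $\rho_e,\rho_r$ is exactly what the paper obtains via Corollary~\ref{cor:quad_form_corr_rv_corollary}. The one concrete device you describe only informally is how the ``entanglement'' of the good-region indicator with the Gallager power is resolved: the paper combines the two events pointwise via $\Pr[\xi_1\geq 0,\,\xi_2\geq 0]\leq\Pr[\xi_1+\xi_2\geq 0]$, which replaces $\norms{\rbz}$ in the error event by $\beta\norms{\rvec{v}}+\tau n$ before the Chernoff step (with $\lambda=\tfrac12$ and the normalization $\beta=(1+\alpha^2\Ka P')^{-1}$ fixed); this is precisely the mechanism you anticipate in your last paragraph.
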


\begin{proof}
Let $\alpha, \beta > 0$. In what follows, we demonstrate how to choose these constants. We start by applying Fano's trick~\eqref{eq:fano_trick_def} for $\dset{E}_t$, where
\begin{flalign*}
\dset{B} &= \left\{ \beta {{\norms{\rvec{y} - (1-\alpha)\rbct}}} - \norms{\rvec{y} - \rbct} + \tau n \geq 0 \right\} \\
& = \left\{ \beta{{\norms{\rvec{z} + \alpha \rbct}}} - \norms{\rvec{z}} + \tau n \geq 0 \right\}, 
\end{flalign*}
and
\[
\beta = \beta' / (1 + \alpha^2 \Ka P). 
\]
Similar to the proofs of Theorems~\ref{th:gauss_codebook} and~\ref{th:polyanskiy_bound}, we can write ($0 \leq \gamma \leq 1$)
\begin{flalign}
&\Pb{\dset{E}_t \bigcap \dset{B}} \leq \Pb{ \bigcup_{\dset{M}, \dset{F}} \dset{E}(\dset{M}, \dset{F}) \bigcap \dset{B}(\dset{M})} \nonumber\\
&\leq \binom{K_a}{t} \mathbb{E}_{\rbz, \rbct} \sbr{ \left( \binom{M-K_a}{t} \Pr\sbr{ \dset{E}(\dset{M}', \dset{F}') \bigcap \dset{B} \: \middle| \: \dbct, \dbz } \right)^{\gamma}}. \nonumber\\
&= \binom{K_a}{t} \binom{M-K_a}{t}^{\gamma} \mathbb{E}_{\rbz, \rbct} \sbr{ \left(\Pr\sbr{ \dset{E}(\dset{M}', \dset{F}') \bigcap \dset{B} \: \middle| \: \dbct, \dbz } \right)^{\gamma}}. \nonumber
\end{flalign}

Recall that 
\[
\dset{E}(\dset{M}', \dset{F}') = \brc{\norms{\rbz} - \norms{\rbcmp - \rbcfp + \rbz} \geq 0}.
\]
Proceeding further,
\begin{flalign*}
&\Pr\sbr{ \dset{E}(\dset{M}', \dset{F}') \bigcap \dset{B} \: \middle| \: \dbct, \dbz } \\
&\leq \Pr\sbr{\beta {{\norms{\dvec{z} + \alpha \dbct}}} - \norms{\dbcmp - \rbcfp + \dbz} + \tau n \geq 0 \: \middle| \: \dbct, \dbz },
\end{flalign*}
where the last transition follows from the fact that
\[
\Pr[\xi_1 \geq 0, \xi_2 \geq 0] \leq \Pr[\xi_1 + \xi_2 \geq 0].
\]

Applying the Chernoff bound with $\lambda = \frac{1}{2}$\footnote{One can optimize over $\lambda$ to improve the bound, but we follow the original approach.}, we obtain
\begin{flalign*}
&\Pr\sbr{ \dset{E}(\dset{M}', \dset{F}') \bigcap \dset{B} \: \middle| \: \dbct, \dbz } \\
&\leq \exp\sbr{ \frac{1}{2} \left( \beta{{\norms{\dvec{z} + \alpha \dbct}}} - \frac{\norms{\dbcm + \dbz}}{1+P't} \right) + \frac{\tau n}{2} -\frac{n}{2} \log(1+P't) }.
\end{flalign*}

Raising the estimate to the power $\gamma$ and proceeding with the expectation over $\rbz$ and $\rbct$, we introduce the notation
\[
A = \exp\sbr{\frac{\gamma \tau n}{2} -\frac{\gamma n}{2} \log(1+P't)},
\]
and obtain
\begin{flalign*}
&\mathbb{E}_{\rbz, \rbct} \sbr{ \Pr\sbr{ \dset{E}(\dset{M}', \dset{F}') \bigcap \dset{B} \: \middle| \: \dbct, \dbz }} \\
&\leq A \mathbb{E}_{\rbz, \rbct} \sbr{\exp\sbr{ \frac{\gamma}{2} \left( \beta{{\norms{\rvec{z} + \alpha \rbct}}} - \frac{\norms{\dbcm + \dbz}}{1+P't} \right)}} \\
&\leq A \mathbb{E}_{\be_1, \be_2} \sbr{\exp\sbr{ \frac{\gamma \beta'}{2}{{\norms{\be_1}}} - \frac{\gamma}{2} \norms{\be_2} }} \\
& = A \exp\sbr{ -\frac{n}{2} \log\left(1 - \gamma \beta' + \gamma - \gamma^2 \beta' (1-\rho_e^2) \right) },
\end{flalign*}
where the last transition follows from Corollary~\ref{cor:quad_form_corr_rv_corollary}, and $\rho_e$ is defined as
\[
\cov(\be_1, \be_2) = \mathbb{E}[\be^T_1 \be_2] = \rho_e \dmat{I}_n,
\]
where
\begin{equation} \label{eq:pe_alpha}
\rho_e^2 = \frac{(1 + \alpha P' t)^2}{(1 + \alpha^2 \Ka P') (1 + t P')}.
\end{equation}

Now consider
\begin{flalign*}
\Pr[\dset{B}^c] &= \Pr\sbr{ \beta{{\norms{\rvec{z} + \alpha \rbct}}} - \norms{\rvec{z}} + \tau n < 0} \\
&\leq \mathbb{E}_{\rbz, \rbct} \sbr{ \exp\sbr{ \frac{\delta}{2} \norms{\rvec{z}} - \frac{\delta}{2}\beta{{\norms{\rvec{z} - \alpha \rbct}}} - \frac{\delta \tau}{2}n}} \\
&= \exp\sbr{- \frac{\delta\tau n}{2}} \mathbb{E}_{\be_1, \be_2} \sbr{ \exp\sbr{ \frac{\delta}{2} \norms{\rvec{\be_1}} - \frac{\delta}{2}\beta'{{\norms{\be_2}}}}} \\
&= \exp\sbr{- \frac{\delta\tau n}{2}} \exp\sbr{ -\frac{n}{2} \log\left(1 - \delta + \delta \beta' - \delta^2 \beta' (1-\rho_r^2) \right) },
\end{flalign*}
as
\[
\rho_r^2 = \frac{1}{1+\alpha^2 \Ka P'}.
\]

For simplicity, the authors of~\cite{BarronSparcs2019} chose $\beta' = 1$. We follow the same choice but note that optimizing this parameter may further improve the bound.

Thus, collecting all terms together and setting $\beta' = 1$, we obtain the theorem statement.

The final region is given by
\begin{flalign*}
\dset{B} &= \left\{ \frac{{\norms{\rvec{y} - (1-\alpha)\rbct}}}{1 + \alpha^2 \Ka P} - \norms{\rvec{y} - \rbct} + \tau n \geq 0 \right\} \\
& = \left\{ \frac{{\norms{\rvec{z} + \alpha \rbct}}}{1 + \alpha^2 \Ka P} - \norms{\rvec{z}} + \tau n \geq 0 \right\}.
\end{flalign*}
\end{proof}

\begin{remark}
The authors of~\cite{BarronSparcs2019} show that $\alpha = t/\Ka$ maximizes the expression~\eqref{eq:pe_alpha}, resulting in
\[
\rho_e^2 = \frac{1 + \frac{t^2}{\Ka}P'}{1+t P'}, \quad
\rho_r^2 = \frac{1}{1+\frac{t^{2}}{\Ka} P'},
\]
which are used in the original theorem. However, as numerical results (see Section~\ref{chap4:num_res}) show, further optimization of $\alpha$ allows for better performance, particularly in the regime where the number of active users $\Ka$ is small.
\end{remark}

Throughout Theorems~\ref{th:polyanskiy_bound}, \ref{th:gauss_codebook}, and \ref{th:comb_tricks}, we considered different approaches to estimate the achievable energy efficiency for the \ac{ura} setup. As we further demonstrate through numerical analysis in Section~\ref{chap4:num_res}, all these bounds yield similar energy efficiency estimates, with the gap being no greater than $0.25$ dB. However, the results presented in Theorem~\ref{th:gauss_codebook} can similarly be generalized to a binary codebook, as we demonstrate later in Theorem~\ref{th:binary_codebook}. 

Let us also recall the converse bounds previously defined in Theorems~\ref{the:converse_su} and~\ref{the:converse_ura}. As we further demonstrate, there is a gap between achievability and converse, with the difference being up to $1.2$ dB. We refer the reader to Section~\ref{chap4:num_res} for a detailed discussion and comparison of the presented bounds.

\section{The case of binary codebooks}\label{ch4:sec:binary_codebook}

We note that the bounds from the previous section assume the use of Gaussian signaling (or codebook), which is not ideal for practical applications. Indeed, a Gaussian codebook may be suboptimal, especially in the case of short blocks, where different codewords may have significantly different energy levels. To overcome this issue, codewords can be sampled from a \emph{power shell} or from a uniform distribution on the multidimensional sphere (see Appendix~\ref{a3:cwbook:sphere}), ensuring equal power for all codewords. Moreover, Gaussian codebooks may have limited practical applications, as storing such a codebook and generating the transmitted signal can be computationally complex.

In this section, we address this problem and consider binary signaling (or \ac{bpsk} modulation). We aim to answer a natural question: does binary signaling restrict energy efficiency? In other words, we seek to establish the fundamental limits for binary-input \ac{gmac}. This question was previously studied in \cite{Glebov2023GMAC}, and we follow the description presented in that paper.

We start with the definition of the binary codebook ensemble.
\begin{define}
Let $\Bens$ be the ensemble of binary codebooks of size $n \times M$, where each element is sampled i.i.d. from $\brc{-\sqrt{P}, \sqrt{P}}$ with probability $1/2$.
\end{define}

\begin{remark}
Note that there is no need to check the power constraint for this ensemble as it is fulfilled by design.
\end{remark}

Now we are ready to formulate and prove the main result.

\begin{theorem}[Achievability for binary codebook based on Fano's trick~\cite{Glebov2023GMAC}]
\label{th:binary_codebook}
There exists a codebook $\dmat{X}^* \in \Bens$ providing $P_e$ using eq.~\eqref{eq:chapter4:pupe_definition}, where $p_{0}=0$, and $\Pb{\dset{E}_t} \leq p_{1,t} + p_{2,t}$, where
\begin{flalign*}
&p_{1,t} = \binom{M-K_{a}}{t}\binom{K_{a}}{t} \inf\limits_{u,v > 0}\left[ \exp\brs{-n\zeta\br{\alpha, \beta, v}} \times \begin{array}{cc}
& \\
& \\
&
\end{array} \right. \nonumber \\
&\times \left. \br{\sum\limits_{m=-t}^{t}\sum\limits_{f=-t}^{t} \rho_{m}\rho_{f}\exp{\brs{P\phi\br{\alpha, u, v, m, f}}}}^{n} \right], \nonumber
\end{flalign*}
where
$$
\zeta\br{\cdot} = \frac{1}{2}\log\br{1 - 2v\br{\alpha - 1}} - \beta v,
$$
$$
\phi\br{\cdot} = \br{2\frac{\br{\alpha v m - u\br{m-f}}^2}{1 - 2v\br{\alpha - 1}} + \alpha v m^{2} - u\br{m-f}^2},
$$
and
\begin{flalign*}
& p_{2,t} = \binom{K_{a}}{t} \inf\limits_{\delta > 0}\left[ \exp\brs{-n\xi\br{\alpha, \beta, \delta}} \br{\sum\limits_{m = -t}^{t} \rho_{m}\exp\brs{P\psi\br{\alpha, \delta, m}}}^{n} \right], \nonumber
\end{flalign*}
where
$$
\xi\br{\cdot} = \frac{1}{2}\log\br{1 - 2\delta\br{1 - \alpha}} + \delta\beta, \quad 
\psi\br{\cdot} = \delta\alpha\frac{2\delta - 1}{1 - 2\delta\br{1 - \alpha}}m^{2},
$$
$$
\rho_{i} =
\begin{cases}
\binom{t}{\frac{1}{2}\br{i+t}}2^{-t}, & i \in \brc{2j-t,\,\forall j: 0 \leq j \leq t} \\
0, & \text{otherwise}
\end{cases}
$$
with the following conditions holding:
\[
1 - 2v\br{\alpha - 1} > 0, \quad  1 - 2 \delta \br{1 - \alpha} > 0.
\]
\end{theorem}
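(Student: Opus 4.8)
The plan is to follow the skeleton of the proof of Theorem~\ref{th:gauss_codebook}, since the binary ensemble $\Bens$ differs from $\Gensp$ only in the \emph{law} of the codeword sums $\rbcm$ and $\rbcf$; the single place the argument must be reworked is the evaluation of the moment generating functions of the relevant quadratic forms, which for Gaussian inputs produced the $\log\det$ expressions but for binary inputs must be handled coordinate by coordinate. First I would apply Fano's trick~\eqref{eq:fano_trick_def} to $\dset{E}_t$ with the good region $\dset{B}$ from~\eqref{eq:ball_prj}, splitting $\Pb{\dset{E}_t} \leq \Pb{\dset{E}_t \cap \dset{B}} + \Pb{\dset{B}^c}$. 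For the first term the union bound together with the symmetry of the ensemble reduces the estimate to the single pair $\dset{M}' = [t]$, $\dset{F}' = [\Ka+t]\setminus[\Ka]$ at the cost of the prefactor $\binom{\Ka}{t}\binom{M-\Ka}{t}$; for the second term only $\binom{\Ka}{t}$ appears. Writing $\dset{E}(\dset{M}',\dset{F}') = \brc{\norms{\rbz} - \norms{\rbcm - \rbcf + \rbz} \geq 0}$ and $\dset{B}(\dset{M}') = \brc{\alpha\norms{\rbcm + \rbz} + \beta n > \norms{\rbz}}$, I would bound the intersection probability by a Chernoff bound carrying parameters $u,v > 0$, and bound $\Pb{\dset{B}^c(\dset{M}')}$ by a Chernoff bound carrying a parameter $\delta > 0$.

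The key step, and the one that departs from the Gaussian case, is that for the binary ensemble each coordinate of $\rbcm$ is an independent sum of $t$ i.i.d.\ $\pm\sqrt{P}$ terms and therefore equals $m\sqrt{P}$ with probability $\rho_m$ (and likewise each coordinate of $\rbcf$ equals $f\sqrt{P}$ with probability $\rho_f$), where $\rho_i$ is precisely the binomial law in the statement; moreover $\rbz$, $\rbcm$ and $\rbcf$ are mutually independent with i.i.d.\ coordinates. Consequently the Chernoff expectations factorize across the $n$ coordinates into the $n$-th power of a single scalar expectation. For each coordinate I would first integrate out the standard Gaussian noise $z$ conditioned on the discrete values $m,f$: the exponent is a quadratic in $z$ with leading coefficient $v\br{\alpha-1}$ and linear coefficient $2\sqrt{P}\br{\alpha v m - u\br{m-f}}$, so the Gaussian integral converges exactly when $1 - 2v\br{\alpha-1} > 0$ and evaluates to the prefactor $\exp\br{-\zeta}$ times $\exp\br{P\phi\br{\alpha,u,v,m,f}}$, with $\zeta$ and $\phi$ as stated (completing the square produces precisely the term $2\br{\alpha v m - u\br{m-f}}^2/\br{1-2v\br{\alpha-1}}$ while the deterministic $v\beta$ contributes the $-\beta v$ in $\zeta$). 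Summing over the discrete laws of the two coordinates with weights $\rho_m\rho_f$ then yields the inner double sum, and raising to the $n$-th power and reinstating the binomial prefactor gives $p_{1,t}$.

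The term $p_{2,t}$ is obtained identically: integrating $z$ in $\Pb{\dset{B}^c(\dset{M}')}$ requires $1 - 2\delta\br{1-\alpha} > 0$ and produces $\exp\br{-\xi}$ times $\exp\br{P\psi\br{\alpha,\delta,m}}$, where the algebraic identity $2\delta\alpha/\br{1-2\delta\br{1-\alpha}} - 1 = \br{2\delta-1}/\br{1-2\delta\br{1-\alpha}}$ collapses the completed square into the compact form of $\psi$; the single sum over $\rho_m$ and the $n$-th power then close the bound. Finally $p_0 = 0$ because every binary codeword has norm exactly $\sqrt{Pn}$, so the power-constraint event in~\eqref{eq:chapter4:p0} is empty; this is also why no reserved margin $P' < P$ is needed and one works directly with $P$.

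I expect the main obstacle to be purely computational rather than conceptual: carefully performing the two single-coordinate Gaussian integrals and the square-completion algebra that identifies the closed forms of $\phi$ and $\psi$, while tracking the two convergence constraints $1 - 2v\br{\alpha-1} > 0$ and $1 - 2\delta\br{1-\alpha} > 0$ that license the Chernoff expectations. The conceptual content is that mutual independence of the coordinates turns the otherwise intractable multivariate quadratic form into an $n$-fold product of a scalar integral, which can be evaluated in closed form against the explicit binomial law $\rho_i$; this is exactly what makes the binary ensemble tractable and mirrors, step for step, the structure already established for the Gaussian codebook in Theorem~\ref{th:gauss_codebook}.
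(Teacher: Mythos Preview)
Your proposal is correct and matches the paper's proof essentially step for step: the paper also reuses the good region~\eqref{eq:ball_prj} from Theorem~\ref{th:gauss_codebook}, applies the union bound to obtain the binomial prefactors, uses the Chernoff bound with parameters $u,v$ (resp.\ $\delta$), integrates out the Gaussian noise $\rbz$ first via Lemma~\ref{lem:quad_form_rv}, and then exploits the i.i.d.\ coordinate structure of $\rbcmp,\rbcfp$ under the binomial law $\rho_i$ to factorize the remaining expectation into an $n$-th power. Your identification of the convergence constraints, the square-completion algebra for $\phi$ and $\psi$, and the reason $p_0=0$ are all in line with the paper's argument.
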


\begin{proof}
We utilize the approach from Theorem~\ref{th:gauss_codebook} with exactly the same good region~\eqref{eq:ball_prj} in Fano's trick~\eqref{eq:fano_trick_def} for $\dset{E}_t$.

We now show how to modify the proof for the case of a binary codebook. Let us start with $\Pb{\dset{E}_t \bigcap \dset{B}}$. We have
\begin{flalign}
&\Pb{\dset{E}_t \bigcap \dset{B}} \leq \binom{K_a}{t} \binom{M-K_a}{t} \Pr\sbr{ \dset{E}(\dset{M}', \dset{F}') \bigcap \dset{B}(\dset{M}')}, \nonumber
\end{flalign}
where $\dset{M}' = [t]$, $\dset{F}' = [K_a + t]\backslash[\Ka]$, and the events are defined as follows:
\begin{flalign*}
\dset{E}(\dset{M}', \dset{F}') &= \brc{\norms{\rbz} - \norms{\rbcmp - \rbcfp + \rbz} \geq 0}, \\
\dset{B}(\dset{M}')            &= \brc{\alpha\norms{\rbcmp + \rbz} + \beta n > \norms{\rbz}}.
\end{flalign*}

The major difference from the proof of Theorem~\ref{th:gauss_codebook} is that $\rbcmp$ and $\rbcfp$ are not Gaussian. Expanding the norms, we rewrite the events as
\begin{flalign*}
\dset{E}(\dset{M}', \dset{F}') &= \brc{-2\br{\rbcmp - \rbcfp}^{T}\rbz - \norms{\rbcmp - \rbcfp} > 0}, \\
\dset{B}(\dset{M}')            &= \brc{\br{\alpha - 1}\norms{\rbz} + 2\alpha \rbcmp^T\rbz + \alpha\norms{\rbcmp} + \beta n > 0}.
\end{flalign*}

Thus, we can estimate the probability $\Pr\sbr{ \dset{E}(\dset{M}', \dset{F}') \bigcap \dset{B}(\dset{M}')}$ using the Chernoff bound for the joint events in the following way:
\begin{flalign*}
\Pr\sbr{ \dset{E}(\dset{M}', \dset{F}') \bigcap \dset{B}(\dset{M}')} &\leq \inf\limits_{u,v > 0}\mathbb{E}_{\rbz,\rbcmp,\rbcfp}\exp\left[\br{\alpha - 1}v\norms{\rbz} \right. \nonumber \\
&\left. + 2\br{\alpha v \rbcmp^T - u\br{\rbcmp -\rbcfp}^T}\rbz \right. \nonumber \\ 
&\left. + \alpha v \norms{\rbcmp} - u\norms{\rbcmp - \rbcfp} + \beta v n \right].
\end{flalign*}

First, let us take the expectation over $\rbz$ while treating $\rbcmp$ and $\rbcfp$ as fixed. To do this, we apply Lemma~\ref{lem:quad_form_rv} and obtain
\begin{flalign}
\Pr\sbr{ \dset{E}(\dset{M}', \dset{F}') \bigcap \dset{B}(\dset{M}')} &\leq \inf\limits_{u,v > 0}\mathbb{E}_{\rbcmp, \rbcfp}\exp\left[-\frac{n}{2}\log\br{1 - 2\br{\alpha - 1}v} \right. \nonumber \\ 
&\left. + 2\frac{\norms{\alpha v \rbcmp - u\br{\rbcmp - \rbcfp}}}{1 - 2\br{\alpha - 1}v} \right. \nonumber \\
&\left. + \alpha v \norms{\rbcmp} - u\norms{\rbcmp - \rbcfp} + \beta v n \right]. \label{eq:bin_error_prob}
\end{flalign}

To take the expectation over $\rbcmp$ and $\rbcfp$, let us consider 
$$
\rbcmp = \br{s_{\dset{M}',1}, s_{\dset{M}',2}, \dots, s_{\dset{M}',n}}
$$
(for $\rbcfp$, the arguments are similar), which is the sum of $t$ codewords, each satisfying the power constraint $P$ for every coordinate. Then, we observe that
$$
s_{\dset{M}',l} \in \brc{\br{2j-t}\sqrt{P}, \forall j: 0 \leq j \leq t}, \quad \forall l \in \brs{n}.
$$
Changing the sign of $j$ terms modifies the sum by $2j\sqrt{P}$ in absolute value. For instance, if all $t$ terms initially have the value $-\sqrt{P}$ and we randomly select $j$ terms to change to $\sqrt{P}$, then the sum will be $ \br{2j - t}\sqrt{P}$.

The number of ways to obtain the sum $s_{\dset{M}',l} = \br{2j - t}\sqrt{P}$, $\forall j: 0 \leq j \leq t$ is given by the binomial coefficient $\binom{t}{j}$. Since each of the $t$ terms can independently take one of two values, the total number of possible combinations is $2^{t}$. Thus, the probability distribution of $s_{\dset{M}',l}$ is:
$$
\rho_{i} = \prob{s_{\dset{M}',l} = i\sqrt{P}} = \frac{1}{2^{t}}\binom{t}{\frac{i+t}{2}}
$$
for $i \in \brc{2j-t, \forall j: 0 \leq j \leq t}$ and zero otherwise. 

Now, we note that
$$
\mathbb{E}\brs{\exp\brs{\norms{\rbcmp}}} = \sum\limits_{\br{i_{1}, i_{2}, \dots, i_{n}}}\prod\limits_{l=1}^{n}\rho_{i_{l}}\exp\brs{i^{2}_{l}P} = \br{\sum\limits_{i=-t}^{t}\rho_{i}\exp\brs{i^{2}P}}^{n}.
$$

Finally, applying this approach to~\eqref{eq:bin_error_prob}, we obtain $\Pb{\dset{E}_t \bigcap \dset{B}} \leq p_{1,t}$. Similarly, we have $\Pb{\dset{B}^c} \leq p_{2,t}$.
\end{proof}

In Section~\ref{chap4:num_res}, we provide a numerical comparison of this bound and find that it yields the same energy efficiency estimates as the bound for the Gaussian codebook presented in Theorem~\ref{th:gauss_codebook}.

\section{Further comments and discussion}\label{ch4:further_comments}

\subsection{Random number of active users}\label{ch4:sec:random_ka}

All the bounds considered in this chapter assume that $\Ka$ is fixed and known to the receiver. The authors of~\cite{Durisi2022} pose a very important and timely question: How do the fundamental limits change when $\Ka$ is \emph{random} and \emph{unknown} to the receiver? For this case, they derive the \acf{rcb}, which is based on Gallager's trick. Let us discuss the major differences in comparison to~\cite{polyanskiy2017perspective}.

We start with the decoding rule. Recall the rule from~\cite{polyanskiy2017perspective}:
\[
\dset{R} = \argmin_{\dset{R}' \subseteq [M], \left| \dset{R}'\right| = K_a} \norms{\rvec{y} - \dvec{s}_{\dset{R}'}},
\]
which is reasonable since the receiver knows~$\Ka$.

The most straightforward way to generalize this rule is to consider:
\begin{flalign*}
\dset{R} &= \argmax_{\dset{R}' \subseteq [M]} \brc{\exp \sbr{-\frac{\norms{\rvec{y} - \dvec{s}_{\dset{R}'}}}{2}} \Pr[\mathcal{R}']}\\
& = \argmin_{\dset{R}' \subseteq [M]} \brc{\norms{\rvec{y} - \dvec{s}_{\dset{R}'}} - 2 \log \Pr[\mathcal{R}']},
\end{flalign*}
where $\Pr[\mathcal{R}']$ is the prior distribution of $\rscalar{K}_a$.

The rule above is difficult to analyze because it requires considering all possible $\dset{R}'$, which makes the union bound greatly overestimate the desired probability. To address this, the authors of~\cite{Durisi2022} propose a two-stage approach. First, they estimate $\rscalar{K}_a$ using the following rule:
\[
\hat{K}_a = \argmax_{K \in [K_l, K_u]} m(\dvec{y}, K), 
\]
where $m(\dvec{y}, K)$ is a suitably chosen metric, and $K_l$ and $K_u$ are appropriately selected lower and upper bounds for $\rscalar{K}_a$.

Then, given $\hat{K}_a$, the receiver outputs:
\[
\dset{R} = \argmin_{\dset{R}' \subseteq [M], \underline{K} \leq |\dset{R}'| \leq \overline{K}} \norms{\rvec{y} - \dvec{s}_{\dset{R}'}},
\]
where $\underline{K} = \max\{K_l, \hat{K}_a -r\}$, $\overline{K} = \min\{K_u, \hat{K}_a + r\}$, and $r$ is a nonnegative integer referred to as the \emph{decoding radius}.

We do not present the final theorem here but note that the techniques are very similar to those in \cite{polyanskiy2017perspective}. However, in this case, one must consider both $P_e$ and $P_f$ since the output list size is a random variable. We discuss the bound for random  $\Ka$ when presenting the numerical results in Section~\ref{chap4:num_res}.

\subsection{Gordon's lemma and achievability improvements}\label{ch4:sec:gordon_lemma}

Let us once again look at Fano's approach and the good region from Theorem~\ref{th:gauss_codebook}. Recall that we decided not to use $\rbcf$ in the region description to avoid a large binomial coefficient $\binom{M-K_a}{t}$. Are there any other ways to calculate the probability $\Pb{\dset{B}^c}$ without using the union bound and the large binomial coefficient? In this section, we describe the approach from \cite{ZPT-isit19}, which relies on Gordon's inequality \cite{gordon1988} for the expectation of the minimum of the Gaussian process. As shown in~\cite{Glebov2023GMAC}, the techniques from \cite{ZPT-isit19} can be reformulated as an instance of Fano's method.

As an example, let us consider the following simplified region:
\[
\dset{B} = \{ \norm{\rbcf} \geq \sqrt{\beta n} \}.
\]

Recall that
\[
\Pb{\dset{E}_t} \leq \Pb{\dset{E}_t \bigcap \dset{B}} + \Pb{\dset{B}^c},
\]
and for the first term ($\Pb{\dset{E}_t \bigcap \dset{B}}$), we can proceed exactly as in the proof of Theorem~\ref{th:gauss_codebook}. We omit this and only consider $\Pb{\dset{B}^c}$.

We have
\begin{eqnarray*}
\Pb{\dset{B}^c} &=& \Pr\brs{\bigcup_{\dset{F} \subset [M]\backslash[K_a]} \{ \norm{\rbcf} < \sqrt{\beta n} \}} \\
&=& \Pr\brs{\min_{\dset{F} \subset [M]\backslash[K_a]} \norm{\rbcf} < \sqrt{\beta n}}.
\end{eqnarray*}

The latter term can be upper-bounded using concentration properties and Gordon's inequality. Let us mention the main ingredients.

\begin{definition}[Gaussian width]
Given a closed set $\dset{S} \subseteq \mathbb{R}^{d}$, its Gaussian width $w(\dset{S})$ is defined as
\[
w(\dset{S}) = \mathbb{E}\brs{\max_{\dvec{x} \in \dset{S}} \{\rvec{g}^T \dvec{x}\}},
\]
where $\rvec{g} \sim \mathcal{N}(0, \dmat{I}_d)$.
\end{definition}

\begin{theorem}[Gordon's theorem, \cite{gordon1988}]
Let $\rmat{G} \in \mathbb{R}^{k \times d}$ be a random matrix with i.i.d. entries $\rscalar{g}_{i,j} \sim \mathcal{N}(0,1), i \in [k], j \in [d]$, and $\dset{S} \subseteq \mathbb{S}^{d-1}$ be a closed subset of the unit sphere in $d$ dimensions. Then
\begin{flalign*}
\mathbb{E}\brs{\max_{\dvec{x} \in \dset{S}} \norm{\rmat{G}\dvec{x}}} &\leq a_k + w(\dset{S}), \\
\mathbb{E}\brs{\min_{\dvec{x} \in \dset{S}} \norm{\rmat{G}\dvec{x}}} &\geq a_k - w(\dset{S}),
\end{flalign*}
where $a_k = \mathbb{E}[\norm{\rvec{g}}]$, $\rvec{g} \sim \mathcal{N}(0, \dmat{I}_k)$, and $w(\dset{S})$ is the Gaussian width of the set $\dset{S}$. Recall that $a_k = \sqrt{2} \Gamma((k+1)/2)/\Gamma(k/2)$ satisfies
\[
\frac{k}{\sqrt{k+1}} \leq a_k \leq \sqrt{k}.
\]
\end{theorem}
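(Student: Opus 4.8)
The plan is to lift both quantities to a bilinear Gaussian form and compare against a \emph{decoupled} auxiliary process, which is the standard route to Gordon-type bounds. The starting point is the dual representation of the Euclidean norm, $\norm{\rmat{G}\dvec{x}} = \max_{\dvec{w} \in \mathbb{S}^{k-1}} \dvec{w}^{T}\rmat{G}\dvec{x}$, so that $\max_{\dvec{x} \in \dset{S}}\norm{\rmat{G}\dvec{x}} = \max_{\dvec{x} \in \dset{S}}\max_{\dvec{w} \in \mathbb{S}^{k-1}} \dvec{w}^{T}\rmat{G}\dvec{x}$ and $\min_{\dvec{x} \in \dset{S}}\norm{\rmat{G}\dvec{x}} = \min_{\dvec{x} \in \dset{S}}\max_{\dvec{w} \in \mathbb{S}^{k-1}} \dvec{w}^{T}\rmat{G}\dvec{x}$. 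I would index a centered Gaussian process by the pair $(\dvec{x},\dvec{w})$ via $X_{\dvec{x},\dvec{w}} = \dvec{w}^{T}\rmat{G}\dvec{x}$, whose covariance is $\mathbb{E}[X_{\dvec{x},\dvec{w}}X_{\dvec{x}',\dvec{w}'}] = \langle\dvec{w},\dvec{w}'\rangle\langle\dvec{x},\dvec{x}'\rangle$, and place against it the additive process $Y_{\dvec{x},\dvec{w}} = \dvec{g}^{T}\dvec{x} + \dvec{h}^{T}\dvec{w}$ with independent $\dvec{g}\sim\mathcal{N}(\bzero,\dmat{I}_{d})$ and $\dvec{h}\sim\mathcal{N}(\bzero,\dmat{I}_{k})$, whose covariance is $\langle\dvec{x},\dvec{x}'\rangle + \langle\dvec{w},\dvec{w}'\rangle$. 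The whole point of $Y$ is that its two optimizations decouple across $\dvec{x}$ and $\dvec{w}$.

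For the upper bound only suprema appear, so the Sudakov--Fernique inequality suffices. Writing $a = \langle\dvec{x},\dvec{x}'\rangle$ and $b = \langle\dvec{w},\dvec{w}'\rangle$, the squared increments on the unit spheres are $\mathbb{E}[(X_{\dvec{x},\dvec{w}} - X_{\dvec{x}',\dvec{w}'})^{2}] = 2 - 2ab$ and $\mathbb{E}[(Y_{\dvec{x},\dvec{w}} - Y_{\dvec{x}',\dvec{w}'})^{2}] = 4 - 2a - 2b$; since $a,b \leq 1$, the required inequality $2 - 2ab \leq 4 - 2a - 2b$ is exactly $(1-a)(1-b) \geq 0$, which always holds. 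Hence $\mathbb{E}[\max_{\dvec{x},\dvec{w}} X_{\dvec{x},\dvec{w}}] \leq \mathbb{E}[\max_{\dvec{x},\dvec{w}} Y_{\dvec{x},\dvec{w}}]$, and because the objective is separable over the product set the right-hand maximum splits into $\mathbb{E}[\max_{\dvec{x}\in\dset{S}}\dvec{g}^{T}\dvec{x}] + \mathbb{E}[\max_{\dvec{w}\in\mathbb{S}^{k-1}}\dvec{h}^{T}\dvec{w}] = w(\dset{S}) + a_{k}$, which is the first claim.

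The lower bound is the crux, and it is where a genuinely two-sided comparison is unavoidable: the nested $\min_{\dvec{x}}\max_{\dvec{w}}$ cannot be controlled by Sudakov--Fernique, so I would invoke Gordon's Gaussian min--max comparison inequality~\cite{gordon1988}. Its equal-variance hypothesis fails directly, since $\mathbb{E}[X_{\dvec{x},\dvec{w}}^{2}] = 1$ while $\mathbb{E}[Y_{\dvec{x},\dvec{w}}^{2}] = 2$; the fix is to augment $X$ by an independent $g_{0}\sim\mathcal{N}(0,1)$, setting $X'_{\dvec{x},\dvec{w}} = \dvec{w}^{T}\rmat{G}\dvec{x} + g_{0}$. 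This raises all variances to $2$ while leaving $\mathbb{E}[\min_{\dvec{x}}\max_{\dvec{w}}X'_{\dvec{x},\dvec{w}}] = \mathbb{E}[\min_{\dvec{x}}\norm{\rmat{G}\dvec{x}}]$ unchanged. Gordon's conditions then require that the covariances of $X'$ and $Y$ agree for pairs sharing the same $\dvec{x}$ (they do, both equal $1+b$) and that $\mathbb{E}[X'_{\dvec{x},\dvec{w}}X'_{\dvec{x}',\dvec{w}'}] \geq \mathbb{E}[Y_{\dvec{x},\dvec{w}}Y_{\dvec{x}',\dvec{w}'}]$ for $\dvec{x}\neq\dvec{x}'$, i.e. $ab+1 \geq a+b$, which is again $(1-a)(1-b)\geq 0$. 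This is the main obstacle: one must equalize variances correctly and get the direction of each covariance comparison exactly right, since it is the reversed increment at distinct $\dvec{x}$ that genuinely needs Gordon rather than Slepian/Sudakov--Fernique. Granting the inequality, $\mathbb{E}[\min_{\dvec{x}}\max_{\dvec{w}}X'_{\dvec{x},\dvec{w}}] \geq \mathbb{E}[\min_{\dvec{x}}\max_{\dvec{w}}Y_{\dvec{x},\dvec{w}}]$, and the auxiliary decouples as $\mathbb{E}[\max_{\dvec{w}}\dvec{h}^{T}\dvec{w}] + \mathbb{E}[\min_{\dvec{x}\in\dset{S}}\dvec{g}^{T}\dvec{x}] = a_{k} - w(\dset{S})$, using that $-\dvec{g}$ is distributed as $\dvec{g}$ so that $\mathbb{E}[\min_{\dvec{x}}\dvec{g}^{T}\dvec{x}] = -w(\dset{S})$.

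Finally, the stated numerical bounds on $a_{k}$ are routine: $a_{k} = \mathbb{E}[\norm{\rvec{g}}] \leq \sqrt{\mathbb{E}[\norm{\rvec{g}}^{2}]} = \sqrt{k}$ by Jensen, while the matching lower bound $a_{k} \geq k/\sqrt{k+1}$ follows from the exact value $a_{k} = \sqrt{2}\,\Gamma((k+1)/2)/\Gamma(k/2)$ together with a standard log-convexity (Gautschi-type) estimate for the Gamma ratio. I would keep this last step brief, as the substance of the theorem lies entirely in the Gaussian comparison for the lower bound.
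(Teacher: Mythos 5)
The paper does not actually prove this statement: it is quoted verbatim as a classical result and attributed to \cite{gordon1988}, so there is no in-paper argument to compare against. Your proposal is the standard (and correct) proof of the two-sided bound, essentially the one found in modern treatments of Gaussian comparison: the dual representation $\norm{\rmat{G}\dvec{x}} = \max_{\dvec{w}\in\mathbb{S}^{k-1}}\dvec{w}^{T}\rmat{G}\dvec{x}$, Sudakov--Fernique against the decoupled process $\dvec{g}^{T}\dvec{x}+\dvec{h}^{T}\dvec{w}$ for the supremum, and Gordon's min--max comparison for the infimum. The computations check out: both increment comparisons reduce to $(1-a)(1-b)\geq 0$ with $a=\langle\dvec{x},\dvec{x}'\rangle$, $b=\langle\dvec{w},\dvec{w}'\rangle$; the variance equalization via an independent $g_{0}$ leaves the min--max expectation unchanged; the auxiliary process decouples because $\max_{\dvec{w}}\dvec{h}^{T}\dvec{w}$ does not depend on $\dvec{x}$; and the symmetry $-\dvec{g}\overset{d}{=}\dvec{g}$ gives $\mathbb{E}[\min_{\dvec{x}\in\dset{S}}\dvec{g}^{T}\dvec{x}]=-w(\dset{S})$. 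Two minor points you could make explicit: the comparison inequalities are stated for finite index sets, so one should pass to the compact $\dset{S}$ by separability and continuity; and the lower bound $a_{k}\geq k/\sqrt{k+1}$ follows most cleanly from the identity $a_{k}a_{k+1}=k$ together with $a_{k+1}\leq\sqrt{k+1}$, which avoids invoking a Gautschi-type estimate. Neither affects correctness.
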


\begin{lemma}[Gaussian concentration, see e.g. {\cite[Corollary 3.3]{Chandrasekaran}}]
Let $f: \mathbb{R}^d \rightarrow \mathbb{R}$ be an $\ell$-Lipschitz function and $\rvec{g} \sim \mathcal{N}(0, \dmat{I}_d)$. Then,
\begin{flalign*}
\Pr\brs{f(\rvec{g}) \geq \mathbb{E}_{\rvec{g}}[f] + t} &\geq 1 - e^{-\frac{t^2}{2\ell^2}}, \\
\Pr\brs{f(\rvec{g}) \leq \mathbb{E}_{\rvec{g}}[f] - t} &\leq e^{-\frac{t^2}{2\ell^2}}.
\end{flalign*}
\end{lemma}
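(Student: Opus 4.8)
The plan is to derive both displayed inequalities from a single \emph{dimension-free sub-Gaussian bound} on the Laplace transform of $f(\rvec{g})$, namely
\[
\mathbb{E}\brs{e^{\lambda\br{f(\rvec{g}) - \mathbb{E}[f]}}} \leq e^{\lambda^2 \ell^2 / 2}, \qquad \lambda \in \mathbb{R},
\]
and then to pass to tail probabilities by the Chernoff bound. Before that I would reduce to the case of a smooth $f$ by a standard mollification argument: convolving $f$ with a Gaussian kernel of vanishing bandwidth preserves the Lipschitz constant (so $\norm{\nabla f} \leq \ell$ almost everywhere) and converges to $f$, and both tail bounds are stable under this limit. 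This lets me differentiate freely in the main step.

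The heart of the argument is the Laplace-transform bound, which I would obtain via the Gaussian logarithmic Sobolev inequality of Gross together with the Herbst argument. Writing $\gamma$ for the standard Gaussian measure, the log-Sobolev inequality states $\mathrm{Ent}_\gamma(h^2) \leq 2\,\mathbb{E}_\gamma\brs{\norm{\nabla h}^2}$ for smooth $h$. Applying it to $h = e^{\lambda f / 2}$ and setting $H(\lambda) = \mathbb{E}\brs{e^{\lambda f}}$, $K(\lambda) = \log H(\lambda)$, the bound $\norm{\nabla f} \leq \ell$ converts the inequality into the first-order differential inequality $\lambda K'(\lambda) - K(\lambda) \leq \lambda^2 \ell^2 / 2$, equivalently $\frac{d}{d\lambda}\br{K(\lambda)/\lambda} \leq \ell^2/2$. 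Integrating from $0$ (where $K(\lambda)/\lambda \to \mathbb{E}[f]$) yields $K(\lambda) \leq \lambda\,\mathbb{E}[f] + \lambda^2\ell^2/2$, which is exactly the claimed sub-Gaussian bound.

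With the Laplace-transform bound in hand, the tails follow routinely. For the lower tail (the second displayed line), apply the Chernoff inequality: for $\lambda > 0$,
\[
\Pr\brs{f(\rvec{g}) \leq \mathbb{E}[f] - t} \leq e^{-\lambda t}\,\mathbb{E}\brs{e^{-\lambda\br{f(\rvec{g}) - \mathbb{E}[f]}}} \leq e^{-\lambda t + \lambda^2\ell^2/2},
\]
and optimizing at $\lambda = t/\ell^2$ gives the bound $e^{-t^2/(2\ell^2)}$. The first displayed line is then the complementary form of this same lower-tail event, read with the sign convention $\Pr\brs{f(\rvec{g}) \geq \mathbb{E}[f] - t} = 1 - \Pr\brs{f(\rvec{g}) < \mathbb{E}[f] - t} \geq 1 - e^{-t^2/(2\ell^2)}$; if the corresponding upper-tail estimate is needed it follows identically by applying the argument to the $\ell$-Lipschitz function $-f$.

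The main obstacle is the sharp constant $1/2$ in the exponent: the elementary Gaussian-interpolation (smart-path) proof of a sub-Gaussian bound only delivers the weaker constant $\pi^2/8$, so attaining the stated $e^{-t^2/(2\ell^2)}$ genuinely requires either Gross's log-Sobolev inequality or, equivalently, the Gaussian isoperimetric inequality (the Borell--Sudakov--Tsirelson route, likely the one behind the cited Corollary~3.3). Establishing the log-Sobolev inequality itself — e.g.\ by tensorization from the one-dimensional case or via the Ornstein--Uhlenbeck semigroup — is the one genuinely non-routine ingredient; everything downstream (mollification, Herbst integration, Chernoff, and complementation) is mechanical.
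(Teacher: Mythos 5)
Your proposal is correct. Note first that the paper does not prove this lemma at all: it is quoted from the literature (Corollary~3.3 of the cited Chandrasekaran et al.\ reference), so there is no in-paper argument to compare against, and your write-up supplies a complete proof where the paper merely cites. Your route --- mollify to reduce to smooth $f$ with $\|\nabla f\|\leq \ell$, derive the sub-Gaussian Laplace-transform bound via Gross's log-Sobolev inequality and the Herbst argument (the differential inequality $\frac{d}{d\lambda}\bigl(K(\lambda)/\lambda\bigr)\leq \ell^2/2$ with $K(\lambda)/\lambda\to\mathbb{E}[f]$ as $\lambda\to 0$), then Chernoff and optimize at $\lambda=t/\ell^2$ --- is the standard derivation that attains the sharp constant $1/2$ in the exponent, and your remark that the elementary Maurey--Pisier interpolation argument only yields the weaker constant (tail $e^{-2t^2/(\pi^2\ell^2)}$) is accurate; the Borell--Sudakov--Tsirelson isoperimetric route is indeed the equivalent alternative. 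You also correctly diagnosed that the first displayed line of the lemma as printed is a sign typo: $\Pr\bigl[f(\rvec{g})\geq \mathbb{E}[f]+t\bigr]\geq 1-e^{-t^2/(2\ell^2)}$ is false in general (already for $f$ linear), and the intended statement is the complement of the lower-tail event, $\Pr\bigl[f(\rvec{g})\geq \mathbb{E}[f]-t\bigr]\geq 1-e^{-t^2/(2\ell^2)}$, which is exactly what your complementation step proves and is consistent with how the paper actually uses the lemma (only the lower-tail bound enters the estimate of $\Pr[\dset{B}^c]$ via $c=(\mu-\sqrt{\beta n})_+$).
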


\begin{lemma}[Size of the maximum]
Let $\rvec{g} = [\rscalar{g}_1, \ldots, \rscalar{g}_d] \sim \mathcal{N}(0, \dmat{\Sigma})$ and $\mathbb{E}[\norms{\rscalar{g}_i}] = 1$ for $i \in [d]$. Irrespective of the covariance structure, we have the following bound:
\[
\mathbb{E}\brs{\max_{i \in [d]}\rscalar{g}_i} \leq \sqrt{2 \log d}.
\]
\end{lemma}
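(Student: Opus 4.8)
The plan is to use the classical exponential-moment (``softmax'') argument, which controls the expected maximum through the marginal moment generating functions of the $\rscalar{g}_i$ alone and is therefore completely insensitive to the joint covariance structure $\dmat{\Sigma}$. The key preliminary observation is that the hypothesis $\Ex{\norms{\rscalar{g}_i}} = 1$ means each marginal $\rscalar{g}_i$ is a zero-mean Gaussian of unit variance, so its moment generating function is $\Ex{\exp(\lambda \rscalar{g}_i)} = \exp\br{\lambda^2/2}$ for every $\lambda \in \mathbb{R}$.

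First I would fix $\lambda > 0$ and apply Jensen's inequality to the convex map $x \mapsto e^{\lambda x}$, which pulls the expectation inside the exponent:
\[
\exp\br{\lambda\,\Ex{\max_{i\in[d]} \rscalar{g}_i}} \leq \Ex{\exp\br{\lambda \max_{i\in[d]} \rscalar{g}_i}} = \Ex{\max_{i\in[d]} \exp\br{\lambda \rscalar{g}_i}}.
\]
Next I would bound the maximum of the nonnegative terms $\exp(\lambda \rscalar{g}_i)$ by their sum and invoke linearity of expectation. This is the only place where one might worry about dependence, and it is exactly here that no independence is required: $\Ex{\sum_{i} \exp(\lambda \rscalar{g}_i)} = \sum_i \Ex{\exp(\lambda \rscalar{g}_i)} = d\,\exp\br{\lambda^2/2}$, regardless of the joint law. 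This is the step that makes the claim hold ``irrespective of the covariance structure,'' since only the marginals enter.

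Taking logarithms of the resulting chain $\exp\br{\lambda\,\Ex{\max_i \rscalar{g}_i}} \leq d\,\exp\br{\lambda^2/2}$ yields
\[
\Ex{\max_{i\in[d]} \rscalar{g}_i} \leq \frac{\log d}{\lambda} + \frac{\lambda}{2}.
\]
Finally I would optimize the free parameter $\lambda$: the right-hand side is minimized at $\lambda = \sqrt{2\log d}$, at which point the two terms balance and both equal $\sqrt{(\log d)/2}$, giving the asserted bound $\Ex{\max_i \rscalar{g}_i} \leq \sqrt{2\log d}$.

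There is no genuine obstacle here, as every step is a one-line application of a standard inequality; the only points requiring a word of care are (i) emphasizing that independence is never used, so $\dmat{\Sigma}$ is irrelevant, and (ii) the boundary case $d = 1$, where $\log d = 0$ and the stated bound reads $\Ex{\rscalar{g}_1} \leq 0$, which holds with equality since $\rscalar{g}_1$ is centered; here one simply takes the infimum over $\lambda > 0$ of $\lambda/2$, which is $0$.
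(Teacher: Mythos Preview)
Your proof is correct and follows essentially the same exponential-moment (softmax) argument as the paper: both introduce a free parameter, bound the maximum by a sum inside the exponential, use that only the marginal moment generating functions enter, and then optimize the parameter to obtain $\sqrt{2\log d}$. The only cosmetic difference is that the paper applies Jensen to the concave logarithm rather than to the convex exponential, which is the same step viewed from the other side.
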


\begin{proof}
\begin{flalign*}
\mathbb{E}\brs{\max_{i \in [d]}\rscalar{g}_i} &= \frac{1}{\beta} \mathbb{E}\brs{\log \exp[\beta \max_i \rscalar{g}_i]} \\
& \leq \frac{1}{\beta} \mathbb{E}\brs{\log \sum_{i \in [d]} \exp[\beta \rscalar{g}_i]} \\
& \leq \frac{1}{\beta} \log \sum_{i \in [d]} \mathbb{E}\brs{\exp[\beta \rscalar{g}_i]} = \frac{\beta}{2} + \frac{\log n}{\beta}.
\end{flalign*}

Taking $\beta = \log n$ yields the lemma statement.
\end{proof}

Let us now return to our problem. Recall that we consider the case of a Gaussian codebook $\rmat{X}$. Let $\rmat{G} = \rmat{X}_{[M]\backslash[\Ka]}$. The matrix $\rmat{G}$ is of size $n \times (M-K_a)$, with $g_{i,j} \overset{\text{i.i.d.}}{\sim} \mathcal{N}(0,P')$.
\begin{flalign*}
&\Pr\brs{\min_{\dset{F} \subset [M]\backslash[K_a]} \norm{\rbcf} < \sqrt{\beta n}} \\
&= \Pr\brs{\min_{\dvec{x} \in \{0,1\}^{M-K_a}, \norm{x}_0 = t} \norm{\mathbf{G} \dvec{x}} < \sqrt{\beta n}} \\
&\leq \exp\left( -\frac{c^2}{2 \ell^2}\right),
\end{flalign*}
where $(x)_+ = \max\{x,0\}$,
\[
c = \left( \mu - \sqrt{\beta n} \right)_{+},
\]
and
$$
\mu = \mathbb{E}\brs{\min_{\dvec{x} \in \{0,1\}^{M-K_a}, \norm{\dvec{x}}_0 = t} \norm{\rmat{G} \dvec{x}}}  \geq \frac{n\sqrt{P't}}{\sqrt{n+1}} - \sqrt{2P't \ln \binom{M-K_a}{t}}.
$$

Note that the function $F(\dmat{G}) = \min_{\dvec{x} \in \{0,1\}^{M-K_a}, \norm{x}_0 = t} \norm{\dmat{G} \dvec{x}}$ is $\ell$-Lipschitz with $\ell = \sqrt{t}$. Indeed,
\begin{flalign*}
\left| F(\rmat{G}_1) - F(\rmat{G}_2) \right| &\leq \max_{\dvec{x} \in \{0,1\}^{M-K_a}, \norm{\dvec{x}}_0 = t} \left| \norm{\rmat{G}_1 \dvec{x}} - \norm{\rmat{G}_2 \dvec{x}} \right| \\
& \leq \max_{\dvec{x} \in \{0,1\}^{M-K_a}, \norm{\dvec{x}}_0 = t} \brs{\norm{(\rmat{G}_1 - \rmat{G}_2) \dvec{x}}} \\
& \leq \sqrt{t}\norm{(\rmat{G}_1 - \rmat{G}_2)}_{2,2} \leq \sqrt{t} \norm{(\rmat{G}_1 - \rmat{G}_2)}_F,
\end{flalign*}
where
\[
\norm{\dmat{A}}_{\alpha, \beta}
\triangleq \sup\limits_{\dvec{x}: \: \norm{\dvec{x}}_\alpha \leq 1}{\norm{\dmat{A}\dvec{x}}_{\beta}}.
\]

We use the result of this lemma to analyze the asymptotic regime presented in Appendix~\ref{app:asymptotics}. This lemma proves to be highly efficient, and the resulting achievability bound coincides with the converse bound. However, we do not evaluate this lemma in the \ac{fbl} regime, as it yields energy efficiency estimates several decibels higher than those presented below.

\section{Numerical evaluation}\label{chap4:num_res}

\begin{figure}[t]
\centering
\includegraphics{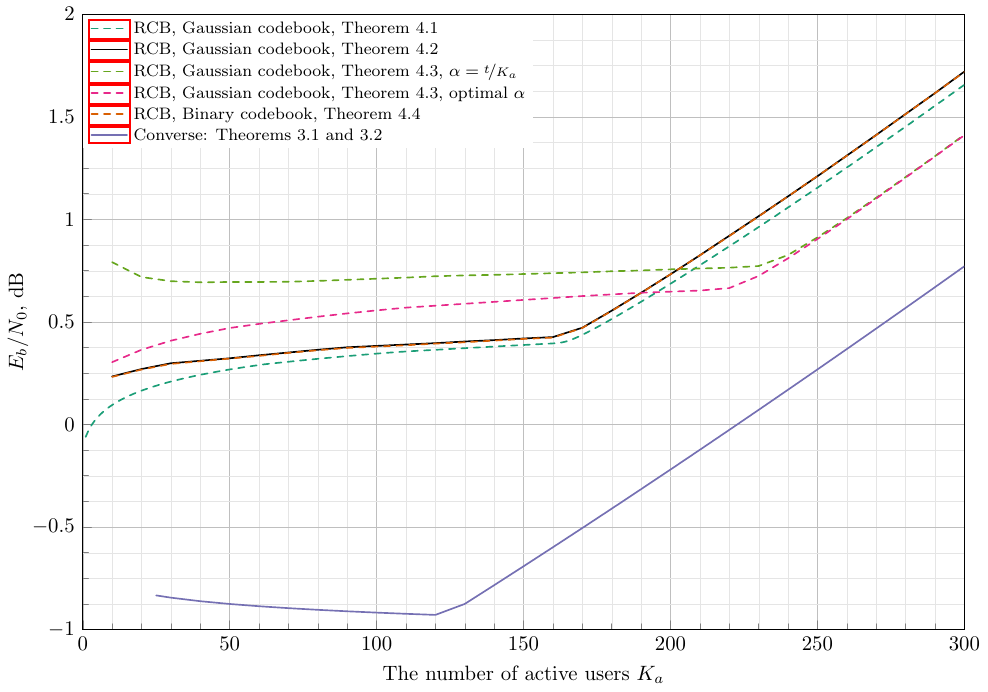}
\caption{\ac{awgn} \ac{mac} achievability bounds are presented for $k=100$ information bits, a frame length of $n=30000$, and an error probability constraint of $P_e<0.05$. The converse bound is depicted as a single line, representing the maximum of the results from Theorems~\ref{the:converse_su} and~\ref{the:converse_ura}.}
\label{fig:ebno_ka}
\end{figure}

\begin{figure}[!ht]
\centering
\includegraphics{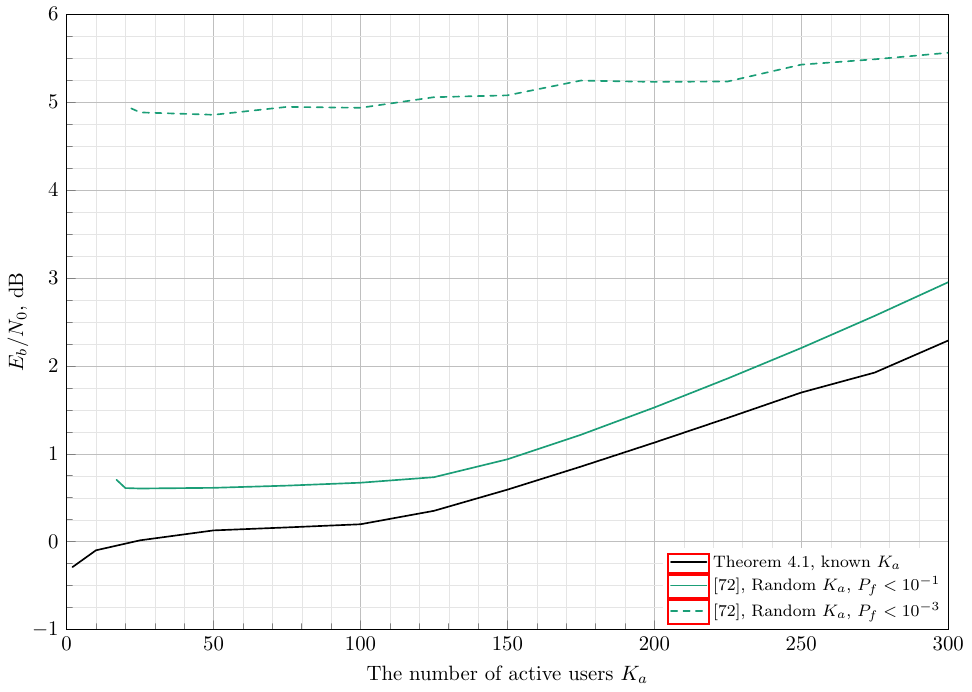}
\caption{\ac{awgn} \ac{mac} achievability bounds are presented for $k=128$ information bits, a frame length of $n=19200$, an error probability constraint of $P_{e}<0.1$, and a random number of active users. Results for different values of $P_f$ are shown. A stricter \ac{far} requirement leads to a higher required $E_b/N_0$.}
\label{fig:ebno_ka_awgn_random}
\end{figure}

Let us consider a setup with a frame length of $n=3\times 10^4$ real channel uses and $k=100$ information bits transmitted by each active user. This setup was proposed in~\cite{polyanskiy2017perspective} and~\cite{ordentlich2017low}. 

We begin with a known number of active users. We select $P_e < 0.05$ and analyze the minimum energy per bit ($E_b/N_0$) as a function of the number of active users, $K_a$. The \ac{rcb} (Theorem~\ref{th:polyanskiy_bound}, \cite{polyanskiy2017perspective}) is represented in~\Fig{fig:ebno_ka} by a dashed green line. A surprising observation is that the energy per bit remains nearly constant---almost as if only a single user were transmitting---until reaching a critical value of $\Ka$ (for the parameters from Theorem~\ref{th:polyanskiy_bound}, this value is approximately $\Ka \approx 150$).

Thus, the performance is \emph{noise-limited}. The key takeaway is that additional users can be accommodated without increasing energy requirements until a critical threshold is reached. The primary reason for this effect is that the dominant probability, $\Pb{\dset{E}_t}$, in~\eqref{eq:chapter4:pupe_definition} corresponds to $t = 1$ when $K_a$ is small. However, as $K_a$ increases, the last term, $t = K_a$, becomes dominant due to the large combinatorial factor $\binom{M-K_a}{t}$, which results in the \emph{interference-limited} regime.

Next, let us consider the \ac{rcb} based on Fano's trick (Theorem~\ref{th:gauss_codebook}). The resulting energy efficiency is presented by a solid black line. As can be seen in Fig.~\ref{fig:ebno_ka}, the results obtained for the Gaussian (Theorem~\ref{th:gauss_codebook}) and binary (Theorem~\ref{th:binary_codebook}) codebooks are almost identical and very close to the original achievability bound from Theorem~\ref{th:polyanskiy_bound} (e.g., for 250 active users, the values are 1.210 dB, 1.211 dB, and 1.154 dB, respectively).

The results for the \ac{rcb}, based on a combination of tricks (Theorem~\ref{th:comb_tricks}), are shown by dashed green and magenta lines for $\alpha = t/\Ka$ and the optimized $\alpha$, respectively. We highlight both curves to improve the original bound from Theorem~\ref{th:polyanskiy_bound} for large values of $\Ka$.

Next, let us consider a random number of active users~\cite{Durisi2022}. For this bound, we evaluate the previously defined energy efficiency as a function of the average number of active users. We note that the number of active users follows a Poisson distribution. 

The main challenge here is the presence of two probabilities: $P_e$~\eqref{eq:ch3:p_missed} and $P_f$~\eqref{eq:ch3:p_false}. The results are presented for $n=19200$, $k=100$, and $P_e < 0.1$, as shown in~\Fig{fig:ebno_ka_awgn_random}. When the condition $P_f < 0.1$ is imposed, the bound for a random $\Ka$ (depicted by a solid green line) demonstrates energy efficiency very close to that of the bound for a known $\Ka$~\cite{polyanskiy2017perspective} (depicted by a black line). Surprisingly, imposing stricter limits on $P_f$ (e.g., $10^{-3}$ instead of $10^{-1}$) results in an additional gap of approximately 2 dB.

The achievability bound for $P_f < 10^{-3}$ is depicted by a dashed green line. Furthermore, the transition between noise-limited and interference-limited regimes (caused by the behavior of different terms in~\eqref{eq:chapter4:pupe_definition}) becomes indistinguishable.

\begin{remark}
A stricter \ac{far} requirement leads to a higher required $E_b/N_0$. In the extreme case where $P_f \to 1$, we obtain $E_b/N_0 \to 0$ (or  $-\infty$ dB). Indeed, in this scenario, the receiver can take the entire codebook as the list of decoded messages, leading to $P_e = 0$ and $P_f = 1$.
\end{remark}

\chapter{\Acl{gura}: low-complexity schemes} \label{chap5}

In this chapter, we address the problem of constructing low-complexity coding schemes for the \acf{gura}. The main challenges we aim to overcome are as follows:

\begin{itemize}
\item Ultra-low rate: if we look at Section~\ref{sec:params}, we observe that the coding rate is $\approx 1/300$ bits per channel use for typical values of parameters of interest.
\item Large \emph{collision order}\footnote{In the \ac{gura} setup, a \emph{collision} is understood as multiple simultaneous transmissions in the frame or slot, depending on the context, and the \emph{order of collision} equals the number of simultaneous transmissions.} in a frame: $\Ka$ can be as high as $500$.
\item Same codebook.
\end{itemize}

The existing \ac{mac} coding schemes are efficient in two opposite situations:
\begin{enumerate}
\item Large dimension, or equivalently, codebook size, and a small collision order, typically not exceeding $2$--$4$.
\item Large collision order and a small codebook size, typically on the order of $2^{15}$.
\end{enumerate}
The latter case corresponds to the classical \ac{cs} problem with standard decoding algorithms such as LASSO, \ac{nnls}, or \ac{omp}; see the description in Appendix~\ref{a:cs_methods}. Thus, the task of \ac{ura} is to support a huge number of extremely low-bit-rate user streams, which appears to be much more complex than supporting a small number of users, each with a high bit rate. Finally, the same encoder assumption introduces additional challenges. Indeed, all low-complexity schemes described in Chapter~\ref{chap2} utilize single-user decoders (explicitly or implicitly). The same codebook and equal powers eliminate user diversity and thus do not allow relying on single-user decoders.

As an example, we consider the case of a two-user \ac{mac}, where users utilize \ac{ldpc} codes. When \ac{ldpc} codes are different\footnote{Different codes can be created from one \ac{ldpc} code by means of different interleavers, or one can consider two different cosets of the same \ac{ldpc} code.}, the performance of the system is good and even close to the point on the dominant facet when $n \to \infty$~\cite{LDPC_2user}. At the same time, when the same code is used, the system performs extremely poorly. We return to this case at the end of the chapter to provide some recent results and formulate open problems.

Before we proceed with the detailed description, we formulate two main ideas behind all existing low-complexity schemes for the \ac{ura} channel: (a) sparsifying the collisions and (b) splitting the high-dimensional problem into a number of low-dimensional ones. If we return to the \ac{cs} interpretation (see eq.~\eqref{eq:ura_as_cs}), the task can be formulated as follows: find the sensing matrix $\dmat{X}$ such that low-complexity algorithms do exist. See the examples of sensing matrices of \ac{ura} low-complexity schemes in Appendix~\ref{a:cs_matrices}.

The question of low-complexity \ac{ura} schemes for the \ac{gmac} is widely addressed in the literature. Below, we present the main techniques:
\begin{itemize}
\item Schemes based on the $T$-fold coded \ac{sa} protocol or ALOHA with multi-packet reception (Section~\ref{ch5:sec:coded_sa}).
\item Schemes based on \ac{idma} techniques~\cite{IDMA2006} were proposed in~\cite{AmalladinneJoint2019} (Section~\ref{sec:ch5:idma}).
\item \Ac{ccs} approach~\cite{codedCS2020} and modifications that include $t$-error correcting list-recoverable codes~\cite{Frolov2022TCOM}, \ac{sparcs} combined with the \ac{amp} algorithm~\cite{fengler2020sparcs}, iterative approaches~\cite{AmalladinneAMP2021}, where the inner \ac{amp} decoder and the outer tree decoder are allowed to exchange soft information within a joint message-passing algorithm (Section~\ref{chap5:ccs_description}).
\item Random spreading and correlation-based energy detector with polar codes and equal power levels~\cite{AmalladinnePolar2020}, polar codes and different power levels~\cite{Duman2021}, \ac{ldpc} codes and \ac{sic} within the decoding process~\cite{PradhanLDPCSC2021} (Section~\ref{sec:ch5:randon_spreading}).
\end{itemize}

The presentation of low-complexity schemes is followed by a discussion on the applicability of the same linear codes to the \ac{gura} problem in Section~\ref{sec:same_lin}. Finally, a numerical comparison of the presented schemes is provided in Section~\ref{ch5:sec_numerical}. The main results are summarized in~\Fig{fig:ch5:mac_numerical}.

\section{\texorpdfstring{$T$}{T}-fold coded \acl{sa}}\label{ch5:sec:coded_sa}

In this section, we focus on the \ac{irsa} protocol presented in Section~\ref{ch2:coded_aloha} and analyze the achievable energy efficiency under \ac{pupe} constraints~\eqref{eq:ch3_energy_efficiency}, instead of throughput. This protocol assumes that users send multiple replicas of the same packet (repetition), forming a subclass of coded \ac{sa} protocols. A key aspect of the \ac{ura} problem is the large number of devices, which leads to a high collision probability, making collision resolution necessary.

$T$-fold \ac{irsa} (or \ac{irsa} with multi-packet reception) enables the resolution of collisions up to order $T$. Recall that in the classical collision model described in Section~\ref{sec:ch2_saloha}, messages from a collided slot cannot be extracted. However, in an \ac{awgn} channel, the successful reception of multiple transmissions---up to $T$---may be possible with a suitable slot decoding algorithm. The details of this algorithm will be outlined throughout this section, depending on the particular scenario. The value of $T$ is determined by computational complexity limitations at the receiver.

We begin with the \ac{irsa} protocol description in Section~\ref{ch5:sec_irsa_basic}, followed by the \ac{de} analysis in Section~\ref{sec:ch5_irsa_de}. Next, we present a $T$-fold \ac{irsa} achievability bound formulated in Theorem~\ref{th:irsa_t_bound}. Some \ac{irsa} modifications adopted for practical schemes are introduced in Section~\ref{ch5:sec_irsa_sf}, followed by an analysis of practical collision-resolution methods. These methods, based on \ac{ldpc} codes~\cite{vem2019user} and polar codes~\cite{Marshakov2019Polar}, as well as the compute-and-forward strategy~\cite{Nazer2011CoF}, which was applied to the \ac{ura} setup in~\cite{ordentlich2017low}, are presented in Section~\ref{sec:ch5:irsa_practical_awgn}.

\subsection{Basic \texorpdfstring{$T$}{T}-fold \ac{irsa} scheme}\label{ch5:sec_irsa_basic}

In this section, we describe the basic $T$-fold \ac{irsa} protocol used in~\cite{vem2019user}. The main difference from previous works~\cite{liva2011graph, MPR_schlegel, MPR_stefanovich} is that we do not add pointers to the locations of the replicas. Instead, we determine the slot count and slot indices based on the message to be transmitted.

Assume the frame is split into $L$ slots of size $n' = n/L$ channel uses. We now describe the main features of the transmission process:
\begin{itemize}
\item The user chooses a message $W\sim \mathrm{Unif}\br{[M]}$, encodes it, and obtains a codeword $f(W) \in \mathbb{R}^{n'}$.
\item Users repeat their codewords in multiple slots. To choose the slots, all users share \emph{the same} slot-selection function $g: [M] \to \left\{0,1\right\}^L$, i.e., the codeword $f(W)$ is transmitted in the slots with indices from $\mathrm{supp}\br{g\br{W}}$.
\item The function $g$ also determines the number of replicas $D: [M] \to [L]$, where $D(W) = \mathrm{wt}\br{g\br{W}}$.
\item The uniform message distribution induces the replica count distribution, i.e., the distribution of the random variable $D(W)$, which is the same for all users.
\end{itemize}

\begin{remark}
We note that the functions $g(\cdot)$ and $D(\cdot)$ are deterministic functions of the message. This fact is crucial because the receiver does not know the positions of multiple replicas until the message is decoded in at least one slot. Once the message is decoded, the receiver can determine from which slots the replicas of the decoded message should be removed and then perform the \ac{sic} step.
\end{remark}

\begin{remark}\label{remark:slot_choice_uniform}
In our analysis (see Section~\ref{sec:ch5_irsa_de}), we assume that the slot-selection function $g\br{\cdot}$ is designed such that, for a given number of replicas $d = D\br{W}$, the slot indices are selected uniformly from the $\binom{L}{d}$ possible choices. In practical schemes, we construct $g\br{\cdot}$ to approximate this assumption as closely as possible. One possible approach is to use the message $W$ as a seed for a pseudo-random number generator. Once the message is recovered, the receiver sets the same seed and determines the replica locations.
\end{remark}

\paragraph{Bipartite graph representation.}

\begin{figure}[t]
\centering
\includegraphics{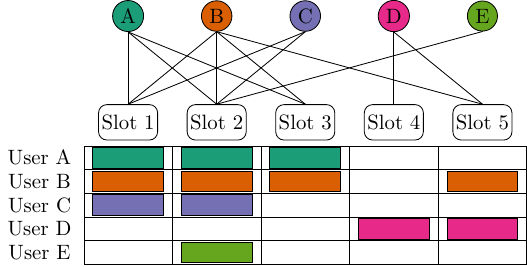}
\caption{Example of a bipartite graph (top) corresponding to the case of $5$ active users transmitting in $5$ slots (bottom). User-nodes are labeled A--E and have different colors. Slot-nodes are marked by numbers. The messages transmitted in different slots are represented by colored rectangles, with each color matching a different user.\label{fig:chap5:aloha_tx}}
\end{figure}

The transmission and decoding processes can be described using a bipartite graph (see~\Fig{fig:chap5:aloha_tx}), known as a Tanner graph~\cite{tanner1981recursive}. The vertex set of this graph consists of the set of user-nodes $\dset{U}= \left\{u_1, u_2, \ldots, u_{\Ka}\right\}$, which represents the users, and the set of slot-nodes $\dset{S} = \left\{s_1, s_2, \ldots, s_L\right\}$, which corresponds to the signals received in the slots. A user-node $u_i$ and a slot-node $s_j$ are connected by an edge if and only if the $i$-th user transmitted a packet in the $j$-th slot.

Following~\cite{RichardsonUrbanke2008}, we represent user-node degree distributions from both the node and edge perspectives using polynomials (or generating functions). Specifically, we introduce the polynomials:
\begin{equation}\label{eq:ch5:irsa_poly}
\Lambda\br{x} = \sum\limits_{i=1}^L\Lambda_i x^i, \quad \text{and} \quad \lambda\br{x} = \frac{\Lambda'\br{x}}{\Lambda'\br{1}}=\sum\limits_{i=1}^L \lambda_i x^{i-1},
\end{equation}
where $\Lambda_i$ denotes the fraction of user-nodes with degree $i$, and $\lambda_i$ denotes the fraction of edges incident to user-nodes of degree $i$. Here, $\Lambda'\br{x}$ represents the derivative with respect to the formal variable $x$. Note that
$$
\Lambda'\br{1} = \sum\limits_{i=1}^L i \Lambda_i
$$
is the average degree of user-nodes.

\begin{example}
Consider the graph in~\Fig{fig:chap5:aloha_tx}. We have:
\[
\Lambda\br{x} = \frac{1}{5}x + \frac{2}{5} x^2 + \frac{1}{5} x^3 + \frac{1}{5} x^4,
\]
and
\[
\lambda\br{x} = \frac{1}{\frac{1}{5} + \frac{4}{5} + \frac{3}{5} + \frac{4}{5}} \left( \frac{1}{5} + \frac{4}{5} x + \frac{3}{5} x^2 + \frac{4}{5} x^3 \right) =  \frac{1}{12} + \frac{1}{3} x + \frac{1}{4} x^2 + \frac{1}{3} x^3.
\]
\end{example}

Similar to~\eqref{eq:ch5:irsa_poly}, we define the slot-node degree distributions from the node perspective $R\br{x}$ and the edge perspective $\rho\br{x}$ as follows:
$$
R\br{x} = \sum\limits_{i=0}^{\Ka}R_i x^i, \quad \text{and} \quad \rho\br{x} = \sum\limits_{i=1}^{\Ka} \rho_i x^{i-1}.
$$

Next, we compute the probability that a user selects any given slot (denoted by event $\mathcal{E}$). Since the process does not depend on specific user or slot indices, we omit them for simplicity. Let the user transmit $d$ replicas. Following Remark~\ref{remark:slot_choice_uniform} (uniform slot selection), we obtain:
\[
\Pr[\mathcal{E} | D = d] = \frac{\binom{L-1}{d-1}}{\binom{L}{d}} = \frac{d}{L}, 
\]
which implies
\[
\Pr[\mathcal{E}] = \mathbb{E}_D \left[ \Pr[\mathcal{E} | D] \right] = \frac{\Lambda'\br{1}}{L} = \frac{G \Lambda'\br{1}}{\Ka},
\]
where $G = \Ka/L$ and $G \Lambda'\br{1}$ represents the average degree of slot-nodes.

Thus, the slot-node distribution (from the node perspective) follows a binomial distribution:
\[
\text{Bino}\left( \Ka, \frac{G \Lambda'\br{1}}{\Ka} \right).
\]
That is, defining $\gamma = G \Lambda'\br{1}$, we obtain:
\begin{flalign}
R\br{x} &= \sum_{i=0}^{\Ka} \binom{\Ka}{i} \left(\frac{\gamma}{\Ka}\right)^i \left(1-\frac{\gamma}{\Ka}\right)^{\Ka-i} x^i \nonumber\\
&= \left( 1 - \frac{\gamma}{\Ka} \right)^{\Ka} \left( 1 + \frac{\frac{\gamma}{\Ka} x}{1 - \frac{\gamma}{\Ka}}\right)^{\Ka}. \label{eq:bino_distr}
\end{flalign}

\paragraph{Decoding.}
Now, let us describe a general approach to the decoding algorithm, which is based on two main procedures:
\begin{enumerate}
\item The slot decoding algorithm, which outputs the list of decoded\footnote{Note that this is an estimate of the transmitted list, so some messages can be missing and some false messages may be added by the decoder. The issue of false messages should be addressed by the decoder to avoid error propagation within the SIC process. One possible solution is to add control information (e.g. add the \ac{crc}) to the packet.} messages.
\item The \ac{sic} step, which relies on the function $g\br{\cdot}$ and subtracts replicas (if present) of successfully decoded messages.
\end{enumerate}

Clearly, after a single \ac{sic} step, the collision order will decrease in some slots. Hence, these steps may be iteratively repeated in different combinations with respect to different slots, depending on the particular decoding algorithm. We refer the reader to Section~\ref{sec:ch5:irsa_practical_awgn} for specific practical slot decoding algorithms.

The example below considers a slot decoding sequence for the messages presented in~\Fig{fig:chap5:aloha_tx}.

\begin{figure}[t]
\centering
\includegraphics{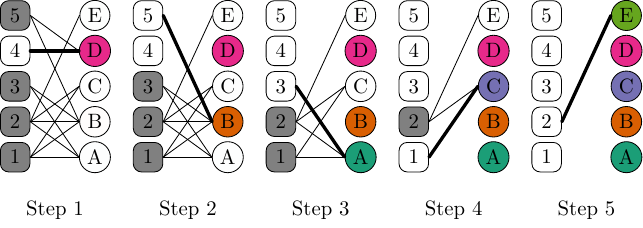}
\caption{Sequence of slot decoding followed by \ac{sic} steps applied to the case presented in~\Fig{fig:chap5:aloha_tx} for $T=1$. The bold line corresponds to the edge incident to the slot that can be decoded at the corresponding step. At the subsequent \ac{sic} step, all edges incident to the same user-node as the bold one will be removed. Resolved slots are marked with a white background, unresolved slots are marked with a gray background. Resolved users are marked with a color that matches~\Fig{fig:chap5:aloha_tx}.\label{fig:chap5:aloha_rx}}
\end{figure}

\begin{example}
Consider the example with $\Ka = 5$ active users and $L = 5$ slots, as presented in~\Fig{fig:chap5:aloha_tx}. Assume the first user (A) sends a codeword in slots 1, 2, and 3; the second user (B) in slots 1, 2, 3, and 5; the third user (C) in slots 1 and 2; the fourth user (D) in slots 4 and 5; and the fifth user (E) in slot 2.

As a result, there are 3 simultaneous transmissions in slot 1, 4 transmissions in slot 2, collisions of order 2 in slots 3 and 5, and the only collision-free slot is slot 4.

Consider a decoder with $T=1$ (see~\Fig{fig:chap5:aloha_rx}). The decoder searches for a collision-free slot, then decodes the message transmitted in this slot and subtracts this message from all slots where it was replicated. There is a natural question: how does the decoder select the collision-free slot? The strategies may be different, but one of them is to try all slots in parallel with a decoder designed for $T=1$. Another reasonable strategy is to select the slot with the minimal received energy.

The \ac{sic} step begins after the successful decoding of slot 4 at step 1 (with all remaining slots marked gray, indicating slots with collisions) and decodes the message of user D. The corresponding edge in the bipartite graph is marked with a bold line. After the \ac{sic} step, all edges incident to the successfully decoded user (D, in this case) are removed, as all transmitted replicas are subtracted.

At the second decoding step, slot 5 becomes collision-free, resulting in the successful decoding of user B. The third step resolves slot 3, allowing user A to be decoded. After the \ac{sic} step, slot 1 becomes collision-free. In the final step, user E, who transmitted in slot 2, can be decoded.

For a decoder with $T=2$, the decoding algorithm may behave differently. For instance, slots 3, 4, and 5 have a collision order not higher than two. Hence, these slots may be decoded first in any order. After the corresponding \ac{sic} steps, the collision order of slot 1 will be reduced to one (collision-free transmission), and slot 2 will have a collision order of two. Moreover, decoding slot 2 will be sufficient to extract all remaining messages in the frame.
\end{example}

\subsection{Achievability bound for \texorpdfstring{$T$}{T}-fold \ac{irsa} scheme}

\subsubsection{Slot decoding} \label{sec:ch5_irsa_de_slot}

Let us first note that, in order to obtain an achievability bound for the whole scheme, we do not restrict the complexity of slot decoding and instead use the random coding bounds described in Chapter~\ref{chap4}. 

Consider a particular slot---w.l.o.g., let it be the first slot. Let $\mathbf{y}_1$ be the received signal of $n'$ channel uses. We assume that $r$ users transmit in the first slot. Recall that $k$ denotes the number of information bits sent by each user and that $P$ is the transmit power. The random coding bound from Theorem~\ref{th:polyanskiy_bound} states that, in the random Gaussian ensemble $\Gensr$ (see Definition~\ref{def:gauss_codebook}), there exists a codebook $\dmat{X}^*$ such that
\begin{flalign*}
&\frac{1}{r} \sum\limits_{i=1}^{r} \Pr\left(W_i \not\in \dset{R}\br{\mathbf{{y}}_1} \: \middle| \: \dmat{X}^*\right) = \Pr\left(W_1 \not\in \dset{R}\left(\mathbf{{y}}_1\right) \: \middle| \: \dmat{X}^*\right)\\
&\leq p\br{n', k, P, r} \triangleq \mathbb{E}_{\dmat{X}}\brs{\Pr\left(W_1 \not\in \dset{R}\br{\mathbf{{y}}_1} \: \middle| \: \dmat{X}, r\right)}.
\end{flalign*}

The first equality holds due to the symmetry of users (it is sufficient to consider the probability that a particular user's message is not in the decoded list). The last expectation is taken over the Gaussian ensemble. 

Here, we emphasize \emph{the main problem}---the bound assumes that the number of active users ($r$) is known. However, due to the randomness of $T$-fold \ac{irsa}, the number of users transmitting in a particular slot is a random variable. 

Another issue is that the codebook $\dmat{X}^*$ is constructed for a specific number of users, but we need a codebook that can resolve collisions for any order $r \in [T]$.  

To address these issues, the authors of~\cite{Glebov2019} propose shifting to blind decoding. Let
$$
\dset{T} = \{W_1, \ldots, W_r\} \subset [M], \quad \text{where} \quad |\dset{T}| = r,
$$
denote the set of transmitted messages. The blind \ac{ml} decoding rule is given by:
\begin{equation}
\label{eq:ura_awgn_blind_decoding_rule}
\dset{R} = \argmin_{\dset{R}' \subseteq [M], \left| \dset{R}'\right| \leq T} \norms{\rvec{y}_1 - \dvec{s}_{\dset{R}'}}.
\end{equation}
Recall that $T$ is the maximum collision order that can be resolved.

\begin{theorem}\label{theor_av}
Fix $P'$ and $T$. Then, the average (over the random Gaussian ensemble) per-user error probabilities can be calculated as follows for $r \in [T]$:
\[
\mathbb{E}_{\dmat{X}} \brs{ \Pr\left(W_1 \not\in \dset{R}\br{\mathbf{{y}}_1} \: \middle| \: \dmat{X}, r\right) } \leq \sum\limits_{t=1}^{r} \frac{t}{r} p_t\br{r, T},
\]
where
\begin{equation}\label{eq:pt_irsa_bound}
p_t(r, T)  = \sum\limits_{\hat{t} = 0}^{T-r+t}\exp\brs{-n E\br{t, \hat{t}}},
\end{equation}
\[
E(t, \hat{t}) = \max\limits_{0\leq \rho, \rho_1 \leq 1, \lambda > 0}\brs{ -\rho_{1}\rho_{2}R_{1} - \rho_{1}R_{2} + E_{0}\br{\rho_1, \rho_2}},
\]
\[
R_1 = \frac{1}{n} \log \binom{M-r}{\hat{t}}, \quad R_2 = \frac{1}{n} \log \binom{r}{t},
\]
\[
E_{0}\br{\rho_1, \rho_2} =  \frac{1}{2}\br{\rho_1 a + \log(1-2b\rho_1)},
\]
\[
a = \rho_{2} \log \br{1+2P'\hat{t}\lambda} + \log \br{1+2P't\mu},
\]
\[
b = \rho_{2} \lambda - \frac{\mu}{1+2P't\mu}, \quad  \mu = \frac{\rho_{2} \lambda}{1+2P'\hat{t}\lambda}.
\]
\end{theorem}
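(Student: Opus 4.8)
The plan is to mirror the proof of Theorem~\ref{th:polyanskiy_bound}, since blind slot decoding with known collision order $r$ departs from the main achievability argument in only two respects: the decoder's output list size is no longer pinned to $r$, and the missed and false-alarm sets may now carry different cardinalities. First I would invoke the standard symmetry-plus-decomposition step. By symmetry of the Gaussian ensemble and of the $r$ slot-users, the per-user error equals the expected fraction of missed messages, so that
\[
\mathbb{E}_{\dmat{X}}\brs{\Pr\br{W_1 \notin \dset{R} \: \middle| \: \dmat{X}, r}} = \sum_{t=1}^{r}\frac{t}{r}\Pb{\dset{E}_t}, \quad \dset{E}_t = \brc{\card{\dset{M}} = t},
\]
exactly as in~\eqref{eq:chapter4:pupe_definition} but with $r$ in place of $\Ka$. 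It therefore suffices to establish $\Pb{\dset{E}_t} \leq p_t(r,T)$.

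The genuinely new ingredient is the blind rule~\eqref{eq:ura_awgn_blind_decoding_rule}. Because $\card{\dset{R}} \leq T$ and the number of correctly decoded messages is $\card{\dset{C}} = r - t$, the false-alarm count $\hat{t} = \card{\dset{F}}$ is no longer forced to equal $t$ but ranges over $0 \leq \hat{t} \leq T - r + t$. I would accordingly split $\dset{E}_t$ by the false-alarm count,
\[
\dset{E}_t \subseteq \bigcup_{\hat{t}=0}^{T-r+t}\;\bigcup_{\substack{\dset{M}\subseteq\dset{T},\,\dset{F}\subseteq[M]\setminus\dset{T}\\ \card{\dset{M}}=t,\,\card{\dset{F}}=\hat{t}}} \dset{E}(\dset{M},\dset{F}),
\]
and apply an outer union bound over the $T-r+t+1$ admissible values of $\hat{t}$. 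This produces precisely the summation $\sum_{\hat{t}=0}^{T-r+t}$ in~\eqref{eq:pt_irsa_bound}; it is the exact point at which ignorance of the true order enters the analysis.

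For each fixed pair $(t,\hat{t})$ I would then replay the Gallager $\rho$-trick chain from the proof of Theorem~\ref{th:polyanskiy_bound} verbatim: condition on $\rbz$, apply the $\rho$-trick with exponent $\rho_1$ together with a union bound over the $\binom{r}{t}$ choices of $\dset{M}$, then the $\rho$-trick with exponent $\rho_2$ together with a union bound over the $\binom{M-r}{\hat{t}}$ choices of $\dset{F}$, producing the rates $R_2 = \tfrac{1}{n}\log\binom{r}{t}$ and $R_1 = \tfrac{1}{n}\log\binom{M-r}{\hat{t}}$. The single substantive change is that the missed-codeword sum $\rbcmp$ now has per-coordinate variance $tP'$, while the false-codeword sum $\rbcfp$ has per-coordinate variance $\hat{t}P'$, so the two scales no longer coincide. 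Integrating out $\rbcfp$ first via the Chernoff step and Corollary~\ref{cor:quad_form_rv_corollary} yields a denominator $1+2P'\hat{t}\lambda$ (hence the $\hat{t}$ in $\mu$), and the subsequent expectation over $\rbcmp$ contributes the $t$-dependent terms, reproducing the exponent $E(t,\hat{t})$ with $a$, $b$, $\mu$ carrying $\hat{t}$ in the false-codeword slot and $t$ in the missed-codeword slot exactly as stated.

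I expect the main obstacle to be careful bookkeeping rather than a new idea: one must keep the asymmetry between $t$ and $\hat{t}$ straight throughout the nested moment-generating-function computation, since the symmetric special case $\hat{t}=t$ that recovers Theorem~\ref{th:polyanskiy_bound} conceals which factor of $P'$ attaches to which codeword sum. A secondary point to verify is that leaving $\lambda$ as a free optimization variable (rather than substituting the closed form used in Theorem~\ref{th:polyanskiy_bound}) is harmless, which it is: any admissible $\lambda > 0$ in the Chernoff bound already yields a valid upper bound, and the $\max$ over $\lambda$ only tightens it.
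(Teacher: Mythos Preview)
Your proposal is correct and matches the paper's own proof, which simply states that the argument follows the structure of Theorem~\ref{th:polyanskiy_bound} with the only change being the summation limits in~\eqref{eq:pt_irsa_bound}. You have in fact supplied more detail than the paper does: your identification of the outer union bound over $\hat{t}\in\{0,\dots,T-r+t\}$ arising from the blind rule~\eqref{eq:ura_awgn_blind_decoding_rule}, and of the $t$-versus-$\hat{t}$ asymmetry in the Chernoff/moment-generating-function steps, is exactly the bookkeeping the paper alludes to but does not spell out.
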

\begin{proof}
The proof of this theorem follows the same structure as that of Theorem~\ref{th:polyanskiy_bound}, with the main difference being in the summation limits of~\eqref{eq:pt_irsa_bound}.
\end{proof}

We now need to choose a codebook that is effective for all collision orders up to $T$. Let us formally state this requirement.

\begin{statement}\label{stat:code}
Let us choose positive values $\alpha_i$, for $i \in [T]$, such that $\sum\nolimits_{i=1}^T \alpha_i < 1$. In a random Gaussian ensemble, there exists a codebook $\tilde{\dmat{X}}$ such that the following inequalities hold for all $r \in [T]$:
\begin{flalign*}
\Pr\left(W_1 \not\in \dset{R}(\mathbf{{y}}_1) \: \middle| \: \tilde{\dmat{X}}, r\right) &\leq \tilde{p}\br{n, k, P, T, r} \\
& \triangleq \frac{1}{\alpha_i}\mathbb{E}_{\dmat{X}} \brs{ \Pr\left(W_1 \not\in \dset{R}(\mathbf{{y}}_1) \: \middle| \: \dmat{X}, r\right) }.
\end{flalign*}

\end{statement}
\begin{proof}
We estimate the probability of a bad code---one for which the inequalities do not hold for at least one $r$. Applying Markov’s inequality and a union bound, we see that this probability is upper-bounded by
$$
\sum\limits_{i=1}^T \alpha_i < 1.
$$
\end{proof}

Finally, we refer the reader to Section~\ref{ch4:sec:random_ka}, where the problem of a random number of active users has been addressed. The results presented in Section~\ref{ch4:sec:random_ka} were developed later and can serve as an alternative approach to the $T$-fold \ac{irsa} achievability bound.

\subsubsection{Performance analysis via \acl{de}} \label{sec:ch5_irsa_de}

\begin{figure}[t]
\centering
\includegraphics{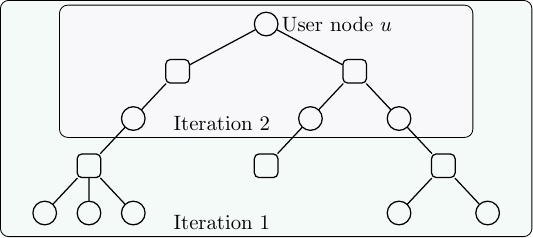}
\caption{Computational graph: Neighborhood of user node $u$ for $2$ iterations.}
\label{fig:irsa_de_tree}
\end{figure}

\begin{figure}[t]
\centering
\includegraphics{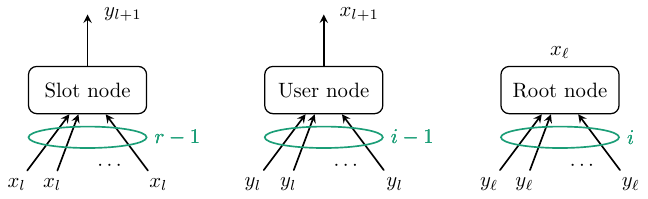}
\caption{\Ac{irsa} \ac{de} rules. Left: Messages to the slot node at intermediate iteration $l$, center: Messages to the user node at intermediate iteration $l$, right: Messages to the user node at the final iteration $\ell$.}
\label{fig:irsa_de_rules}
\end{figure}

The iterative \ac{sic} procedure can be analyzed by means of the \ac{de} method proposed in~\cite{liva2011graph, Pfister2012} for \ac{irsa}. This approach is, in fact, equivalent to the \ac{de} analysis for \ac{ldpc} codes~\cite{RichardsonUrbanke2008} over the erasure channel. We consider message passing decoding, where slot-nodes and user-nodes send outgoing messages along each edge. Each message can take two possible values: either the recovered packet or an erasure\footnote{In what follows, we use the term ``erased message''.} if the packet cannot be recovered.

We consider the ensemble of Tanner graphs $\mathcal{G}(\Ka, L, \lambda\br{x}, \rho\br{x})$ corresponding to the multiple-access scheme with $\Ka$ users, $L$ slots, and the degree distributions $\lambda\br{x}$ and $\rho\br{x}$. We are interested in the decoding performance averaged over this ensemble in the limit as $\Ka, L \to \infty$. Note that in the limit $\Ka \to \infty$, the distribution~(\ref{eq:bino_distr}) becomes a Poisson distribution, i.e.
\[
R\br{x} = \rho\br{x} = e^{-\gamma (1-x)} = \sum\limits_{r=1}^{\infty} \frac{{\gamma}^{r-1}}{(r-1)!} x^{r-1} = \sum\limits_{r=1}^{\infty} \rho_r x^{r-1},
\]
where $\gamma = G \Lambda'\br{1}$ should remain constant.

Let us choose a slot-node $u$. Message passing takes place on the local neighborhood of this node. If we unroll the Tanner graph with the root at user-node $u$, we obtain the computation graph depicted in~\Fig{fig:irsa_de_tree}. In what follows, we use the fact that any neighborhood of constant depth in $\mathcal{G}(\Ka, L, \lambda\br{x}, \rho\br{x})$ is tree-like with probability one. We refer the reader to~\cite{RichardsonUrbanke2008} for a detailed explanation.

We follow the exposition from~\cite{Glebov2019}. Before we proceed with the \ac{de} rules, we list the main assumptions:
\begin{enumerate}
\item A code $\tilde{\dmat{X}}$ constructed in accordance with Statement~\ref{stat:code} is used in the system.
\item If the collision order is greater than $T$, no message can be recovered from the slot. If the collision order is $t < T$, each message is recovered with probability $\tilde{p}\br{n', k, P, T, t}$.
\item The genie reveals information on false packets, i.e., such packets are excluded from the \ac{sic} process.
\end{enumerate}

\begin{remark}
Note that the genie assumption prevents the described bound from being a true achievability bound. This bound is only an optimistic estimate of the performance achievable within a $T$-fold \ac{irsa} scheme.     
\end{remark}

Now let us write the \ac{de} rules. By $x_l$ and $y_l$, we denote the probabilities that an outgoing message (see~\Fig{fig:irsa_de_rules}) from the user-node and the slot node, respectively, are erased during the $l$-th iteration.

\begin{statement}
The DE rules are described in the following form:
\begin{eqnarray*}
y_{l+1} &=& 1 - \rho\br{1-x_l} \sum\limits_{t=0}^{T-1} \br{ 1 - \tilde p\br{n', k, P, T, t+1}} \times \frac{\br{G L'\br{1} x_l}^t}{t!}, \\
x_{l} &=& \lambda\br{y_l}, \:\: 1 \leq l < \ell, \\
x_{\ell} &=& \Lambda\br{y_{\ell}},
\end{eqnarray*}
where the function $\tilde p\br{n', k, P, T, r}$ is defined in Statement~\ref{stat:code}, and the initial condition is $x_0 = 1$, which means that the user messages are erased at the beginning, and we observe only the noisy signal sums in the slots.
\end{statement}

\begin{proof}
Consider the $l$-th iteration. Let us start with the \emph{slot node}. We want to calculate the erasure probability of the outgoing message $y_{l+1}$ based on incoming messages that are erased with probabilities $x_l$. Let us start with a slot node of degree $r$. We can recover the outgoing message (or equivalently, the packet) if there are no more than $T-1$\footnote{In accordance with the \ac{de} rules, the outgoing message is assumed to be erased, and thus the total number of erased messages is no more than $T$.} erased messages out of $r-1$ incoming messages. Finally, the outgoing message is erased with probability
\[
p_r = 1 - \sum\limits_{t=0}^{\min\br{r,T}-1} \br{1 - \tilde p\br{n', k, P, T, t+1}} \binom{r-1}{t} x_l^t \br{1-x_l}^{r-1-t}.
\]

Note that $\rho_r$ is the probability that the outgoing edge is connected to a slot node of degree $r$ (see~\cite{RichardsonUrbanke2008}), thus the total probability of the erased outgoing message can be calculated as
\[
y_{l+1} = \sum\limits_{r = 1}^{\infty} \rho_r p_r.
\]
By changing the summation order, we obtain the result from the proposed statement.

Now proceed with the \emph{user-node}. Clearly, the outgoing message is erased if all incoming messages are erased. Thus, for the node of degree $i$, the outgoing message is erased with probability
\[
q_i = y_l^{i-1},
\]
and the total probability of the erased outgoing message can be calculated as
\[
x_{l} = \sum\limits_{i = 1}^{\infty} \lambda_i q_i = \lambda\br{y_l}.
\]

The root node should be considered separately. The only change is in the total probability calculation. Here, we need to use the probabilities for the user-node (rather than the edge) to be connected to a slot node of degree $i$. Thus,
\[
x_{\ell} = \sum\limits_{i = 1}^{\infty} \Lambda_i q_i = \Lambda\br{y_{\ell}}.
\]
\end{proof}

\begin{theorem}\label{th:irsa_t_bound}
Let us fix $P'$, $n'$, $\Lambda\br{x}$, and $\ell$. Then, the $T$-fold \ac{irsa} scheme with the code $\tilde{\dmat{X}}$ achieves the \ac{pupe} (as $\Ka, L \to \infty$)
\[
P_e \leq x_{\ell} + \prob{\bigcup\limits_{i \ne j} \brc{ W_i = W_j }}.
\]
\end{theorem}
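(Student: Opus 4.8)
The plan is to combine the per-slot random-coding guarantee of Statement~\ref{stat:code} with the density-evolution recursion just derived, and then to lift the local (tree) analysis to a global per-user bound via the standard concentration machinery for sparse-graph codes. First I would dispose of the collision term. Since the slot-selection map $g(\cdot)$ and the replica count $D(\cdot)$ are deterministic functions of the message, two users that choose the same $W$ transmit identical codewords in identical slots and are fundamentally indistinguishable, so the entire analysis presumes the active messages are distinct. Splitting the error event accordingly and applying the union bound gives
\[
P_e \leq P_e^{\mathrm{dist}} + \prob{\bigcup\nolimits_{i \ne j} \brc{W_i = W_j}},
\]
where $P_e^{\mathrm{dist}}$ denotes the per-user error probability conditioned on all $\Ka$ messages being distinct. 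It then remains to show $P_e^{\mathrm{dist}} \leq x_{\ell}$ in the limit $\Ka, L \to \infty$.

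Second, I would reduce $P_e^{\mathrm{dist}}$ to the root-node erasure probability of the message-passing (peeling) decoder. By the symmetry of the ensemble $\mathcal{G}(\Ka, L, \lambda\br{x}, \rho\br{x})$---all users share the same message distribution, the same encoder $\tilde{\dmat{X}}$, and the same slot-selection law---the quantity $P_e^{\mathrm{dist}}$ equals the probability that a single designated root user-node $u$ is not resolved after $\ell$ iterations. In the peeling decoder this is precisely the event that every message $u$ sends toward its incident slots remains erased, whose probability is tracked by the user-node update at the root, namely $x_{\ell} = \Lambda\br{y_{\ell}}$.

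Third, I would invoke the tree approximation together with concentration. As $\Ka, L \to \infty$ with $G = \Ka/L$ held fixed, a depth-$2\ell$ neighborhood of $u$ is tree-like with probability tending to one (Richardson--Urbanke), and the slot-node degree law converges to the Poisson distribution with mean $\gamma = G\Lambda'\br{1}$ established in~\eqref{eq:bino_distr}, justifying $\rho\br{x} = e^{-\gamma(1-x)}$. On a tree the incoming messages at each node are independent, so the recursion of the preceding Statement computes the erasure probabilities \emph{exactly}; a concentration theorem of the Richardson--Urbanke type, adapted to \ac{irsa} in~\cite{liva2011graph, Pfister2012}, then shows that the empirical fraction of unresolved users concentrates around the deterministic value $x_{\ell}$, yielding $P_e^{\mathrm{dist}} \leq x_{\ell}$ asymptotically and closing the chain.

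The main obstacle is making this third step fully rigorous for a \emph{single} fixed codebook: the per-slot guarantee of Statement~\ref{stat:code} holds for $\tilde{\dmat{X}}$ uniformly over collision orders $r \in [T]$, but one must argue that conditioning on this particular code, and on the genie that purges false packets from the \ac{sic} process, does not destroy the independence of messages exploited in the tree recursion. In effect the genie decouples the slots, so that a slot of order $t \leq T$ succeeds independently with probability $1 - \tilde{p}\br{n', k, P, T, t}$, which is exactly the per-slot model assumed by the \ac{de} rules (and is why the bound is only an optimistic estimate rather than a true achievability bound). Because $\tilde{p}$ is itself an \emph{upper} bound on the genuine slot-decoding failure probability, the resulting $x_{\ell}$ is an upper bound on the attained erasure probability, which is precisely why the theorem asserts $\leq$ rather than equality.
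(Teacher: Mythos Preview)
Your proposal is correct and follows exactly the standard density-evolution argument that the paper relies on; in fact the paper states the theorem without an explicit proof, treating it as an immediate consequence of the preceding DE rules, the tree-like neighborhood property cited from~\cite{RichardsonUrbanke2008}, and the three listed assumptions (codebook $\tilde{\dmat{X}}$ from Statement~\ref{stat:code}, slot-decoding model, and the genie). Your decomposition into the collision term plus the root-erasure probability, followed by the tree/concentration lift, is precisely the intended route, and your closing remarks about the genie decoupling and the upper-bound nature of $\tilde{p}$ mirror the paper's own caveat that this is an optimistic estimate rather than a true achievability bound.
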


\begin{remark}
We note that, in the case of \ac{ldpc} codes over the erasure channel \ac{de}, one wants to find the so-called threshold (for the erasure probability $\varepsilon$), i.e.,
\[
\varepsilon_{\text{thr}} = \sup\brc{ \varepsilon \in [0,1]:  \lim\limits_{\ell \to \infty} x_\ell = 0 }.
\]

In our case, the graph is infinite, but the slot is finite, and it is not possible to achieve $\lim\nolimits_{\ell \to \infty} x_\ell = 0$ due to the presence of the $\tilde p\br{n', k, P, T, r}$ function.
\end{remark}

\begin{remark}
Clearly, the average transmitted energy can be calculated as $n' P' \Lambda'\br{1}$. Thus, the energy per information bit is given by the following expression:
\begin{equation}\label{eq:ebno_tfold_irsa}
\frac{E_b}{N_0} = \frac{n' P' \Lambda'\br{1}}{2k}.
\end{equation}
\end{remark}

Our main goal is to minimize the required energy per bit $E_b/N_0$ given that $P_e \leq \varepsilon$. In Section~\ref{sec:ch5:num_results_irsa}, we utilize the described \ac{de} method to choose the system parameters: $n'$ and $\Lambda\br{x}$.

\subsection{\texorpdfstring{$T$}{T}-fold \acs{irsa} modifications for the \acs{gura}}\label{ch5:sec_irsa_sf}

Recall the main problem: we need the same codebook scheme, while existing linear codes (such as \ac{ldpc} and polar codes) perform extremely poorly in this regime. We refer the reader to Section~\ref{sec:same_lin} for a detailed discussion of this issue. Hence, we require a nonlinear same-codebook approach. A straightforward method is to use preambles (as illustrated in~\Fig{fig:irsa-s} and~\ref{fig:irsa-f}) that define a transformation over the codewords of a linear code. Preamble detection can be considered as a \ac{cs} problem. Once a preamble is detected, the decoder can recover this transformation and decode the linear code.

In what follows, we refer to the \ac{irsa} presented in Section~\ref{ch5:sec_irsa_basic} as \acf{irsab} and describe two strategies proposed in~\cite{vem2019user, AmalladinneJoint2019}, namely \ac{irsas} and \ac{irsaf}.

\subsubsection{\Acl{irsas}}\label{par:irsa_s_scheme}

\begin{figure}[t]
\centering
\includegraphics{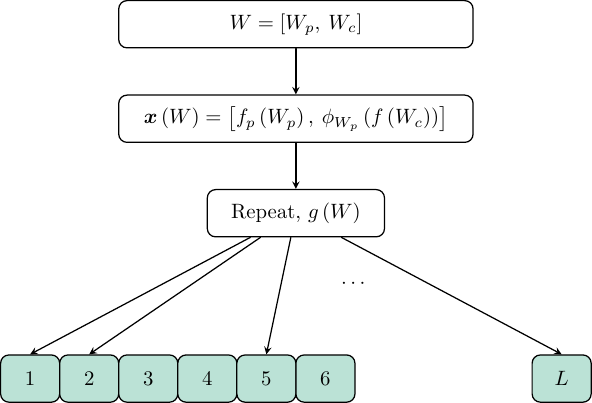}
\caption{\Ac{irsas} encoding process. The transmitted message $W$ is split into two parts: $W_p$, which defines the preamble $f_p\br{W_p}$ and a transform $\phi_{W_p} \br{\cdot}$, and $W_c$, which is encoded using the same linear code with the encoding function $f\br{\cdot}$, followed by a transform $\phi_{W_p}\br{\cdot}$ (e.g., interleaving). The number of replicas and slot indices are determined by a function $g\br{W}$. Slots are represented by a row of green rectangles. Note that the entire message, including the preamble, is transmitted within each chosen slot. For clarity, we omit the user message index.}
\label{fig:irsa-s}
\end{figure}

In this variant, the base protocol remains unchanged, i.e., the replica count and slot selection are determined based on the message. The main modification lies in the word to be transmitted in the slot.

Consider the $i$-th user. We split the message into two parts: $W_i = \brs{W_{i,p}, \: W_{i,c}}$, where $W_{i,p} \in [M_{p}]$, $W_{i,c} \in [M_{c}]$, and $M = M_p M_c$. Analogously, the slot is divided into two parts of lengths $n_{p}$ and $n_{c}$, such that $n' = n_{p} + n_{c}$. Now, let us describe the word to be transmitted:

\begin{itemize}
\item Within the first $n_{p}$ symbols, the user sends the preamble $f_p(W_{i,p})$, where $f_p: [M_p] \to \mathbb{R}^{n_{p}}$ is the encoder of a well-designed \ac{cs} code. Note that the size of this code should not be large ($M_p \leq 2^{15}$), as we apply standard \ac{cs} decoding algorithms such as LASSO, \ac{nnls}, or \ac{omp}. We discuss these codes and their decoding algorithms in Appendix~\ref{a:cs_methods}.
\item Within the last $n_{c}$ symbols, we transmit the user codeword $f(W_{i,c})$ transformed by $\phi_{W_{i,p}}$, where $f: [M_c] \to \mathbb{R}^{n_{c}}$ is the user code encoder\footnote{As previously, the encoder is the same for all users.}, and the family of functions $\phi_{W}: \mathbb{R}^{n_{c}} \to \mathbb{R}^{n_{c}}$, with $W \in \brs{M_p}$, is introduced to ``create'' different codes. In what follows, the function $\phi$ is implemented either as a permutation or as an addition with a shift vector, both determined by $W_p$.
\end{itemize}

Thus, the word transmitted in the slot is given by:
\[
\dvec{x}(W_i) = [f_p(W_{i,p}), \: \phi_{W_{i,p}}(f(W_{i,c}))].
\]

Now, we describe the decoding process. Analogous to the \ac{irsab}, the graph is not known on the receiver side and is reconstructed during the decoding process. The slot decoding process then proceeds as follows:

\begin{enumerate}
\item Select a slot (preferably the one with the smallest estimated collision order, e.g., based on energy estimation).
\item Decode the preamble part using a \ac{cs} algorithm. The output of this algorithm includes $\hat{r}$ -- an estimate of the collision order -- the set of preamble parts $\{\hat{W}_{i,p}\}$, and the set of functions $\{\phi_{\hat{W}_{i, p}}\}$, where $i \in [\hat{r}]$.
\item Decode the user messages and obtain $\{\hat{W}_{i, c}\}$. We discuss specific decoding algorithms for \ac{ldpc} and polar codes in Section~\ref{sec:ch5:irsa_practical_awgn}.
\item Perform the \ac{sic} step. As a result of steps 1--3, we obtain the set of message estimates $\{\hat{W}_i\}$, where $i \in [\hat{r}]$. Use the function $g(\cdot)$ to remove replicas.
\item Remove the slot from the slot list. If the list of slots is non-empty, return to step 1.
\end{enumerate}

\begin{remark}
Note that since we use a different codebook scenario in step 3, we do not need to solve the message assembling problem. For each $\hat{W}_p$, we either reconstruct $\hat{W}_c$ or encounter a decoding failure.
\end{remark}

\subsubsection{\Acl{irsaf}}

\begin{figure}[t]
\centering
\includegraphics{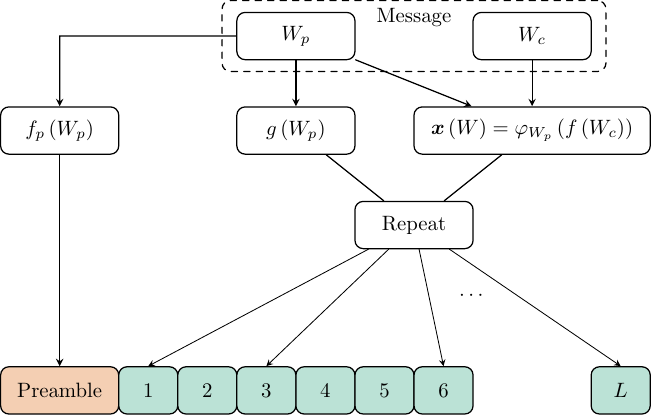}
\caption{\Acs{irsaf} encoding process. The transmitted message $W$ is split into two parts: $W_p$, which defines the preamble $f_p\br{W_p}$ and a transform $\phi_{W_p} \br{\cdot}$, and $W_c$, which is encoded using the same linear code with the encoding function $f\br{\cdot}$, followed by a transform $\phi_{W_p}\br{\cdot}$. The number of replicas and slot indices are determined by a function $g\br{W_p}$. All preambles are sent in a dedicated slot, depicted in orange, while $\varphi_{W_p}\br{f\br{W_c}}$ is sent in the chosen slots, marked by green rectangles. For clarity, we omit the user message index.}
\label{fig:irsa-f}
\end{figure}

In the previous variant, slot-wise preambles were used, which simplify the requirements for the \ac{cs} code and decoder. Indeed, one needs to solve the following \ac{cs} problem:
\begin{equation} \label{eq:ch5:cs_preamble}
\dvec{y}_p = \dmat{A}_p \dvec{u}_p + \dvec{z}_p,
\end{equation}
where $\dvec{u}_p = (u_{p,1}, \ldots, u_{p,M_p})$ and the sparsity $\mathrm{wt}\br{\dvec{u}_p} = r \leq T$. Thus, the length $n_p$ can be small. At the same time, we add preambles to all the slots, and thus the actual overhead is $L n_p$ channel uses. An illustration of the corresponding sensing matrix is given in Appendix~\ref{a:cs_matrices}.

Another preamble placement was proposed in~\cite{AmalladinneJoint2019}. Again, consider the $i$-th user and split the message into two parts: $W_i = [W_{i,p}, \: W_{i,c}]$, where $W_{i,p} \in [M_{p}]$, $W_{i,c} \in [M_{c}]$, and $M = M_p M_c$. However, the main differences are as follows:
\begin{itemize}
\item The codeword placement and replica count depend only on $W_{i,p}$.  
\item The preamble of length $n_p$ is placed before the frame, so the total length $n$ is divided between the preamble ($n_p$) and $L$ slots of length $(n-n_p)/L$.
\end{itemize}

Let us consider the decoding process.
\begin{enumerate}
\item Decode the preamble part using a \ac{cs} decoder, i.e., solve problem~\eqref{eq:ch5:cs_preamble}, where
$$
\dvec{u}_p = (u_{p,1}, \ldots, u_{p,M_p})
$$
and the sparsity $\mathrm{wt}(\dvec{u}_p) = K_a$. This problem is more challenging compared to the slot-wise \ac{cs} problem, and thus a stronger \ac{cs} code is required.
\item Reconstruct the graph and $\{\phi_{\hat{W}_{i, p}}\}$ for $i \in [K_a]$. Note that this is the most significant difference compared to the \ac{irsas} case. If all preambles are recovered, the graph and code transformations are known.
\item Apply the \ac{irsab} decoder with the known graph. In this case, we can select slots with the minimal user population.
\end{enumerate}

\begin{remark}
We note that both protocols (\ac{irsas} and \ac{irsaf}) assume an additional degree of freedom: namely, the preamble and codeword energies can be different. These energies should be chosen to optimize system performance. The only requirement is the total energy constraint:
\begin{flalign*}
\text{\acs{irsas}:} \quad &(P_p n_p + P_c n_c) L'\br{1} \leq Pn,& \quad (n_c = n' - n_p = n/L - n_p) \\ 
\text{\acs{irsaf}:} \quad &P_p n_p + P_c n_c L'\br{1} \leq Pn,& \quad (n_c = n' = (n-n_p)/L)
\end{flalign*}
\end{remark}

\subsection{Collision resolution methods}\label{sec:ch5:irsa_practical_awgn}

In this section we consider collision resolution methods, i.e. we focus on one slot of length $n'$ real channel uses and collision resolution capability $2 \leq T \leq 10$. In what follows we assume $r \leq T$ users that transmitted the codewords $\dvec{x}^{\br{1}}, \dvec{x}^{(2)}, \ldots, \dvec{x}^{\br{r}}$ have collided in the slot of interest
\begin{equation}\label{eq:ch5:slot_received}
\dvec{y} = \sum\limits_{i=1}^r \dvec{x}^{\br{i}} + \dvec{z}, \quad\dvec{z}\sim\mathcal{N}\br{0, \dmat{I}_{n'}}.
\end{equation}

\subsubsection{Compute and forward, \acs{irsab}}

We begin with the first scheme proposed for \ac{gura} in~\cite{ordentlich2017low}. Although this scheme exhibits poor energy efficiency\footnote{We note that the authors of~\cite{ordentlich2017low} consider pure $T$-fold ALOHA rather than $T$-fold \ac{irsa}, which may be the root cause of its poor performance.} compared to fundamental limits and other \ac{ura} schemes, it incorporates inspiring ideas. The scheme relies on the compute-and-forward strategy~\cite{Nazer2011CoF}, which aims to recover integer combinations of packets rather than the packets themselves.

To illustrate this idea, consider the following example. Suppose we have a two-user \ac{gmac}, where the users wish to transmit messages $W_1$ and $W_2$ by sending codewords $f(W_1)$ and $f(W_2)$. Recall that $f: [M] \to \mathbb{R}^{n'}$ and that the energy constraints are given by $\left\|f(W_1)\right\|^2 \leq Pn'$ and $\left\|f(W_2) \right\|^2 \leq Pn'$. The channel output is given by
\[
\rvec{y} = f(W_1) + f(W_2) + \rvec{z}, \quad \rvec{z} \sim \mathcal{N}\br{0, \dmat{I}_{n'}}.
\]

The standard low-complexity decoding strategy utilizes a \ac{tin} step combined with an \ac{sic} step. This can be done in different ways, e.g., in a hard manner, in a soft manner (soft \ac{sic}), or through message passing between two \ac{tin} decoders. Nevertheless, the first \ac{tin} decoder deals with effective noise of energy $(P+1)n'$, which is complicated.

At the same time, if 
\begin{equation}\label{eq:lattice} 
f(W_1) + f(W_2) = f\br{W_1 \boxplus W_2},
\end{equation}
then the receiver can decode $W_1 \boxplus W_2$ as in the single-user case, i.e., without interference. In general, the operation $\boxplus$ can be any arbitrary group operation, i.e., we only need $(\mathcal{M}, \boxplus)$ to be a group, where $\mathcal{M}$ is the closure of the set $\{W_i\}_{i=1}^{M}$ under the group operation $\boxplus$.

In what follows, we use a particular $\boxplus$ operation. Namely, the messages are represented as binary vectors of length $k$, and the $\boxplus$ operation is an element-wise real addition of such vectors. Clearly, we have $\mathcal{M} = \mathbb{Z}^k$. Assume~\eqref{eq:lattice} holds, then the set $\Lambda = \{f(\dvec{u})\}_{\dvec{u} \in \mathbb{Z}^k}$ forms a group under real addition, and $f$ is an isomorphism\footnote{We need to extend the domain of $f$ from $[M]$ to $\mathbb{Z}^k$ based on~\eqref{eq:lattice}.} between $(\mathbb{Z}^k, +)$ and $(\Lambda, +)$. Thus, we come to \emph{lattices} and \emph{lattice codes}~\cite{conway2013sphere}.

Let us introduce necessary definitions.

\begin{definition}
An $n$-dimensional lattice, $\Lambda$, is a discrete additive subgroup of $\mathbb{R}^n$. A lattice can always be described using a generator matrix $\dmat{B} \in \mathbb{R}^{n \times n}$:
\[
\Lambda = \{ \lambda = \dmat{B} \dvec{u}, \:\: \dvec{u} \in \mathbb{Z}^n \}.
\]
\end{definition}

\begin{definition}
The Voronoi region $\dset{V}(\lambda)$ of a lattice point $\lambda$ is the set of all points in $\mathbb{R}^n$ that are closer to $\lambda$ than to any other lattice point.
\end{definition}

\begin{definition}
Let $\dset{R} \subseteq \mathbb{R}^n$. In what follows, we refer to $\dset{R}$ as a shaping\footnote{We do not specify any particular requirements for $\dset{R}$ in this definition. Clearly, it is reasonable to choose $\dset{R}$ from convex bounded sets. From energy considerations, the closer $\dset{R}$ is to the ball centered at $\dvec{0}$, the better.} region. A lattice code $\mathcal{C}$ is a finite subset of a lattice $\Lambda$:
\[
\mathcal{C} = \Lambda \bigcap \dset{R}.
\]

The code consists of a finite subset of $M$ points from the original lattice:
\[
\mathcal{C} = \{ \lambda_1, \lambda_2, \ldots , \lambda_M\},
\]
where $\lambda_i \in \Lambda$, $i \in [M]$.
\end{definition}

Usually, lattice codes are constructed using nested lattices and are called \emph{nested lattice codes}.

\begin{definition}
A lattice $\Lambda_c$ is said to be nested in a lattice $\Lambda_f$ if $\Lambda_c \subseteq \Lambda_f$. We will sometimes refer to $\Lambda_c$ as the coarse lattice and $\Lambda_f$ as the fine lattice.
\end{definition}

\begin{definition}
Let $\Lambda_f$ be a lattice such that $\Lambda_c \subseteq \Lambda_f$, and let $\dset{V}_c(0)$ be the 0-centered Voronoi region for $\Lambda_c$. Then,
\[
\mathcal{C} = \Lambda_f \bigcap \dset{V}_c(0)
\]
is a nested lattice code.
\end{definition}

Note that a nested lattice code $\mathcal{C}$ is isomorphic to the quotient group $\Lambda_f / \Lambda_c$, i.e., $\mathcal{C}$ forms a group under addition modulo $\Lambda_c$.

We note that the use of lattice codes for \ac{mac}s is natural, as electromagnetic signals exhibit linear superposition. The illustration of the transmission scheme using nested lattice codes is presented in Fig.~\ref{fig:nested_lattice_mac}.

\begin{figure}[t]
\centering
\includegraphics{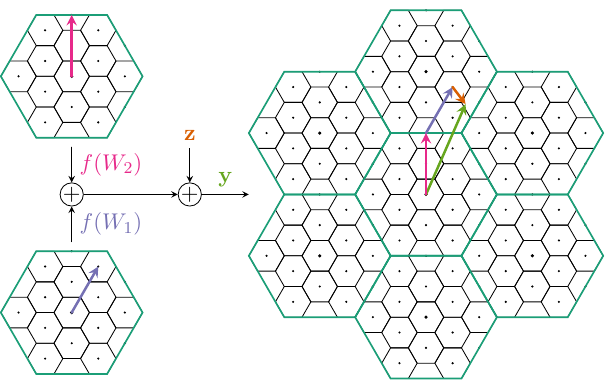}
\caption{Illustration of nested lattice codes for \ac{gmac}.}
\label{fig:nested_lattice_mac}
\end{figure}

The aim of lattice coding is to denoise the sum of messages. Note that only a single-user decoder is required to accomplish this task, which is a clear advantage of this approach. Assume successful denoising has been carried out. The second phase is to recover the transmitted messages from the denoised sum. For this purpose, one may use an outer code for the \emph{noiseless} adder channel. The only requirement for such a code is unique decodability: all the sums of up to $T$ codewords are different. See Section~\ref{sec:ch2_zero} for details.

Now, let us describe the scheme from~\cite{ordentlich2017low}.

\paragraph{Encoding.} We start with the two-phase encoding process:
\begin{enumerate}
\item Encode the message $W$ for the adder channel:
\[
\dvec{x}_{\text{ad}} = f_{\text{ad}}(W),
\]
where $f_{\text{ad}}: [M] \to \mathbb{F}_p^{k'}$.
\item Encode $\dvec{x}_{\text{ad}}$ with the lattice code:
\[
\dvec{x} = f_{\text{lc}}(\dvec{x}_{\text{ad}}),
\]
where $f_{\text{lc}}: \mathbb{F}_p^{k'} \to \mathbb{R}^{n'}$.
\end{enumerate}

Let us describe the main components.

\emph{Outer code}. The technique proposed in~\cite{BarDavid1993} is as follows. Let $\dmat{H} = [\dvec{h}_1, \ldots, \dvec{h}_M]$ be the parity-check matrix of a $T$-error-correcting code over $\mathbb{F}_p$. Set $\mathcal{C}_\text{ad} = \{\dvec{h}_1, \ldots, \dvec{h}_M\}$. This construction ensures the desired property: all sums of up to $T$ codewords (or columns) are distinct. For practical construction, one can use a \ac{bch} code over $\mathbb{F}_p$ with low-complexity decoding algorithms (e.g., the Berlekamp-Massey algorithm).

\emph{Inner code}. For practical implementation, we need a high-dimensional lattice with a low-complexity decoding algorithm. The authors of~\cite{ordentlich2017low} proposed utilizing the so-called Construction A~\cite{conway2013sphere}, i.e.,
\[
\Lambda = \mathcal{C}_\text{lc} + p \mathbb{Z}^n,
\]
where $p$ is a prime number and $\mathcal{C}_\text{lc}$ is a $p$-ary error-correcting code with a low-complexity decoding algorithm.

\begin{remark}
Note that for Construction A, we have $\Lambda_f = \Lambda$ and $\Lambda_c = p \mathbb{Z}^n$. Hence, modulo-$\Lambda_c$ reduction is straightforward: it is just a component-wise modulo-$p$ operation, with the result in the interval $[0, p)$.
\end{remark}

Now, consider the received signal~\eqref{eq:ch5:slot_received} and recall that
\[
\sum\limits_{i=1}^r \dvec{x}^{\br{i}} = \lambda \in \Lambda,
\]
i.e., this sum is a lattice point.

\paragraph{Decoding.}

The paper~\cite{ordentlich2017low} does not specify the lattice code explicitly. Instead, the authors suggest computing
\[
\dvec{y}' = \dvec{y} \mod p = (\lambda + \dvec{z}) \mod p = \dvec{c} + (\dvec{z} \mod p),
\]
where $\dvec{c} \in \mathcal{C}_\text{lc}$, and propose using the \ac{fbl} achievability bound~\cite{ppv2011} to characterize the capabilities of $\mathcal{C}_\text{lc}$.

Here, we describe the algorithm for some low-complexity soft-input code $\mathcal{C}_\text{lc}$ over $\mathbb{F}_p$ (e.g., \ac{ldpc} or a polar code can be used).

\begin{enumerate}
\item Let $\dvec{y} = [y_1, \ldots, y_{n'}]$. The input of the $\mathcal{C}_\text{lc}$ decoder is the a priori distribution vector
\[
\dmat{D} = \left[\dvec{d}_1, \dvec{d}_2, \ldots, \dvec{d}_{n'}\right],
\]
where
\[
\dvec{d}_i = [\Pr(c_i = 0 | y_i), \ldots, \Pr(c_i = p-1 | y_i)], \quad i \in [n'].
\]

First, we calculate $\dmat{D}$ as follows:
\[
{d}_{i, m} = \Pr(c_i = m | y_i) = \frac{1}{\sqrt{2 \pi}} \sum\limits_{\substack{j = -\infty,\\ (j = m \mod p)}}^{+\infty} \exp[-(y_i - j)^2/2].
\]

Note that in practical implementation, it is reasonable to sum over $j \in [y_i - \Delta, y_i + \Delta]$ for some fixed $\Delta$.

\item Decode $\mathcal{C}_\text{lc}$. We obtain
\[
\dvec{y}_\text{ac} = \sum\limits_{i=1}^r \dvec{x}^{\br{i}}_\text{ac}.
\]

\item Decode the outer code for the noiseless adder channel. Taking into account that the code consists of the columns of the \ac{bch} parity-check matrix, $\dvec{y}_\text{ac}$ is nothing more than a syndrome. Indeed, we have
\[
\dvec{y}_\text{ac} = \dmat{H} \dvec{s},
\]
where $\dvec{s}$ is the indicator vector of transmitted messages, i.e., $\{W_1, \ldots, W_r\} = \mathrm{supp}(\dvec{s})$.

Thus, we arrive at the classical \ac{bch} code decoding problem. Note that $r \leq T$ and our code is a $T$-error-correcting code.
\end{enumerate}

\begin{remark}
The paper~\cite{ordentlich2017low} focuses on the case $p=2$, as binary codes are well-developed and significantly outperform non-binary ones in terms of soft decoding complexity. In what follows, we consider the case $p=2$, in which case $\dmat{D}$ can be replaced with the input \acf{llr} vector.
\end{remark}

\begin{remark}
One may argue that \ac{bch} decoding has high complexity since the parity-check matrix $\dmat{H}$ has size $n' \times M$ (recall that $M = 2^k$ and $k = 100$ bits is of interest). Classical \ac{bch} decoding involves calculating the syndrome, which has linear complexity in $M$ (thus, exponential complexity in $k$). However, there is no need to calculate the syndrome in our case, as this operation is performed in the channel, and we receive it directly. It can be shown (see~\cite{ordentlich2017low}) that the complexity of \ac{bch} decoding is $O(k T^2 \log^2(T) \log\log(T))$ operations in $\mathbb{F}_{2^k}$.
\end{remark}

\begin{remark}
In the scheme described above, users transmit symbols from the set $\{0, \sqrt{2P}\}$ (or equivalently, $\{0, 1\}$ under the power constraint $P$). In a practical implementation, it is more reasonable to use the set $\{-\sqrt{P}, \sqrt{P}\}$, which corresponds to a lattice coset. See~\cite{ordentlich2017low} for further details.
\end{remark}

\subsubsection{\Ac{irsas} with \ac{ldpc} codes}\label{sec:dec_ldpc}

In this section, we consider the \ac{irsas} scheme (presented in Section~\ref{par:irsa_s_scheme}), which is based on \ac{ldpc} codes~\cite{vem2019user}. We employ a joint iterative decoding algorithm with low computational complexity~\cite{Boutros2002, LDPC_2user}.  

Recalling the \ac{irsas} scheme, the slot is divided into two parts, such that $n' = n'_p + n'_c$ channel uses. In the first part, we utilize \ac{bch} subcodes, as described in Appendix~\ref{a3:cwbook:sub_bch}, as a suitable candidate for the \ac{cs} codebook. The preamble is recovered using the methods detailed in Appendix~\ref{a:cs_methods}.  

Now, let us focus on the joint decoding of the \ac{ldpc} codes positioned in the last $n_{c}'$ channel uses. We define different codebooks, \( \mathcal{C}_1, \ldots, \mathcal{C}_{r} \), which are derived from a common codebook $\mathcal{C}$ through random permutations based on $\phi_{W_p}$.  

\begin{figure}[t]
\centering
\includegraphics{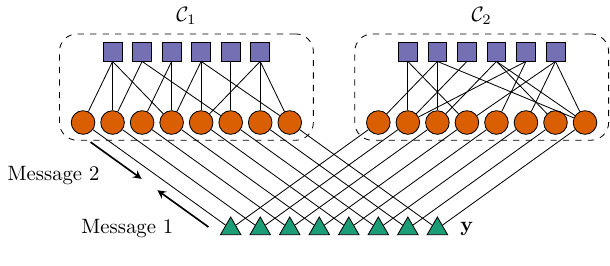}
\caption{The joint decoding algorithm and factor graph for the case $r=2$ are illustrated. Orange circles and blue squares represent the variable and check nodes of the \ac{ldpc} codes $\mathcal{C}_1$ and $\mathcal{C_2}$, respectively. The messages exchanged between these nodes follow the standard decoding procedure of \ac{ldpc} codes. Additionally, there are functional nodes, depicted as green triangles, that represent the received signal $\dvec{y}$.}\label{fig:irsa_ldpc_joint}
\label{fig:fg}
\end{figure}

The received signal in the slot is given by~\eqref{eq:ch5:slot_received}, where \( \dvec{x}^{\br{i}} = \tau\br{\dvec{c}^{\br{i}}} \) is a \ac{bpsk}-modulated codeword of each active user. To recover users' codewords, we employ a decoder based on the iterative \ac{bp} algorithm, which is a specific case of the broader class of \ac{mpa}s. The decoding procedure can be represented as a graph (factor graph)~\cite{KFL}, shown in~\Fig{fig:fg}.  

There are three types of nodes in the graph. The \ac{ldpc} codes of the users are represented by Tanner graphs with variable nodes (orange) and check nodes (blue). For the variable and check nodes in the subgraphs of different users, the message exchange rule follows the same procedure as in the single-user case (see~\cite{RichardsonUrbanke2008}).

Additionally, there is a third type of node in the graph -- functional nodes (marked by green triangles). These nodes correspond to the elements of the received signal \( \rvec{y} \). To fully specify the decoding algorithm, we need to describe the messages exchanged between the variable nodes of each user's code and the functional nodes (Message type 1), as well as the messages from the functional nodes to the variable nodes of the users' codes (Message type 2).

\paragraph{Message type 1 (from functional nodes to \ac{ldpc} codes).}
\newcommand{\msgUF}[1]{
\mu_{#1 \to \mathcal{F}}
}
\newcommand{\msgFU}[1]{
\mu_{\mathcal{F} \to #1}
}
Following the principles of message-passing algorithms~\cite{RichardsonUrbanke2008}, the update rule to compute the message \( \mu \) sent to the \( j \)-th variable node of the \( i \)-th user (\( j \in [n_{c}'] \), \( i \in [r] \)) from the functional node \( \mathcal{F}_{j} \) is given as follows.

First, let us fix the index of the functional node \( j \) and omit it for clarity (as the \( j \)-th functional node may only be connected to the \( j \)-th variable node of each user's \ac{ldpc} code; see~\Fig{fig:irsa_ldpc_joint}). Let \( y = \dvec{y}_j \) be the signal received at the \( j \)-th position. Let \( \dvec{b} \in \brc{0, 1}^r \) represent a binary vector, where \( \dvec{b}_i \) corresponds to the bit value of the \( i \)-th user's codeword \( \dvec{c}^{\br{i}} \). Also, introduce a function:

\begin{equation}\label{eq:gmac_joint:gauss_pdf}
\eta\br{\dvec{b}} = p\left(y \middle|  \dvec{b} \right) = \exp\brc{-\br{
y - \sum\limits_{i = 1}^{r} \tau\br{\mathbf{b}_i}
}^2},
\end{equation}
which represents the p.d.f. of the received signal conditioned on \( \dvec{b} \), up to some constant multiplier that can be ignored when considering likelihood ratios. Note that \( \dvec{b} \) completely defines the noiseless received signal.

Let \( \msgUF{i} \) be the message sent from the \( i \)-th user's variable node to the functional node. Let \( l_i \) represent the \ac{llr} value given by the \( i \)-th user's \ac{ldpc} decoder. Then, define:
\begin{equation}\label{eq:ch5:llr2soft}
\msgUF{i}\br{0} \triangleq \Pb{\dvec{b}_i = 0} = \frac{e^{l_i}}{e^{l_i} + 1}, \quad \msgUF{i}\br{1} \triangleq 1 - \msgUF{i}\br{0},
\end{equation}
where \( l_i \) corresponds to message type 2. The values of \( l_i \) are initialized to zero and then updated by the iterative decoding algorithm.

Similarly, \( \msgFU{i} \) is the message sent from the functional node to the \( i \)-th user's variable node, and \( \msgFU{i}\br{0} \) is given by:
\begin{equation}\label{eq:func_node_dec}
\msgFU{i}\br{\beta}=
\sum\limits_{\dvec{b} \in \brc{0, 1}^r, \dvec{b}_i = \beta}\brs{\eta\br{\dvec{b}}\times\prod\limits_{t \neq i}\msgUF{t}\br{\dvec{b}_t}},
\quad \beta \in \brc{0, 1},
\end{equation}
where \( \eta\br{\dvec{b}} \) is given by~\eqref{eq:gmac_joint:gauss_pdf}, and \( \msgUF{t}\br{\cdot} \) is given by~\eqref{eq:ch5:llr2soft}. The summation above has \( 2^{r - 1} \) terms, where \( r \) is the collision order. Consequently, the number of computations required to obtain the messages from the functional node \( \mathcal{F}_{i} \) grows exponentially with the number of users.

Now, let us express the messages in terms of the \ac{llr}s \( l_{\mathcal{F} \to i} \):
\begin{equation}\label{eq:func_node_llr}
l_{\mathcal{F} \to i} = \log\br{\frac{\msgFU{i}\br{0}}{\msgFU{i}\br{1}}},
\end{equation}
where \( \msgFU{i}\br{\beta} \) is given by~\eqref{eq:func_node_dec}. The value \( l_{\mathcal{F} \to i} \) is then passed as the input \ac{llr} to the single-user \ac{ldpc} decoder.

Note that an alternative approach to functional node processing is given by \ac{soic}, as proposed in~\cite{Wang2019soic}. Soft interference represents the expected value of the codeword symbol \( \mathbb{E} \dvec{c}^{\br{i}} \), derived from~\eqref{eq:ch5:llr2soft} as follows (considering \ac{bpsk} modulation):
\[
\mathbb{E} \dvec{x}^{\br{i}} = \sqrt{P}\br{\msgUF{i}\br{0} - \msgUF{i}\br{1}}.
\]

Then, the \ac{llr} specified in~\eqref{eq:func_node_llr} is calculated as the \ac{llr} for \ac{bpsk} modulation with \ac{soic}:
\[
l_{\mathcal{F} \to i} = 2\sqrt{P}\br{y - \sum\limits_{t=1, t\neq i}^{r}\mathbb{E}\dvec{x}^{(t)}},
\]
where \( y \) is the received signal at the considered position, and \( 2\sqrt{P}y \) corresponds to the \ac{llr} of a \ac{bpsk}-modulated signal with a power constraint \( P \).

\paragraph{Message 2 (decoding the \ac{ldpc} code).}
After decoding the functional nodes and computing the \acs{llr}s for the bits, this information is used to decode the \ac{ldpc} code. The user applies the \ac{mpa} algorithm to the \ac{ldpc} code, which can be either the Sum-Product or the Min-Sum algorithm~\cite{RichardsonUrbanke2008}. The output of this algorithm updates the values $l_i$ presented in~\eqref{eq:ch5:llr2soft}.

Finally, the message exchange schedule can be arbitrary. Functional node decoding (propagation of message type 1) may be followed by several decoding iterations of each single-user \ac{ldpc} code.

To construct the \ac{ldpc} codes, the modified \ac{pexit} method from~\cite{LivaChiani2007} is used. Here, only the significant differences from the single-user case are highlighted. The multi-user Gaussian channel is not symmetric; therefore, the \ac{pexit} method cannot be applied directly to the all-zero codeword.  
To address this issue, we consider \emph{channel adapters} from~\cite{SGMAC}. The resulting channel is symmetric, allowing us to use only the all-zero codeword for code construction. Note that the random adapters were used only for code construction.

\subsubsection{\Ac{irsaf} with polar codes}\label{sec:dec_polar}

Let us consider the \ac{irsaf} scheme with polar codes employed in each slot~\cite{Marshakov2019Polar}. The choice of polar codes is motivated by two reasons. First, polar codes~\cite{Arikan} are a powerful tool for short code lengths under the \ac{scl} decoding algorithm~\cite{TalVardyList2015}. Second, as previously discussed in Section~\ref{sec:ch2_low_complexity}, polar codes exhibit capacity-achieving properties in \ac{mac}. In this section, we describe how polar codes can be used in \ac{gmac} in the \ac{fbl} regime and present a \ac{jsc} decoding algorithm.

Consider Ar{\i}kan's kernel and the corresponding \emph{polar transform} of size \(n' = 2^m\):
\[
G_2 \triangleq \begin{bmatrix} 1 & 0 \\
1 & 1
\end{bmatrix},\quad G_{n'} \triangleq  G_2^{\otimes m},
\]
where \(\otimes\) denotes the Kronecker power.

To construct an \((n', k)\) polar coset code, denote the set of frozen positions by \(\dset{F}\), with \(|\dset{F}| = n' - k\). Let \(\dvec{u}_{\dset{F}}\) denote the projection of the vector \(\dvec{u}\) onto the positions in \(\dset{F}\). Now, we define a \emph{polar coset code} \(\mathcal{C}\) as follows:
\[
\mathcal{C}\br{n', k, \dset{F}, \dvec{f}} = \left\{ \dvec{c} = \dvec{u}^T G_{n'} \:\: \middle| \:\: \dvec{u} \in \{0,1\}^{n'}, \:\: \dvec{u}_{\dset{F}} = \dvec{f} \right\}.
\]

The users utilize \emph{different} polar coset codes \(\mathcal{C}_i({n'}, k, \dset{F}_i, \mathbf{f}_i)\), \(i=1,\ldots, r\). Consider the \(i\)-th user. To send the information word \(\dvec{u}^{\br{i}}\), the user first encodes it using the code \(\mathcal{C}_i({n'}, k, \dset{F}_i, \mathbf{f}_i)\), obtaining a codeword \(\mathbf{c}^{\br{i}}\). This is then followed by \ac{bpsk} modulation, resulting in \(\dvec{x}^{\br{i}} = \tau\br{\mathbf{c}^{\br{i}}}\). We refer the reader to~\cite{Marshakov2019Polar}, where the choice of information and frozen positions, along with frozen bit values, has been carefully optimized. In this section, we discuss how to decode a superposition of polar coset codes in a slot. In what follows, we assume a collision of order \(r\).

To specify the decoding algorithm, let us first consider how the channel output is represented. As in the joint \ac{ldpc} decoding algorithm, we consider a received signal position \(y\), omitting the index \(j\) in \(\dvec{y}_j\). At this position, each of the \(r\) users may transmit either zero or one, resulting in \(2^r\) possible transmitted signal values. These values can be indexed by a binary vector \(\dvec{b} \in \brc{0, 1}^r\), yielding \(2^r\) points. 

For the single-user case, the \ac{llr} represents a sufficient statistic for the received signal. With \(r\) simultaneous transmissions, we instead operate with a p.m.f. specified over \emph{tuples} of bits of length \(r\). The prior p.m.f. is determined by the received signal \(y\) at the given position as follows:
\[
{\mu}\br{\dvec{b}} \propto \eta\br{\dvec{b}},
\]
where \(\eta\br{\cdot}\) is defined in~\eqref{eq:gmac_joint:gauss_pdf}. The p.m.f. specified above can be obtained for each received position. Let us denote the channel output as a matrix \(\dmat{M} = \brs{\mu_1,\ldots,\mu_{n'}}\) of size $2^r\times n'$, where the $j$-th column represents the p.m.f. values $\mu$.

To describe the \ac{jsc} decoding procedure, we begin with a simple yet instructive example for \(n' = 2\) (see~\Fig{fig:PolarRepr}), illustrating the key idea behind the proposed method. We observe that, instead of working with bits from different users and multiple polar codes, we can treat the problem as working with a single polar code over \(\mathbb{Z}_2^r\). In this example, we first decode a bit configuration (tuple) \((u_1, v_1)\) -- the first bits of the users -- followed by a tuple \((u_2, v_2)\) -- the second bits of the users. Recall that the decoder operates with tuple distributions \(\dmat{M}\) rather than probabilities of individual bits.

\begin{figure}[t]
\centering
\includegraphics{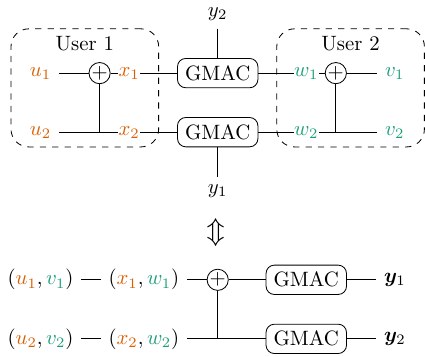}
\caption{Representation of the \ac{gmac} as a polar code over $\mathbb{Z}_2^r$ for $r=2$ and $n'=2$.}
\label{fig:PolarRepr}
\end{figure}

To decode the tuple \((u_1, v_1)\), we first need to calculate the distribution of the sum of two random variables over \(\mathbb{Z}_2^r\). In what follows, we refer to this operation as the \emph{\ac{cnop}}. This can be done via convolution, i.e., \(\hat{\mu}_1 = \mu_1 \ast \mu_2\). Since we work in the Abelian group \(\mathbb{Z}_2^r\), there exists a \ac{ft} \(\Phi\). In what follows, to perform a convolution, we use the \ac{fft}-based technique proposed in~\cite{Declercq} for the case of \ac{ldpc} codes over Abelian groups. Thus, the final rule is as follows:
\begin{equation}\label{eq:jsc_cnop}
\hat{\mu}_1\br{\dvec{b}} \propto \Phi^{-1}\br{\Phi\br{\mu_1} \odot \Phi\br{\mu_2}}\br{\dvec{b}} \quad \forall \dvec{b}\in \mathbb{Z}_2^r,
\end{equation}
where \(\odot\) denotes element-wise multiplication.

After calculating the p.m.f. \(\hat{\mu}_1\), we make a hard decision \(\hat{\dvec{b}}_1\), taking into account the values of frozen bits at this position. Once \(\hat{\dvec{b}}_1\) is found, we proceed with the \emph{\ac{vnop}}. The rule is given by:
\begin{equation}\label{eq:jsc_vnop}
\hat{\mu}_2\br{\dvec{b}} \propto \mu_1\br{\dvec{b} + \hat{\dvec{b}}_1} \mu_2\br{\dvec{b}} \quad \forall \dvec{b}\in \mathbb{Z}_2^r.
\end{equation}
Finally, the hard decision \(\hat{\dvec{b}}_2\) is made using \(\hat{\mu}_2\).

Now let us proceed to the case $n' > 2$. Similar to the single-user case, we consider a recursive decoding algorithm that relies on a polar code representation based on the $\br{U, U+V}$ construction. Let us define a polar code split procedure. Given a polar code $\mathcal{C}$, we specify two subcodes $\mathcal{C}_U$ and $\mathcal{C}_V$ as follows:
\begin{equation}\label{eq:polar_uupv_split}
\mathcal{C}\br{n, k, \dset{F}, \dvec{f}} \overset{\text{split}}{\rightarrow} \mathcal{C}_U\br{\nicefrac{n}{2}, k_U, \dset{F}_U, \dvec{f}_U}, \quad \mathcal{C}_V\br{\nicefrac{n}{2}, k_V, \dset{F}_V, \dvec{f}_V},
\end{equation}
where $\dset{F}_U \subseteq \brs{\nicefrac{n}{2}}$, $\dset{F}_V \subseteq \brc{\nicefrac{n}{2} + 1, \ldots, n}$, $\dset{F}_U\cup \dset{F}_V = \dset{F}$, $\dvec{f} = \br{\dvec{f}_U \: \dvec{f}_V}$, and $k = k_U + k_V$.

Now, let us specify the recursive decoding algorithm. We denote a matrix $\dmat{B} = \brs{\dvec{b}_1, \ldots, \dvec{b}_{n'}}$ of size $r\times n'$ specifying the estimated bit tuples. The \ac{jsc} is presented in Algorithm~\ref{alg:dec_jsc}, where \ac{cnop} and \ac{vnop} operations are applied per column with respect to the input arguments and are denoted as functions $\text{\acs{cnop}}\br{\mu_1, \mu_2}$ and $\text{\acs{vnop}}\br{\mu_1, \mu_2, \dvec{b}}$, respectively. Finally, $\oplus$ denotes an element-wise modulo-two sum.
\begin{algorithm}[t]
\caption{\Acf{jsc} decoding algorithm}\label{alg:dec_jsc}
\begin{algorithmic}[1]
\Function{\acs{jsc}}{$\mathcal{C}\br{n, k, \dset{F}, \dvec{f}}$, $\dmat{M}$} 
\If{$n =1$}
\Comment{Make a hard decision}
\State $\hat{\dvec{b}} \gets \text{hard}\br{\dset{F}, \dvec{f}, \dmat{M}}$
\State \Return $\hat{\dvec{b}}$, $\dmat{M}$
\Else
\Comment{Proceed recursively}
\State $\mathcal{C_U}, \: \mathcal{C}_V \gets \mathcal{C}$ \Comment Perform a split using~\eqref{eq:polar_uupv_split}
\State $\brs{\dmat{M}_U, \dmat{M}_V} \gets \dmat{M}$ \Comment Split into halves
\State $\hat{\dmat{M}}_U \gets \text{\acs{cnop}}\br{\dmat{M}_U, \dmat{M}_V}$ \Comment Per-column \ac{cnop}~\eqref{eq:jsc_cnop}
\State $\hat{\dmat{B}}_U, \: \tilde{\dmat{M}}_U \gets \text{\acs{jsc}}\br{\mathcal{C}_U, \hat{\dmat{M}}_U}$ \Comment Decode $\mathcal{C}_U$ recursively
\State $\hat{\dmat{M}}_V \gets \text{\acs{vnop}}\br{\dmat{M}_U, \dmat{M}_V, \hat{\dmat{B}}_U}$ \Comment Per-column \ac{vnop}~\eqref{eq:jsc_vnop}
\State $\dmat{B}_V, \: \tilde{\dmat{M}}_V \gets \text{\acs{jsc}}\br{\mathcal{C}_V, \hat{\dmat{M}}_V}$ \Comment Decode $\mathcal{C}_V$ recursively
\State $\hat{\dmat{B}} \gets \brs{\hat{\dmat{B}}_U\oplus \hat{\dmat{B}}_V, \hat{\dmat{B}}_V}$ \Comment Element-wise XOR
\State \Return $\hat{\dmat{B}}$, $\brs{\tilde{\dmat{M}}_U, \tilde{\dmat{M}}_V}$ \Comment All tuples of length $n$
\EndIf
\EndFunction
\end{algorithmic}
\end{algorithm}
Note that the structure of Algorithm~\ref{alg:dec_jsc} is exactly the same as that of the single-user polar code \ac{sc} decoding algorithm, where single bits are replaced by bit tuples, and \ac{llr}s are replaced by probability mass functions (p.m.f.) $\mu$.

\begin{remark}
It is worth noting that we can easily improve the decoding procedure by using the \ac{scl} decoding method~\cite{TalVardyList2015}. In other words, we take into account not only the most probable path (which, in our case, consists of tuples) but also $\ell$ different paths with the highest metric. To derive the path metric update, recall the hard decision step from Algorithm~\ref{alg:dec_jsc}:
$$
\hat{\dvec{b}} \gets \text{hard}\br{\dset{F}, \dvec{f}, \dmat{M}},
$$
which corresponds to the case where $n=1$. Hence, the matrix $\dmat{M}$ represents a single p.m.f. $\mu$. Thus, the path metric is updated by multiplying it with $\mu\br{\hat{\dvec{b}}}$.
\end{remark}

\section{Sparse \acl{idma}}\label{sec:ch5:idma}

In this section, we describe the scheme proposed in~\cite{AmalladinneJoint2019}. This scheme is very similar to \ac{irsas} with \ac{ldpc} codes, as presented above. Specifically, it implements the idea of sparsifying collisions, inherent in $T$-fold \ac{irsa}, while also utilizing users' \ac{ldpc} codes. However, the major difference lies in the randomization of the locations of the \ac{ldpc} codeword symbols. The scheme assumes random placement rather than slotted transmission.

Slotted transmission is well-suited for the basic \ac{irsa} protocol, as the receiver does not know the transmission graph and reconstructs it on the fly. Knowing the slot locations enables the receiver to decode. Now, consider the \ac{irsaf} protocol. The \ac{cs} part (or preambles) allows the receiver to first reconstruct the graph and then perform decoding. In this case, the main drawbacks of the scheme from~\cite{vem2019user} are as follows:
\begin{enumerate}
\item Slotted transmission depends on the existence of high-performance short-length codes. As the number of active users increases, the scheme ensures that the slot length decreases. Short \ac{ldpc} codes are known to perform poorly, while polar codes are good but lack flexibility in length adaptation. Thus, designing effective multi-user codes for short block lengths remains an open problem.
\item Once the message is decoded, its copies are peeled from other slots using \ac{sic}. Consequently, the decoding and peeling operations are separate. Moreover, peeling is performed in a hard decision manner. However, since the transmission graph is known, a joint message-passing decoder can be applied.
\item Slotted transmission is merely a specific interleaving strategy. The class of all possible permutations is much broader and potentially contains more effective solutions.
\end{enumerate}

The scheme from~\cite{AmalladinneJoint2019} addresses these issues. The authors demonstrate that a single sparse joint Tanner graph spanning all transmissions can significantly improve performance compared to the schemes in~\cite{vem2019user, Marshakov2019Polar}.

\begin{remark}
The scheme in~\cite{AmalladinneJoint2019} can be interpreted as a sparse version of \ac{idma}~\cite{IDMA2006}, adapted to the uncoordinated and unsourced \ac{mac} by incorporating an additional \ac{cs} component. The sparsity ensures acceptable computational complexity when decoding over the joint graph. We also note that the idea of using sparse access sequences appeared in~\cite{LDS}, although that work focused on short repetition codes.
\end{remark}

Now, let us describe the transmission process. Consider the $i$-th user and split the message into two parts: $W_i = [W_{i,p}, \: W_{i,c}]$, where $W_{i,p} \in [M_{p}]$, $W_{i,c} \in [M_{c}]$, and $M = M_p M_c$. Similar to the \ac{irsaf} protocol: (a) the replica count depends only on $W_{i,p}$, and (b) the preamble of length $n'_p$ is placed before the frame, so the total length $n$ is allocated to the preamble ($n_p$) and the remaining portion of length $n-n_p$ where users transmit their codewords.

In what follows, we consider the case $\log_2 M_p = k_p$ and $\log_2 M_c = k_c$, where $k_p, k_c \in \mathbb{N}$, with $k_p + k_c = k$. 
\begin{enumerate}
\item The preamble is chosen using the preamble encoder $f_p: [M_p] \to \mathbb{R}^{n_p}$. Recall that $\|f_p(W_p)\|^2_2 \leq P_p n_p$ for all $W_p \in [M_p]$.
\item All users utilize the same binary $[n_c, k_c]$ \ac{ldpc} code. The remaining $k_c$ message bits are encoded using this code, and the $i$-th user obtains the binary codeword $\mathbf{c}_{i}$. The user then performs \ac{bpsk} modulation, equivalently expressed as
\[
\tilde{\mathbf{x}}^{\br{i}} = \tau(\mathbf{c}^{\br{i}}), \quad \tau(\mathbf{c}^{\br{i}}) = (\tau(c^{\br{i}}_1), \ldots, \tau(c^{\br{i}}_{n_c})).
\]
\item Let $L = (n-n_p)/n_c$, which represents the maximum repetition count\footnote{Recall that this scheme does not use slots.}. Applying the function $\ell: [M_p] \to [L]$, we determine the replica count $\ell(W_{i,p})$. Once again, we emphasize that the replica count is a deterministic function of $W_{i,p}$ and remains independent of $W_{i,c}$. At this step, we construct the word
\[
\mathbf{x}^{\br{i}} = [\underbrace{\tilde{\mathbf{x}}^{\br{i}}, \: \tilde{\mathbf{x}}^{\br{i}}, \ldots, \: \tilde{\mathbf{x}}^{\br{i}}}_{\ell(W_{i,p})}, \: \mathbf{0}],
\]
where the codeword $\tilde{\mathbf{x}}^{\br{i}}$ is repeated $\ell(W_{i,p})$ times and then appended with $(n-n_p) - \ell(W_{i,p}) n_c$ zeros. Thus, $\mathbf{x}^{\br{i}}$ has length $n-n_p$.
\item Finally, we select the permutation $\pi_{W_{i,p}}$ based on the $W_{i,p}$ message part and transmit the word
$$
\left[f_p(W_{i,p}), \:\: \pi_{W_{i,p}}(\mathbf{x}^{\br{i}})\right]
$$
over the channel.
\end{enumerate}

Now, consider the decoding process:
\begin{enumerate}
\item Decode the preamble part using a \ac{cs} decoder by solving the problem~\eqref{eq:ch5:cs_preamble}, where $\mathbf{u}_p = (u_{p,1}, \ldots, u_{p,M_p})$ and the sparsity $\mathrm{wt}(\mathbf{u}_p) = K_a$.
\item Reconstruct the graph and extract $\{\pi_{\hat{W}_{i, p}}\}$ for $i \in [K_a]$. If all preambles are successfully recovered, the graph and code transformations are known.
\item Apply a joint message-passing decoder. The rules remain exactly the same as in Section~\ref{sec:dec_ldpc}, with the only difference being the graph structure.
\end{enumerate}

\begin{remark}
Similar to the \ac{irsaf} protocol, we have
\[
P_p n_p + P_c n_c L'\br{1} \leq Pn.
\]
\end{remark}

Additionally, we highlight an approach to the \ac{ura} problem that employs a tensor-based modulation method~\cite{Guillaud2021}. Given a large coherence block, this method is also applicable to the block fading model and scenarios where the receiver is equipped with multiple antennas.

\section{\Acl{ccs}}\label{chap5:ccs_description}

In this section, we describe an inspirational scheme from~\cite{amalladinne2018coupled, codedCS2020}. A similar approach has already been used in \ac{cs} and group testing literature~\cite{Cormode2006, Hung2012, Gilbert2007, Indyk2008}. However,~\cite{codedCS2020} presents the first application of this approach to the \ac{ura} problem. Recall that the \ac{ura} problem is a \ac{cs} problem of immense dimension. The scheme from~\cite{codedCS2020} utilizes the divide-and-conquer strategy, i.e., it splits the task into subtasks of smaller dimensions, solves the corresponding \ac{cs} problems, and then assembles the results using an outer tree code. We note that a similar code construction, namely a convolutional code, was used in~\cite{ZigJ}, but for a different single-user channel model (jamming channel or J-channel). The main drawback of the tree code-based scheme proposed in~\cite{codedCS2020} is its inability to deal with errors, i.e., the codeword is not recovered if at least one of its fragments is lost. This drawback was addressed in~\cite{Frolov2022TCOM}. Our description is based on the latter paper.

\begin{figure}[!t]
\centering
\includegraphics{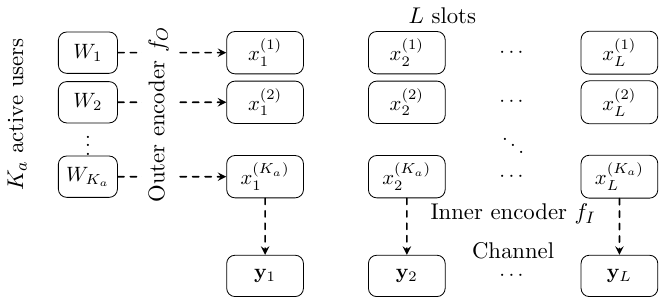}
\caption{\ac{ccs} scheme. First, the outer code of length $L$ over the alphabet $[Q]$, with the encoder $f_O$, is applied. Then, the inner encoder $f_I$ is applied to each symbol of the outer code. Finally, each inner codeword is transmitted through the Gaussian channel~\eqref{eq:block_fading}.}
\label{fig:ccs_scheme}
\end{figure}

Let us briefly describe a \ac{ccs} scheme from~\cite{codedCS2020}. The transmission scheme is shown in~\Fig{fig:ccs_scheme}. The idea is to apply a divide-and-conquer strategy implemented via concatenated coding. Recall that a frame of $n$ real channel uses is divided into $L$ slots of length $n'$. Consider the $i$-th user aiming to transmit a message $W_i \in [M]$. We use an outer code of length $L$ over the alphabet $[Q]$ with the encoder $f_O: [M] \to [Q]^L$. First, we obtain a codeword $\ind{\dvec{x}}{i}{} = (\ind{x}{i}{1}, \ind{x}{i}{2}, \ldots, \ind{x}{i}{L}) = f_O(W_i)$, where $\ind{\dvec{x}}{i}{} \in [Q]^L$. Then, the inner encoder $f_I: [Q] \to \mathbb{C}^{n'}$ is applied to each symbol $\ind{x}{i}{l}$, $l\in[L]$. The resulting codewords of the inner code are transmitted in the corresponding slots. Note that, in contrast to ALOHA protocol-based schemes, the user's transmission occupies the entire frame ($L$ slots).

For the $l$-th slot, we have 
\begin{equation}\label{eq:block_fading}
\rvec{y}_l = \sum\limits_{i=1}^{K_a} f_I\left(\ind{x}{i}{l}\right) + \rvec{z}_l, \quad l \in [L],
\end{equation}
where $\rvec{z}_l \sim \mathcal{N}(\bm{0},\dmat{I}_{n'})$.

To recover the transmitted codewords, we first solve a \ac{cs} problem for each slot. Note that the dimensions of these problems are much smaller compared to the dimension of the original problem (see~\eqref{eq:original_cs}). Thus, one can use standard \ac{cs} algorithms. For further details, see Section~\ref{sec:ch5:num_results_ccs}.

\begin{figure}[t]
\centering
\includegraphics{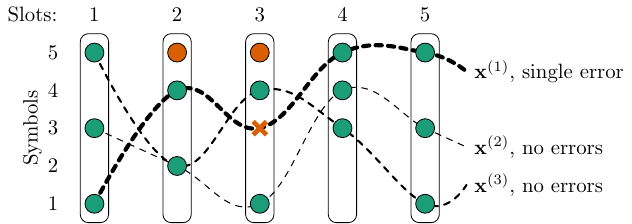}
\caption{Outer code and the list-recovery problem (Example~\ref{ex:frame_decoded}, case B). Rectangles represent slots. Green circles correspond to correctly detected symbols, orange circles correspond to falsely detected symbols, and orange crosses correspond to missed symbols. Dashed lines represent outer code codewords. For example, a codeword $\dvec{x}^{\br{1}}$ is covered by the received lists in all but one position.}
\label{fig:t_error_code}
\end{figure}

In the following example, we illustrate the transmission and the output of the inner decoder.

\begin{example}\label{ex:frame_decoded}
Let $K_a = 3$, $Q=L=5$ and assume the codewords 
\begin{eqnarray*}
\ind{\bx}{1}{} &=& (1, \ 4,\ 3,\ 5,\ 5),\\ 
\ind{\bx}{2}{} &=& (3, \ 2,\ 1,\ 4,\ 3),\\
\ind{\bx}{3}{} &=& (5, \ 2,\ 4,\ 3,\ 1).
\end{eqnarray*}
are transmitted. The channel output for each slot is described by~\eqref{eq:block_fading}. For example, for the third slot, we have the following channel output realization:
\[
\dvec{y}_3 = f_I\left( 3 \right) + f_I\left( 1 \right) + f_I\left( 4 \right) + \dvec{z}_3.
\]

Next, we solve a \ac{cs} problem for each slot and obtain $\dset{Y}_l \subseteq [Q]$, $l \in [L]$ which represents the estimates of the sets of $Q$-ary symbols transmitted in each slot. Let us consider two cases:

\textbf{Case A\textup:} There are no errors in the received lists. Thus,
\begin{flalign*}
\dset{Y}_1 = & \{1,3,5\}, \\
\dset{Y}_2 = & \{2,4\},   \\
\dset{Y}_3 = & \{1,3,4\}, \\
\dset{Y}_4 = & \{3,4,5\}, \\
\dset{Y}_5 = & \{1,3,5\}.
\end{flalign*}
We note that the decoder returns a set, and thus we have a list of size $2$ for the second slot. There is no information about the symbols' multiplicities.

\textbf{Case B\textup:} There are errors in the received lists. Let 
\begin{flalign*}
\dset{Y}_1 = & \{1,3,5\}, \\
\dset{Y}_2 = & \{2,4,5\}, \\
\dset{Y}_3 = & \{1,4,5\}, \\
\dset{Y}_4 = & \{3,4,5\}, \\
\dset{Y}_5 = & \{1,3,5\}.
\end{flalign*}
For the second slot, we observe a falsely detected symbol, $5$. For the third slot, the symbol $3$ is lost, and the symbol $5$ is falsely detected.
\end{example}

The task of the outer code is to assemble the original codewords from the received lists. The lists may contain errors (missed and falsely detected symbols). This problem is known as the list-recovery problem~\cite{Guru}. We illustrate this process for Example~\ref{ex:frame_decoded}, case B, in~\Fig{fig:t_error_code}.

\subsection{Channel for the outer code}\label{sec:outer_code_channel}

Let us start with the case with no errors in the output lists. Clearly, the resulting channel $\mathfrak{W}_A$ is the channel without intensity information (A-channel) from~\cite{Wolf1981}, which is also called a hyperchannel in the literature (see~\cite{Hyper}). Let the symbols $\rscalar{x}^{\br{1}}, \rscalar{x}^{(2)}, \ldots, \rscalar{x}^{(K_a)} \in [Q]$ be transmitted. Then, the channel output is 
$$
\dset{Y}^{(A)} = \bigcup\limits_{i=1}^{K_a} \rscalar{x}^{\br{i}}.
$$
The capacity of the A-channel is derived in~\cite{BasPin00}. If we consider indicator vectors of sets, this channel can be represented as a vector OR-channel.

Our channel $\mathfrak{W}$ is a concatenation of the A-channel $\mathfrak{W}_A$ with the channel $\mathfrak{W}_E$ that introduces errors. Let us consider the channel $\mathfrak{W}_E$ in more detail. The input of the channel is the set of transmitted symbols $\dset{Y}^{(A)} \subseteq [Q]$. Let $\dset{Y} \subseteq [Q]$ be the output set. We introduce the missed detection probability $p_m$ as the probability that the transmitted symbol is not in the received list $\dset{Y}$. The false alarm probability $p_f$ is the probability that some symbol that is not transmitted appears in the received list, i.e., for $x \in [Q]$,
\begin{flalign*}
\pM &= \prob{x \not\in \dset{Y} | x \in \dset{Y}^{(A)}},\\
\pF &= \prob{x \in \dset{Y} | x \not\in \dset{Y}^{(A)}},
\end{flalign*}
where the channel operates independently on the elements.

\begin{example}
Let us consider Example~\ref{ex:frame_decoded} and the second slot with $\dset{Y}_2^{(A)} = \{2,4\}$. Each element $\{2,4\}$ can be missed independently with probability $p_m$, and each element $\{1,3,5\}$ can appear in the received list $\dset{Y}_2$ independently with probability $p_f$.
\end{example}

The capacity of this channel is derived in Appendix~\ref{sec:outer_channel_capacity}.

\subsection{\Acl{rcb}}\label{sec:random_coding_bound}

In this section, we present a \ac{rcb} for the outer code. In what follows, we consider the \emph{ensemble} of codes.

\begin{define}
Let $\mathcal{E}_1(M, L)$ be the ensemble of codebooks of size $M \times L$, where each element is sampled i.i.d. from $\mathrm{Unif}\br{[Q]}$.
\end{define}

Now, let us describe the decoding algorithm. Let
$$
\vdset{Y} = (\dset{Y}_1, \ldots, \dset{Y}_L), \quad \dset{Y}_l \subseteq [Q],
$$
i.e., the received sequence of $L$ sets, each set being an output of the channel $\mathfrak{W}$.

We require the decoder\footnote{Here and in what follows, we consider the outer code only and thus omit the word ``outer''.} to output all the messages $W$, such that
\begin{equation}\label{eq:dec_cond}
d(\vdset{Y}, \dvec{x}) \leq t,
\end{equation}
where $ \dvec{x}= f_O(W)$, $d(\vdset{Y}, \dvec{x}) = | \{ l: x_l \not\in \dset{Y}_l \}|$.

\begin{theorem}[\Acl{ccs} \acl{rcb}~\cite{Frolov2022TCOM}]\label{thm:rcb}
There exists a code $\code \in \mathcal{E}_1(M, L)$, such that
\begin{equation}\label{eq:rcb_pupe}
P_e = \sum\limits_{i=t+1}^{L} \binom{L}{i} \pM^i (1-\pM)^{L-i},
\end{equation}
and
\begin{equation}\label{eq:rcb_far}
P_f \leq \sum\limits_{r=1}^{K_a} \left[\nu_r (M-r) \sum\limits_{i=0}^{t} \binom{L}{i} (1-\mu_r)^i \mu_r^{L-i} \right],
\end{equation}
where $\mu_r$ is given by~\eqref{eq:py} and
$$
\nu_r = \frac{M!}{(M-r)!M^{K_{a}}} \stirling{K_{a}}{r},
$$
where $\stirling{K_{a}}{r}$ is the Stirling number of the second kind.
\end{theorem}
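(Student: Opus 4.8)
The plan is to argue via the random-coding (probabilistic) method over the ensemble $\mathcal{E}_1(M,L)$: I would compute the \emph{ensemble-averaged} values of $P_e$ and $P_f$ and then invoke the usual existence argument. A convenient feature here is that $P_e$ turns out to be the same for every codebook in the ensemble, so it suffices to exhibit a single code whose $P_f$ does not exceed the ensemble average; such a code always exists, and it automatically attains the stated $P_e$.

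First I would handle $P_e$. Fix any transmitted message $W$ with outer codeword $\dvec{x}=f_O(W)$. In each slot $l$ the symbol $x_l$ is, by construction, one of the symbols fed into the A-channel $\mathfrak{W}_A$, hence $x_l\in\dset{Y}_l^{(A)}$. Therefore the only mechanism by which $x_l\notin\dset{Y}_l$ is a missed detection of the error channel $\mathfrak{W}_E$, an event of probability $\pM$ acting independently across the $L$ slots. Consequently $d(\vdset{Y},\dvec{x})$ equals the number of missed detections, a $\mathrm{Bin}(L,\pM)$ random variable, and by the decoding rule \eqref{eq:dec_cond} the message is missed precisely when this count exceeds $t$. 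This gives \eqref{eq:rcb_pupe} exactly, and since it depends only on the channel and not on the drawn codebook, $P_e$ is codebook-independent.

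Next comes $P_f$, the harder part. I would first condition on the number $r$ of \emph{distinct} transmitted messages among the $\Ka$ users. A standard occupancy count (an ordered $r$-tuple of distinct messages, $\tfrac{M!}{(M-r)!}$ ways, together with a partition of the $\Ka$ users into $r$ blocks assigned to them, $\stirling{\Ka}{r}$ ways) gives the probability of exactly $r$ distinct messages as $\nu_r=\tfrac{M!}{(M-r)!\,M^{\Ka}}\stirling{\Ka}{r}$. Given $r$ distinct messages there are $M-r$ non-transmitted messages, each a candidate false alarm; in the i.i.d.\ uniform ensemble their codewords are independent of the transmitted codewords and of the channel, with coordinates uniform on $[Q]$. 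For one such coordinate $x'_l$ I would compute $\mu_r=\Pr[x'_l\in\dset{Y}_l]$ by splitting on whether $x'_l$ collides with one of the (i.i.d.\ uniform) transmitted slot-symbols: with probability $1-(1-1/Q)^r$ it lies in $\dset{Y}_l^{(A)}$ and survives with probability $1-\pM$, and otherwise it is inserted as a false alarm with probability $\pF$, recovering \eqref{eq:py}. Because distinct slots use independent transmitted symbols and independent channel realizations, the events $\{x'_l\in\dset{Y}_l\}$ are independent across $l$, so a false message passes the test \eqref{eq:dec_cond} iff at most $t$ of its coordinates are uncovered, of probability $\sum_{i=0}^{t}\binom{L}{i}(1-\mu_r)^i\mu_r^{L-i}$. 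A union bound over the $M-r$ non-transmitted messages bounds the conditional expected number of false alarms, and using $|\dset{R}\setminus\dset{T}|/|\dset{R}|\le|\dset{R}\setminus\dset{T}|$ (valid since $|\dset{R}|\ge 1$ whenever the numerator is nonzero) together with averaging over $r$ against $\nu_r$ yields \eqref{eq:rcb_far}.

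The main obstacle is the $P_f$ independence bookkeeping: one must verify that, after conditioning on $r$ distinct messages, the per-slot coverage events factorize across slots so that the false-alarm count is genuinely $\mathrm{Bin}(L,\mu_r)$ with a slot-independent $\mu_r$. This hinges entirely on the i.i.d.-uniform coordinate structure of $\mathcal{E}_1(M,L)$ and the elementwise (per-symbol, per-slot) action of $\mathfrak{W}_E$; once this is in place the binomial tail and the union bound are routine. Finally, combining the codebook-independent $P_e$ with a code meeting the averaged $P_f$ bound completes the proof.
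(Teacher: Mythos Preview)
Your proposal is correct and follows essentially the same route as the paper: condition on the number $r$ of distinct transmitted messages (with probability $\nu_r$), use the i.i.d.\ uniform structure of $\mathcal{E}_1(M,L)$ to show that a fixed non-transmitted codeword has each coordinate covered independently with probability $\mu_r$, and then sum over the $M-r$ candidates; your explicit observation that $P_e$ is codebook-independent (so that existence of a good-$P_f$ code suffices) is a clean addition that the paper leaves implicit. The only cosmetic difference is that you bound $P_f=\mathbb{E}[|\dset{R}\setminus\dset{T}|/|\dset{R}|]$ by the expected number of false alarms via linearity of expectation, whereas the paper phrases it as a union bound on $\Pr[\dset{R}\setminus\dset{T}\ne\emptyset]$; both lead to the same inequality.
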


\begin{proof}
Let us start with the \ac{far}. Recall that $\dset{T}$ is the set of transmitted messages and ${\dset{R}} \subseteq [M]$ is a received list. Let us introduce the events
\[
E_r = \{ |\dset{T}| = r \}, \:\: r \in [K_a].
\]
Note that the probability $\prob{E_r}$ of obtaining $r$ unique original elements when $K_{a}$ items are taken with replacement from a sample of size $M$ is equal to $\nu_r$ (see, e.g.,~\cite{knuthConcrete2nd}).

We have
\[
P_f = \sum\limits_{r=1}^{K_a} \nu_r \prob{{\dset{R}} \backslash {\dset{T}} \ne \emptyset \: | \: E_r}.
\]

Let us proceed with $\prob{{\dset{R}} \backslash {\dset{T}} \ne \emptyset \: | \: E_r}$. W.l.o.g., let us assume that $\dset{T} = [r]$ and calculate the probability that some another message satisfies condition~\eqref{eq:dec_cond}. Let us fix the message $\hat{W} \in [M] \setminus [r]$, and let $\hat{\dvec{x}} = f_O(\hat{W})$. Recall (see Section~\ref{sec:outer_code_channel}), 
\[
\prob{\hat{x}_l \in \dset{Y}_l \: | \: E_r} = \mu_r,
\]
thus the probability of accepting the message $\hat{W}$ is equal to
$$
\prob{\sum\limits_{l=1}^L \xi_l \leq t},
$$
where $\xi_l \overset{i.i.d.}{\sim} \mathrm{Bern}(1-\mu_r)$. Applying the union bound, we obtain $P_f$ from the theorem statement.

Finally, note that as we utilize a single-user receiver, $P_e$ is just the probability that more than $t$ errors in the transmitted codeword have occurred.
\end{proof}

In what follows, we are interested in codebooks of size $M \approx 2^{100}$ and utilize the following upper bound for $P_f$.

\begin{corollary}
The following inequality holds:
\[
P_f \leq (M-K_a) \sum\limits_{i=0}^{t} \binom{L}{i} (1-\mu_{K_a})^i \mu_{K_a}^{L-i} + p',
\]
where
\[
p' = \prob{\left|\dset{S}\right| < K_a} = 1 - \prod\limits_{i=0}^{K_a-1} \left( 1 - \frac{i}{M} \right) \leq \frac{\binom{K_a}{2}}{M}.
\]
\end{corollary}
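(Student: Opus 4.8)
The plan is to avoid reworking the full union bound \eqref{eq:rcb_far} and instead return to the intermediate identity obtained in the proof of Theorem~\ref{thm:rcb}, namely $P_f = \sum_{r=1}^{K_a}\nu_r\,\prob{\dset{R}\setminus\dset{T}\neq\emptyset \mid E_r}$, where $E_r = \{|\dset{T}| = r\}$ and $\nu_r = \prob{E_r}$. The key observation is that each summand is $\nu_r$ times a genuine probability, hence bounded by $\nu_r$; this is a strictly tighter handle on the small-$r$ terms than the inflated factor $(M-r)\sum_{i=0}^t\binom{L}{i}(1-\mu_r)^i\mu_r^{L-i}$ appearing in \eqref{eq:rcb_far}, which may well exceed one.

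First I would isolate the dominant term $r = K_a$. For it I keep the union bound already established inside Theorem~\ref{thm:rcb}, namely $\prob{\dset{R}\setminus\dset{T}\neq\emptyset \mid E_{K_a}} \le (M-K_a)\sum_{i=0}^{t}\binom{L}{i}(1-\mu_{K_a})^i\mu_{K_a}^{L-i}$, and then simply drop the prefactor using $\nu_{K_a}\le 1$. This reproduces verbatim the explicit first term of the corollary.

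Next I would treat the remaining indices $r \in [K_a-1]$ collectively. Since every conditional probability is at most one, $\sum_{r=1}^{K_a-1}\nu_r\,\prob{\dset{R}\setminus\dset{T}\neq\emptyset \mid E_r} \le \sum_{r=1}^{K_a-1}\nu_r = \prob{|\dset{T}| < K_a} = p'$. It then remains to identify $p'$ with the stated product. Using $\stirling{K_a}{K_a}=1$ gives $\nu_{K_a} = \frac{M!}{(M-K_a)!\,M^{K_a}} = \prod_{i=0}^{K_a-1}\left(1-\frac{i}{M}\right)$, so that $\sum_{r=1}^{K_a-1}\nu_r = 1-\nu_{K_a} = 1-\prod_{i=0}^{K_a-1}\left(1-\frac{i}{M}\right) = p'$; the closing inequality $p'\le \binom{K_a}{2}/M$ is the standard collision estimate already recorded in Remark~\ref{remark:msg_unique} (a Bonferroni/union bound over the $\binom{K_a}{2}$ pairs of users selecting equal messages).

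The only genuinely delicate point — and the step I would flag as the main obstacle — is recognizing that the corollary cannot be obtained by algebraic manipulation of \eqref{eq:rcb_far} alone: for $r<K_a$ the union-bounded factor need not be $\le 1$, so collapsing those contributions into $p'$ forces one to step back to the probability form and exploit $\prob{\dset{R}\setminus\dset{T}\neq\emptyset\mid E_r}\le 1$. Everything else is routine bookkeeping with the $\nu_r$ and the normalization $\sum_{r=1}^{K_a}\nu_r = 1$.
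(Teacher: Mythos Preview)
Your proposal is correct and is exactly the natural derivation the paper has in mind: the corollary is stated without proof immediately after Theorem~\ref{thm:rcb}, and your argument---splitting off the $r=K_a$ summand (keeping its union bound, dropping $\nu_{K_a}\le 1$) and collapsing the $r<K_a$ summands via $\prob{\cdot}\le 1$ into $\sum_{r<K_a}\nu_r = 1-\nu_{K_a}=p'$---is the intended route. Your remark that one must step back to the probability form rather than manipulate \eqref{eq:rcb_far} directly is also on point.
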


\subsection{\texorpdfstring{$t$}{t}-tree code-based practical scheme} \label{sec:treecode_practical}

Let us describe and analyze a practical code construction, which is a modification of the tree code. The code structure is the same as in~\cite{codedCS2020}, with a crucial difference: a decoder capable of correcting up to $t$ errors.

\paragraph{Code construction.}

Let us represent the user message $W$ as a binary $k$-bit vector $\dvec{u}$ and split it into chunks $\dvec{u}^T = (\dvec{u}^T_1, \ldots, \dvec{u}^T_L)$, such that $\dvec{u}_l$ is of length $b_l$ bits, $l\in [L]$, and $\sum\limits_{l=1}^{L} b_l = k$.

Recall that $\beg{\dvec{u}}{l} = (\dvec{u}_1, \ldots, \dvec{u}_l)$, and let $B_l = \sum\nolimits_{l'=1}^l b_{l'}$. To construct the outer code, we choose the following encoding function $f_O$.
\begin{equation}\label{tree:def_nl}
x_l = f_{O, l}(\beg{\dvec{u}}{l}), \:\: l\in[L],
\end{equation}
where $f_{O, l}: \{0,1\}^{B_l} \to [Q]$. The main idea of the proposed code construction is that the symbol $x_l$ for the $l$-th slot depends only on the message chunks $\dvec{u}_1, \ldots, \dvec{u}_l$. This property simplifies the decoding process (see Section~\ref{tree:dec}).

Linear codes are preferred for practical schemes. Thus, we construct the functions $f_{O,l}(\cdot)$, $l\in[L]$ as follows. Let $c \in \mathbb{N}$, $Q=2^c$. Let us fix a bijective mapping $\phi: \{0,1\}^{c} \to [Q]$. The major part of our construction is a binary linear code with a block upper-triangular generator matrix.
\begin{equation}\label{eq:G}
\dmat{G}= 
\begin{pmatrix}
\dmat{G}_{1,1} & \dmat{G}_{1,2} & \dmat{G}_{1,3} & \dots & \dmat{G}_{1,L}\\
\dmat{0} & \dmat{G}_{2,2} & \dmat{G}_{2,3} & \dots & \dmat{G}_{2,L} \\
\dmat{0} & \dmat{0} & \dmat{G}_{3,3} & \dots & \dmat{G}_{3,L}\\
\vdots & \vdots & \vdots & \dots & \vdots \\
\dmat{0} & \dmat{0} & \dmat{0} & \dots & \dmat{G}_{L,L}\\
\end{pmatrix},
\end{equation}
where $\dmat{G}_{l',l}$, $l', l \in[L]$, is a binary matrix of size $b_{l'} \times c$. 

The codeword $\dvec{x} = (x_1, \ldots, x_L) \in [Q]^L$ is obtained as follows. We start with the binary vector $\dvec{c} = \dvec{u} \dmat{G}$ and then obtain a codeword $\dvec{x}$ by splitting $\dvec{c}$ into chunks of length $c$ and applying the mapping $\phi$, i.e.,
$$
x_l = \phi\left( \sum\limits_{l'=1}^{l} \dvec{u}^T_{l'} \dmat{G}_{l',l} \right), \:\: l\in[L].
$$

\paragraph{Decoding.}\label{tree:dec}

Recall that the goal of the decoder is to recover all the messages $\dvec{u}$, such that
\[
d(\vdset{Y}, \dvec{x} ) \leq t,
\]
where $\dvec{x} = f_O(\dvec{u})$.

We note that $\dvec{u}_{[l]}$ uniquely defines $\dvec{x}_{[l]}$ for each $l \in [L]$. In what follows, we write $\dvec{x}_{[l]} = f_O(\dvec{u}_{[l]})$. This fact allows us to utilize a low-complexity decoding algorithm, which decodes the blocks $\dvec{u}_l$ sequentially.

Let
\[
\dset{V}_l = \left\{ \dvec{v}_l \in \{0,1\}^{B_l} : d(\vdset{Y}_{[l]}, f_O(\dvec{v}_l)) \leq t \right\}, \:\: l \in [L],
\]
which represents the list of messages at each decoding step.

Let us consider the $l$-th decoding step. For each of the elements $\dvec{v}_{l-1} \in \dset{V}_{l-1}$, we consider all possible continuations $\dvec{u}_l \in \{0,1\}^{b_l}$. We include the path $\dvec{v}_l = (\dvec{v}_{l-1}, \dvec{u}_l)$ into the set $\dset{V}_{l}$ if $d(\vdset{Y}_{[l]}, f_O(\dvec{v}_l)) \leq t$. See Algorithm~\ref{alg:dec} for the full description.

\begin{algorithm}
\caption{$t$-tree linear code decoding algorithm}\label{alg:dec}
\begin{algorithmic}[1]
\INPUT $\vdset{Y}$
\OUTPUT $\dset{V}_L$ \Comment Decoded messages
\State $\dset{V}_0 \gets \emptyset$
\For {$l \in [L]$} 
\State $\dset{V}_l \gets \emptyset$ 
\For {$\dvec{v}_{l-1} \in \dset{V}_{l-1}$} \Comment For each element from the list
\For {$\dvec{u}_l \in \{0,1\}^{b_l}$} \Comment For each next block
\State $\dvec{v}_l \gets (\dvec{v}_{l-1}, \dvec{u}_l)$
\If { $d(\vdset{Y}_{[l]}, f_O(\dvec{v}_l)) \leq t$ }
\State $\dset{V}_l \gets \dset{V}_l \bigcup \dvec{v}_l$
\EndIf 
\EndFor
\EndFor
\EndFor
\State \Return $\dset{V}_L$
\end{algorithmic}
\end{algorithm}

\begin{remark}
As we see, Algorithm~\ref{alg:dec} is guaranteed to recover the transmitted message in case no more than $t$ errors have occurred. 
\end{remark}

\begin{remark}
Note that the decoding complexity depends on $|\dset{V}_l|$, $l \in [L]$. Indeed, at the $l$-th decoding step we perform $|\dset{V}_{l-1}| 2^{b_l}$ encodings. Taking into account that the encoding requires $B_l \times c$ additions in the binary field, we obtain the total decoding complexity of $c \times \left(\sum\nolimits_{l=1}^L |\dset{V}_{l-1}| 2^{b_l} B_l\right)$ binary additions. 

See Appendix~\ref{appendix:t_tree_linear_paths} for a detailed analysis of the average number of decoding paths.
\end{remark}

\subsection{\Acl{rs} code-based practical scheme} \label{sec:reed_solomon_practical}

\ac{rs} codes combined with Guruswami--Sudan decoding algorithm are known to solve the list-recovery problem. In this section, we describe an \ac{rs}-code-based practical scheme from~\cite{Frolov2022TCOM}. We start with the necessary definitions. 

Recall that $\mathbb{F}_Q$ is the field with $Q$ elements. Let $\mathbb{F}_Q[X]$ and $\mathbb{F}_Q[X,Y]$ denote the rings of univariate and bivariate polynomials over $\mathbb{F}_Q$. Let $\beta_1, \beta_2, \ldots, \beta_L \in \mathbb{F}_Q$, with $\beta_i \ne \beta_j$ for $i \ne j$. We define an $[n_O = L, k_O]$ \ac{rs} code $\mathcal{C}$ as follows:
$$
\mathcal{C}_{RS} \triangleq \left\{ ( f(\beta_1), f(\beta_2), \ldots, f(\beta_L) ) : f\br{x} \in \mathbb{F}_Q[X], \deg{f\br{x}} < k_O \right\}.
$$

Let us consider the bivariate polynomial
$$
g(x, y) = \sum\limits_{i=0}^\infty \sum\limits_{j=0}^\infty g_{i,j} x^i y^j \in \mathbb{F}_Q[X,Y].
$$
We define the $(w_1, w_2)$-weighted degree of $g(x, y)$ as follows.
\[
\deg_{w_1, w_2} g(x, y) = \max\limits_{g_{i,j} \ne 0} \deg_{w_1, w_2} (x^i y^j),
\]
where the $(w_1, w_2)$-weighted degree of the monomial $x^i y^j$ equals to $w_1 i + w_2 j$. Note that the $(1, 1)$-weighted degree is simply the degree of a bivariate polynomial.

In what follows, we search for polynomials that pass through some point with multiplicity $m$. The bivariate polynomial $g(x,y)$ passes through the point $(\beta, \alpha) \in \mathbb{F}_Q^2$ if $g(\beta, \alpha) = 0$. Let us introduce the concept of root multiplicity. In the real field, the root multiplicity can be defined using partial derivatives, but for the finite field it is more challenging due to finite characteristic. Following~\cite{GS}, we say that the polynomial $g(x,y)$ passes through the point $(\beta, \alpha)$ with multiplicity $m$ if the shifted polynomial $g(x + \beta, y + \alpha)$ contains a monomial of degree\footnote{Here, we refer to the $(1, 1)$-weighted degree.} $m$ and does not contain a monomial of smaller degree.

\paragraph{Naive approach.}\label{sec:naive}

Let us apply the Guruswami--Sudan list recovery algorithm to our problem in a straightforward manner. We briefly explain the idea and refer the reader to~\cite{PPM3} for details. Let us enumerate the elements of the field $\mathbb{F}_Q$ in some order, i.e., $\mathbb{F}_Q = \{\alpha_1, \alpha_2, \ldots, \alpha_Q\}$.

\emph{Encoding.} Let $k_O = k/\log_2Q$. Consider the user message $W$ as a vector$\dvec{f} = (f_0, \ldots, f_{k_O-1})$, where $f_i \in \mathbb{F}_Q$, and introduce the information polynomial $f\br{x} = \sum\nolimits_{i=0}^{k_O-1} f_i x^i$. The user codeword is
$$
\dvec{x} = \br{f(\beta_1), f(\beta_2), \ldots, f(\beta_L)} \in \mathcal{C}_{RS}.
$$

\emph{Decoding.} Let us recall that the channel output is $\vdset{Y} = (\dset{Y}_1, \ldots, \dset{Y}_L)$, where $\dset{Y}_l \subseteq \mathbb{F}_Q$. Let us introduce the following set of points:
\[
\dset{P} = \left\{ (\beta_l, \alpha_i): \alpha_i \in \dset{Y}_l, l \in [L], i \in [Q] \right\}.
\]

We assign a multiplicity $m_{i,l} \in \mathbb{N}$ to each point $(\beta_l, \alpha_i) \in \dset{P}$. We note that multiplicities affect the decoding algorithm only. The larger $m_{i,l}$ are, the better the decoding performance is. At the same time, the decoding complexity grows with $m_{i,l}$. We provide all the details below.

To perform the decoding, we apply the Guruswami--Sudan algorithm. The input to this algorithm is the set of points $\dset{P}$ with the corresponding multiplicities $m_{i,l}$. The full description is given by Algorithm~\ref{alg:GS}.

\begin{algorithm}
\caption{Guruswami--Sudan algorithm}\label{alg:GS}
\begin{algorithmic}[1]
\INPUT $L$, $k_O$, set of points $\dset{P}$ with multiplicities $m_{i,l}$
\OUTPUT List of polynomials $f\br{x}$

\State \emph{Interpolation}. Find a bivariate polynomial $g(x,y)$ of minimal $(1,k_O-1)$-weighted degree that passes through each point $(\beta_l, \alpha_i)$, $i\in[Q]$, $l \in [L]$, with multiplicity $m_{i,l}$. 
\State \emph{Factorization}. Find all the factors of $g(x,y)$ of the form $y - f\br{x}$, where $\deg{f\br{x}} < k_O$.
\end{algorithmic}
\end{algorithm}

\emph{Analysis.} It is convenient to represent the set $\dset{P}$ with corresponding multiplicities as a matrix $\dmat{M} = \left(m'_{i,l}\right)$ of size $Q\times L$ as follows: $m'_{i,l} = m_{i,l}$ if $(\alpha_i, \beta_l) \in \dset{P}$ and $m'_{i,l} = 0$ otherwise.

Let us define the complexity of matrix $\dmat{M}$ as
\[
C(\mathbf{M}) = \frac{1}{2} \sum\limits_{i=1}^Q \sum\limits_{l=1}^L m'_{i,l} (m'_{i,l}+1). 
\]

Algorithm~\ref{alg:GS} is known (see~\cite{GS, KV}) to include $f\br{x}$ in the output list if
\begin{equation}\label{eq:rec_cond}
(\mathbf{M}, \mathbf{X}) \geq \sqrt{2(k_O-1)C(\mathbf{M})},
\end{equation}
where $(\cdot , \cdot)$
denotes the dot product of matrices, and the matrix $\mathbf{X} = \left(x_{i,l}\right)$ corresponds to the codeword $\mathbf{x} = \left(f\left(\beta_1\right), \ldots, f\left(\beta_L\right)\right)$, i.e., $x_{i,l} = 1$ if $\alpha_i = f(\beta_l)$. This matrix has only one unit in each column. 

The following upper bound holds for the list size:
\[
L(\mathbf{M}) \leq \sqrt{\frac{2 C(\mathbf{M})}{k_O-1}}.
\]

Let us investigate the recovery condition~\eqref{eq:rec_cond} in more detail for the case when all the points from $\dset{P}$ have the same multiplicity equal to $m$. Assume also that we have lists of size $K_a$ in each position. Then we have
\[
m (L - d(\vdset{Y}, \dvec{x})) = (\mathbf{M}, \mathbf{X}) \geq \sqrt{(k_O-1) m(m+1)K_a L},
\]
and thus we can recover the codeword if the number of errors $t$ satisfies the inequality
\[
t = d(\vdset{Y}, \dvec{x}) \leq L \left( 1 - \sqrt{\frac{(k_O-1)}{L} K_a \frac{(m+1)}{m}} \right).
\]

As we can see, to have error-correcting capabilities, we must require the outer code rate $R_O \triangleq k_O / L \leq 1 / K_a$, which is infeasible even for a moderate number of users. Thus, our next goal is to reduce the order of collisions.

\paragraph{Modified scheme.}\label{sec:rs_modified}

\begin{figure}[t]
\centering
\includegraphics{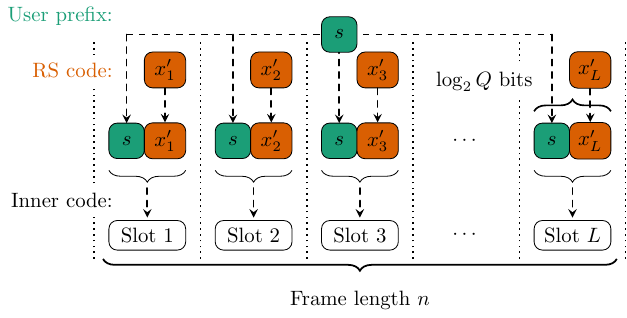}
\caption{Frame structure with slots consisting of \ac{rs} code symbols $x'_l$, where $l\in [L]$, and the prefix $s$.}
\label{fig:rs_scheme}
\end{figure}

Let $Q = q_p \times q$ and $b_p = \log_2{\left(q_p\right)}$. We propose to use $b_p$ bit prefixes $s_i$, $i \in [K_a]$ to group $K_a$ users in $q_p$ groups, ensuring an acceptable level of collisions within each group.

\emph{Encoding.} Consider the user aiming to transmit a $k$ bit message $W$. First $b_{C}$ \ac{crc} bits are calculated and added to the message to suppress the \ac{far} below the required threshold. Each user encodes $k + b_C$ bits by using an $[L,k_O]$ \ac{rs} code $\dset{C}'_{RS}$ over the alphabet $\mathbb{F}_q$ and obtains the codeword ${\dvec{x}}' \in \dset{C}'_{RS}$. The user then generates the prefix $s \sim \mathrm{Unif}([q_p])$. Consider the bijective mapping $\phi': [q_p]\times[q] \to [Q]$, which is common to all users. The resulting codeword $\dvec{x} \in [Q]^L$ is calculated as follows:
\[
x_l = \phi'(s, x'_l), \:\: l \in [L].
\]
The encoding process is illustrated in~\Fig{fig:rs_scheme}. 

\emph{Decoding.} Prefixes split the set $[Q]$ into non-intersecting ranges $\dset{I}_i$, where $|\dset{I}_i| = q$, $i \in [q_p]$. We handle the ranges $\dset{I}_i$ separately. Note that by applying $\phi'^{-1}$ and removing the prefix, we reduce the problem to the one described in Section~\ref{sec:naive}, but with alphabet $\mathbb{F}_q$ and a smaller collision order.

\subsection{Further modifications}

The paper~\cite{fengler2020sparcs} continues this line of work and utilizes \ac{sparcs} combined with the \ac{amp} algorithm. A further modification of the \ac{ccs} approach is presented in~\cite{AmalladinneAMP2021}, where the inner \ac{amp} decoder and the outer tree decoder are allowed to exchange soft information within a joint \ac{mpa}. In~\cite{EbertCouplingCS2021}, the authors reduce complexity by utilizing the same sensing matrix as in~\cite{codedCS2020}. The main difference lies in the use of a joint \ac{mpa} (as in~\cite{AmalladinneAMP2021}) and the assumption that inner codewords are coupled via an outer code. Finally, the paper~\cite{Amalladinne2022} proposes a coded demixing approach to enable the joint recovery of very high-dimensional signals that are sparse with respect to separate bases.

We also note the papers~\cite{Fengler2022_list_prune, Ebert2023_lloop}, which further investigate coding for the A-channel. In particular, the paper~\cite{Fengler2022_list_prune} proposes an interesting list-pruning algorithm that may be extremely helpful for tree-code decoding.

The state-of-the-art \ac{cs}-based solution proposed in~\cite{Truhachev2024} suggests eliminating the outer code entirely. The main idea is to use data from previous slots to customize the sensing matrix for subsequent slots. This scheme demonstrates the best energy efficiency among all \ac{cs}-based approaches.

\section{Random spreading} \label{sec:ch5:randon_spreading}

Previously, we discussed \ac{irsa} and sparse \ac{idma} schemes. Simply put, such schemes rely on a \ac{tdma} strategy to sparsify collisions. From the \ac{mac} literature, we know of one more technique---\ac{dsss}, which serves as the basis for the \ac{cdma} method. This section is devoted to applying this strategy to the \ac{ura} scenario. In what follows, we describe in detail the scheme from~\cite{AmalladinnePolar2020} and briefly discuss the improvements introduced in~\cite{Duman2021, PradhanLDPCSC2021}.

Let us start with an informal description of the scheme. The payload corresponding to each active user is split into two parts. The first component acts as a preamble, selecting a spreading or signature sequence from a codebook of sequences with good correlation properties. The remaining bits are encoded using the user's code. This codeword is then spread using the signature sequence chosen by the first part of the message.

The decoding procedure is iterative, with each iteration consisting of four main steps:
\begin{itemize}
\item[(a)] an energy detector that utilizes the correlation properties of the signature sequences to identify the list of spreading sequences used,
\item[(b)] a \ac{mmse} estimator that produces \ac{llr} estimates based on the spreading sequences,
\item[(c)] a single-user decoding operation, and
\item[(d)] a \ac{sic} step, resulting in a residual channel output.
\end{itemize}

We note that the basic scheme relies on single-user decoding, which drastically reduces its complexity.

The codebook of spreading sequences plays a crucial role in determining the scheme’s error performance. These sequences should exhibit good correlation properties or, equivalently, form a good \ac{cs} code (see Appendix~\ref{a:cs_methods} for details on how to construct such codes). When the size of this codebook is large, active users are less likely to experience collisions in the space of signature sequences, a scenario that favors the use of a single-user decoder. However, the mutual coherence of such a codebook would be very high, adversely affecting the performance of the polar decoder. Conversely, smaller codebooks possess better correlation properties but result in a higher probability of collisions between the signature sequences chosen by active users. This, in turn, renders single-user decoding ineffective for users whose signature sequences collide.

Now, we proceed with a detailed description, starting with the encoding process.

\subsection{Encoder}\label{sec:encoder}

\begin{figure}
\centering
\includegraphics{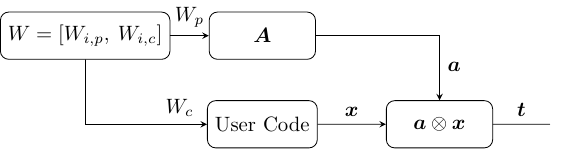}
\caption{The message is split into two parts, $W_p$ and $W_c$. The component $W_p$ is responsible for selecting a spreading sequence, while $W_c$ represents the message encoded by a user code. Spreading sequences are chosen from the matrix $\dmat{A}$, and the encoded vector $\dvec{t}$ is given by~\eqref{eq:ch5_spread_rkon}.}
\label{fig:encoder}
\end{figure}

Consider the $i$-th user. Similar to previous schemes (see Section~\ref{ch5:sec_irsa_sf}), we split the message into two parts:
$$
W_i = [W_{i,p}, \: W_{i,c}],
$$
where $W_{i,p} \in [M_{p}]$, $W_{i,c} \in [M_{c}]$, and $M = M_p M_c$. Recall that the message parts $W_{i,p}$ and $W_{i,c}$ consist of $k_p$ and $k_c$ bits, respectively, i.e., $M_p = 2^{k_p}$ and $M_c = 2^{k_c}$. 

The message part $W_{i,p}$ encodes the spreading sequence $f_p(W_{i,p})$, where $f_p: [M_p] \to \mathbb{R}^{n_{p}}$ is the encoder of some good \ac{cs} code. Let us introduce the codebook (or sensing matrix) of \ac{cs} code
\[
\dmat{A} = \left[\dvec{a}_1, \ldots, \dvec{a}_{M_p}\right],
\]
where $\dvec{a}_W = f_p(W)$ for $W \in [M_p]$. The encoding process consists of selecting the column indexed by $W_{i,p}$ from the matrix $\dmat{A}$. In other words, the message part $W_{i,p}$ determines the choice of the spreading sequence. Throughout this work, we denote the spreading sequence chosen by the $i$-th user as
$$
\dvec{a}^{\br{i}} \triangleq \dvec{a}_{W_{i, p}} = f_p\br{W_{i, p}}.
$$

The message part $W_{i,c}$ is encoded into the user's codeword $\dvec{x}^{\br{i}} = f_c(W_{i,c})$, where $f_c: [M_c] \to \mathbb{R}^{n_{c}}$. In the numerical analysis presented in Section~\ref{sec:ch5:num_results_rspread}, we consider a \ac{bpsk}-modulated binary linear $[n_c, k_c]$-code $\mathcal{C}$, i.e.,
$$
\dvec{x}^{\br{i}} = \tau\br{f_c\br{\dvec{u}_c}},$$
where $\dvec{u}_c \in \{0,1\}^{k_c}$ is the information vector corresponding to $W_c$. Throughout this discussion, $f_c\br{\dvec{u}_c}$ and $f_c\br{W_c}$ are used interchangeably.

Finally, the modulated codeword $\dvec{x}^{\br{i}}$ is spread using the chosen spreading sequence $\dvec{a}^{\br{i}}$ to produce the transmitted signal
\begin{equation}\label{eq:ch5_spread_rkon}
\dvec{t}^{\br{i}} = \dvec{x}^{\br{i}} \otimes \dvec{a}^{\br{i}},
\end{equation}
where $\otimes$ denotes the Kronecker product operation. A graphical representation of the encoding process is shown in~\Fig{fig:encoder}.

\begin{remark}[Signal representation in matrix form]
Let us transition to the matrix representation. We have
\begin{equation}\label{eq:ch5:spread_matrix}
\dvec{T}^{\br{i}} = \dvec{a}^{\br{i}} \br{\dvec{x}^{\br{i}}}^T,
\end{equation}
where $\dvec{t}^{\br{i}}$ can be obtained by reshaping this matrix, i.e., by reading it column by column.
\end{remark}

Taking into account that $\Ka$ users are transmitting simultaneously, we observe the following noisy signal mixture at the channel output
\begin{equation}\label{eq:ch5:spread_reception}
\rmat{Y} = \sum\limits_{i=1}^{\Ka} \dmat{T}^{\br{i}} + \rmat{Z},
\end{equation}
where $\rmat{Z} \sim \mathcal{N}(\dvec{0}, \dmat{I}_{n_p \times n_c})$. 

\begin{remark}
Let $\dset{P} = \brc{W_{1,p}, \ldots, W_{\Ka,p}}$. Consider the general case where some preambles may coincide, i.e., $|\dset{P}| = r \leq \Ka$ and $\dset{P} = \{j_1, j_2, \ldots, j_r\}$, where $j_1 < j_2 < \ldots < j_r$ are unique elements. Then, we have
\begin{equation}\label{eq:mmse}
\rmat{Y} = \dmat{A}_\dset{P} \tilde{\dmat{X}} + \rmat{Z}, 
\end{equation}
where $\dmat{A}_\dset{P}$ is formed from $\dmat{A}$ by selecting columns indexed by $\dset{P}$, and
\[
\tilde{\dmat{X}} = \begin{bmatrix} 
\tilde{\dvec{x}}_1^T\\
\tilde{\dvec{x}}_2^T\\
\vdots\\
\tilde{\dvec{x}}_r^T
\end{bmatrix},
\]
and $\tilde{\dvec{x}}_l$ is the sum of the codewords of users that chose the preamble $j_l$, i.e.,
\[
\tilde{\dvec{x}}_l = \sum\limits_{i: W_{i,p} = j_l} \dvec{x}^{\br{i}}, \quad l \in [r].
\]
\end{remark}

\subsection{Decoder}

The iterative decoder consists of several components. First, an energy detector decodes the preamble parts of the transmitted messages, which are then mapped to columns of $\dmat{A}$ to identify the set of spreading sequences selected by active users. Since signature sequences are chosen solely based on preambles, two or more users select the same sequence whenever they share a common preamble.

Based on the energy corresponding to a detected sequence, the energy detector also provides an estimate of the number of active users sharing that sequence. The \ac{mmse} estimator is then employed to produce soft estimates of the signals corresponding to the detected sequences. These soft estimates are passed to the users' decoders, and the recovered codewords are subtracted from the received signal in the spirit of \ac{sic}. The residual signal is then fed back into the energy detector for the next decoding iteration.

This iterative decoding process continues until all transmitted messages are recovered or the number of decoded messages no longer improves between consecutive iterations. Let us examine this process in more detail.

\subsubsection{Detection of spreading sequences}

At this stage, we need to determine the set of spreading sequences that were transmitted. Recall that we are utilizing a good \ac{cs} codebook; thus, the natural approach is to examine the correlations between the channel output and all possible spreading sequences (i.e., the columns of the sensing matrix $\dmat{A}$).

One challenge in implementing this algorithm arises from~\eqref{eq:ch5_spread_rkon}, as $\dvec{x}^{\br{i}}$ is unknown at this stage of the decoding process. The optimal strategy, in terms of performance, would be to consider all possible codewords $\dvec{x}^{\br{i}}$, but this approach is computationally infeasible.

To address this, the authors of~\cite{AmalladinnePolar2020} proposed a suboptimal strategy. Assume we are considering a spreading sequence $\dvec{a}$\footnote{$\dvec{a}$ is an arbitrary column of the matrix $\dmat{A}$; we omit the subscript as nothing depends on it.}. We split the channel output into $n' = n_c/g$ groups of length $n_p g$ as follows:
\begin{equation*}
\dvec{y}^T = \left[\dvec{y}_1^T, \dvec{y}_2^T, \ldots, \dvec{y}_{n'}^T\right]^T.
\end{equation*}

Now, to calculate the \emph{score} of $\dvec{a}$ we proceed as follows:
\[
S(\dvec{a}) = \sum\limits_{i=1}^{n'} S_i(\dvec{a}), \quad \text{where} \quad S_i(\dvec{a}) = \max\limits_{\dvec{b} \in \{\pm 1\}^g} \br{\dvec{b} \otimes \dvec{a}}^T \dvec{y}_i. 
\]
The group size is carefully chosen to balance error performance and computational complexity. The columns of $\dmat{A}$ are then sorted in descending order of their scores, and the detector outputs the top $\Ka+\Delta$ sequences, where $\Delta$ is a fixed small non-negative integer. We denote by $\mathcal{D}$ the set of indices of these sequences. Notably, $\mathcal{D}$ serves as an estimate of the set of message parts $\dset{P}$.

\subsubsection{Demodulation and decoding}

In this section, we describe the steps involved in decoding the messages corresponding to the detected sequences.

Recall eq.~\eqref{eq:mmse} and note that we know $\dset{D}$, the estimate of the set of spreading sequences $\dset{P}$. We have
\begin{equation}\label{ed:demod}
\rmat{Y} = \dmat{A}_\dset{D} \tilde{\dmat{X}} + \rmat{Z}',
\end{equation}
where $\rmat{Z}'$ consists of the additive Gaussian noise $\rmat{Z}$ and the interference from users with undetected spreading sequences. In what follows, we work under the assumption that $\dset{P} \subseteq \dset{D}$, implying that $\rmat{Z}' = \rmat{Z}$.

First, we apply the \ac{lmmse} estimator (see details in Appendix~\ref{a:mmse}) to obtain
\[
\hat{{\dmat{X}}} = \dmat{A}^{T}_{\dset{D}} \dmat{R}^{-1} \rmat{Y},
\quad
\dmat{R}=(\dmat{A}_{\dset{D}}\dmat{A}_{\dset{D}}^{T}+\dmat{I}_{{n_p}}).
\]

The \ac{mse} of this estimator is well approximated by
\begin{equation}
\mathbf{\Sigma}= \dmat{I} -\mathbf{A}^T_\mathcal{D}\mathbf{R}^{-1}\mathbf{A}_\mathcal{D},
\end{equation}
where
$$
\mathbf{\Sigma} \approx \operatorname{diag} \left( \sigma^2_{\mathrm{mse}}\br{1}, \sigma^2_{\mathrm{mse}}(2), \ldots, \sigma^2_{\mathrm{mse}}(\mathcal{D}) \right).
$$
In this context, $\sigma^2_{\mathrm{mse}}(j)$ is the \ac{mse} in the estimate of the signal corresponding to sequence~$a_j$ for $j \in \mathcal{D}$.

Assume the case when there are no repeating sequences, i.e. $|\dset{P}| = \Ka$. The next step is to calculate input \ac{llr} sequences for the users' decoders. Assuming the \ac{mse} is Gaussian, we proceed as follows:
\[
\dvec{l}_i = \frac{2}{\sigma^2_i} \hat{\dvec{x}}_i, \quad i \in [\Ka], 
\]
where $\hat{\dvec{x}}_i$ is the $i$-th row of the matrix $\hat{{\dmat{X}}}$.

Then, the calculated \ac{llr}s are passed to users' codes. In~\cite{AmalladinnePolar2020}, the authors utilize single-user polar codes. The decoders can converge to a correct codeword, output a failure, or even produce a wrong decision. For the latter case, we assume the presence of a technique to discard such decisions (e.g., \ac{crc}). We denote by $\dset{D}^* \subseteq \dset{D}$ the collection of indices such that users' codes are decoded successfully.

The \ac{sic} removes the contributions from all the successfully decoded codewords $\hat{\dmat{X}}_{\dset{D}^*}$ from the received signal to compute the residual
\begin{equation*}
\dmat{Y} -\dmat{A}_{\dset{D}^*}\hat{\dmat{X}}_{\dset{D}^*}.
\end{equation*}
This residual is passed for the next decoding iteration. This process continues until all the transmitted messages are recovered successfully or there is no improvement between two consecutive rounds of the iterative process.

\subsection{Discussion and further improvements}\label{sec:ch5_random_spreading_practical}

Let us describe possible improvements suggested in the literature.

\paragraph{Joint decoding.} The scheme in~\cite{AmalladinnePolar2020} relies on single-user codes. Thus, if several users choose the same spreading sequence, the corresponding messages will be missed with a relatively high probability. A straightforward solution is to enlarge the set of spreading sequences, but in this case, we lose the good correlation properties. A better solution is to adopt joint decoding (in the spirit of Section~\ref{sec:dec_ldpc} with \ac{ldpc} or polar codes) to avoid the need for large-sized codebooks for the spreading sequences. The idea is as follows: for those active users whose signature sequences collide, we leverage a joint decoder that does not treat the heavy interference from colliding users as noise. We emphasize that, although the envisioned scheme utilizes joint decoding, it is computationally less burdensome than~\cite{Marshakov2019Polar} because joint decoding is performed only for colliding users.

\paragraph{Power diversity.} It is well-known that power diversity improves the \ac{sic} process. The idea proposed in~\cite{Duman2021} is to divide the columns of $\dmat{A}$ into $m$ groups and assign different power levels to each. There are $l_k$ columns with power level $P_k$ in the $k$-th group, with $k \in [m]$. The power levels for each group must be chosen in such a way that the average power constraint is satisfied. The proposed approach divides the active users into different groups. The encoding procedure in this scheme remains unchanged, while the decoding differs in the preamble detection part: the energy detector is replaced with a covariance-based detector. Namely, we form
\[
\Sigma = \dmat{A}^T_N \dmat{Y} \dmat{Y}^T \dmat{A}_N,
\]
where $\dmat{A}_N$ is obtained by scaling the columns of the codebook to $1$ (after removing signatures of correctly decoded users) to prevent a higher detection probability for the sequences with greater power levels. The covariance-based detector outputs $\Ka + \Delta$ signatures corresponding to the largest diagonal elements of $\Sigma$.

Simulations show that the proposed approach outperforms the existing methods, especially when the number of active users is large.

\paragraph{\Ac{soic}.} The authors of~\cite{PradhanLDPCSC2021} noted that to apply the \ac{sic} procedure, we should decode the user code. At the same time, the information from non-converged decoders is not utilized, even though it may help the overall iterative decoding process. The paper~\cite{PradhanLDPCSC2021} proposes replacing polar codes with \ac{ldpc} codes (since \ac{ldpc} codes are much better in terms of soft output) and utilizing \ac{soic}~\cite{softSIC_wang_poor, softSIC_lampe, softSIC_elgamal}. We also mention the paper~\cite{softSIC_caire}, which describes \ac{de} analysis of iterative \ac{soic} multi-user detection/decoding.

Assume the user code provides us with the extrinsic information vector $\dvec{\gamma} = [\gamma_1, \gamma_2, \ldots, \gamma_{n_c}]$. We calculate the expected values of the modulated symbols and remove them from the channel output, i.e.
\[
\overline{\dvec{x}} = [\mathbb{E}[\rscalar{x}_1 | \gamma_1], \ldots, \mathbb{E}[\rscalar{x}_{n_c} | \gamma_{n_c}]] = [\tanh(\gamma_1/2), \ldots, \tanh(\gamma_{n_c}/2)],
\]
and then subtract them:
\[
\mathbf{Y} -\dmat{A}_{\dset{D}} \overline{{\dmat{X}}}_{\dset{D}}.
\]
This scheme is the state-of-the-art scheme for the \ac{gura}. Notably, the proposed scheme outperforms the \ac{rcb} derived in Theorem~\ref{th:polyanskiy_bound} when the active population is less than $75$ for parameters taken from~\cite{ordentlich2017low}. Furthermore, the proposed scheme outperforms the schemes with joint decoding~\cite{AmalladinnePolar2020} and power diversity~\cite{Duman2021} when the active population exceeds $200$ users. However, in a private discussion, the authors of this approach informed us that they were unable to validate the resulting energy efficiency curve (which will be discussed in Section~\ref{ch5:sec_numerical}).

\paragraph{Hadamard-based spreading with convolutional codes.} Finally, let us highlight an elegant scheme that considers spreading differently. All previously discussed spreading techniques are based on a Kronecker product~\eqref{eq:ch5_spread_rkon}. The authors of~\cite{Duman2021conv} propose a scheme based on the \emph{Hadamard} product, where the encoded codeword spread by the signature will not increase in length. The authors also use a slotted system, where users choose the transmission slot at random. To detect different signatures, dedicated pilot bits are inserted. The choice of the signature is also based on splitting the message into two parts, with a convolutional code considered as the user code. The final decision makes it straightforward to apply the Viterbi algorithm to a joint decoding problem. Despite its simplicity, the proposed solution offers quite good performance.

\section{Comments on same linear codes}\label{sec:same_lin}

As we know (Chapter~\ref{chap4}), there exist codes that can operate in the \ac{gura}. At the same time, the same \ac{bpsk}-modulated linear codes (e.g., \ac{ldpc} or polar codes) perform poorly in the \ac{gura}. However, these codes could still be strong candidates for practical schemes, as binary linear codes are well-studied and have low-complexity soft decoding algorithms.

Thus, as explained throughout this chapter, all low-complexity schemes rely on the following approach:
\begin{itemize}
\item[a)] generate different codes from a single linear code (e.g., through permutations or shifts);
\item[b)] append the transmitted codewords with a preamble that helps the receiver identify the correct permutation or shift.
\end{itemize}

This strategy is indeed effective, but the use of a preamble introduces additional overhead. This naturally leads to the question:
\emph{Are there linear codes that can operate in the \ac{gura}?}

\begin{figure}[t]
\centering
\includegraphics{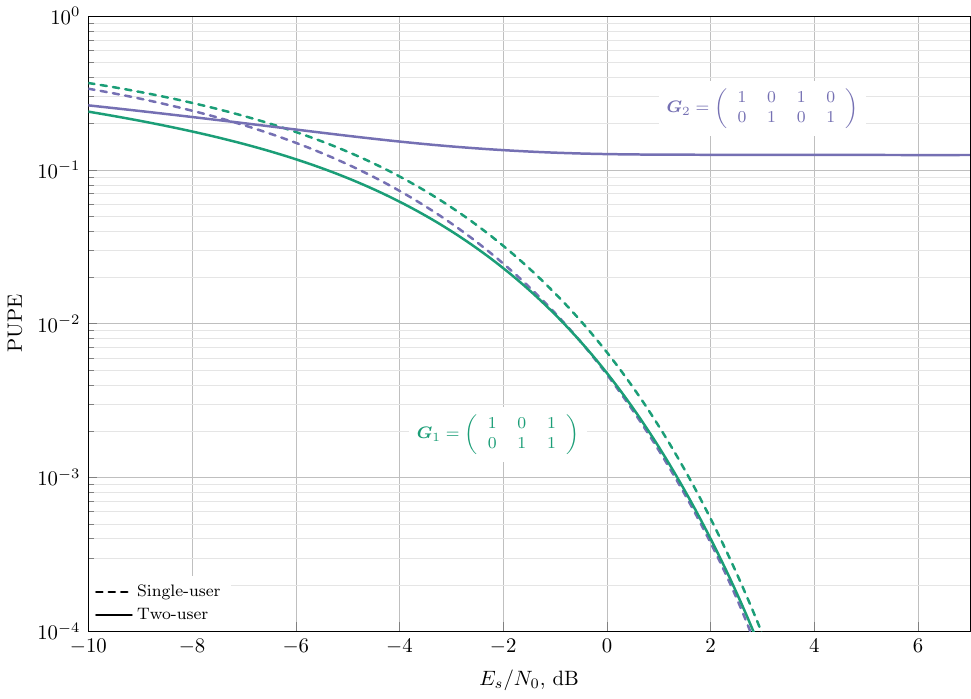}
\caption{\ac{pupe} vs. \ac{snr} for two binary linear block codes over a \ac{bac} with \ac{awgn}. A toy example demonstrates two different linear codes with generator matrices $\dmat{G}_1$ and $\dmat{G}_2$, followed by \ac{bpsk} modulation. These codes exhibit similar performance in a single-user regime but significantly different performance in a multi-user regime. This result is presented in~\cite[Figure 1]{2userBAC}.}
\label{fig:2bac}
\end{figure}

To better understand the problem, let us examine~\Fig{fig:2bac}, which is also presented in~\cite[Figure 1]{2userBAC}. This figure presents the results for the two-user \ac{gura}, namely,
\[
\rvec{y} = \dvec{x}^{\br{1}} + \dvec{x}^{(2)} + \rvec{z},
\]
where $\dvec{x}^{\br{1}}, \dvec{x}^{(2)} \in {\pm \sqrt{P}}^n$ and $\rvec{z} \sim \mathcal{N}(0, I_n)$.

Two linear codes are compared: $\mathcal{C}_1$, with generator matrix $\dmat{G}_1$, which is a $[3,2]$ single-parity-check (SPC) code, and $\mathcal{C}_2$, a $[4,2]$ code with generator matrix $\dmat{G}_2$, which is the direct sum of two repetition codes of length 2. The block error probability of these codes in the absence of interference (i.e., single-user transmission) is depicted as a reference with dashed lines.

For the multi-user case, the decoder outputs the unordered set of codewords ${\hat{\dvec{x}}^{\br{1}}, \hat{\dvec{x}}^{(2)}}$ that maximizes $p(\dvec{y} | \dvec{x}^{\br{1}} + \dvec{x}^{(2)})$. We observe that $\mathcal{C}_2$ performs better in the single-user setting but exhibits a high error floor in the multi-user scenario. Hence, the root cause of the error floor is multi-user interference rather than \ac{awgn}.

Thus, let us focus on a simplified scenario of \ac{ura} in a two-user noiseless \ac{bac}. Despite its simplicity, this setting proves to be particularly rich from a coding theory perspective.

Here, we briefly discuss the results presented in~\cite{2userBAC}, starting with the decoding algorithm. Since we are not considering noise, there is no need to work with the alphabet $\{\pm \sqrt{P}\}$; instead, we use the binary alphabet $\{0,1\}$. We now note that \ac{ml} decoding over the \ac{bac} reduces to an erasure decoding problem. Assume we receive $\dvec{y} \in \{0,1,2\}^n$. Considering the $j$-th element of $\dvec{y}$, we have three possible cases:
\begin{enumerate}
\item ${y}_j = 0 \implies {x}^{\br{1}}_j = 0 \:\: \text{and} \:\: {x}^{(2)}_j = 0$; 
\item ${y}_j = 2 \implies {x}^{\br{1}}_j = 1 \:\: \text{and} \:\: {x}^{(2)}_j = 1$;
\item ${y}_j = 1 \implies \left({x}^{\br{1}}_j, {x}^{(2)}_j\right) \in \{ (0, 1), \:\: (1, 0) \}$;
\end{enumerate}
To handle the decoding process, we introduce an auxiliary vector $\tilde{\dvec{y}}$, defined as follows:
\begin{enumerate}
\item $\tilde{y}_j = 0$ if $y_j=0$,
\item $\tilde{y}_j = 1$ if $y_j=2$,
\item $\tilde{y}_j = *$ if $y_j=1$.
\end{enumerate}
Notably, uncertainty arises only in case (3). Consequently, the decoder begins by constructing the list $\mathcal{L}$ of all codewords $\dvec{x}$ that are compatible with the channel output. Specifically, defining the erasure positions as $\mathcal{E} = \{j : y_j = 1\}$ and the non-erasure positions as $\mathcal{E}^c = [n] \backslash \mathcal{E}$, we obtain
\[
\mathcal{L}' = \{\dvec{x} : \dvec{x}_{\mathcal{E}^c} = \tilde{\dvec{y}}_{\mathcal{E}^c}\}.
\]

The final step involves searching for all unordered codeword pairs $\{\dvec{x}_1, \dvec{x}_2\} \in \mathcal{L}' \times \mathcal{L}'$ whose integer sum matches $\dvec{y}$. If a unique solution exists, the decoder outputs this pair; otherwise, if multiple solutions exist, the decoder selects one at random.

For the case of linear codes, the first step is straightforward. Let  $\dmat{H}$ be the parity-check matrix of the users' code. Then, $\mathcal{L}'$ is the set of solutions to the following system of linear equations:
\begin{equation}\label{eq:2bac}
\dmat{H}_{\mathcal{E}} \dvec{x}^T_{\mathcal{E}} = \dmat{H}_{\mathcal{E}^c} \dvec{x}^T_{\mathcal{E}^c}. 
\end{equation}

We note that this system always has at least two solutions, implying that $\rank(\dmat{H}_{\mathcal{E}}) < |\mathcal{E}|$. One possible approach to solving this system is the \emph{pivoting} strategy: setting an arbitrary element (pivot) of $\dvec{x}_{\mathcal{E}}$ to a value in $\{0,1\}$ and then solving the resulting system. If $\rank(\dmat{H}_{\mathcal{E}}) = |\mathcal{E}| -s$, then there are $2^{s-1}$ solutions. The authors of~\cite{2userBAC} make the following observations:
\begin{enumerate}
\item Any solution of~\eqref{eq:2bac} participates in exactly one valid pair.
\item If $\dvec{x}^{\br{1}} \not\in \mathcal{L}$, then $\dvec{x}^{(2)}$ is also not in $\mathcal{L}$.
\item The probability of decoding error is given by
$$
P_e = 1 - \sum\limits_{s=1}^{|\mathcal{E}|} 2^{-(s-1)} \Pr[\rank(\dmat{H}_{\mathcal{E}}) = |\mathcal{E}| - s].
$$
\item The number of valid codeword pairs computed by the decoder depends on the transmitted pair only through the support set of their difference.
\end{enumerate}

Let us formulate an auxiliary lemma from~\cite{2userBAC}:
\begin{lemma}[All-zero codeword]\label{lemma:all_zero}
Consider two users transmitting over the \ac{bac} using an $[n,k]$ binary linear block code $\mathcal{C}$, and denote the transmitted codewords as $\dvec{x}^{\br{1}}$ and $\dvec{x}^{(2)}$. Then,
\[
\Pr\left[ \rvec{x}^{\br{1}} \not\in \mathcal{L} | \rvec{x}^{\br{1}} = \dvec{c} \right] = \Pr\left[ \rvec{x}^{\br{1}} \not\in \mathcal{L} | \rvec{x}^{\br{1}} = \dvec{0} \right], \quad \forall \dvec{c} \in \mathcal{C}. 
\]
\end{lemma}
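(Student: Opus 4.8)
The plan is to prove the lemma by a linearity (coset-shift) argument: I would reduce the conditional error probability to a functional that depends on the transmitted pair $\br{\dvec{x}^{\br{1}},\dvec{x}^{\br{2}}}$ only through the support of their difference, and then exploit the shift-invariance of the uniform distribution on a linear code. Since the two users draw their messages independently and uniformly, each transmitted codeword is uniform over $\mathcal{C}$, and conditioning on $\rvec{x}^{\br{1}}=\dvec{c}$ leaves $\rvec{x}^{\br{2}}$ uniform over $\mathcal{C}$.

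First I would make the error event precise. Because the actual transmitted $\dvec{x}^{\br{1}}$ is always compatible with $\dvec{y}$, the event $\{\rvec{x}^{\br{1}}\not\in\mathcal{L}\}$ means that the randomly selected valid pair fails to contain the first user's codeword. Invoking the facts that each solution of~\eqref{eq:2bac} lies in exactly one valid pair and that missing $\dvec{x}^{\br{1}}$ forces missing $\dvec{x}^{\br{2}}$, the true pair $\{\dvec{x}^{\br{1}},\dvec{x}^{\br{2}}\}$ is one of the $2^{s-1}$ valid pairs, where $s=\card{\mathcal{E}}-\rank\br{\dmat{H}_{\mathcal{E}}}$ and $\mathcal{E}$ is the erasure set. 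The decoder chooses uniformly among these pairs, so conditioned on the pair,
\[
\Pr\brs{\rvec{x}^{\br{1}}\not\in\mathcal{L}\mid \dvec{x}^{\br{1}},\dvec{x}^{\br{2}}}=1-2^{-(s-1)}.
\]

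Next I would record the key structural observation: the erasure set is exactly $\mathcal{E}=\supp\br{\dvec{x}^{\br{1}}\oplus\dvec{x}^{\br{2}}}$, since $y_j=1$ precisely when the two codewords differ in position $j$. Hence $\mathcal{E}$, the quantity $s=\card{\mathcal{E}}-\rank\br{\dmat{H}_{\mathcal{E}}}$, and therefore the conditional error probability above, all depend on the pair only through the difference $\dvec{d}=\dvec{x}^{\br{1}}\oplus\dvec{x}^{\br{2}}$. Writing $g\br{\dvec{d}}=1-2^{-(s(\supp(\dvec{d}))-1)}$, this yields $\Pr\brs{\rvec{x}^{\br{1}}\not\in\mathcal{L}\mid \dvec{x}^{\br{1}}=\dvec{c}}=\mathbb{E}_{\rvec{x}^{\br{2}}}\brs{g\br{\dvec{c}\oplus\rvec{x}^{\br{2}}}}$, with the expectation over $\rvec{x}^{\br{2}}\sim\mathrm{Unif}(\mathcal{C})$. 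The conclusion then follows from shift-invariance: for any fixed $\dvec{c}\in\mathcal{C}$ the map $\dvec{x}\mapsto\dvec{c}\oplus\dvec{x}$ is a bijection of $\mathcal{C}$ onto itself (because $\dvec{c}\oplus\mathcal{C}=\mathcal{C}$ when $\dvec{c}\in\mathcal{C}$), so $\dvec{c}\oplus\rvec{x}^{\br{2}}$ is uniform over $\mathcal{C}$ independently of $\dvec{c}$; thus the expectation is constant in $\dvec{c}$ and equals its value at $\dvec{c}=\dvec{0}$, which is the claim.

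The main obstacle is not the shift step (routine once the setup is in place) but cleanly justifying that the conditional error probability is a deterministic function of $\supp(\dvec{d})$. This requires carefully importing the four observations from~\cite{2userBAC} — that the solution set of~\eqref{eq:2bac} partitions into valid pairs, that the true pair is among them, that their count is governed by $s$, and that the number of valid pairs depends only on the difference's support — and verifying that the decoder's tie-breaking randomness is independent of the pair, so that the conditional expectation factorizes exactly as above. I would therefore either establish these observations first or cite them as known, taking care that the random selection contributes a clean factor $2^{-(s-1)}$.
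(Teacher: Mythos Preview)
Your proposal is correct. The paper does not actually prove this lemma --- it cites it from~\cite{2userBAC} --- but the argument you sketch is precisely the one implied by the four observations the paper lists immediately before the lemma statement (in particular observation~4, that the number of valid pairs depends on the transmitted pair only through the support of their difference), combined with the shift-invariance of the uniform distribution on a linear code.
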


Note that the statement remains valid if we exchange $\dvec{x}^{\br{1}}$ and $\dvec{x}^{(2)}$. Lemma~\ref{lemma:all_zero} implies that
$$
P_e = \Pr\left[ \rvec{x}^{\br{1}} \not\in \mathcal{L} | \rvec{x}^{\br{1}} = \dvec{0} \right],
$$
i.e., we can evaluate the \ac{pupe} of a binary linear block code under \ac{ml} decoding by the special case where one of the two users transmits the all-zero codeword.

Now, consider the case where $\rvec{x}^{\br{1}} = \dvec{0}$. In this case, we have $\mathcal{E} = \supp(\dvec{x}^{(2)})$, and~\eqref{eq:2bac} has exactly two solutions if and only if $\dvec{x}^{(2)}$ is \emph{minimal}.

\begin{definition}
The codeword $\dvec{c} \in \mathcal{C}$ is called \emph{minimal} if it does not cover any other codeword, i.e., there does not exist $\dvec{c}' \in \mathcal{C}$ such that $\supp(\dvec{c}') \subseteq \supp(\dvec{c})$.
\end{definition}

Let $\mathcal{M}(\mathcal{C})$ denote the set of minimal codewords in $\mathcal{C}$. We can formulate the following statement:
\[
\Pr[\rank(\dmat{H}_{\mathcal{E}}) < |\mathcal{E}| - 1] = \Pr[\dvec{x}^{(2)} \not\in \mathcal{M}(\mathcal{C})].
\]
As a result, minimal codes achieve zero $P_e$ over a two-user \ac{bac} with the same-codebook constraint.

Returning to~\Fig{fig:2bac}, we observe that the $[3, 2]$ \ac{spc} code is a minimal code, meaning it consists only of minimal codewords. At the same time, the $[4, 2]$ code includes a non-minimal all-one codeword, which is the root cause of the error floor.

\begin{theorem}[Converse for two-user \ac{bac}]\label{th:bac_converse}
Consider two users transmitting over the \ac{bac} with a $[n,k]$ binary linear block code. Then, we have
\[
P_e \geq \frac{1}{2}\left( 1 - \frac{n}{2k-2}\right).
\]
Thus, the maximal achievable symmetric rate is $R \leq 1/2$.
\end{theorem}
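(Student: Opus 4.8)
The plan is to compute the probability of correct decoding exactly under the erasure/\ac{ml} decoder described just above, and then reduce the entire statement to one extremal bound on the number of minimal codewords.

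First I would invoke Lemma~\ref{lemma:all_zero} to fix $\dvec{x}^{(1)}=\dvec{0}$ and take $\dvec{x}^{(2)}$ uniform over $\mathcal{C}$, so the received word is $\dvec{y}=\dvec{x}^{(2)}$. By the third observation preceding the theorem, for a transmitted $\dvec{c}\neq\dvec{0}$ the number of unordered codeword pairs whose integer sum is $\dvec{c}$ equals $2^{s(\dvec{c})-1}$, where $s(\dvec{c})=\dim\{\dvec{v}\in\mathcal{C}:\supp(\dvec{v})\subseteq\supp(\dvec{c})\}$, and the decoder is correct with probability $2^{-(s(\dvec{c})-1)}$. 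Hence
\[
P_{\mathrm{c}}\triangleq 1-P_e=\frac{1}{2^{k}}\Big(1+\sum_{\dvec{c}\neq\dvec{0}}2^{-(s(\dvec{c})-1)}\Big).
\]
A nonzero codeword is minimal exactly when $s(\dvec{c})=1$, contributing $1$, while every non-minimal nonzero codeword has $s(\dvec{c})\ge 2$ and contributes at most $\tfrac12$. Writing $\mathcal{M}$ for the set of minimal codewords, this gives the clean estimate $P_{\mathrm{c}}\le \tfrac12+\tfrac{|\mathcal{M}|+1}{2^{k+1}}$.

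The crux, and the step I expect to be the main obstacle, is to bound $|\mathcal{M}|+1\le \tfrac{n}{k-1}2^{k-1}$; everything else is bookkeeping. I would prove this by a double count. Deleting any coordinate on which $\mathcal{C}$ is identically zero only strengthens the claim (it shrinks $n$), so assume there is none. The local ingredient is that a minimal codeword is ``light'': if $\dvec{c}$ is minimal then the subcode supported on $\supp(\dvec{c})$ is exactly $\{\dvec{0},\dvec{c}\}$, so by the projection--shortening dimension identity the projection of $\mathcal{C}$ onto the complement of $\supp(\dvec{c})$ has dimension $k-1$; a projection onto $n-\wt{\dvec{c}}$ coordinates cannot exceed its own length, whence $n-\wt{\dvec{c}}\ge k-1$. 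Summing over $\mathcal{M}$ and exchanging the order of summation,
\[
(k-1)\,|\mathcal{M}|\le \sum_{\dvec{c}\in\mathcal{M}}\big(n-\wt{\dvec{c}}\big)=\sum_{j=1}^{n}\big|\{\dvec{c}\in\mathcal{M}:c_j=0\}\big|\le n\,(2^{k-1}-1),
\]
where the last inequality uses that $\{\dvec{c}\in\mathcal{C}:c_j=0\}$ is an index-$2$ subcode of size $2^{k-1}$ together with $\dvec{0}\notin\mathcal{M}$. Since $n\ge k$, this yields $|\mathcal{M}|\le \tfrac{n(2^{k-1}-1)}{k-1}\le \tfrac{n}{k-1}2^{k-1}-1$, i.e.\ $|\mathcal{M}|+1\le \tfrac{n}{k-1}2^{k-1}$.

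Finally I would substitute this into the estimate for $P_{\mathrm{c}}$ to obtain $P_{\mathrm{c}}\le \tfrac12+\tfrac{n}{4(k-1)}$, which rearranges to $P_e\ge \tfrac12\big(1-\tfrac{n}{2k-2}\big)$; the rate claim follows because a positive lower bound requires $n<2k-2$, so driving $P_e\to 0$ forces $\tfrac{n}{2k-2}\to 1$ and hence $R=k/n\le \tfrac12$. The two delicate points I would watch are: (i) the degenerate term $\dvec{c}=\dvec{0}$, which is precisely why the ``$+1$'' has to be absorbed and why one must exclude $\dvec{0}$ from the per-coordinate count (giving $2^{k-1}-1$ rather than $2^{k-1}$); and (ii) a careful justification of the identity $\dim\mathcal{C}=\dim(\mathcal{C}\text{ projected onto }T)+\dim(\mathcal{C}\text{ shortened to }\overline{T})$ that powers the lightness bound, which follows from viewing the projection onto $\overline{T}$ as a linear map whose kernel is the subcode supported on $T$.
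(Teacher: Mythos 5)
Your proof is correct and follows essentially the same route as the paper's: reduce to $\dvec{x}^{(1)}=\dvec{0}$ via Lemma~\ref{lemma:all_zero}, use that non-minimal codewords are decoded correctly with probability at most $\tfrac12$, invoke the weight bound $\wt{\dvec{c}}\le n-k+1$ for minimal codewords, and average the weight over the code. Your double count $(k-1)\,|\mathcal{M}|\le\sum_{\dvec{c}\in\mathcal{M}}\bigl(n-\wt{\dvec{c}}\bigr)\le n\,2^{k-1}$ is precisely the paper's application of Markov's inequality to $n-\wt{\dvec{x}^{(2)}}$ (together with $\mathbb{E}\bigl[\wt{\dvec{x}^{(2)}}\bigr]=n/2$) written in combinatorial form; the only genuine additions are your correct justifications of the two facts the paper uses without proof, namely the exact per-codeword success probability $2^{-(s-1)}$ and the lightness of minimal codewords via the projection--shortening dimension identity.
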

\begin{proof}
Let $\dvec{x}^{\br{1}} = \dvec{0}$. We have
\begin{flalign*}
P_e &\geq \frac{1}{2}\Pr\left[\dvec{x}^{(2)} \not\in \mathcal{M}(\mathcal{C})\right]\\
&\geq \frac{1}{2}\Pr\left[\wt{\dvec{x}^{(2)}} > n-k+1\right]\\
&= \frac{1}{2}\left(1 - \Pr\left[n-\wt{\dvec{x}^{(2)}} \geq k-1\right]\right)\\
&\geq \frac{1}{2}\left(1 - \frac{\mathbb{E}\left[n-\wt{\dvec{x}^{(2)}}\right]}{k-1}\right) \quad \text{(Markov's inequality)}\\
&= \frac{1}{2}\left( 1 - \frac{n}{2k-2}\right).
\end{flalign*}
\end{proof}

\begin{remark}
The following results show that transmission over the two-user \ac{bac} with linear block codes is fundamentally limited to a symmetric rate no larger than $\nicefrac{1}{2}$, in contrast to the maximum symmetric rate of $\nicefrac{3}{4}$ achievable by nonlinear block codes over the two-user \ac{bac}.
\end{remark}

\begin{remark}
Note that the result stated in Theorem~\ref{th:bac_converse} extends to cosets of the linear code.
\end{remark}

The papers~\cite{2userBAC, 2userBAC_ldpc} also address the question of whether it is possible to achieve $R=\nicefrac{1}{2}$ via iterative decoding. It appears that for \ac{ldpc} codes, the choice of the pivot\footnote{A good pivot is a variable node associated with a received "erasure" symbol for which revealing its value allows recovering both transmitted messages, up to a vanishingly small fraction of residual erasures, by simple \ac{bch} decoding.} is of critical importance. If the pivot is selected randomly, then any irregular \ac{ldpc} code ensemble has a non-vanishing error probability. If the pivot is selected optimally, then it is possible to attain vanishing \ac{pupe}. The paper~\cite{2userBAC_ldpc} continues the investigation. In particular, the authors analyze the expected fraction of good pivots, compute optimized degree distributions, and provide a simple multi-edge type \ac{ldpc} code construction that can provably achieve the two-user unsourced \ac{bac} limit for linear codes.

In this section, we address the question of constructing the same codebook codes for the \ac{ura}. All the schemes described in this chapter utilize preambles chosen from a small CS codebook. The preambles can be used in two different ways:
\begin{itemize}
\item Same codes that are constructed as a concatenation of preambles and different linear codes (the interleaver or shift is chosen based on the preamble).
\item Same codes that are simply a concatenation of preambles.
\end{itemize}

The main disadvantages are as follows: (a) preambles introduce additional overhead; (b) we cannot significantly increase the number of preambles ($2^{15}$ is an upper bound due to complexity considerations). 

Thus, an important question arises: what are alternative approaches to constructing low-complexity same codes? This section considers a relatively simple scenario of a two-user \ac{bac}, but even in this case, it becomes evident that users cannot effectively utilize well-developed linear codes and their cosets. Non-linear codes are necessary to achieve the capacity ($C = \nicefrac{3}{4}$) of the two-user unsourced \ac{bac}. 

The design of such codes -- where the main challenge is developing low-complexity decoders -- remains an open and challenging problem. We can reformulate the problem as follows: we need a large CS codebook with a polynomial-time recovery algorithm. Some progress in this direction has been made in~\cite{CHIRRUP}.

\section{Numerical comparisons}\label{ch5:sec_numerical}
\begin{figure}
\centering
\includegraphics{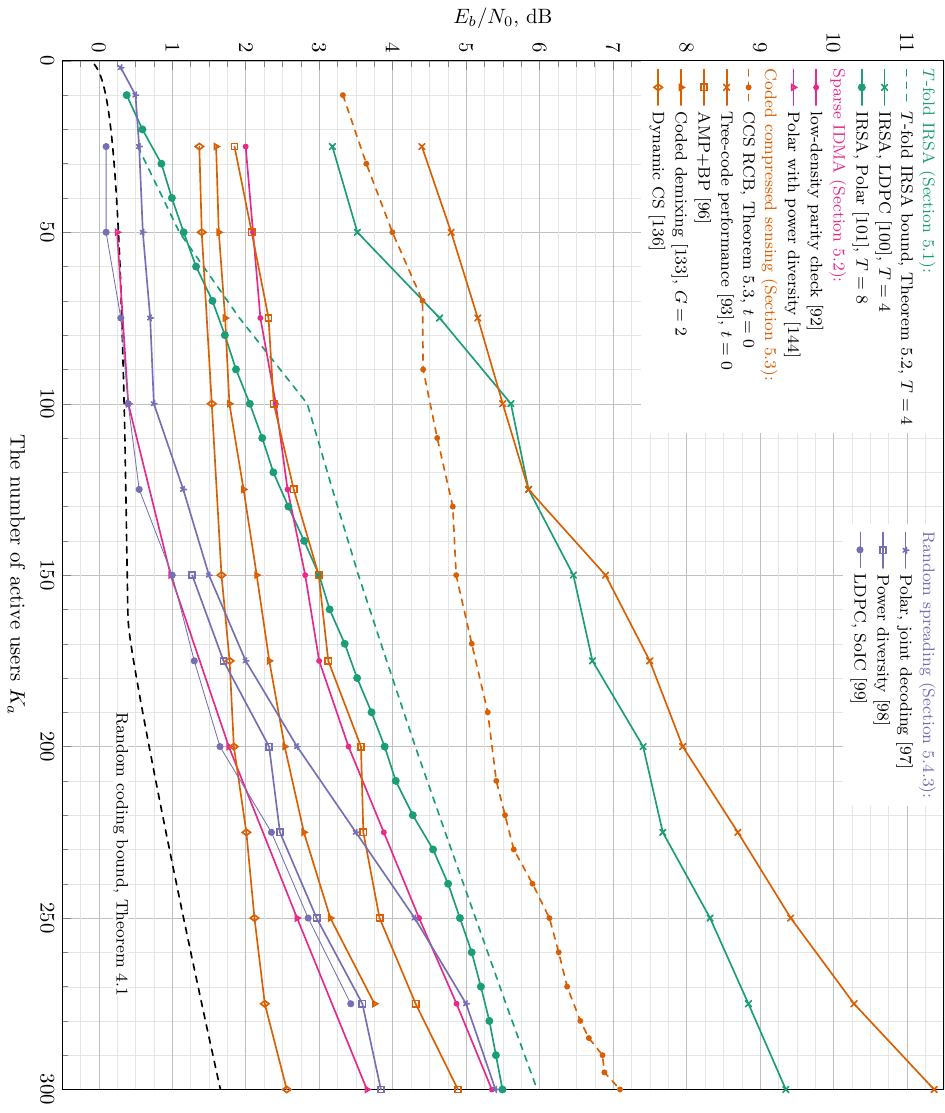}
\caption{The minimum $E_b/N_0$ required to achieve an error rate below 5\% as a function of the active user count. Theoretical bounds are represented by dashed lines, while practical schemes are shown with solid lines. Different colors correspond to various families of algorithms described in Sections~\ref{ch5:sec:coded_sa}, \ref{sec:ch5:idma}, \ref{chap5:ccs_description}, and \ref{sec:ch5:randon_spreading}. In a private discussion, the authors of~\cite{PradhanLDPCSC2021} informed us that they could not validate the resulting curve. Hence, we mark it with a thin line.}\label{fig:ch5:mac_numerical}
\end{figure}

In this section, we present the results of numerical experiments that estimate the energy efficiency of the solutions described above. All previously introduced schemes assume a frame-synchronous transmission system. Energy efficiency is defined as the minimum energy spent per transmitted information bit, $E_b/N_0$, under the maximum tolerable \ac{pupe}, as defined in~\eqref{eq:ch3_energy_efficiency}.

Starting from~\cite{polyanskiy2017perspective, ordentlich2017low}, a commonly used scenario for \ac{awgn} \ac{ura} assumes a frame length of $n=3\times 10^4$ real channel uses, $k=100$ information bits, and a maximum tolerable \ac{pupe} of $\varepsilon = 0.05$.

A numerical comparison is presented in~\Fig{fig:ch5:mac_numerical} alongside state-of-the-art results. This figure illustrates the energy efficiency $E_b/N_0$ of different schemes proposed in the literature as a function of the number of active users, $K_a$. The latter term represents the number of messages sent within a single frame. This figure omits the first low-complexity scheme proposed in~\cite{ordentlich2017low} because it demonstrated relatively low energy efficiency.

We focus on four previously mentioned groups of algorithms. The first group is based on the $T$-fold \ac{irsa} (see Section~\ref{sec:ch5:num_results_irsa}). The second group follows a sparse \ac{idma} approach (Section~\ref{sec:ch5:num_results_idma}). The third group is based on a \ac{ccs} approach (Section~\ref{sec:ch5:num_results_ccs}). Finally, we consider a family of random spreading algorithms presented in Section~\ref{sec:ch5:num_results_rspread}. Different families of algorithms are represented by different colors. As a reference, we have also included Polyanskiy's achievability bound (presented in Theorem~\ref{th:polyanskiy_bound}) and the \ac{irsa} achievability bound, formulated in Theorem~\ref{th:irsa_t_bound}. Theoretical bounds are marked by dashed lines.

\subsection{\texorpdfstring{$T$}{T}-fold \acl{irsa}-based schemes}\label{sec:ch5:num_results_irsa}

\subsubsection{\texorpdfstring{$T$}{T}-fold \acl{irsa} achievability bound}

Let us start with the achievability bound for \ac{irsa} and numerically evaluate Theorem~\ref{th:irsa_t_bound}. The main objective is to select the polynomial $\Lambda$ (see~\eqref{eq:ch5:irsa_poly}) and determine the number of slots, or equivalently, the slot length $n'$:
$$
\left\{\Lambda\left(x\right), n'\right\} = \argmin_{\Lambda\left(x\right), n'}\left(\frac{E_b}{N_0}: P_e \leq \varepsilon \right),
$$
where $E_b/N_0$ is defined in~\eqref{eq:ebno_tfold_irsa}. The optimization procedure was performed separately for each value of $K_a$. The resulting energy efficiency is represented by a green dashed line in~\Fig{fig:ch5:mac_numerical} and corresponds to $T=4$. The number of slots varies with $K_a$, as shown in~Appendix~\ref{a:irsa} for $T=1, 2, 4$. 

Note that transitioning from a random coding approach to the $T$-fold \ac{irsa} results in a performance loss. However, all schemes based on the $T$-fold ALOHA with the same value of $T$ can be considered as a reference.

\subsubsection{\texorpdfstring{$T$}{T}-fold \ac{irsa} with \ac{ldpc} codes and a joint decoding algorithm}

Consider the joint decoding of \ac{ldpc} codes, as presented in Section~\ref{sec:dec_ldpc}. The authors of this scheme~\cite{vem2019user} performed an optimization search for system parameters as follows. Recall that this scheme relies on the \ac{irsas} approach (see~\Fig{fig:irsa-s}), where $k_p$ out of $k$ information bits are allocated to the preamble (required for proper permutation to enhance joint decoding), while the remaining $k_c = k - k_p$ information bits are encoded using an \ac{ldpc} code for each sequence transmitted within a slot. 

The authors found that a sensing matrix based on a \ac{bch} code exhibited better decoding performance compared to randomly generated sensing matrices. Additionally, they optimized the \ac{ldpc} code length by applying the \ac{rcb} from Theorem~\ref{th:polyanskiy_bound} to estimate the slot-level decoding error. Given the optimal slot length, a regular $(3,6)$ \ac{ldpc} code was selected. The resulting parameters of the scheme are: $k_p = 9$, $k_c = 91$, a preamble length of $n_p = 63$, and an \ac{ldpc} code length of $n_c = 394$, followed by \ac{bpsk} modulation. 

The search for the polynomial $\Lambda\br{x}$ in~\eqref{eq:ch5:irsa_poly} was restricted to a single parameter $\beta$, with $\Lambda\br{x} = \beta x + (1 - \beta)x^2$. The authors also limited the maximum number of replicas to two. The reasoning for this is provided in Appendix~\ref{a:irsa}: higher values of $T$ result in optimal polynomials in~\eqref{eq:ch5:irsa_poly} having fewer transmission attempts (i.e., lower-degree polynomials). Recall that the authors considered $T=4$, and the polynomials presented in Table~\ref{DE_T4} are of degree two.

The energy efficiency of this scheme is depicted in~\Fig{fig:ch5:mac_numerical} by a green line with $\times$-shaped markers and corresponds to a maximum of $T=4$ simultaneous transmissions per slot. Note that there is a gap between this scheme and the achievability bound presented in Theorem~\ref{th:irsa_t_bound}. Nevertheless, it shows a significant improvement over~\cite{ordentlich2017low}, exceeding $10$ dB for $K_a = 300$.

\subsubsection{\texorpdfstring{$T$}{T}-fold \ac{irsa} with polar codes and a joint decoding algorithm}

The next improvement was made by replacing \ac{ldpc} codes with polar codes~\cite{Marshakov2019Polar}, as presented in Section~\ref{sec:dec_polar}. Moreover, switching to polar codes resulted in higher values of $T$ that a joint decoder can successfully resolve. Inspired by~\cite{AmalladinneJoint2019}, the \ac{irsaf} (see~\Fig{fig:irsa-f}) was utilized. This preamble has a length of $n_p = 2000$ and carries $k_p = 15$ information bits, which are dedicated to recovering the transmission graph and frozen-bit patterns for each polar code. The users employ $(512, 85)$ polar coset codes. 

The slot length is fixed for all $K_a$ values, as it is challenging to assign an arbitrary length to the polar code. To choose the frozen positions, we utilized the proposed design procedure. We selected a common set of frozen tuples for all users, and the frozen bit values were chosen at random for each user, since using identical frozen values leads to poor performance. The list size is $\ell = 64$. At the same time, $(512, 85)$ polar coset codes with the \ac{jsc} decoder can recover collisions of order up to $T = 8$.

The resulting energy efficiency is presented in~\Fig{fig:ch5:mac_numerical} by a green line with circular markers. This scheme outperforms the \ac{ldpc}-based scheme presented above and is very close to the \ac{irsa} achievability bound~\cite{Glebov2019} ($T = 4$). Moreover, this scheme closely matches the sophisticated scheme from Section~\ref{sec:ch5:idma}.

\subsection{Sparse \acl{idma}-based schemes}\label{sec:ch5:num_results_idma}

The energy efficiency of the sparse \ac{idma} scheme described in Section~\ref{sec:ch5:idma} is presented in~\Fig{fig:ch5:mac_numerical} by a magenta line with circular markers. The authors chose $k_p = 15$ bits for the preamble in the \ac{irsaf} scheme, and the remaining $k_c = 85$ bits were encoded using the \ac{ldpc} code. The structure of the \ac{ldpc} code was carefully optimized for each number of active users $K_a$. The choice of the coding rate is:
\begin{itemize}
    \item $0.125$ for $K_a = 25, \ldots, 125$,
    \item $0.25$ for $K_a = 150, \ldots, 200$,
    \item $0.4$ for $K_a = 225, \ldots, 300$.
\end{itemize}
Each codeword appeared in the channel twice.

The length of the preamble is $n_p = 2000$. The \ac{cs} design matrix uses $\nicefrac{n_p}{2}$ randomly chosen rows from the \ac{dft} matrix. Then, the imaginary and real parts of this matrix are stacked together to form a real-valued sensing matrix. The resulting matrix is a good choice for \ac{cs}-based preamble detection because it satisfies the \ac{rip} with high probability~\nocite{Duman2024ODMA}\cite{Haviv2016RIP}.

The resulting energy efficiency shows better performance compared to the \ac{irsa}-based scheme. The plot presented in~\Fig{fig:ch5:mac_numerical} corresponds to the case without \ac{sic}. The use of \ac{sic} provides an improvement of about $0.5$ dB for small values of $K_a \lesssim 100$ and approximately $0.2$ dB for larger $K_a$ values.

This scheme was further improved in~\cite{Duman2024ODMA} and is known as \acf{odma}. Compared to the previously described scheme, there are three main differences. The first is that the \ac{irsa}-like scheme has been replaced by a single transmission without replicas. The second is the replacement of \ac{ldpc} codes with polar codes. The third difference is power diversity: instead of transmitting with a fixed energy, a random energy is chosen by each user such that the average power constraint is satisfied.

The positions where \ac{bpsk}-modulated symbols of the polar-encoded message are transmitted are derived from the preamble. The choice of the preamble is given by the first $k_p$ bits of the information message. Preamble detection is solved by a \ac{cs} decoding algorithm.
The authors have chosen the following parameters:
\begin{itemize}
    \item Preamble codebook size $2^{k_p}$, where $k_p = 12$ for $K_a = 50$, $k_p = 13$ for $K_a \leq 150$, and $k_p = 14$ otherwise.
    \item Polar codes of length $n_c = 512$ bits for $K_a \leq 200$ and $n_c = 256$ otherwise.
\end{itemize}

The authors considered the \ac{crc} of size $16$ bits and the \ac{scl} decoding algorithm with a list size of 128. The performance of this scheme is represented by a magenta line with triangular markers.

\subsection{\Acl{ccs}-based schemes}\label{sec:ch5:num_results_ccs}

\subsubsection{Numerical evaluation of Theorem~\ref{thm:rcb}}

Let us start the numerical analysis of the \ac{ccs}-based schemes with the \ac{rcb} from Theorem~\ref{thm:rcb} for $t=0$, presented by a dashed orange line. The number of slots in a frame was optimized for each number of active users, and an \ac{omp}~\cite{Cai2011} decoding algorithm was utilized. The inner codebook is Gaussian and has a size of $2^{15}$ for each slot. The resulting energy efficiency is presented in~\Fig{fig:ch5:mac_numerical} by a dashed orange line.

\subsubsection{\texorpdfstring{$t$}{t}-tree code numerical evaluation}

Next, let us consider the performance of the tree-code scheme presented in Section~\ref{sec:treecode_practical} with $t=0$. Recall that the case $t=0$ was first considered in~\cite{codedCS2020}, and we follow these results in our presentation. The inner codebook size is $2^{15}$ for $K_a \geq 150$, and $2^{14}$ for $K_a < 150$. The inner codebook is a subcode of the $(2047, 23)$ \ac{bch} code. The slot length equals $2047$. The inner code was decoded by the \ac{nnls} algorithm. The number of slots is fixed and equals $11$. Hence, the frame length equals $22517$. The authors suggested transmitting $k=75$ information bits, which results in a similar code rate of $100/30000$ for fair comparison.

The main problem was the allocation of information bits to each slot. When decoding each subsequent slot, all codeword candidates must be kept, and this number may dramatically grow~\cite{Frolov2022TCOM}. Moreover, the final performance of the system may drastically degrade in the case of inappropriate information bit allocation. The information bit allocation has been carefully optimized, and the resulting performance is shown in~\Fig{fig:ch5:mac_numerical} by an orange line with $\times$-shaped markers. The distance between the \ac{rcb} and the scheme is about $2$ dB at $K_a = 150$, and it gradually increases.

\subsubsection{Further improvements}

More sophisticated schemes that include interaction between inner and outer codes show significantly better energy efficiency results. The results from~\cite{AmalladinneAMP2021} are shown by an orange line with square markers, and the results from~\cite{Amalladinne2022} are shown by triangle markers.

It is worth mentioning the last scheme, which does not use an outer code~\cite{Truhachev2024}. The outer code requires additional redundancy. The proposed solution eliminates this redundancy and makes the scheme the best-performing one when $K_a > 200$.

\subsection{Random spreading-based solutions}\label{sec:ch5:num_results_rspread}

In this section, we provide the numerical results of the practical schemes described in Section~\ref{sec:ch5_random_spreading_practical}, namely joint decoding of polar codes, power diversity, and \ac{soic}.

\subsubsection{Joint decoding of polar codes}

Let us start with polar codes and spreading~\cite{AmalladinnePolar2020}. As in the case of \ac{irsaf}, the authors propose to select the spreading sequence based on the first $k_p$ information bits. The length of the spreading sequence $n_s$ generates a trade-off between the interference mitigation capability and the single-user code rate, because the frame length $n = n_p \times n_c$, and $n_c$ is the code length. The optimal parameters for each number of active users are presented in Table~\ref{tab:rs_polar:code_parameters}. As soon as the \ac{crc}-aided polar codes have been selected, the \ac{crc} size is also specified. The resulting energy efficiency is presented in~\Fig{fig:ch5:mac_numerical} by a blue line with star-shaped markers.

\begin{table}
\caption{This is a summary of the encoding parameters used in~\cite{AmalladinnePolar2020} as functions of the number of active users.}
\label{tab:rs_polar:code_parameters}
\begin{center}
\begin{tabular}{ |c|c|c|c|c|c|c|c| } 
\hline
{$\Ka$} &$k_p$ &$k_c = k - k_p$ & $n_p$ & $n_c$ & {List size} & {\ac{crc} size}\\
\hline
$25-75$ & $9$ & $91$ & $29$ & $1024$ & $32$ & $16$ \\
\hline
$100-175$ & $8$ & $92$ & $29$ & $1024$ & $32$ & $16$ \\
\hline 
$200-225$ & $9$ & $91$ & $59$ & $512$ & $32$ & $12$ \\
\hline
$250-300$ & $10$ & $90$ & $117$ & $256$ & $32$ & $10$ \\
\hline
\end{tabular}
\end{center}
\end{table}

For a fixed value of $\Ka$, the quantities $n_c$ and $n_p$ are optimized empirically to minimize the energy per bit required to achieve a target probability of error. The spreading sequences were generated from a Gaussian codebook.

\subsubsection{Power diversity}

The next scheme~\cite{Duman2021} considers power diversity for spreading sequences that improves single preamble detection and subsequent \ac{sic}. The authors of~\cite{Duman2021} did not change the spreading sequence and code lengths ($n_c = 256$, $n_p = 117$), but varied the number of bits used to derive the spreading sequence: $k_p = 14$ for $150 \leq \Ka < 200$, $k_p = 15$ for $200 \leq \Ka < 250$, and $k_p = 16$ for $\Ka \geq 250$. This choice of parameters results in a small probability of preamble collision. The decoding list size was substantially larger ($512$) compared to the previous solution ($32$). The spreading sequences were also generated from a Gaussian codebook, but all spreading sequences were split into several groups having different powers. The number of groups and power levels were optimized to improve the signal-to-interference-plus-noise ratio at each decoding step. The resulting energy efficiency is presented in~\Fig{fig:ch5:mac_numerical} by a blue line with square markers. The curve starts from $K_a = 150$ because smaller $K_a$ values have only a single power splitting group.

\subsubsection{\Acl{soic}}

Finally, let us consider random spreading with a \ac{soic}~\cite{PradhanLDPCSC2021}. The spreading sequences are generated using $k_p = 12$ bits from the information message, and the length of the spreading sequence is $n_p = 86$. The \ac{ldpc} code rate is $1/4$. The structure of \ac{ldpc} codes was optimized using a \ac{de} on protographs~\cite{LivaChiani2007}. The performance of this scheme is presented by a blue line with circular markers in~\Fig{fig:ch5:mac_numerical}.

\chapter{Fading channels}\label{chap6}

More realistic channel models, such as the Rayleigh fading channel, have also been considered in the literature. While the \ac{awgn} channel may be a suitable model for satellite communications when multipath propagation is limited, the Rayleigh fading channel is more appropriate for cellular communications, as it assumes a rich scattering environment.

To operate in a fading channel, \ac{csi} must be acquired. However, estimating the channel for a large number of devices transmitting short packets is prohibitively difficult. Indeed, the best way to estimate the channel is to include a preamble or reference signals in the transmitted message. However, in the case of uncoordinated transmission, we must ensure that these reference signals remain nearly orthogonal. Unfortunately, orthogonality can be compromised by both short transmitted messages and a large number of simultaneously active users. Consequently, the \ac{ura} model assumes the absence of the \ac{csi} at both the transmitters and the receiver. The works~\cite{Kowshik2021TIT, Kowshik2020TCOM, Truhachev2022, Frolov2022TCOM} describe fundamental limits and practical schemes for single-antenna quasi-static Rayleigh fading channels. The most notable observation is that fading increases the required energy per bit by approximately $10$ dB under the same order of \ac{pupe} requirements (see~\Fig{fig:ch5:mac_numerical} and~\Fig{fig:chap6:ebno_ka_single_antenna}).

A.~Fengler and G.~Caire were the first to demonstrate that employing multiple antennas at the \ac{bs} can significantly improve energy efficiency without increasing transmitter complexity~\cite{Fengler2021}. Recall that the transmitters are autonomous, battery-powered devices and should be as simple as possible. The fundamental limits for \ac{ura} with a \ac{mimo} receiver were established in \cite{Poor2022ML, noPoorUnsourcedML2023, MIMO2023ours}. Of particular importance is the scaling law derived in \cite{Fengler2022} and refined in \cite{Poor2022ML}: with $n$ channel uses and a sufficiently large number of BS antennas ($\Ka/L = o(1)$), up to $\Ka = O(n^2)$ active users can be served among $\Kt$ potential users.

Practical receiver schemes have been extensively studied in the literature. The design of these schemes typically involves pilot-based channel estimation followed by decoding and \ac{sic}~\cite{Fengler2022, Duman2022}, or a \ac{ccs} approach combined with non-Bayesian activity detection~\cite{Fengler2021}, \ac{sic}~\cite{AmalladinneMIMO2021}, or the exploitation of slot-wise correlations~\cite{Shyianov2020}. The scheme proposed in~\cite{FASURA} integrates these approaches and delivers state-of-the-art performance for the \ac{mimo} channel.

This chapter is organized as follows. In Section~\ref{ch6:sec:channel_model}, we specify the channel model -- a quasi-static Rayleigh fading MAC with a \ac{bs} equipped with either a single or multiple receive antennas. To define the achievability bounds, we propose two decoding rules in Section~\ref{ch6:sec:decoding_rules}: \ac{ml} decoding and projection-based decoding. Furthermore, we show that projection-based decoding achieves the $\varepsilon$-capacity. Next, in Section~\ref{ch6:sec:single_antenna}, we consider single-antenna bounds, and in Section~\ref{ch6:sec:multi_antenna}, we describe known results for the multiple-antenna case. Finally, we conclude this chapter by discussing low-complexity schemes presented in Sections~\ref{sec:ch6:numerical_single} and~\ref{sec:ch6:numerical_multi} for single-antenna and multiple-antenna scenarios, respectively.

\section{Channel model}
\subsection{Rayleigh fading model}\label{ch6:sec:channel_model}
The main cause of fading is multipath propagation. The receiver experiences interference of multiple transmitted signal copies. Each copy of the signal can experience random attenuation due to reflections from obstacles and random delays due to different propagation path length. Finally, a moving receiver may experience a Doppler spread, but this is not the case for the massive machine-type communications, where most transmitting devices are stationary or moving relatively slowly.

The Rayleigh fading model corresponds to a so-called rich scattering environment. When the number of scatterers becomes large and each reflection coefficient is random, the central limit theorem can be applied. Moreover, in such a rich scattering environment, there is no line-of-sight component, and the received power distribution becomes uniform over the \ac{aoa}.

As a result, for a single-antenna receiver, the channel coefficient follows a circularly symmetric complex normal distribution, and the magnitude of this channel coefficient becomes Rayleigh-distributed. For a \ac{mimo} receiver, assuming a uniform power distribution over the \ac{aoa} and a regular antenna array, the resulting channel coefficients at each receiving antenna follow the same distribution~\cite{Tse_Viswanath_2005}.

When the transmitting devices are stationary and operate in a slowly-varying environment, the resulting channel model becomes quasi-static. Furthermore, since each device transmits infrequently, the channel coefficient remains constant during a transmission but changes from block to block, with channel realizations being independent across different blocks. Moreover, the channel coefficients (under the aforementioned central limit theorem) follow a circularly symmetric complex normal distribution, meaning that the phase of the signal is uniformly distributed. Additionally, small synchronization errors can also be described by this phenomenon.

The time-domain delay results in a phase shift of the signal in the frequency domain. On the other hand, \ac{ofdm} provides an elegant solution for frequency-domain channel equalization, as long as the length of the cyclic prefix exceeds the maximum relative delay value~\cite{FadingASYNC2019}. However, \ac{ofdm} signals have a large peak-to-average power ratio, which may lead to high energy consumption under linear amplification, potentially impacting battery life. Therefore, we plan to focus on simpler (and constant-envelope) modulations at low rates. In this context, synchronization errors can partially be modeled as phase multipliers in the channel coefficients.

Now, let us formulate the \ac{ura} problem for the fading channel scenario with a receiver equipped with $L$ antennas. We assume that each transmitting device has a single transmit antenna. Let $\dset{T} = {W_1, W_2, \ldots, W_{K_a}}$ denote the set of transmitted messages (using the same codebook with blocklength $n$). The channel model can be described as follows.
\begin{equation}
\label{eq:channel_model}
\rmat{Y} = \dmat{X} {\dmat{\Phi}}(\dset{T}) \rmat{H} + \rmat{Z},
\end{equation}
where $\rmat{Y} = [\dvec{y}_1, \ldots, \dvec{y}_L] \in \mathbb{C}^{n \times L}$ is the channel output matrix, $\dmat{X} = [\dvec{x}_1, \ldots, \dvec{x}_M] \in \mathbb{C}^{n \times M}$ with $\dvec{x}_W = f(W), W \in [M]$, is the common codebook. Here, $\dmat{\Phi}(\dset{T}) \in \left\{0,1\right\}^{M \times K_a}$ represents the activity matrix, where each column contains a single non-zero element at the position corresponding to the user's message. The matrix $\rmat{Z} \in \mathbb{C}^{n \times L}$ denotes the \ac{awgn} matrix, with each element sampled i.i.d. from $\cn{0,1}$. Finally, $\rmat{H} \in \mathbb{C}^{K_a \times L}$ is the matrix of fading coefficients, where each element is sampled i.i.d. from $\cn{0,1}$. Note that the fading coefficients are independent of the codewords and the noise matrix $\rmat{Z}$.

In what follows, we assume that $\rmat{H}$ is unknown to both the transmitters and the receiver (the unknown \ac{csi} scenario). As a result, the receiver observes $L$ linear combinations of the transmitted messages, where the coefficients of these linear combinations are determined by the rows of the matrix $\rmat{H}$.

\subsection{Decoding rules}\label{ch6:sec:decoding_rules}

In the case of the \ac{awgn} channel model, we considered the set of codewords closest to the received signal, as described in~\eqref{eq:ura_awgn_decoding_rule}. However, since the new channel model involves unknown \ac{csi}, the decoding rule must be modified. The first idea was proposed in~\cite{Polyanskiy2014quasi}. In the absence of noise, the signal received by each antenna lies in the subspace spanned by the transmitted codewords, regardless of the fading coefficients. Thus, the most likely subset of transmitted codewords is the one that maximizes the projection of the received signal onto the subspace spanned by these codewords.

As we will see later, the projection method ignores the prior distribution of fading coefficients. Additionally, the projection-based decoder has a limitation: it restricts the maximum number of active users to the blocklength, i.e., $\Ka < n$. For the case of multiple antennas, as shown in~\cite{Poor2022ML}, the maximum number of users that can be decoded may exceed the blocklength. To address this issue, an \ac{ml}-based decoding rule was proposed in~\cite{Poor2022ML}.

\ac{ml} decoding rules~\cite{Poor2022ML, noPoorUnsourcedML2023} estimate the transmitted codeword set by maximizing the conditional probability
$$
\Pb{\rmat{Y} \middle| \dmat{X}_{\dset{R}'}} = \mathbb{E}_{\rmat{H}}\left\{\Pb{\rmat{Y} \middle| \dmat{X}_{\dset{R}'}, \rmat{H}}\right\},
$$
which is the expectation over the channel statistics. Note that the receiver does not know the channel realizations but may have access to the channel statistics in many scenarios. Let us denote the matrix composed of the codewords from $\dset{R}'$ as $\dmat{X}_{\dset{R}'}$. The \ac{ml} decoding rule is given by
\begin{equation}\label{eq:ml_decoder}
\dset{R} = \argmax_{\dset{R}' \subset [M], \left| \dset{R}'\right| = K_a}
\mathbb{E}_{\rmat{H}}\left\{\Pb {\dmat{Y} \middle| \dmat{X}_{\dset{R}'}, \rmat{H}}\right\}.
\end{equation}

The idea of the projection decoder is to take not the expectation, but the maximum of the corresponding conditional probability, as described in~\cite{Kowshik2020TCOM}. The projection-based decoding rule is given by
\begin{equation}\label{eq:proj_decoder}
\dset{R} = \argmax_{\dset{R}' \subseteq [M], \left|\dset{R}'\right| = K_a} \max_{\dmat{H}}\left\{ \Pb{\dmat{Y} \middle| \dmat{X}_{\dset{R}'}, \dmat{H}}\right\}.
\end{equation}

Note that the probability $\Pb{\dmat{Y} \middle| \dmat{X}_{\dset{R}'}, \dmat{H}}$ is governed by Gaussian noise, as the conditioning results in a known linear combination of a candidate set $\dmat{X}_{\dset{R}'}$ of codewords. Assuming Gaussian noise, we can rewrite~\eqref{eq:proj_decoder} as
\begin{eqnarray}
 \label{eq:max_proj_problem}
 \dset{R} &=& \argmin_{\dset{R}' \subseteq [M], |\dset{R}'| = K_a} \min_{\dmat{H}} \norm{\dmat{Y} - \dmat{X}_{\dset{R}'} \dmat{H}}_F^2 \nonumber \\
 &=& \argmin_{\dset{R}' \subseteq [M], |\dset{R}'| = K_a} \left(\norm{\dmat{Y}}^2_F - \norm{\dmat{P}_{\dset{R}'}\dmat{Y}}^2_F \right) \nonumber \\
 &=& \argmax_{\dset{R}' \subseteq [M], |\dset{R}'| = K_a} \norm{\dmat{P}_{\dset{R}'}\dmat{Y}}^2_F,
\end{eqnarray}
where the second equality follows from the fact that the received signal $\dmat{Y}$ is fixed. Here $\norm{\cdot}^2_F$ denotes the Frobenius norm, and $\dmat{P}_{\dset{R}'}$ is the orthogonal projection onto the subspace spanned by the codewords from $\dset{R}'$.

A comparison of~\eqref{eq:ml_decoder} and~\eqref{eq:proj_decoder} clarifies the previous statement that a projection-based decoding rule does not take channel statistics into account.

\section{Single-antenna quasi-static Rayleigh fading \acs{mac}}\label{ch6:sec:single_antenna}

In this section, we present existing bounds for the single-antenna quasi-static Rayleigh fading \ac{mac}. First, we describe the $\varepsilon$-capacity bound, which serves as a suitable capacity measure for quasi-static channels. Next, we describe the Shamai-Bettesh asymptotic bound, which provides insight into the corresponding bounds in the \acl{fbl} regime. Finally, we consider the \ac{rcb} for the \ac{fbl} scenario (Theorem~\ref{theorem:pupe_rayleigh_single}) and the converse bound (Theorem~\ref{th:converse_sa}).

\subsection{Capacity region, all-active users}

In the presence of fading, the capacity of the channel should be defined differently compared to Shannon's formula for the \ac{awgn} channel. Consider a slow fading channel model~\cite{Tse_Viswanath_2005}, or a channel with a large coherence block. Our quasi-static model fits well with the definition of the slow fading model. For any prospective rate of reliable communication, there is a probability that a channel realization will be of such a small magnitude that the user will be unable to communicate reliably over this channel at the selected rate. Thus, in the slow-fading regime, an $\varepsilon$-capacity should be introduced instead.

Similarly, let $C_{\varepsilon, J}$ denote the $\varepsilon$-capacity region, which can be defined for the case of all-active $\Ka$~\cite{Kowshik2021TIT} under \emph{joint} decoding. A rate tuple $\br{R_1, \ldots, R_{K_a}}$ is said to be $\varepsilon$-achievable~\cite{Han1998} if there is a sequence of codes whose rates are asymptotically at least $R_i$ such that the joint error is asymptotically smaller than $\varepsilon$.

$$
C_{\varepsilon, J} = \left\{R = \br{R_1, \ldots, R_{K_a}}: \forall i, R_i \geq 0 \, \text{and} \, P_0\br{R} \leq \varepsilon \right\},
$$
where
$$
P_0\br{R} = \Pb{
\bigcup_{S\subset [K_a]}\left\{
\log\br{1 + P\sum_{i\in S}\left|H_i\right|^2} \leq \sum_{i\in S}R_i
\right\}
}.
$$
\begin{theorem}[Projection decoding achieves $C_{\varepsilon, J}$~\cite{Kowshik2021TIT}]
Let $R\in C_{\varepsilon, J}$. Then $R$ is $\varepsilon$-achievable through a sequence of codes with the decoder being the projection decoder~\eqref{eq:proj_decoder}.
\end{theorem}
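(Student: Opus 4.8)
The plan is to prove achievability by random coding together with the probabilistic method, conditioning throughout on the fading realization $\dvec{h} = (h_1, \ldots, h_{K_a})$ and splitting the error into two distinct sources: the \emph{outage} event, where the instantaneous channel cannot support the rate tuple at all, and the \emph{decoding} event, where a non-outage channel is nonetheless decoded incorrectly by the projection rule~\eqref{eq:proj_decoder}. First I would draw the codebook $\dmat{X}$ with columns i.i.d.\ from $\cn{\bzero,P'\dmat{I}_n}$ for some $P' < P$ (or from the power shell), so that a power-constraint violation contributes only a vanishing term $p_0$ exactly as in~\eqref{eq:chapter4:p0}. Averaging the \ac{pupe} over the ensemble and over $\dvec{h}$, I would write
$$
\overline{P}_e \leq \Pr[\text{outage}] + \mathbb{E}_{\dvec{h}}\brs{\overline{P}_e(\dvec{h})\,\1_{\text{non-outage}}} + p_0 .
$$
The crucial observation is that, conditioned on $\dvec{h}$, the channel is a Gaussian \ac{mac} with received gains $P\card{h_i}^2$, and the outage event is precisely $\{R \notin \mathcal{R}_{\mathrm{MAC}}(\dvec{h})\}$, i.e.\ the event that $\sum_{i\in S}R_i > \log\br{1 + P\sum_{i\in S}\card{h_i}^2}$ for some $S \subseteq [K_a]$. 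By the hypothesis $R \in C_{\varepsilon, J}$ this probability equals $P_0(R) \leq \varepsilon$.

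Second, I would bound the conditional decoding error on the non-outage event by the $t$-error decomposition used for the \ac{awgn} case~\eqref{eq:chapter4:pupe_definition}, namely $\overline{P}_e(\dvec{h}) \leq \sum_{t=1}^{K_a}\frac{t}{K_a}\Pr[\dset{E}_t \mid \dvec{h}]$, and then apply symmetry and the union bound to get $\Pr[\dset{E}_t \mid \dvec{h}] \leq \binom{K_a}{t}\binom{M-K_a}{t}\Pr[\dset{E}(\dset{M}',\dset{F}')\mid \dvec{h}]$ with $\dset{M}' = [t]$ and $\dset{F}' = [K_a+t]\setminus[K_a]$. Here $\dset{E}(\dset{M},\dset{F})$ is the event that the decoder~\eqref{eq:max_proj_problem} prefers the set obtained by swapping the true subset $\dset{M}$ for the false subset $\dset{F}$, that is $\norm{\dmat{P}_{\dset{C}\cup\dset{F}}\dvec{y}}_F^2 \geq \norm{\dmat{P}_{\dset{C}\cup\dset{M}}\dvec{y}}_F^2$, where $\dset{C} = \dset{T}\setminus\dset{M}$ are the $K_a-t$ retained codewords.

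Third, and this is the geometric heart, I would project out the retained codewords. Writing $\tilde{\dvec{y}} = \dmat{P}_{\dset{C}}^{\perp}\dvec{y}$, $\tilde{\dmat{X}}_{\dset{M}} = \dmat{P}_{\dset{C}}^{\perp}\dmat{X}_{\dset{M}}$ and $\tilde{\dmat{X}}_{\dset{F}} = \dmat{P}_{\dset{C}}^{\perp}\dmat{X}_{\dset{F}}$ in the residual space of dimension $m = n - (K_a - t)$, the swap event becomes $\norm{\dmat{P}_{\tilde{\dmat{X}}_{\dset{F}}}\tilde{\dvec{y}}}_F^2 \geq \norm{\dmat{P}_{\tilde{\dmat{X}}_{\dset{M}}}\tilde{\dvec{y}}}_F^2$. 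Since $\tilde{\dvec{y}} = \tilde{\dmat{X}}_{\dset{M}}\dvec{h}_{\dset{M}} + \dmat{P}_{\dset{C}}^{\perp}\dvec{z}$ already lies essentially in $\mathrm{span}(\tilde{\dmat{X}}_{\dset{M}})$, the right-hand side concentrates near $\norm{\tilde{\dvec{y}}}_F^2$, whose dominant term is the signal energy $\approx nP'\sum_{i\in\dset{M}}\card{h_i}^2$; the left-hand side is the energy captured by the \emph{independent} random $t$-dimensional Gaussian subspace $\tilde{\dmat{X}}_{\dset{F}}$, whose captured fraction is a ratio of chi-square (Beta-type) variables. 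A large-deviation estimate then gives $\Pr[\dset{E}(\dset{M}',\dset{F}')\mid \dvec{h}] \leq \exp\brs{-n\log\br{1 + P'\sum_{i\in\dset{M}}\card{h_i}^2} + o(n)}$, and multiplying by the combinatorial factor $\binom{M-K_a}{t} \approx 2^{ntR}$ (the $\binom{K_a}{t}$ term being negligible) shows that $\Pr[\dset{E}_t\mid\dvec{h}]\to 0$ exactly when $\sum_{i\in\dset{M}}R_i < \log\br{1 + P'\sum_{i\in\dset{M}}\card{h_i}^2}$ for every size-$t$ subset $\dset{M}$. Since the fading distribution is continuous, the boundary of $\mathcal{R}_{\mathrm{MAC}}(\dvec{h})$ is a null event, so on the non-outage set this strict interior condition holds a.s.\ (after letting $P'\uparrow P$), every $t$-term vanishes pointwise, and bounded convergence yields $\mathbb{E}_{\dvec{h}}\brs{\overline{P}_e(\dvec{h})\,\1_{\text{non-outage}}}\to 0$, whence $\overline{P}_e \leq \varepsilon + o(1)$.

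The main obstacle I anticipate is the third step: extracting the \emph{exact} exponent $\log\br{1 + P\sum_{i\in S}\card{h_i}^2}$ rather than a loose surrogate. This requires precisely characterizing the distribution of the energy that a random $t$-dimensional subspace captures from a fixed vector in $m$ dimensions, controlling the near-orthogonality of the retained and swapped Gaussian codewords (so that $\dmat{P}_{\dset{C}}^{\perp}$ acts cleanly and $\tilde{\dmat{X}}_{\dset{F}}$ stays approximately isotropic in the residual space), and verifying that the concentration of $\norm{\tilde{\dvec{y}}}_F^2$ around $nP'\sum_{i\in\dset{M}}\card{h_i}^2$ is tight enough that the resulting exponent matches the \ac{mac} log-det expression \emph{uniformly} over the random fading. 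A secondary but genuine difficulty is the limiting argument that bridges the $P' < P$ slack and the non-strict inequalities in the definition of $C_{\varepsilon, J}$ to the interior achievability furnished above; I would handle this by combining the $P'\uparrow P$ limit with a shrink-by-$\delta$ closure and the continuity of $P_0(\cdot)$.
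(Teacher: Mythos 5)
The monograph states this theorem without proof, citing the original work, so there is no in-paper argument to compare against line by line; the relevant yardstick is the projection-decoder machinery the paper \emph{does} develop for the finite-blocklength analogue (Theorem~\ref{theorem:pupe_rayleigh_single}). Measured against that, your strategy is the right one and essentially the standard route: condition on the fading, split the error into the outage event $\{R \notin \mathcal{R}_{\mathrm{MAC}}(\dvec{h})\}$ (whose probability is $P_0(R)\leq\varepsilon$ by hypothesis) and a conditional decoding error, and then show the latter vanishes on the non-outage set by pitting a Beta-type large-deviation exponent against the union-bound counting term. Your "geometric heart" is exactly the mechanism the paper isolates in~\eqref{eq:ch6:proj_ratio_beta} and~\eqref{eq:beta_cdf_bound}: after projecting out the retained codewords, the energy the false subspace captures from $\dmat{P}^\perp_{\dset{C}}\dvec{y}$ is Beta-distributed, and the tail bound $F_{1-\beta}(x)\leq\binom{n-K_a+t-1}{t-1}x^{n-K_a}$ applied to the ratio~\eqref{eq:g_projection} delivers precisely the exponent $(n-K_a)\log\bigl(1+P\sum_{i\in S}\card{h_i}^2\bigr)$ you claim, which beats the $2^{n\sum_{i\in S}R_i}$ enumeration of false candidates exactly on the interior of the instantaneous \ac{mac} region.

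Three points deserve tightening. First, $C_{\varepsilon,J}$ is defined for all-active users with \emph{per-user} codebooks and the \emph{joint} error criterion, so the count of false swap candidates for a subset $S$ is $\prod_{i\in S}(M_i-1)=2^{n\sum_{i\in S}R_i+o(n)}$, not $\binom{M-K_a}{t}$; your final condition is the one implied by the former, so the conclusion is right but the stated combinatorial factor belongs to the same-codebook \ac{ura} setting, and the decoder~\eqref{eq:proj_decoder} must be read as selecting one codeword per user rather than a $K_a$-subset of a common codebook. Relatedly, the $t/K_a$-weighted \ac{pupe} decomposition is harmless but superfluous here: for the joint error you simply sum $\Pr[\dset{E}_t\mid\dvec{h}]$ over $t$. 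Second, your concentration claim for the retained subspace is loosely phrased: $\norm{\dmat{P}_{\dset{C}^\perp,\dset{M}}\tilde{\dvec{y}}}^2$ does not concentrate near $\norm{\tilde{\dvec{y}}}^2$; rather, the \emph{residual} $\norm{\tilde{\dvec{y}}}^2-\norm{\dmat{P}_{\dset{T}}\dvec{y}}^2$ concentrates near the noise floor $n-K_a$, and it is the ratio of this residual to the total energy $\approx nP'G+n$ that produces the $+1$ inside $\log(1+P'G)$ — dropping the noise floor, as your phrasing suggests, would give the wrong exponent in the low-gain regime. Writing the error event directly in terms of the ratio~\eqref{eq:g_projection}, as the paper does, avoids this. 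Third, the $P'\uparrow P$ and boundary-nullity limiting argument you sketch at the end is indeed needed and is handled in the cited work by a small rate back-off combined with continuity of $P_0(\cdot)$; your plan for it is adequate.
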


Nevertheless, in the \ac{fbl} regime and under the \ac{pupe} metric, the projection decoder may be suboptimal. However, this type of decoder allows the achievability bound to be addressed, as presented in Theorem~\ref{th:mimo_ach_prj}.

\subsection{Shamai-Bettesh capacity bound}\label{ch6:shamai_bettesh}

There is another asymptotic bound ($n \to \infty$) for the \ac{pupe} in the case of symmetric rates and large $\Ka$, the Shamai-Bettesh capacity bound from~\cite{bettesh2000outages}. The idea is as follows: the joint decoder knows the realization of the fading coefficients, and users are ranked according to the strength of their fading coefficients. It first tries to decode all users. If it fails (i.e., if the rate vector is not inside the instantaneous full capacity region), it drops the user with the smallest fading coefficient and tries to decode the remaining $\Ka - 1$ users. The dropped user then contributes to the noise.

This process continues iteratively, and the fraction of users that were not decoded corresponds to the \ac{pupe}. Since the case under discussion involves large $\Ka$, the order statistics of the absolute value of the fading coefficients crystallize (i.e., become almost non-random), and hence analytical expressions can be derived for the outage in terms of spectral efficiency ($kK_a / n$) and total power.

Note that the Shamai-Bettesh bound is only an achievable bound (i.e., it is not guaranteed to be tight) for the capacity under \ac{pupe}. It does not apply to our setting for two reasons. First, it assumes different codebooks for different users, and second, it assumes asymptotically large blocklength.

Note also that the asymptotic regime considered in the Shamai-Bettesh bound is as follows: first, $n$ is taken to $\infty$ (under a fixed $K_a$), and second, $K_a$ is also taken to infinity. However, based on studies of the non-fading channels~\cite{ZPT-isit19}, the correct asymptotic regime is to take both $K_a$ and $n$ to infinity at a fixed ratio~\cite{Kowshik2021TIT}.

\subsection{\ac{fbl} regime, partial activity}

Let us formulate the achievability bound for a single-antenna \ac{ura} under a quasi-static Rayleigh fading channel. We provide a simplified formulation and outline the main steps of the proof, referring the reader to~\cite{Kowshik2020TCOM} for the complete proofs.

As in the case of \ac{awgn}, we consider the event $\Pb{\dset{E}_t}$~\eqref{eq:chapter4:pupe_definition}, which represents exactly $t$ errors occurring. We assume a circularly symmetric Gaussian codebook, similar to Definition~\ref{def:gauss_codebook}.

\begin{define}\label{def:gauss_codebook_cn}
Let $\mathcal{G}_{\mathcal{CN}}\br{M,n,P}$ be the ensemble of Gaussian codebooks of size $n \times M$, where each element is sampled i.i.d. from $\mathcal{CN}\br{0,P}$.
\end{define}
To estimate the \ac{pupe}, we must evaluate $\Pb{\dset{E}_t}$ for $t \in [K_a]$. Similar to the \ac{awgn} case, we denote the probability of power violation and message collision as $p_0$.

\begin{theorem}[Achievability bound for single-antenna case~\cite{Kowshik2020TCOM}]\label{theorem:pupe_rayleigh_single}
Fix $P' < P$. There exists a codebook $\dmat{X}^* \in \mathcal{G}_{\mathcal{CN}}\br{M,n,P'}$ satisfying the power constraint $P'$ and providing $P_e$ using eq.~\eqref{eq:chapter4:pupe_definition}, where $p_{0}$ is bounded by~\eqref{eq:p0_bound}, and $\Pb{\dset{E}_t} \leq p_t$, where
\begin{equation}
\label{eq:pt_fading_siso}
p_t \leq
\inf_{\delta>0}\left(\binom{K_a}{t}e^{-(n-K_a)\delta} + p_{1, t}\right), 
\end{equation}
where
\begin{equation}\label{eq:ch6_prj_best}
p_{1, t} = \Pb{\bigcup_{\substack{\dset{M}\subset \dset{T}\\ \left|\dset{M}\right|=t}} \left\{G_{\dmat{Y},\dset{T},\dset{C},t}\geq V_{n,t} \right\}},
\end{equation}
\begin{equation}
\label{eq:g_projection}
G_{\dmat{Y},\dset{T},\dset{C},t}=\frac{\norm{\dmat{Y}}^2-\norm{\dmat{P}_{\dset{T}}\dmat{Y}}^2}{\norm{\dmat{Y}}^2-\norm{\dmat{P}_{\dset{C}}\dmat{Y}}^2}.
\end{equation}
The Frobenius norm in~\eqref{eq:g_projection} becomes a simple Euclidean norm for the single-antenna case, and
$$
\quad V_{n,t}=e^{-\left(\delta+R_1+s_t\right)}, \quad 
s_t=\frac{\ln\binom{n-K_a+t-1}{t-1}}{n-K_a}, \quad
R_1=\frac{\ln\binom{M-K_a}{t}}{n-K_a}.
$$
\end{theorem}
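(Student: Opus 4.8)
The plan is to combine the random-coding/probabilistic-method framework of Chapter~\ref{chap4} with the geometry of the projection decoder~\eqref{eq:max_proj_problem}, adapting the analysis to the unknown \ac{csi} fading model~\eqref{eq:channel_model} specialized to $L=1$. First I would average over the complex Gaussian ensemble $\mathcal{G}_{\mathcal{CN}}\br{M,n,P'}$ and invoke the same symmetrization as before: it suffices to control $\Pb{\dset{E}_t}$, and by permutation invariance I may fix $\dset{T}=[\Ka]$ and a canonical missed set $\dset{M}'=[t]$. The probability of a power violation or a message collision is absorbed into $p_0$ exactly as in~\eqref{eq:p0_bound}, so the remaining task is to bound $\Pb{\dset{E}_t}$ for a genuinely random Gaussian codebook under the rule~\eqref{eq:max_proj_problem}.

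Next I would translate the decoding rule into a usable error event. Since the projection decoder selects the $\Ka$-subset maximizing $\norms{\dmat{P}_{\dset{R}'}\dmat{Y}}$, a $t$-error with missed set $\dset{M}$ and false set $\dset{F}$ (so that $\dset{R}=\dset{C}\cup\dset{F}$, $\dset{T}=\dset{C}\cup\dset{M}$, and $\dset{C}=\dset{T}\setminus\dset{M}$) requires $\norms{\dmat{P}_{\dset{R}}\dmat{Y}}\ge\norms{\dmat{P}_{\dset{T}}\dmat{Y}}$. Working in the orthogonal complement of $\mathrm{span}\br{\dmat{X}_{\dset{C}}}$ removes the correctly-decoded contribution and reduces the comparison to whether the $t$ false directions capture at least as much residual energy as the $t$ true missed directions. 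This is precisely where the normalized statistic $G_{\dmat{Y},\dset{T},\dset{C},t}$ in~\eqref{eq:g_projection} arises: its numerator $\norms{\dmat{Y}}-\norms{\dmat{P}_{\dset{T}}\dmat{Y}}$ is the purely noise residual orthogonal to the true signal space (a $\chi^2$ with $n-\Ka$ complex degrees of freedom), while its denominator $\norms{\dmat{Y}}-\norms{\dmat{P}_{\dset{C}}\dmat{Y}}$ is the residual orthogonal only to $\dset{C}$, containing both the missed signal and the noise.

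I would then apply Fano's trick~\eqref{eq:fano_trick_def} with a good region controlling the size of the noise-only residual: the complementary (atypical) event, where this residual is too small, is bounded by a $\chi^2$ lower-tail estimate and, after the union over the $\binom{\Ka}{t}$ choices of $\dset{M}$, yields the first term $\binom{\Ka}{t}e^{-(n-\Ka)\delta}$. On the good region I would bound the conditional error by $p_{1,t}$ in~\eqref{eq:ch6_prj_best}, i.e.\ the probability that for some choice of $\dset{M}$ the statistic $G$ exceeds $V_{n,t}=e^{-(\delta+R_1+s_t)}$. The threshold is assembled from three contributions: $\delta$ from the Fano split, $R_1=\ln\binom{M-\Ka}{t}/(n-\Ka)$ from the union over the $\binom{M-\Ka}{t}$ possible false sets $\dset{F}$, and $s_t=\ln\binom{n-\Ka+t-1}{t-1}/(n-\Ka)$ from covering the Grassmannian of $t$-dimensional subspaces spanned by the random false codewords.

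The step I expect to be the main obstacle is the last one: quantifying the probability that the \emph{best} $t$ random Gaussian directions capture a prescribed fraction of the residual energy. This is a worst-case projection over a continuum of random subspaces, so the clean union bound over discrete false sets must be supplemented by a volumetric/covering argument on the Grassmann manifold, which is what produces the combinatorial factor $\binom{n-\Ka+t-1}{t-1}$ in $s_t$. Concretely, I would use the Beta distribution of the projection ratio $\norms{\dmat{P}_{\dset{V}}\dvec{w}}/\norms{\dvec{w}}$ for a fixed $t$-dimensional subspace $\dset{V}$ and isotropic residual $\dvec{w}$, combine it with a net over subspaces, and finally optimize the free parameter $\delta$ to obtain the stated infimum. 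Collecting the two contributions yields $p_t\le\inf_{\delta>0}\br{\binom{\Ka}{t}e^{-(n-\Ka)\delta}+p_{1,t}}$, completing the bound; the deferred technical estimates follow~\cite{Kowshik2020TCOM}.
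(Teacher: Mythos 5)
Your overall skeleton is the right one---random coding over $\mathcal{G}_{\mathcal{CN}}\br{M,n,P'}$, symmetrization, the projection-decoder error event $\norms{\dmat{P}_{\dset{C}\cup\dset{F}}\dmat{Y}}\geq\norms{\dmat{P}_{\dset{C}\cup\dset{M}}\dmat{Y}}$, and the beta distribution of projections onto random Gaussian subspaces---but the two terms of~\eqref{eq:pt_fading_siso} are wired to the wrong events, and the argument as you describe it would not produce the stated bound. In the actual proof the union bound over the $\binom{M-\Ka}{t}$ false sets is executed \emph{first}: conditioned on $\dmat{Y}$ and $\dset{C}$, each candidate $\dset{F}$ spans a random $t$-dimensional Gaussian subspace independent of $\dmat{Y}$, the increment $\norms{\dmat{P}_{\dset{C}\cup\dset{F}}\dmat{Y}}-\norms{\dmat{P}_{\dset{C}}\dmat{Y}}$ is a $\mathrm{Beta}(n-\Ka,t)$ fraction of $\norms{\dmat{P}^\perp_{\dset{C}}\dmat{Y}}$, and the pointwise CDF bound~\eqref{eq:beta_cdf_bound} converts the union bound into $\min\brc{1,\sum_{\dset{C}}e^{(n-\Ka)(R_1+s_t)}G^{n-\Ka}}$. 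Fano's trick is then applied to \emph{this} expression: the good region is $\bigcap_{\dset{M}}\brc{G<V_{n,t}}$, on which the sum is at most $\binom{\Ka}{t}e^{-(n-\Ka)\delta}$ (the exponents $R_1$ and $s_t$ cancel against $V_{n,t}^{n-\Ka}$, leaving only the slack $\delta$), while the bad region has probability exactly $p_{1,t}$. You assign the terms the other way around: you claim $\binom{\Ka}{t}e^{-(n-\Ka)\delta}$ is a $\chi^2$ lower-tail bound on an atypical noise event and that $p_{1,t}$ bounds the conditional error on the good region. That cannot be made to work: $p_{1,t}$ as defined in~\eqref{eq:ch6_prj_best} involves only $\dmat{Y}$, $\dset{T}$ and $\dset{C}$ and contains no union over false sets, so it cannot by itself bound an error probability that is driven by the $\binom{M-\Ka}{t}$ competing candidates; and the first term is not a tail probability of the noise residual---note that the same $\delta$ parametrizes the threshold $V_{n,t}$, which is the signature of the split just described. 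Moreover, the event ``noise-only residual too small'' makes $G$ \emph{small}, i.e.\ it is the favourable case; the bad event $G\geq V_{n,t}$ is essentially a deep fade of the $t$ missed users, not a noise fluctuation.

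A second, smaller misconception: the factor $\binom{n-\Ka+t-1}{t-1}$ in $s_t$ does not come from a covering or net argument on the Grassmannian. There is no continuum to cover: the union is over the discrete family of false codeword subsets, and for each fixed subset the projection ratio in~\eqref{eq:ch6:proj_ratio_beta} is exactly beta-distributed, because a fixed residual vector is projected onto a subspace spanned by i.i.d.\ Gaussian columns independent of it (note it is the subspace that is random here, not the residual, which conditioned on $\rmat{H}$, $\rmat{Z}$ and $\dmat{X}_{\dset{T}}$ is a fixed, non-isotropic vector containing the missed users' signal). The combinatorial factor is simply the constant in the elementary upper bound~\eqref{eq:beta_cdf_bound} on the beta CDF. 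Once you replace the covering step with this bound and reassign the two terms as above, your argument reduces to the paper's proof.
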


The original formulation~\cite{Kowshik2020TCOM} of the theorem above assumed that the number of active users may differ from the intended list size at the receiver. This approach was inspired by the Shamai-Bettesh asymptotic bound described above (see Section~\ref{ch6:shamai_bettesh}). This approach may be beneficial in the case of large $\Ka$: it may be better not to consider the weakest users but instead decode the remaining ones with a smaller probability of error. In the formulation above, we omit this assumption for simplicity and consider the case when the decoder tries to extract all active users.

This theorem introduces a projection ratio~\eqref{eq:g_projection}, which depends on the choice of the correctly received codewords, and a union probability is taken in~\eqref{eq:ch6_prj_best}. This union over all choices can be replaced by simply selecting the $t$-\emph{weakest}\footnote{Or codewords that have the smallest received energy} codewords as missed, significantly reducing the computational complexity. To evaluate the achievability bound numerically, the tail probability of the projection ratio can be computed either by sampling or analytically (see the complete proof of this theorem in~\cite{Kowshik2020TCOM}).

The main step of this theorem's proof avoids a large combinatorial term $\binom{M-K_a}{t}$ described in Section~\ref{sec:pt_naive}. This is done by observing that a projection of the fixed vector onto a random subspace generated by Gaussian vectors arises. The norm of this projection follows a beta distribution~\cite{Kowshik2021TIT}.

Let us rewrite the condition that exactly $t$ errors occurred conditioned on the channel coefficients, noise, and a set of transmitted codewords. By applying the union bound, we have
\begin{equation}
\label{eq:P_FT}
\Pb{\left.\dset{E}_t \right| \dset{T}, \rmat{H}, \rmat{Z}} \leq \Pb{
\left.
\bigcup\limits_{\dset{M}}\bigcup\limits_{\dset{F}} F_{\dset{F}, \dset{C}, t} \right| \dset{T}, \dmat{H}, \dmat{Z}
},
\end{equation}
where
$$
F_{\dset{F}, \dset{C}, t} = \left\{
\norm{\dmat{P}_{\dset{C} \cup \dset{F}}\dmat{Y}}^2 > \norm{\dmat{P}_{\dset{C} \cup \dset{M}}\dmat{Y}}^2
\right\}.
$$
Note that the union $\bigcup\limits_{\dset{M}}$ is taken over $\binom{M-K_a}{t}$ variants. Hence, the expectation over $\dset{M}$ should be performed first. To perform this step, let us fix the set of correctly received codewords $\dset{C}$ and show that the projection norm $\norm{\dmat{P}_{\dset{C} \cup \dset{F}}\dmat{Y}}^2$ is distributed as
$$
\norm{\dmat{P}_{\dset{C} \cup \dset{F}}\dmat{Y}}^2 \sim\norm{\dmat{P}_{\dset{C}}\dmat{Y}}^2 + \norm{\dmat{P}^\perp_{\dset{C}}\dmat{Y}}^2\beta\br{n-K_a, t}.
$$

Following the notation from~\cite{MIMO2023ours}, $\dmat{P}_{\dset{C} \bigcup \dset{F}} = \dmat{P}_{\dset{C}} + \dmat{P}_{\dset{C}^\perp, \dset{F}}$ by a Gram-Schmidt procedure, where $\dmat{P}_{ \dset{C}^\perp, \dset{F}}$ is a projection onto $\Span{\dmat{P}^\perp_{\dset{C}} \dmat{X}_{\dset{F}}}$, and $\dmat{P}^\perp_{\dset{C}} = \dmat{I} - \dmat{P}_{\dset{C}}$.

Finally, under the fixed $\dset{C}$, $\norm{\dmat{P}_{\dset{C} \cup \dset{F}}\dmat{Y}}^2$ is a random variable that depends only on falsely detected codewords. Moreover,
\begin{equation}\label{eq:ch6:proj_ratio_beta}
\frac{\norm{\dmat{P}_{\dset{C}^\perp, \dset{F}}\dmat{Y}}^2}{\norm{\dmat{P}^\perp_{\dset{C}}\dmat{Y}}^2}\sim \beta\br{n-K_a, t}
\end{equation}
is beta-distributed~\cite{Nielsen1999} as a projection of a fixed vector onto a random subspace spanned by $\Span{\dmat{P}^\perp_{\dset{C}} \dmat{X}_{\dset{F}}}$ in the dimension $n - K_a + t$ -- the dimensionality of a subspace orthogonal to $\Span{\dmat{X}_\dset{C}}$.

The final building block of the theorem proof is that the cumulative density function of $1 - \beta$ (where $\beta$ is beta-distributed) random variable is bounded by
\begin{equation}
\label{eq:beta_cdf_bound}
F_{1 - \beta}\left( x; n-K_a, t \right) \leq \binom{n-K_a + t - 1}{t - 1}x^{n - Ka},
\end{equation}
and the resulting probability of $t$ errors occurring will be upper-bounded by
$$
\Pb{\left.\dset{E}_t \right| \dset{T}, \rmat{H}, \rmat{Z}} \leq
\mathbb{E}_{\dset{T}, \mathbf{H}, \mathbf{Z}}\left[\min\left\{1, 
\sum_{
\substack{\dset{C}\subset \dset{T}\\ \left|\dset{C}\right| = K_a - t}
}e^{\left(n - K_a\right)\left(s_t + R_1\right)}G_{\dmat{Y},\dset{T},\dset{C},t}^{n - K_a}
\right\}\right].
$$
The parameter $\delta$ in the theorem appears after applying Fano's trick (see Section~\ref{sec:fano_trick}) on the formula above.

To perform numerical evaluation of this bound, the following steps must be performed:
\begin{itemize}
\item Sample $\rmat{H}$, codewords from $\dset{T}$, and $\rmat{Z}$,
\item Evaluate the cumulative distribution function of~\eqref{eq:g_projection}. A simplification by choosing $t$ \emph{weakest} users rather than checking $\binom{K_a}{t}$ variants can be considered,
\item Perform optimization over $\delta$.
\end{itemize}

Now let us describe a simple converse bound based on results from \cite{Polyanskiy2013SIMO} and the meta-converse from \cite{polyanskiy2010channel}.
\begin{theorem}[Multi-user, single-antenna converse~\cite{Kowshik2020TCOM}]
\label{th:converse_sa}
Let 
$$
L_n=n\log(1+PG)+\sum_{i=1}^{n}\left(1-|\sqrt{PG}Z_i-\sqrt{1+PG}|^2\right)
$$
and
$$
S_n=n\log(1+PG)+\sum_{i=1}^n \left(1-\frac{|\sqrt{PG}Z_i-1|^2}{1+PG}\right),
$$
where $G = |H|^2$ and $Z_i\overset{\text{i.i.d.}}{\sim}\mathcal{CN}\left(0,1\right)$. Then, for every $n$ and $0 < \epsilon < 1$, any $(M,n-1,\epsilon)$ code for the quasi-static $K_a$ \ac{mac} satisfies
$$
\log(M)\leq \log(K_a)+\log\frac{1}{\Pb{L_n\geq n\gamma_n}}
$$
where $\gamma_n$ is the solution of 
$$
\Pb{S_n\leq n\gamma_n}=\epsilon.
$$
\end{theorem}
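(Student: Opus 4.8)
The plan is to combine a genie-aided reduction to a single-user quasi-static fading channel with the list-decoding version of the meta-converse of \cite{polyanskiy2010channel}, and then to evaluate the resulting binary-hypothesis-testing quantity using the information-density computation for the fading SISO channel from \cite{Polyanskiy2013SIMO}. First I would fix attention on a single user, say user $1$; by the symmetry of the common-codebook ensemble and the definition of the \ac{pupe}, it suffices to lower-bound $\Pb{W_1 \notin \dset{R}}$. I then introduce a genie that reveals to the decoder the messages $W_2,\ldots,W_{K_a}$ together with their fading coefficients, so that the term $\sum_{j\neq 1} H_j f(W_j)$ can be subtracted exactly from $\rmat{Y}$. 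The residual observation is $\dvec{y}' = H_1 f(W_1) + \dvec{z}$, i.e. an ordinary single-user quasi-static Rayleigh fading channel with gain unknown at the receiver; since the genie can only help, any converse for this channel upper-bounds $\log M$ for the original problem.

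The key structural point is that the decoder still outputs the \emph{list} $\dset{R}$ of size $K_a$, so the correct object is the list-$K_a$ meta-converse rather than the ordinary one. Concretely, I would consider the binary hypothesis test between $P_{\rvec{x}\rmat{Y}}$ (message and output correlated through the channel) and $P_{\rvec{x}} Q_{\rmat{Y}}$ (message independent of output), using the event $\{W_1 \in \dset{R}\}$ as the test statistic. Under the true channel this event has probability at least $1-\epsilon$, giving a type-I error bounded by $\epsilon$; under the product law the list is independent of the uniform message, so $\Pb{W_1\in\dset{R}} \le K_a/M$. The data-processing inequality for the optimal type-II error therefore yields $\beta_{1-\epsilon}\!\left(P_{\rvec{x}\rmat{Y}}, P_{\rvec{x}} Q_{\rmat{Y}}\right) \le K_a/M$, i.e. $\log M \le \log K_a + \log\frac{1}{\beta_{1-\epsilon}}$, which already reproduces the $\log K_a$ term.

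It remains to lower-bound $\beta_{1-\epsilon}$ for a well-chosen auxiliary output $Q_{\rmat{Y}}$. Conditioning on $G=|H|^2$, the fading channel is Gaussian with \ac{snr} $PG$, and the natural choice is the capacity-achieving output $Q_{\rmat{Y}}=\cn{0,(1+PG)\dmat{I}}$. By the Neyman--Pearson lemma the optimal test thresholds the information density $\log\frac{P_{\rmat{Y}\mid \rvec{x}}}{Q_{\rmat{Y}}}$; evaluated along a power-shell input and a received sequence $\dvec{y}' = H f(W_1)+\dvec{z}$, this log-likelihood ratio is distributed as $S_n$ (a functional of $\dvec{z} \sim \cn{0,\dmat{I}}$), whereas under $Q_{\rmat{Y}}$ the same ratio is distributed as $L_n$ after the corresponding change of scale. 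Choosing the threshold $\gamma_n$ so that the type-I constraint holds with equality, $\Pb{S_n \le n\gamma_n}=\epsilon$, gives exactly $\beta_{1-\epsilon}=\Pb{L_n \ge n\gamma_n}$, and substituting this into the display above yields the claimed bound. The blocklength $n-1$ in the statement reflects the single degree of freedom absorbed by passing to a spherically symmetric (equal-norm) input and by the conditioning that renders $Q_{\rmat{Y}}$ tractable under unknown gain, while the test statistics are written over $n$ coordinates.

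The hard part will be the two measure-theoretic steps. The first is making the genie reduction rigorous in the unsourced list setting: one must verify that revealing the interfering messages and gains does not interact with the list constraint $|\dset{R}|=K_a$, and that the per-user symmetry survives averaging over the random codebook. The second, and more delicate, is the unknown-\ac{csi} averaging over the random gain $G$: both $S_n$ and $L_n$ are random through $G$ as well as through $\dvec{z}$, so the probabilities $\Pb{S_n\le n\gamma_n}$ and $\Pb{L_n\ge n\gamma_n}$ mix the Gaussian fluctuations with the outage behaviour of $G$, and I would need to confirm that the single common threshold $\gamma_n$ simultaneously meets the type-I constraint and produces the stated type-II expression. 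This is precisely where the quasi-static structure enters and where the computation of \cite{Polyanskiy2013SIMO} must be invoked and specialized to a single receive antenna.
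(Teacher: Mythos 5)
Your proposal follows essentially the same route the paper indicates: a genie/single-user reduction combined with the list-$K_a$ meta-converse of \cite{polyanskiy2010channel}, with the auxiliary output $\mathcal{CN}(0,(1+PG)\dmat{I})$ and the Neyman--Pearson evaluation of the information density from \cite{Polyanskiy2013SIMO} giving exactly $S_n$ under the true measure and $L_n$ under the auxiliary one. The paper itself only sketches this by citation to \cite{Kowshik2020TCOM}, and your reconstruction, including the origin of the $\log K_a$ term from $\Pb{W_1\in\dset{R}}\le K_a/M$ under the product law and the $n-1$ versus $n$ bookkeeping, matches the intended argument.
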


\section{Multi-antenna quasi-static Rayleigh fading \acs{mac}}\label{ch6:sec:multi_antenna}

In this section, we consider the case where a \ac{bs} is equipped with multiple antennas -- the so-called \ac{mimo} scenario. As discussed previously, the projection-based decoder achieves the $\varepsilon$-capacity in the single-antenna case. However, as shown in Section~\ref{ch6:scaling_laws}, the number of successfully detected users can exceed the frame length, i.e., $\Ka$ can be greater than the frame length $n$ in the \ac{mimo} case. Consequently, when $\Ka > n$, the projection-based decoding~\eqref{eq:proj_decoder} will no longer work, as the subspace spanned by the transmitted codewords coincides with the full signal space.

To address this problem, the \ac{ml}-based bound in~\eqref{eq:ml_decoder} was proposed in~\cite{Poor2022ML}. Previously, we considered the single-antenna case, where the approach relied on eliminating the large combinatorial term $\binom{M - \Ka}{t}$. However, in the proposed \ac{ml}-based bound, this simplification does not lead to an efficient solution. Evaluating this bound requires sampling codewords from all possible sets: correctly detected codewords $\dset{C}$, falsely detected codewords $\dset{F}$, and missed codewords $\dset{M}$ (see \Fig{fig:hat_s_structure}).

On the other hand, a similar approach based on projection decoding~\eqref{eq:proj_decoder} offers some improvements when the number of active users $\Ka$ is less than $n$. Nonetheless, deriving a projection-based achievability bound in a manner similar to the single-antenna case still does not yield an efficient solution. This is because the projection ratios~\eqref{eq:ch6:proj_ratio_beta} have an intractable distribution, requiring the handling of angles between random subspaces.

\subsection{Scaling laws}\label{ch6:scaling_laws}

The benefits of multiple antennas were analyzed by Fengler et al.~\cite{Fengler2021}, where the authors examined the scenario of \emph{activity detection}. A set of active users sends their $K_a$ randomly chosen preambles (selected from a total of $K_\text{tot}$ preambles) in a dedicated slot. Since detecting a single transmission in a massive \ac{mimo} regime requires accurate channel estimation, activity detection plays a crucial role in practical implementations. The goal of the receiver is to detect the subset of preambles selected for transmission and then proceed to the user's signal detection step, where the \ac{csi} is obtained from the preambles.

This problem is closely related to the \ac{ura} problem. Given multiple antennas ($L > 1$), the corresponding \ac{ura} problem~\eqref{eq:channel_model} can be formulated as a \ac{mmv} problem in the context of \ac{cs}~\cite{Chen2008}\nocite{FadingISIT2019} (see also Section~\ref{sec:ura_as_cs}).

The scaling law proposed in~\cite{Fengler2021} states that given
$$
\frac{K_a}{L} \approx {o}\br{1},  
$$
the number of detectable users is
$$
\mathcal{O}\br{\frac{n^2}{\log^2\br{\frac{K_{\text{tot}}}{K_a}}}}.
$$
Notably, this number can be significantly higher compared to the projection-based decoding algorithm, where $K_a < n$.

\subsection{Achievability bounds for \ac{mimo} receiver}\label{chap6:achievability}

In this section, we describe the achievability bound for the \ac{mimo} case based on random coding, as introduced in Section~\ref{sec:pt_naive}. Given the scaling law described above, there is a motivation to shift away from the projection-based decoding algorithm, which is limited by $\Ka < n$. As the number of receiver antennas increases, an \ac{ml}-based decoding algorithm is chosen for the bound~\cite{Poor2022ML}.

We consider a Gaussian codebook ensemble (see Definition~\ref{def:gauss_codebook_cn}) and recall the channel model~\eqref{eq:channel_model}. Suppose the~\emph{channel statistics}, i.e., the distribution of channel coefficients -- the elements of $\rmat{H}$ -- are known. Then, conditioned on ${\dmat{X}\dmat{\Phi}}(\dset{T})$, the columns of $\rmat{Y}$, the signal vector received by each antenna, are independent and normally distributed:
$$
\rvec{y}_i \sim \mathcal{CN}\br{\dmat{0}, \dmat{I}_n + \dmat{X} \dmat{\Gamma} \dmat{X}^H}, \quad \dmat{\Gamma} = \dmat{\Phi}(\dset{T})\dmat{\Phi}^H(\dset{T}),
$$
where $\br{\cdot}^H$ denotes the Hermitian transpose. Letting $\dmat{\Sigma} = \dmat{I}_n + \dmat{X}\dmat{\Gamma}\dmat{X}^H$, we obtain the following p.d.f.:
$$
p\br{\rmat{Y}\middle|\dmat{X}\dmat{\Phi}\br{\dset{T}}} = \pi^{-Ln}\left|\dmat{\Sigma}\right|^{-L} \exp\br{-\text{tr}\br{\dmat{\Sigma}^{-1}\rmat{YY}^H}}.
$$
The corresponding decoding function to be optimized is:

\begin{equation}
\label{eq:ml_loss_function}
g\br{\rmat{Y}, \dmat{X} \dmat{\Gamma} \dmat{X}^H} = L \log \left|\dmat{\Sigma}\right| + \text{tr}\br{\dmat{\Sigma}^{-1} \rmat{Y} \rmat{Y}^H}.
\end{equation}

The decoder aims to find the set of active users that minimizes~\eqref{eq:ml_loss_function}. In the subsequent analysis, we define the matrix $\dmat{\Gamma}$ formed by the message set $\dset{S}$ as $\dmat{\Gamma}_\dset{S}$ and denote~\eqref{eq:ml_loss_function} as $g\br{\dmat{\Gamma}_\dset{S}}$. As before, we estimate the probability of the event $\dset{E}_t$, in which exactly $t$ errors occur:
$$
\Pb{\dset{E}_t} = \Pb{
\dset{E}_{t, \dset{M}, \dset{F}}
}, \quad \dset{E}_{t, \dset{M}, \dset{F}} = \bigcup_{\dset{F}} \bigcup_{\dset{M}} \left\{
g\br{
\dmat{\Gamma}_{\dset{C}\cup\dset{F}}
} \leq g\br{
\dmat{\Gamma}_{{\dset{C}\cup\dset{M}}}
}
\right\}.
$$

Clearly, we can use the union bound to upper-bound the right-hand side. However, the union bound is known to overestimate the resulting probability. To tighten the bound, we use Fano's trick, i.e., we introduce the following region:
\begin{equation}
\label{eq:ball}
\dset{B}_{\dset{M}} = \{\rmat{Y}: g\br{\dmat{\Gamma}_{\dset{T}}} \leq \alpha g\br{\dmat{\Gamma}_{\dset{C}}} + \beta nL \}, \quad \dset{B} = \bigcap\limits_{\dset{M}} \dset{B}_{\dset{M}}.
\end{equation}
Using Fano's trick, we upper-bound $\Pb{\dset{E}_t}$ as follows:
\begin{equation}\label{eq:mimo_ml_fano}
\Pb{\dset{E}_t} = \Pb{
\dset{E}_{t, \dset{M}, \dset{F}}
} \leq
\Pb{
\dset{E}_{t, \dset{M}, \dset{F} } \bigcap \dset{B}
} + \Pb{\dset{B}^c}.
\end{equation}

The main idea is as follows. We apply the union bound only when $\bY$ is within an allowed region, while we assume that $\Pr[E_t] = 1$ otherwise. This approach prevents overestimating the probability by avoiding the use of the union bound for $\bY$ outside the good region, where the union bound is effective.

\begin{theorem}[\ac{ml}-based achievability bound~\cite{Poor2022ML} for the same codebook~\cite{noPoorUnsourcedML2023}]\label{th:mimo_ach_ml}
Fix $P' < P$ and consider a receiver equipped with $L$ antennas. There exists a codebook $\dmat{X}^* \in \mathcal{G}_{\mathcal{CN}}\br{M,n,P'}$ satisfying the power constraint $P'$ and providing $P_e$ using eq.~\eqref{eq:chapter4:pupe_definition}, where $p_{0}$ is bounded by~\eqref{eq:p0_bound}, and $\Pb{\dset{E}_t} \leq p_t$, where
\begin{equation}\label{eq:pt_ml_mimo}
p_t \leq \inf_{0\leq \alpha \leq 1, 0 \leq \beta}\left\{q_{1, t}\br{\alpha, \beta} + q_{2, t}\br{\alpha, \beta}\right\},
\end{equation}
where
\begin{equation}\label{eq:mimo_ach_ml_q1t}
q_{1, t}\br{\alpha, \beta} = \binom{K_a}{t}\binom{M - K_a}{t}\mathbb{E}_{\rmat{X}}\left[
\inf_{\begin{array}{c}u\geq 0, r\geq 0, \\ \lambda_{\text{min}}\br{\rmat{B}} > 0\end{array}}\br{e^{Lrn\beta}\cdot e^{L\mu}}
\right],
\end{equation}
$$
\mu = \br{u - r}\logdet{\rmat{F}_{\dset{T}}} - u\logdet{\rmat{F}_{\dset{R}}} + r\alpha\logdet{\rmat{F}_{\dset{C}}} - \logdet{\rmat{B}},
$$
\begin{equation}\label{eq:mimo_ach_ml_q2t}
q_{2,t} = \binom{K_a}{t}\inf_{\delta \geq 0}\mathbb{E}_{\rmat{X}}\left[
\frac{\gamma\br{Lm, c_\delta}}{\Gamma\br{Lm}} + 1 - \frac{\gamma\br{nL, \br{1 + \delta}nL}}{\Gamma\br{Ln}}
\right],
\end{equation}
$$
c_{\delta} = L\frac{n\br{1 + \delta}\br{1 - \alpha}-\alpha\logdet{\rmat{F}_{\dset{C}}} + \logdet{\rmat{F}_{\dset{T}}} - n\beta}{\alpha\prod_{i=1}^m\lambda_i^{1/m}},
$$
$$
\rmat{B} = \br{1 - u + r}\dmat{I}_n + u\rmat{F}_{\dset{R}}^{-1}\rmat{F}_{\dset{T}} - r\alpha\rmat{F}_{\dset{C}}^{-1}\rmat{F}_{\dset{T}},
$$
where $m = \min \left\{n, t\right\}$. The matrix $\rmat{F}_{\dset{S}}$ is the covariance matrix corresponding to the codeword set $\dset{S}$, given by
$$
\rmat{F}_{\dset{S}} = \dmat{I}_n + \rmat{X}\rmat{\Gamma}_{\dset{S}}\rmat{X}^H.
$$
By the notation above,
$$
\rmat{F}_{\dset{S}} = \dmat{I}_n + \rmat{X}_{\dset{S}}\rmat{X}^H_{\dset{S}}.
$$
Finally, $\lambda_1, \ldots, \lambda_m$ are the eigenvalues of the matrix $\rmat{F}_{\dset{C}}^{-1}\rmat{X}\Gamma_{\dset{M}}\rmat{X}^H$ of rank $m$, arranged in decreasing order.
\end{theorem}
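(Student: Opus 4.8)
The plan is to follow the same Fano-trick skeleton used for the Gaussian-codebook \ac{awgn} bound in Theorem~\ref{th:gauss_codebook}, but with the marginalized likelihood metric $g(\cdot)$ from~\eqref{eq:ml_loss_function} replacing squared Euclidean distances. Starting from the \ac{pupe} decomposition~\eqref{eq:chapter4:pupe_definition}, it suffices to bound $\Pb{\dset{E}_t}$ for each $t$. I would apply Fano's trick~\eqref{eq:mimo_ml_fano} with the good region $\dset{B} = \bigcap_{\dset{M}}\dset{B}_{\dset{M}}$ from~\eqref{eq:ball}, splitting the estimate into the error-inside-good-region term (to be bounded by $q_{1,t}$) and the complement term $\Pb{\dset{B}^c}$ (to be bounded by $q_{2,t}$). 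The essential new ingredient throughout is that, conditioned on the codebook $\rmat{X}$, the $L$ columns of $\rmat{Y}$ are i.i.d.\ $\cn{\dvec{0},\rmat{F}_{\dset{T}}}$ with $\rmat{F}_{\dset{T}} = \dmat{I}_n + \rmat{X}_{\dset{T}}\rmat{X}_{\dset{T}}^H$, so every quadratic term $\mathrm{tr}(\rmat{F}_{\dset{S}}^{-1}\rmat{Y}\rmat{Y}^H)$ is a quadratic form in Gaussian vectors that can be integrated out in closed form, while each $\logdet{\rmat{F}_{\dset{S}}}$ is deterministic given $\rmat{X}$.

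For $q_{1,t}$, I would first bound the union over $\dset{M}$ and $\dset{F}$ by the union bound and use the i.i.d.\ symmetry of the Gaussian ensemble to reduce to a single fixed pair $\dset{M}',\dset{F}'$, producing the prefactor $\binom{K_a}{t}\binom{M-K_a}{t}$. To the intersection of the pairwise error event $\{g(\dmat{\Gamma}_{\dset{R}})\leq g(\dmat{\Gamma}_{\dset{T}})\}$ (with $\dset{R}=\dset{C}\cup\dset{F}$) and the region constraint $\{g(\dmat{\Gamma}_{\dset{T}})\leq \alpha g(\dmat{\Gamma}_{\dset{C}}) + \beta nL\}$ I would apply a two-parameter Chernoff bound with multipliers $u,r\geq 0$, producing $e^{Lrn\beta}$ times an expectation of $\exp$ of the linear combination $(u-r)g(\dmat{\Gamma}_{\dset{T}}) - u\,g(\dmat{\Gamma}_{\dset{R}}) + r\alpha\,g(\dmat{\Gamma}_{\dset{C}})$. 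The deterministic log-det parts collapse to the first three summands of $\mu$, and the trace parts form a single quadratic form $\mathrm{tr}(\rmat{M}\rmat{Y}\rmat{Y}^H)$; integrating the $L$ i.i.d.\ columns via the complex-Gaussian moment identity (the complex analogue of Lemma~\ref{lem:quad_form_rv}) gives $\determ{\rmat{B}}^{-L}$, hence the $-\logdet{\rmat{B}}$ term of $\mu$, under the convergence condition $\lambda_{\min}(\rmat{B})>0$. This reproduces~\eqref{eq:mimo_ach_ml_q1t}.

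For $q_{2,t} = \Pb{\dset{B}^c}$, the union over $\dset{M}$ gives $\binom{K_a}{t}$, and it remains to bound $\Pb{g(\dmat{\Gamma}_{\dset{T}}) > \alpha g(\dmat{\Gamma}_{\dset{C}}) + \beta nL}$. Here I would introduce the parameter $\delta$ and split on the true-metric energy $\mathrm{tr}(\rmat{F}_{\dset{T}}^{-1}\rmat{Y}\rmat{Y}^H)$, which after whitening $\rmat{Y} = \rmat{F}_{\dset{T}}^{1/2}\rmat{W}$ is a sum of $nL$ unit-mean exponentials, i.e.\ $\mathrm{Gamma}(nL,1)$; its upper tail contributes $1 - \gamma(nL,(1+\delta)nL)/\Gamma(Ln)$. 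On the complementary event $\mathrm{tr}(\rmat{F}_{\dset{T}}^{-1}\rmat{Y}\rmat{Y}^H)\leq (1+\delta)nL$ I would upper-bound $g(\dmat{\Gamma}_{\dset{T}})$ and turn the region constraint into a lower bound on $\mathrm{tr}(\rmat{F}_{\dset{C}}^{-1}\rmat{Y}\rmat{Y}^H)$. Writing this quadratic form in the eigenbasis of $\rmat{F}_{\dset{C}}^{-1}\rmat{F}_{\dset{T}}$ — whose nontrivial eigenvalues are $1+\lambda_i$ for the $\lambda_i$ of $\rmat{F}_{\dset{C}}^{-1}\rmat{X}\rmat{\Gamma}_{\dset{M}}\rmat{X}^H$ of rank $m=\min\{n,t\}$ — expresses it as a weighted sum of $Lm$ exponentials; an AM–GM lower bound then reduces it to a single $\mathrm{Gamma}(Lm,\cdot)$ variable whose lower tail yields $\gamma(Lm,c_\delta)/\Gamma(Lm)$, with the geometric mean $\prod_{i=1}^m\lambda_i^{1/m}$ in $c_\delta$ emerging exactly from this AM–GM step. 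Collecting the two pieces reproduces~\eqref{eq:mimo_ach_ml_q2t}, and substituting $q_{1,t}+q_{2,t}$ into~\eqref{eq:chapter4:pupe_definition}, with $p_0$ bounded by~\eqref{eq:p0_bound}, completes the argument.

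The main obstacle — and the reason the result is weaker than its \ac{awgn} counterpart, leaving an outer $\mathbb{E}_{\rmat{X}}$ that must be estimated by Monte Carlo — is that $g$ depends on $\rmat{X}$ through $\logdet{\rmat{F}_{\dset{S}}}$ and $\rmat{F}_{\dset{S}}^{-1}$ in a genuinely nonlinear way, so the expectation over the Gaussian codebook does not factor and cannot be carried out in closed form; it stays inside as the outer expectation in~\eqref{eq:mimo_ach_ml_q1t}–\eqref{eq:mimo_ach_ml_q2t}. The delicate parts are therefore (i) choosing the region~\eqref{eq:ball} and the multipliers $u,r,\delta$ so that both inner Gaussian integrals over the fading-plus-noise matrix $\rmat{Y}$ close in terms of $\rmat{B}$ and the incomplete-gamma functions, and (ii) verifying that the same-codebook symmetry reduction remains valid for the marginalized likelihood — precisely the point at which~\cite{noPoorUnsourcedML2023} adapts the different-codebook argument of~\cite{Poor2022ML}.
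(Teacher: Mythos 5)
Your proposal follows exactly the route the paper takes (and sketches after the theorem statement): Fano's trick with the good region~\eqref{eq:ball}, a union bound plus two-parameter Chernoff bound for $\Pb{\dset{E}_{t,\dset{M},\dset{F}}\cap\dset{B}}$ in which the conditional Gaussianity of the columns of $\rmat{Y}$ integrates the quadratic forms into $\determ{\rmat{B}}^{-L}$ (Lemma~\ref{lemma:qf_inv}), and a gamma-tail/geometric-mean argument for $\Pb{\dset{B}^c}$, leaving the outer expectation over $\rmat{X}$. Your reconstruction of the details --- the identification of $\rmat{B}$ from the multipliers $u,r$, the rank-$m$ eigenvalue reduction of $\rmat{F}_{\dset{C}}^{-1}\rmat{X}\rmat{\Gamma}_{\dset{M}}\rmat{X}^H$, and the emergence of $\prod_{i=1}^m\lambda_i^{1/m}$ in $c_\delta$ --- is consistent with the stated formulas, so this is essentially the paper's proof.
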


In this theorem, the expectation is taken over a codebook ensemble. Thus, w.l.o.g., let us assume that the correctly received, missed, and falsely detected codewords (see Section~\ref{chap3:sys_model} and Fig.~\ref{fig:hat_s_structure}) correspond to the first $K_a + t$ codewords of the codebook $\rmat{X}_{[K_a + t]}$. Hence, we define the sets as follows:
\begin{align*}
\text{Transmimtted codewords }\dset{T}=& [K_a], \\
\text{Received codewords }\dset{R}=& [K_a + t] \backslash [t], \\
\text{Correctly received codewords }\dset{C}=& [K_a] \backslash [t], \\
\text{Missed codewords }\dset{M}=& [t], \\
\text{Falsely detected codewords }\dset{F}=& [K_a + t] \backslash [K_a].
\end{align*}

The idea is to first apply Fano's trick~\eqref{eq:mimo_ml_fano}. Then, for the first probability term, $\Pb{\dset{E}_{t, \dset{M}, \dset{F} } \bigcap \dset{B}}$, a Chernoff bound can be applied (Lemma~\ref{lemma:chernoff}).

The proof ultimately results in an expectation over a codebook ensemble, which can be evaluated numerically. However, the minimum required number of samples remains unknown. Moreover, sampling falsely detected codewords may be challenging due to the large number of possible combinations, given by $\binom{M-K_a}{t}$.

\begin{theorem}[Projection-based achievability bound~\cite{MIMO2023ours} for the same codebook]\label{th:mimo_ach_prj}
Fix $P' < P$ and consider a receiver equipped with $L$ antennas. There exists a codebook $\dmat{X}^* \in \mathcal{G}_{\mathcal{CN}}\br{M,n,P'}$ satisfying the power constraint $P'$ and providing $P_e$ using eq.~\eqref{eq:chapter4:pupe_definition}, where $p_{0}$ is bounded by~\eqref{eq:p0_bound}, and $\Pb{\dset{E}_t} \leq p_t$, where
\begin{equation}
\label{eq:pe_theorem1}
\Pb{\dset{E}_t} \leq \inf_{0 \leq \alpha \leq 1}(p_{1,t} + p_{2,t}),
\end{equation}
where
\begin{equation}
p_{1,t} = \binom{K_a}{t} \binom{M-K_a}{t} \mathbb{E}_{\rmat{X}}\left[ \inf_{u, v > 0, \lambda_{\mathcal{D}} > 0}\determ{\dmat{I} - \rmat{D} \rmat{\Sigma}}^{-L} \right], \label{eq:p1t}
\end{equation}
\begin{equation}
p_{2,t} = \binom{K_a}{t} \mathbb{E}_{\rmat{X}} \left[ \inf_{\delta > 0, \lambda_{\mathcal{B}} > 0}\determ{\dmat{I} - \rmat{B}\rmat{\Sigma}}^{-L} \right], \label{eq:p2t}
\end{equation}
where 
\begin{eqnarray*}
\rmat{\Sigma} &=& \dmat{I}_n + \rmat{X}_{\dset{T}} \rmat{X}^H_{\dset{T}}, \nonumber \\
\rmat{D} &=& u \rmat{P}_{\dset{R}} - u \rmat{P}_{\dset{T}} + \alpha v \rmat{P}_{\dset{C}}^\perp - v \rmat{P}_{\dset{T}}^\perp, \label{eq:D}\\
\rmat{B} &=& - \alpha \delta \rmat{P}_{\dset{C}}^\perp + \delta \rmat{P}_{\dset{T}}^\perp \label{eq:B}.
\end{eqnarray*}
Here, $\lambda_{\mathcal{D}}$ and $\lambda_{\mathcal{B}}$ denote the minimum eigenvalues of $\dmat{I} - \rmat{D} \rmat{\Sigma}$ and $\dmat{I} - \rmat{B} \rmat{\Sigma}$, respectively. The expectations are taken over $\rmat{X}_{\dset{T}\cup\dset{R}}$. The sets $\dset{T}$, $\dset{R}$, and $\dset{C}$ are defined above.
\end{theorem}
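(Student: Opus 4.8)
The plan is to mirror the single-antenna argument behind Theorem~\ref{theorem:pupe_rayleigh_single} and the MIMO \ac{ml} argument of Theorem~\ref{th:mimo_ach_ml}, but specialized to the projection decoder~\eqref{eq:max_proj_problem}. Using the probabilistic method over the ensemble $\mathcal{G}_{\mathcal{CN}}\br{M,n,P'}$, it suffices to control the averaged $\Pb{\dset{E}_t}$ for each $t\in[K_a]$, since $p_0$ is already handled by~\eqref{eq:p0_bound} and the $t$-error decomposition~\eqref{eq:chapter4:pupe_definition}. By symmetry I fix $\dset{T}=[K_a]$, and by the split $\norm{\rmat{Y}}_F^2=\norm{\rmat{P}_{\dset{R}'}\rmat{Y}}_F^2+\norm{\rmat{P}_{\dset{R}'}^\perp\rmat{Y}}_F^2$ the projection error event for a missed set $\dset{M}$ and false set $\dset{F}$ (so $\dset{R}=\dset{C}\cup\dset{F}$, $\dset{T}=\dset{C}\cup\dset{M}$) becomes $\dset{E}(\dset{M},\dset{F})=\brc{\norm{\rmat{P}_{\dset{R}}\rmat{Y}}_F^2\geq\norm{\rmat{P}_{\dset{T}}\rmat{Y}}_F^2}$. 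The crucial distributional fact is that, conditioned on the transmitted codewords $\rmat{X}_{\dset{T}}$ and marginalizing the Gaussian fading $\rmat{H}$, the $L$ columns of $\rmat{Y}$ are i.i.d.\ $\mathcal{CN}\br{\bzero,\rmat{\Sigma}}$ with $\rmat{\Sigma}=\dmat{I}_n+\rmat{X}_{\dset{T}}\rmat{X}_{\dset{T}}^H$, which is what turns every exponential moment into a determinant raised to the power $L$.

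Next I apply Fano's trick~\eqref{eq:fano_trick_def} with the ``good'' region (taking $\beta=0$ here)
\[
\dset{B}=\bigcap_{\dset{M}\subseteq\dset{T},\,|\dset{M}|=t}\dset{B}(\dset{M}),\qquad \dset{B}(\dset{M})=\brc{\alpha\,\norm{\rmat{P}_{\dset{C}}^\perp\rmat{Y}}_F^2\geq\norm{\rmat{P}_{\dset{T}}^\perp\rmat{Y}}_F^2},
\]
so that $\Pb{\dset{E}_t}\leq\Pb{\dset{E}_t\cap\dset{B}}+\Pb{\dset{B}^c}$. The region is the natural analogue of~\eqref{eq:ball_prj}: $\rmat{P}_{\dset{T}}^\perp\rmat{Y}$ is the noise-only component orthogonal to the true subspace, while $\rmat{P}_{\dset{C}}^\perp\rmat{Y}$ additionally carries the $t$ missed signals, and $\dset{B}$ asserts the former does not dominate a fraction $\alpha$ of the latter. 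This choice deliberately omits $\rmat{P}_{\dset{F}}$ so that $\Pb{\dset{B}^c}$ does not drag in the factor $\binom{M-K_a}{t}$.

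For the first term I use the union bound over the $\binom{K_a}{t}\binom{M-K_a}{t}$ choices of $(\dset{M},\dset{F})$ and, by symmetry of the ensemble, reduce to fixed representatives $\dset{M}'=[t]$, $\dset{F}'=[K_a+t]\setminus[K_a]$. A joint Chernoff bound (Lemma~\ref{lemma:chernoff}) with parameters $u,v>0$ on $\dset{E}(\dset{M}',\dset{F}')\cap\dset{B}(\dset{M}')$ produces the exponent $\operatorname{tr}\br{\rmat{Y}^H\rmat{D}\rmat{Y}}$, where $\rmat{D}=u\br{\rmat{P}_{\dset{R}}-\rmat{P}_{\dset{T}}}+v\br{\alpha\rmat{P}_{\dset{C}}^\perp-\rmat{P}_{\dset{T}}^\perp}$ is exactly the matrix in the statement. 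Applying the complex-Gaussian quadratic-form identity (the multi-column version of Lemma~\ref{lem:quad_form_rv}), the expectation over $\rmat{Y}$ given $\rmat{X}_{\dset{T}}$ equals $\determ{\dmat{I}-\rmat{D}\rmat{\Sigma}}^{-L}$ whenever $\lambda_{\mathcal{D}}=\lambda_{\min}\br{\dmat{I}-\rmat{D}\rmat{\Sigma}}>0$; taking the remaining expectation over the codebook and optimizing $u,v$ yields $p_{1,t}$ in~\eqref{eq:p1t}. The second term is handled identically: a union bound over the $\binom{K_a}{t}$ choices of $\dset{M}$, a single Chernoff bound with parameter $\delta>0$ on $\dset{B}^c(\dset{M}')$ giving the matrix $\rmat{B}=\delta\rmat{P}_{\dset{T}}^\perp-\alpha\delta\rmat{P}_{\dset{C}}^\perp$, and the same determinant identity deliver $\determ{\dmat{I}-\rmat{B}\rmat{\Sigma}}^{-L}$ under $\lambda_{\mathcal{B}}>0$, i.e.\ $p_{2,t}$ in~\eqref{eq:p2t}. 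Summing and infimizing over $\alpha$ completes~\eqref{eq:pe_theorem1}.

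The main obstacle is not any single algebraic step but the fact that, unlike the single-antenna case, the projection ratios here do not concentrate into a tractable beta law, so the factor $\binom{M-K_a}{t}$ cannot be eliminated and the codebook expectations $\mathbb{E}_{\rmat{X}}\brs{\cdot}$ in $p_{1,t}$ and $p_{2,t}$ have no closed form; they must be retained (and later evaluated by Monte Carlo over $\rmat{X}_{\dset{T}\cup\dset{R}}$). The delicate points to verify carefully are the validity of the complex-Gaussian exponential-moment identity for Hermitian $\rmat{D},\rmat{B}$ with the $L$-fold product structure, together with the positive-definiteness constraints $\lambda_{\mathcal{D}},\lambda_{\mathcal{B}}>0$ that keep the integrals finite, and the symmetry reduction that lets us replace the sum over $(\dset{M},\dset{F})$ by a single representative while pulling the binomial counts outside the expectation.
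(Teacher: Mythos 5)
Your proposal is correct and follows essentially the same route the paper indicates for Theorem~\ref{th:mimo_ach_prj}: Fano's trick with a projection-based ``good'' region (the analogue of~\eqref{eq:ball} with $\beta=0$, which is why only $\alpha$ survives in~\eqref{eq:pe_theorem1}), a union bound with symmetry reduction to fixed representatives, a joint Chernoff bound yielding exactly the matrices $\rmat{D}$ and $\rmat{B}$, and the complex-Gaussian quadratic-form identity (Lemma~\ref{lemma:qf_inv}) applied to the i.i.d.\ columns $\rvec{y}_l\sim\mathcal{CN}(\bzero,\rmat{\Sigma})$ to produce the $L$-fold determinant factors, with the codebook expectation retained. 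Your closing remarks on why $\binom{M-K_a}{t}$ cannot be removed and on the positive-definiteness constraints correctly identify the points the paper itself flags as the limitations of this bound.
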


The idea behind this theorem is similar to that of the previous one, but the decoding function~\eqref{eq:ml_loss_function} is replaced with the projection rule~\eqref{eq:proj_decoder}. Similar to the \ac{ml}-based achievability, Fano's trick is applied, followed by a Chernoff bound. Finally, the expectations over codewords must be evaluated.

\subsection{Converse bound for \ac{mimo} receiver}

The converse bound described below is presented in~\cite[Theorem 9]{Poor2022ML}.
 
\begin{theorem}[No-\ac{csi} case converse~\cite{Poor2022ML}]
\label{theorem:no_csi_poor_mimo}
Assume that there are $K_a$ active users among $K$ potential users, each equipped with a single antenna, and that the number of BS antennas is $L$.
Each user has an individual codebook of size $M$ and length $n$.
For massive random access in \ac{mimo} quasi-static Rayleigh fading channels with no \ac{csi} and known $K_a$, the minimum energy per bit required to satisfy the \ac{pupe} requirement in~\eqref{eq:ch3:p_missed} can be lower-bounded as
\begin{equation} \label{eq:EbN0_conv_noCSI}
E^{*}_{b,\text{no-\acs{csi}},K_a}(n,M,\epsilon) \geq \inf \frac{nP}{\log_2M}.
\end{equation}
The infimum is taken over all $P>0$ satisfying the following condition:
\begin{equation}\label{P_tot_conv_noCSI}
\br{1 - \epsilon}\log_2M - h_2 \br{\varepsilon} \leq  \frac{LC}{K_a} - \frac{L}{K_a} \mathbb{E}_{\rmat{X}} \left[ \log_2  \left| \dmat{I}_{K_a} + \rmat{X}_{\dset{T}}^{ H}  \rmat{X}_{\dset{T}} \right|  \right],
\end{equation}
\begin{equation}\label{P_tot_conv_noCSI_C}
C = \min \left\{ n \log_2 \left( 1 + K_aP \right),  K_a M \log_2 \left( 1 + \frac{nP}{M} \right) \right\},
\end{equation}
where codewords $\rmat{X}_{\dset{T}}$ are drawn from $\mathcal{G}_{\mathcal{CN}}(M,n,P)$. The right-hand side of~\eqref{P_tot_conv_noCSI} can be further loosened to
$$
\frac{LC}{K_a} - \frac{L}{K_a} \sum\limits_{i=0}^{\tilde{n} - 1}\left( \psi\left(\tilde{n}-i\right) \log_2 e +  \log_2  \left( P + \frac{1}{ \tilde{n} -i } \right) \right),
$$
where $\tilde{n} = \min(K_a, n)$, and $\psi(\cdot)$ denotes Euler's digamma function.
\end{theorem}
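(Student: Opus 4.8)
The plan is to follow the same Fano-plus-data-processing skeleton as the proof of Theorem~\ref{the:converse_ura}, adapted to the complex, $L$-antenna, no-\ac{csi} channel~\eqref{eq:channel_model}. First I would fix an arbitrary code meeting the \ac{pupe} constraint $P_e\le\varepsilon$ at per-user power $P$ and work with the active codeword matrix $\rmat{X}_{\dset{T}}\in\mathbb{C}^{n\times\Ka}$. For each user $i$ set $\varepsilon_i=\Pb{\rscalar{W}_i\notin\dset{R}}$; Fano's inequality applied to $\rscalar{W}_i\in[M]$ gives $I(\rscalar{W}_i;\rmat{Y})\ge(1-\varepsilon_i)\log_2 M-h_2(\varepsilon_i)$. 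Summing over $i\in[\Ka]$, using $\sum_i\varepsilon_i=\Ka\varepsilon$ together with the concavity of $h_2$ (so that $\sum_i h_2(\varepsilon_i)\le\Ka h_2(\varepsilon)$), and then invoking independence of the messages and the data-processing inequality along $\rscalar{W}_{[\Ka]}\to\rmat{X}_{\dset{T}}\to\rmat{Y}$, I obtain
\begin{equation*}
\Ka\br{(1-\varepsilon)\log_2 M-h_2(\varepsilon)}\le\sum_{i=1}^{\Ka}I(\rscalar{W}_i;\rmat{Y})\le I(\rscalar{W}_{[\Ka]};\rmat{Y}).
\end{equation*}
The remaining task is to upper-bound $I(\rscalar{W}_{[\Ka]};\rmat{Y})$ by $LC-L\,\mathbb{E}_{\rmat{X}}\brs{\logdet{\dmat{I}_{\Ka}+\rmat{X}_{\dset{T}}^H\rmat{X}_{\dset{T}}}}$ and divide by $\Ka$.

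Second, I would exploit the no-\ac{csi} structure to evaluate the conditional output entropy exactly. Writing $I(\rscalar{W}_{[\Ka]};\rmat{Y})=h(\rmat{Y})-h(\rmat{Y}\mid\rscalar{W}_{[\Ka]})$ and marginalizing the fading $\rmat{H}$ (whose columns are i.i.d.\ $\cn{\dvec{0},\dmat{I}}$), the columns of $\rmat{Y}$ are, conditioned on the active codewords, i.i.d.\ $\cn{\dvec{0},\dmat{I}_n+\rmat{X}_{\dset{T}}\rmat{X}_{\dset{T}}^H}$; averaging over the uniformly chosen messages and specializing to the Gaussian ensemble of Definition~\ref{def:gauss_codebook_cn} yields
\begin{equation*}
h(\rmat{Y}\mid\rscalar{W}_{[\Ka]})=Ln\log_2(\pi e)+L\,\mathbb{E}_{\rmat{X}}\brs{\logdet{\dmat{I}_n+\rmat{X}_{\dset{T}}\rmat{X}_{\dset{T}}^H}},
\end{equation*}
and Sylvester's determinant identity $\determ{\dmat{I}_n+\rmat{X}_{\dset{T}}\rmat{X}_{\dset{T}}^H}=\determ{\dmat{I}_{\Ka}+\rmat{X}_{\dset{T}}^H\rmat{X}_{\dset{T}}}$ produces the stated log-det term. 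For $h(\rmat{Y})$ I would derive two maximum-entropy upper bounds whose excess over the noise entropy $Ln\log_2(\pi e)$ furnishes the two candidates inside $C$: a received-power bound, treating the $nL$ output entries separately with $\Ex{\determ{y_{j,\ell}}^2}\le 1+\Ka P$, giving $n\log_2(1+\Ka P)$; and a second, genie-aided bound that reveals $\rmat{H}$ and treats the transmission as a point-to-point \ac{mimo} channel with the codeword energy spread over the codebook directions, giving $\Ka M\log_2(1+nP/M)$. Taking the minimum and collecting terms gives~\eqref{P_tot_conv_noCSI}.

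Third, to reach the computable ``loosened'' form I would lower-bound the Wishart-type expectation $\mathbb{E}_{\rmat{X}}\brs{\logdet{\dmat{I}_{\Ka}+\rmat{X}_{\dset{T}}^H\rmat{X}_{\dset{T}}}}$; note that this quantity is \emph{subtracted}, so a lower bound on it enlarges the right-hand side and therefore weakens the constraint, as required for a looser converse. Using the Cholesky/Gram--Schmidt factorization of the complex Gaussian matrix $\rmat{X}_{\dset{T}}$, the determinant factorizes into $\tilde{n}=\min(\Ka,n)$ independent factors, each governed by a chi-square variable whose degrees of freedom decrease as $\tilde{n}-i$; the identity $\mathbb{E}[\log\chi^2]=\psi(\cdot)\log_2 e+\text{const}$ then yields the digamma sum $\sum_{i=0}^{\tilde{n}-1}\br{\psi(\tilde{n}-i)\log_2 e+\log_2(P+1/(\tilde{n}-i))}$. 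Finally, since the displayed inequality must hold for every admissible $P$ and, on the complex channel, $E_b/N_0=nP/\log_2 M$ with no factor of two (cf.\ the remark following~\eqref{eq:ch3_energy_efficiency}), the minimal energy per bit is at least the infimum of $nP/\log_2 M$ over all $P$ satisfying~\eqref{P_tot_conv_noCSI}, which is~\eqref{eq:EbN0_conv_noCSI}.

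I expect the main obstacle to be the second term of $C$, namely establishing the alternative $\Ka M\log_2(1+nP/M)$ bound on the output entropy: unlike the received-power bound it does not follow from a one-line maximum-entropy argument and requires carefully setting up the genie-aided \ac{mimo} capacity comparison and the correct power normalization across the $M$ codebook directions. A secondary difficulty is making the Cholesky factorization rigorous when $\Ka>n$, where $\rmat{X}_{\dset{T}}^H\rmat{X}_{\dset{T}}$ is rank-deficient and the factorization must be taken on the nonsingular $\tilde{n}\times\tilde{n}$ block; the appearance of $\tilde{n}=\min(\Ka,n)$ in the digamma sum is precisely the bookkeeping this step must get right.
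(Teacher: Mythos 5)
Your skeleton is the right one, and it matches everything the monograph itself says about this proof: the text does not actually prove Theorem~\ref{theorem:no_csi_poor_mimo}, but merely cites \cite{Poor2022ML} and remarks that the argument is ``based on Fano's inequality, followed by a mutual information analysis inspired by~\cite{Reeves2013LB}.'' Your per-user Fano step, the summation using concavity of the binary entropy, the data-processing chain, the identification of the conditional output entropy via the conditionally Gaussian columns of $\rmat{Y}$, Sylvester's identity, the single-letter max-entropy bound giving $n\log_2(1+\Ka P)$, and the Wishart/digamma evaluation (including the correct observation that lower-bounding the \emph{subtracted} log-det term enlarges the right-hand side and hence only loosens the converse) are all consistent with that outline and with the structure of the stated bound.

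Two genuine gaps remain. First, the step where you ``specialize to the Gaussian ensemble of Definition~\ref{def:gauss_codebook_cn}'' to evaluate $h(\rmat{Y}\mid\rscalar{W}_{[\Ka]})$ is not legitimate as written: a converse must hold for an arbitrary code, and the term $L\,\mathbb{E}\brs{\log_2\determ{\dmat{I}+\rmat{X}_{\dset{T}}\rmat{X}_{\dset{T}}^{H}}}$ enters with a minus sign, so replacing the code-induced distribution of $\rmat{X}_{\dset{T}}$ by a favorable ensemble is exactly the kind of substitution that can invalidate the bound (over all power-constrained codebooks the infimum of this quantity is $0$, so no uniform lower bound rescues the step for free). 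This is where the Reeves--Gastpar-style bookkeeping in \cite{Poor2022ML} --- writing $I(\rmat{X}_{\dset{T}};\rmat{Y})=I(\rmat{X}_{\dset{T}},\rmat{H};\rmat{Y})-I(\rmat{H};\rmat{Y}\mid\rmat{X}_{\dset{T}})$ and then justifying the Gaussian evaluation of the penalty term --- does real work, and your proposal elides it. Second, as you yourself flag, the branch $\Ka M\log_2(1+nP/M)$ of $C$ in~\eqref{P_tot_conv_noCSI_C} does not follow from the one-line received-power argument, and without a concrete derivation of it the constant $C$ is only half established. Since the monograph offers no proof against which to check these details, both points would have to be resolved directly from \cite{Poor2022ML}.
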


The idea behind this theorem is based on Fano's inequality, followed by a mutual information analysis inspired by~\cite{Reeves2013LB}.

\subsection{Numerical comparison of the proposed bounds}

In this section, we perform a numerical analysis of the proposed bounds for the scenario taken from~\cite{noPoorUnsourcedML2023}. The results are presented in~Fig.~\ref{fig:chap6:fbl_mimo_numerical_results}. The \ac{ml} bound shows better performance for $K_a < 400$, but beyond this point, the projection-based bound demonstrates better $E_b/N_0$, with a gap of approximately 1 dB. However, this gap vanishes as $K_a \rightarrow n$.
\begin{figure}
\centering
\includegraphics{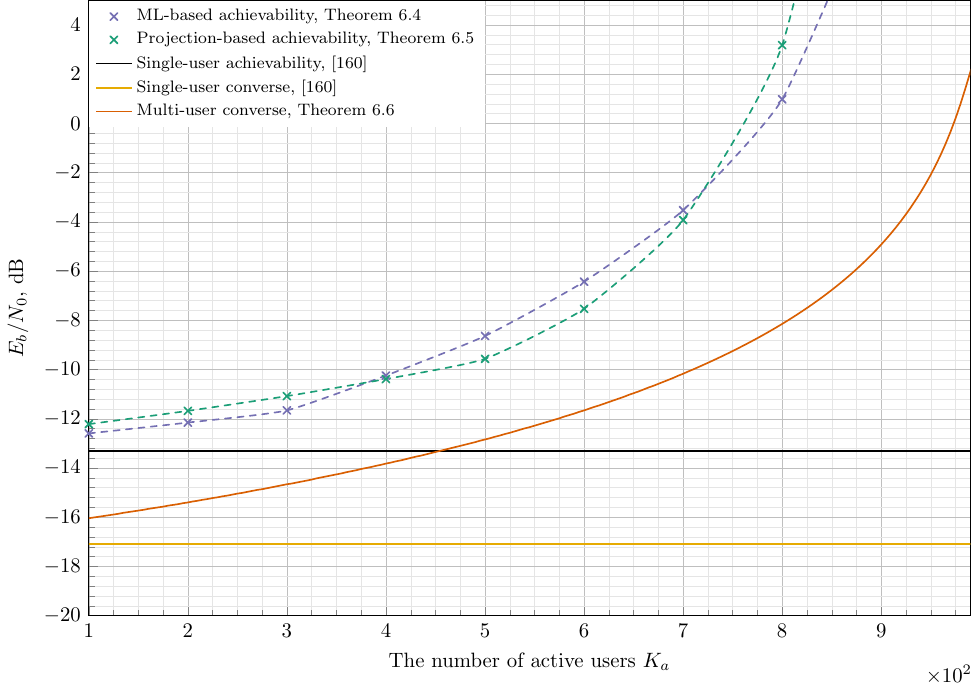}
\caption{\ac{ura} achievability bounds for the no-\ac{csi} setting with a frame length of $n=1000$ channel uses and $k=100$ information bits. The BS is equipped with $L=64$ antennas, and the target \ac{pupe} is $P_e \leq 10^{-3}$. \ac{ml}-based and projection-based achievability bounds are marked by dashed blue and green lines, respectively. As a reference, we consider a single-user converse (marked by a yellow solid line) and a multi-user converse (marked by an orange line). \label{fig:chap6:fbl_mimo_numerical_results}}
\end{figure}

This phenomenon is driven by the following: Given a similar setup, the \ac{pupe} estimate is determined by individual terms corresponding to exactly $t$ errors, as shown in equations~\eqref{eq:pt_ml_mimo} and~\eqref{eq:pe_theorem1}. In both equations, there is a large combinatorial term, $\binom{M - \Ka}{t}$, which leads to increased energy efficiency at high $\Ka$ values when the last term for $t = \Ka$ dominates in~\eqref{eq:chapter4:pupe_definition}. In our case, this term starts dominating earlier in the \ac{ml} bound. However, as $\Ka$ approaches the blocklength $n$, the \ac{ml} bound once again exhibits better energy efficiency, as the projection-based bound is not applicable for $\Ka > n$.

We also found that the projection-based bound provides a tighter estimation of the ``region'' probability compared to the \ac{ml}-based bound. Specifically, with the same parameters $\alpha$ in~\eqref{eq:pe_theorem1} and~\eqref{eq:pt_ml_mimo}, and the same set of samples, the probability~\eqref{eq:mimo_ach_ml_q1t} for the \ac{ml}-based bound is always smaller than its projection-based counterpart~\eqref{eq:p1t}, whereas the probability~\eqref{eq:mimo_ach_ml_q2t} for the \ac{ml}-based bound may be higher than the corresponding projection-based term~\eqref{eq:p2t}.

Finally, let us once again highlight a key drawback of both bounds -- the lack of a rigorous analysis of the required number of samples.

\section{Low-complexity receiver architectures}

Compared to the \ac{awgn} channel, the Rayleigh block-fading channel model assumes that channel coefficients are random~\eqref{eq:channel_model}. Nevertheless, this randomness can be leveraged as follows.

For the single-antenna case, there is a natural diversity in received power. There always exists a user with the maximum received power and, likely, the highest \ac{sinr}. Hence, a straightforward approach is to utilize \ac{tinsic}. Considering the quasi-static fading model, we assume that the channel coherence time is smaller than the frame length, which naturally leads to a slotted system, or \ac{sa}. We assume that the fading coefficient remains constant within a slot but changes randomly and independently from slot to slot.

In Section~\ref{ch5:sec:coded_sa}, we considered \ac{irsa}, where the main idea was to utilize time diversity by transmitting multiple copies of the same message in different slots. After successful decoding, the message can be subtracted from all other slots. Unfortunately, \ac{irsa} cannot be used in a fading scenario because subtraction requires channel estimation, which may have very low precision in the presence of other messages with higher received power. Hence, in the subsequent analysis (see Section~\ref{sec:ch6:numerical_single}), we consider a $T$-fold \ac{sa} instead of \ac{irsa}. The analysis of \ac{sa} is followed by an evaluation of \ac{cs}-based schemes and their results.

In the multiple-antenna case, the \ac{bs} has $L > 1$ receive antennas~\eqref{eq:channel_model}. In this case, random channel vectors (of length $L$) become almost orthogonal, especially when $L$ is large. As a result, the number of simultaneously active users may increase significantly. However, channel estimation becomes more challenging. To obtain accurate channel estimates, preambles are typically embedded within each slot or coherence block. Low-complexity schemes are presented in Section~\ref{sec:ch6:numerical_multi}.

\subsection{Single-antenna case}\label{sec:ch6:numerical_single}

In this section, we consider two main approaches: a $T$-fold \ac{sa} and a \ac{cs}-based approach. In the subsequent numerical analysis for the single-antenna case, we assume $k=100$ information bits, a frame length of $n=3\times 10^4$ complex channel uses, and a maximum tolerable \ac{pupe} value of $\varepsilon = 0.1$.

\subsubsection{\texorpdfstring{$T$}{T}-fold \acl{sa}}\label{chap6:sec_low_complexity:sa}

\begin{figure}[t]
\begin{center}
\includegraphics{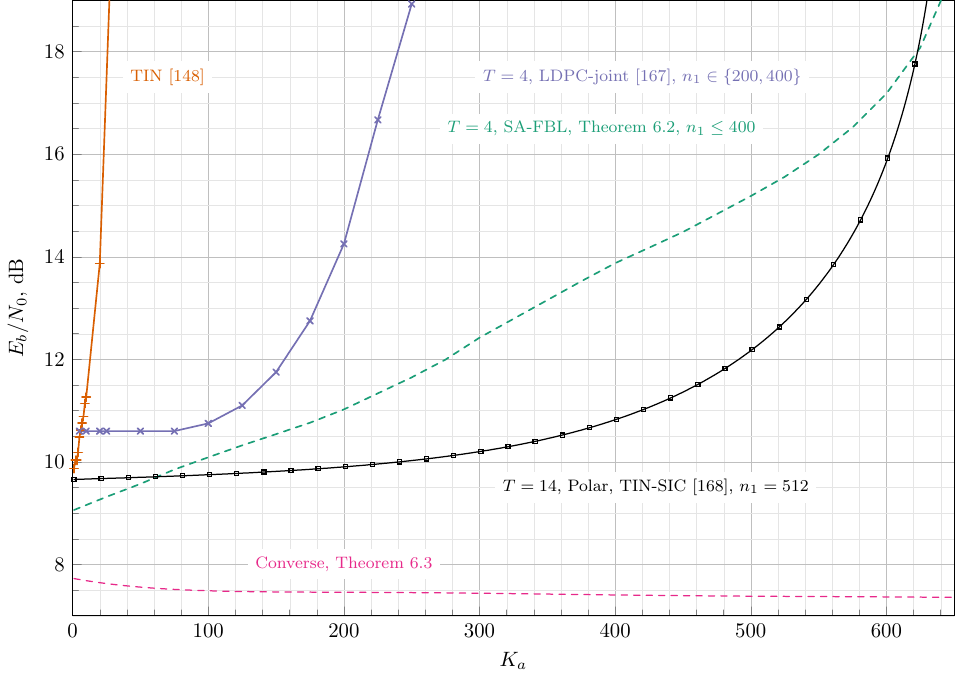}
\caption{Minimum $E_b/N_0$ required to achieve the \ac{pupe} below $10$\% ($\varepsilon=0.1$) as a function of the active user count. Theoretical bounds are represented by dashed lines, while practical schemes are shown with solid lines. The green dashed line corresponds to the achievability bound from Theorem~\ref{theorem:pupe_rayleigh_single}, where the slot length $n'$ is chosen optimally using~\eqref{eq:perf_slot2frame} for $T=4$ and $n' \leq 400$. The dashed magenta line corresponds to the converse bound from Theorem~\ref{th:converse_sa}. For practical schemes, we consider \ac{tin}, $T=4$-fold \ac{sa} with a joint decoder~\cite{Kowshik2020TCOM} ($n' = 200$ and $400$ depending on $\Ka$), and the best-performing \ac{tinsic} scheme based on polar codes~\cite{Frolov2020ISIT}.\label{fig:chap6:ebno_ka_single_antenna}}
\end{center}
\end{figure}

The main results are presented in~\Fig{fig:chap6:ebno_ka_single_antenna}. We begin with a \ac{tin} analysis, represented by an orange line. Similar to the \ac{awgn} case discussed in~Chapter~\ref{label:chap1}, this strategy is highly inefficient.

Next, a joint decoder was introduced in~\cite{FadingASYNC2019, Kowshik2020TCOM, FadingASYNC2019}. The key idea of this approach is to jointly perform decoding and channel estimation to recover a linear combination of \ac{ldpc}-encoded and \ac{bpsk}-modulated messages. The resulting energy efficiency of this scheme for $T=4$ is represented by a blue line. This energy efficiency was achieved by using two slot lengths, depending on the value of $\Ka$. For small $\Ka$, a longer slot of length $n' = 400$ was more efficient, whereas for large $\Ka$, a shorter slot of length $n' = 200$ was preferable, as it reduces the maximum number of simultaneous transmissions in a slot.

Having the slot length $n'$ and the number of simultaneously active users in a slot, $r$, we can evaluate the \ac{pupe} within each slot for a given $T$. Let us denote this \ac{pupe} value as $p_{r, k, n'}$, where $k$ is the number of information bits. Note that $p_{r, k, n'}$ can represent either a theoretical bound or the slot-level performance of a practical scheme. The final goal is to evaluate the \ac{pupe} for the entire frame.

\begin{remark}[Evaluating per-frame \ac{pupe} given per-slot \ac{pupe}]
Given $p_{r, k, n'}$, the per-frame \ac{pupe} $p_e$ can be evaluated as follows. Assuming that each user selects a transmission slot randomly, the \ac{pupe} is given by:
\begin{equation}\label{eq:perf_slot2frame}
p_e = 1 - \sum\limits_{r = 1}^{T}\br{1 - p_{r, k, n'}}\binom{K_a - 1}{r - 1}\br{\frac{1}{L}}^{r-1}\br{1 - \frac{1}{L}}^{K_a - r}.
\end{equation}
\end{remark}

To evaluate the achievable energy efficiency of this scheme, we applied the achievability bound from Theorem~\ref{theorem:pupe_rayleigh_single} to the slot, followed by an assessment of the overall frame performance using eq.~\eqref{eq:perf_slot2frame}. The result is depicted by a dashed green line, representing the minimum $E_b/N_0$ over all possible slot lengths for each value of $\Ka$.

During numerical analysis, we observed that the first codeword successfully decoded by the joint decoder typically has the highest received energy. Furthermore, given a known channel coefficient, this codeword can also be decoded using a \ac{tin} algorithm. This insight led to a modification of the scheme, replacing the joint decoder with a \ac{tinsic} decoder~\cite{Frolov2020ISIT}, where \ac{ldpc} codes were substituted with polar codes due to their superior error-correcting performance for short blocks.

We found that the polar code-based scheme exhibits better energy efficiency (black line). We employed a \ac{crc}-aided list decoding algorithm and did not use preambles for \ac{csi} estimation. Instead, assuming the transmitted signal is \ac{bpsk}-modulated, we leveraged the Gaussian mixture structure of the received signal to retrieve \ac{csi} via a clustering algorithm. Another challenge was the presence of falsely detected codewords. To mitigate this, we increased the \ac{crc} size to 21 bits. The \ac{sic} step was implemented using the \ac{omp} algorithm (see~Appendix~\ref{a3:alg:omp}).

\subsubsection{\Acl{ccs}}\label{chap6:sec_low_complexity:cs}

\begin{figure}[t]
\centering
\includegraphics{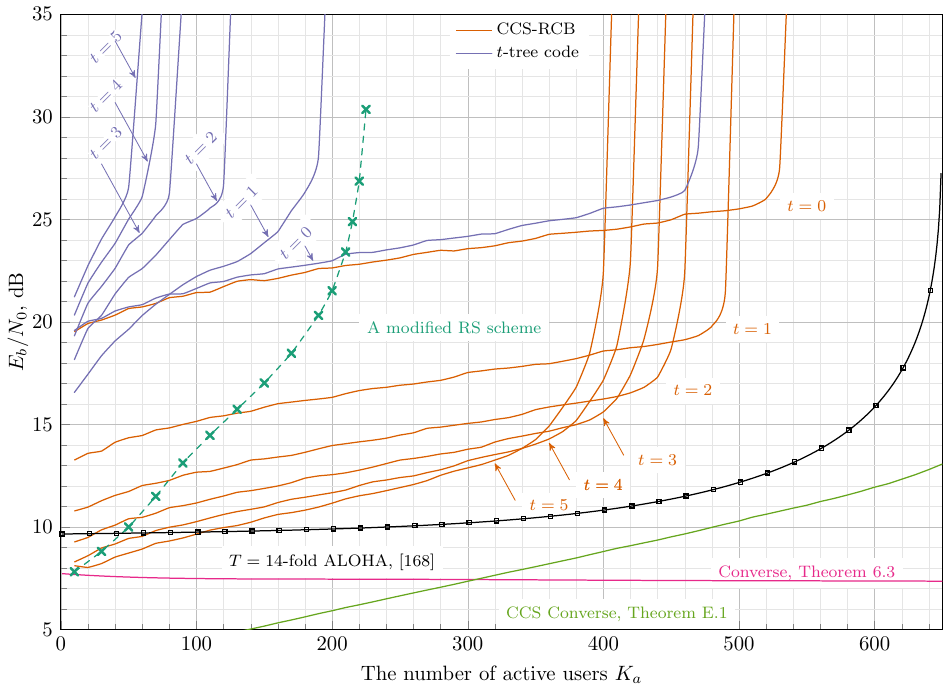}
\caption{Numerical results for the single-antenna quasi-static Rayleigh fading channel. The maximum tolerable \ac{pupe} is $P_e \leq 10^{-1}$, and \ac{far} is $P_f \leq 10^{-3}$. Parameters $K_0$ and $n'$ are chosen to minimize the required $E_b/N_0$. The following curves are presented: \ac{ccs}-\ac{rcb} (see Theorem~\ref{thm:rcb}); $t$-tree code bounds for $t = 0, \ldots, 5$ (see Corollary~\ref{thm:linear_bound}), with the maximum average number of decoding paths equal to $\mathbb{E}[|V_l|] \leq v^\star = 2^{10}$; a modified \ac{rs} scheme from Section~\ref{sec:reed_solomon_practical} (parameters taken from Table~\ref{tab:RA_params}). A $14$-fold \ac{sa} practical scheme from~\cite{Frolov2020ISIT} is added as a reference. Additionally, two converse bounds are provided: one from Theorem~\ref{th:converse_sa} and another for the \ac{ccs} scheme, based on Fano's inequality (Theorem~\ref{prop:converse}).
}
\label{fig:chap6:ebno_ka_cs}
\end{figure}

In this section, we consider the \ac{ccs} scheme presented in Section~\ref{chap5:ccs_description}, now evaluated for a Rayleigh fading channel. Given the channel coherence time assumption, \ac{ccs} also operates under a slotted transmission model. For each slot, we consider an inner codebook of size $Q=2^{15}$. Larger inner codebook sizes improve energy efficiency but require significant computational resources for numerical evaluation.

As previously discussed, we consider a setup with $k=100$ information bits, a frame length of $n=3\times 10^4$, and a slot length of $n' = n / L$, where $L$ is the number of slots. Codewords of the inner code are generated with an i.i.d. uniform distribution over the (complex) power shell. We decode the inner code using \ac{omp}\cite{Cai2011} and its \ac{mmse}-based extension\cite{Sparrer2016}. Since \ac{omp} is a sequential algorithm, the number of output codewords equals the number of decoding steps, denoted as $K_0$.

Numerical results are presented in~\Fig{fig:chap6:ebno_ka_cs}, where we plot energy efficiency as a function of the active user count $\Ka$ for different schemes. Energy efficiency is defined as the minimum $E_b/N_0$ over $K_0$, $L$, and other scheme-specific parameters/constraints, such that the \ac{pupe} satisfies $P_e\leq 0.1$ and the \ac{far} satisfies $P_f\leq 10^{-3}$. These additional parameters/constraints include the maximum average number of decoding paths for the $t$-tree code~\eqref{eq:linear_decoder_constraint} and the user prefix size for the \ac{rs} case.

Let us start with the \ac{ccs}-\ac{rcb} analysis as presented in Theorem~\ref{thm:rcb}. The orange lines in the figure correspond to the \ac{ccs}-\ac{rcb} for $t=0, \ldots, 5$. Recall that the parameter $t$ represents the number of erasures in the outer-code codewords (see eq.\eqref{eq:dec_cond} and the illustrative example presented in\Fig{fig:t_error_code}). For $t=5$, there is a significant improvement in $E_b/N_0$ (more than $10$ dB for $K_a = 50$) compared to the $t=0$ case. When the number of active users is small, the scheme with $t=5$ demonstrates better energy efficiency than a $T$-fold \ac{sa} with polar codes from~\cite{Frolov2020ISIT}, which is known as the best practical solution for the quasi-static fading channel with a single antenna at the receiver.

\begin{table}
\centering
\caption{Optimal slot count $L$ (and outer code rate $R_O = \frac{k}{L\log_2{Q}}$) for \ac{rcb}\label{tab:RCB_slots}}
\begin{tabular}{|r|cccccc|}
\hline
$K_a$ & $t = 0$ & $t = 1$ & $t = 2$ & $t = 3$ & $t = 4$ & $t = 5$\\
\hline
$50$ & $12$ & $14$ & $15$ & $16$ & $18$ & $19$ \\
$100$ & $14$ & $15$ & $17$ & $18$ & $19$ & $20$ \\
$150$ & $15$ & $16$ & $18$ & $19$ & $20$ & $21$ \\
$200$ & $16$ & $17$ & $19$ & $20$ & $21$ & $22$ \\
$250$ & $16$ & $18$ & $19$ & $21$ & $22$ & $23$ \\
$300$ & $17$ & $18$ & $20$ & $21$ & $22$ & $24$ \\
\hline
\end{tabular}
\end{table}

We also observe that higher values of $t$ allow for better energy efficiency but support a smaller maximum number of simultaneously active users. The main reason for this behavior is the \ac{far} constraint. Indeed, as the number of active users grows, the \ac{far} also increases~\eqref{eq:rcb_far}. Moreover, the higher the $t$ values, the faster the \ac{far} grows. Suppressing the \ac{far} requires more slots, which necessitates a more robust (smaller $\pM$) inner code to guarantee low $P_e$~\eqref{eq:rcb_pupe}. Ensuring more robust inner code performance under shorter slot lengths leads to inner code failure at some $\tilde{K}_a$, and this $\tilde{K}_a$ is smaller for higher values of $t$. For $t=5$, $\tilde{K}_a \approx 410$, and for $t=0$, $\tilde{K}_a \approx 540$. The optimal slot count values are presented in Table~\ref{tab:RCB_slots} for different values of $t$ and $\Ka$.

Next, we evaluated the converse bound derived in Theorem~\ref{prop:converse}. We determined the minimum $E_b/N_0$ (over $K_0$ and $L$) such that~\eqref{eq:converse_outer} holds for different values of $K_a$. Additionally, we included the converse bound from Theorem~\ref{th:converse_sa} as a reference. The \ac{ccs} converse (green line in~\Fig{fig:chap6:ebno_ka_cs}) starts below the bound from Theorem\ref{th:converse_sa} (magenta line in~\Fig{fig:chap6:ebno_ka_cs}) and intersects it at $K_a \approx 300$. We consider the overall converse bound to be the region above both bounds.

\begin{table}
\centering
\caption{Optimal slot count $L$ (and outer code rate $R_O = \frac{k}{L\log_2{Q}}$) achievability bound for the code from Corollary~\ref{thm:linear_bound}. The maximum average number of paths $\mathbb{E}[|V_l|] \leq v^\star = 2^{10}$\label{tab:LIN_slots}.}
\begin{tabular}{|r|cccc|}
\hline
$K_a$: & $50$ & $100$ & $150$ & $200$ \\
\hline
$t=0$ & $13 \ (0.513)$ & $14 \ (0.476)$ & $15 \ (0.444)$ & $16 \ (0.417)$ \\
$t=1$ & $27 \ (0.247)$ & $34 \ (0.196)$ & $37 \ (0.180)$ & -- \\
$t=2$ & $46 \ (0.145)$ & $59 \ (0.145)$ & -- & -- \\
$t=3$ & $66 \ (0.101)$ & -- & -- & -- \\
$t=4$ & $85 \ (0.078)$ & -- & -- & -- \\
$t=5$ & $100 \ (0.067)$ & -- & -- & -- \\
\hline
\end{tabular}
\end{table}

\begin{table}
\centering
\caption{Greedy bit allocation results for $v^\star = 2^{10}$\label{tab:LIN_bits}.}
\begin{tabular}{|r|l|}
\hline
\multicolumn{1}{|c}{$K_a$} & \multicolumn{1}{|c|}{Information bits pattern}\\
\hline
\multicolumn{2}{|c|}{$t=0$} \\
\hline
$50$ & $\mathbf{b} = \left[ 15\ 11\ 9\ 8\ 9\ 8\ 9\ 8\ 9\ 8\ 6\ 0\ 0 \right]$ \\
$100$ & $\mathbf{b} = \left[ 15\ 10\ 8\ 8\ 7\ 8\ 8\ 8\ 8\ 8\ 8\ 4\ 0\ 0 \right]$\\ 
$150$ & $\mathbf{b} = \left[ 15\ 9\ 7\ 7\ 8\ 7\ 7\ 8\ 7\ 7\ 8\ 7\ 3\ 0\ 0 \right]$ \\
$200$ & $\mathbf{b} = \left[ 15\ 8\ 7\ 7\ 7\ 6\ 7\ 7\ 7\ 7\ 7\ 7\ 6\ 2\ 0\ 0\right]$ \\
\hline
\multicolumn{2}{|c|}{$t=1$} \\
\hline
$50$ & $\mathbf{b} = [10\ \underbrace{4\ldots 4}_{22}\ 2\ 0\ 0\ 0]$ \\
$100$ & $\mathbf{b} = [10\ \underbrace{3\ldots 3}_{30}\ 0\ 0\ 0]$ \\
$150$ & $\mathbf{b} = [10\ 2\ 2\ 2\ 2\ 2\ 3\ 2\ 3\ 2\ \underbrace{3\ldots 3}_{7}\ 2\ \underbrace{3\ldots 3}_{15}\ 2\ 0\ 0\ 0]$ \\
\hline
\end{tabular}
\end{table}

To evaluate the achievability bound for the $t$-tree code from Corollary~\ref{thm:linear_bound}, the minimum $E_b/N_0$ search procedure must take the maximum average number of decoding paths ($v^\star$) into account as follows: minimize $E_b/N_0$, subject to
\begin{equation}\label{eq:linear_decoder_constraint}
\mathbb{E}[|V_l|] \leq v^\star, \ l\in[L], \ \sum\limits_{l=1}^L b_l = k, \ P_e < 0.1, \ P_f < 10^{-3},
\end{equation}
where $\mathbb{E}[|V_l|]$ is the sum of~\eqref{eq:vlc} and~\eqref{eq:vlf} in accordance with~\eqref{eq:total_paths}. To solve this problem, an optimization over $K_0$ and $n'$ was performed jointly with a greedy information bits allocation, assigning the maximum number of information bits at each subsequent slot $l$ while keeping the constraint $\mathbb{E}[|V_l|] \leq v^\star$. If the total number of allocated bits $\sum\limits_{l=1}^L b_l < k$, we assume $P_e=1$.

The resulting energy efficiency is presented in~\Fig{fig:chap6:ebno_ka_cs} by blue lines for $t=0, \ldots, 5$ and $v^\star = 2^{10}$. For $t=0$, the result is very close to the \ac{ccs}-\ac{rcb}, but the difference becomes dramatic for $t=5$. The constraint $\mathbb{E}[|V_l|] \leq v^\star = 2^{10}$ requires significantly more slots for $t>0$ compared to the \ac{ccs}-\ac{rcb}. The outer coding rates are presented in Table~\ref{tab:LIN_slots}, and the optimal bits allocation is shown in Table~\ref{tab:LIN_bits}. For $t=5$, one needs $L=100$ slots for $K_a = 50$, which is much higher compared to the \ac{ccs}-\ac{rcb} ($L=19$ for $K_a=50$). A larger slot count makes it impossible to construct the $t$-tree code with $t=5$ for $K_a > 60$.

\begin{table}
\centering
\caption{System parameters for the \ac{rs}-based solution\label{tab:RA_params}.}
\begin{tabular}{|r|ccccc|}
\hline
$K_a$ & $K_0$ & $L$ & $b_p$ & $R_{\text{\acs{rs}}}$ & $b_C$\\
\hline
$50$ & $53$ & $53$ & $9$ & $0.3208$ & $19$\\
$100$ & $109$ & $52$ & $9$ & $0.3269$ & $19$\\
$150$ & $167$ & $45$ & $9$ & $0.3778$ & $19$\\
$200$ & $229$ & $41$ & $9$ & $0.4146$ & $19$\\
\hline
\end{tabular}
\end{table}

Finally, we evaluated the \ac{rs}-based practical solution. For the \ac{rs} scheme, we found the minimum $E_b/N_0$ over $K_0$, $L$ and the prefix size $b_p$ (bits). We also adjusted the number of bits $b_C$ to suppress falsely detected messages, but during the simulations, we simply evaluated $b_C$ and corrected the $E_b/N_0$. The resulting energy efficiency of the \ac{rs}-based scheme is presented by green crosses in~\Fig{fig:chap6:ebno_ka_cs}. The \ac{rs} code parameters are presented in Table~\ref{tab:RA_params}, where $R_{\text{\acs{rs}}}$ does not take the \ac{crc} into account (the total coding rate is $R = R_{\text{\acs{rs}}} \cdot R_{\text{\acs{crc}}}$, where $R_{\text{\acs{crc}}} = k/\left(k+b_C\right)$). Increasing $K_a$ requires both a longer prefix and a larger \ac{rs} code length. These two requirements are contradictory because the \ac{rs} code is constructed over the field of size $q = Q / 2^{b_p}$, and $q > L$. On the other hand, increasing the prefix length decreases $q$, making it impossible to increase the number of slots while fitting $k=100$ bits into the frame. The slot count increase also weakens the inner-code performance. We note that the practical \ac{rs}-based scheme operates over a larger outer code length compared to the \ac{ccs}-\ac{rcb}.

\subsection{Multiple-antenna case}\label{sec:ch6:numerical_multi}

\begin{figure}[t]
\begin{center}
\includegraphics{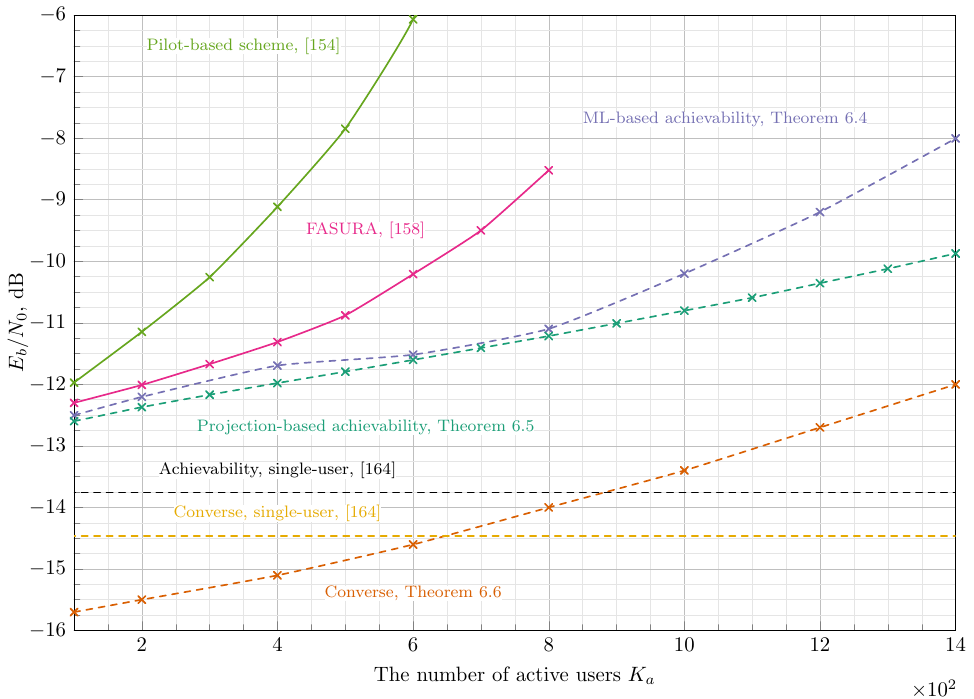}
\caption{Minimum $E_b/N_0$ required to achieve an error rate of $P_e \leq 0.025$ as a function of the active user count. Theoretical bounds are represented by dashed lines, while practical schemes are shown with solid lines. As a reference, single-user achievability and converse bounds are included, along with \ac{ml}-based and projection-based multi-user bounds from Theorems~\ref{th:mimo_ach_ml} and~\ref{th:mimo_ach_prj}. The frame length is $n=3200$, with $k=100$ information bits. The \ac{bs} is equipped with $L=50$ antennas.\label{fig:chap6:ebno_ka_fasura}}
\end{center}
\end{figure}

For the multiple-antenna receiver, we consider a scenario with a frame length of $n = 3200$. The channel coherence time is assumed to be at least as long as the proposed frame length. The \ac{bs} is equipped with $L=50$ antennas. To maintain consistency with results presented in the literature, we also consider an error rate constraint of $P_e \leq 0.025$.

In this setup, employing multiple antennas at the receiver provides two key benefits. First, with $L$ antennas, receive beamforming can be used to enhance the received signal energy. Second, a large number of antennas introduces a proportionally larger number of degrees of freedom in the communication system~\cite{Tse_Viswanath_2005}, allowing for a significantly higher number of simultaneously active users.

The most challenging aspect of the receiver is channel estimation. Since each user has $L$ fading coefficients, the previously described clustering-based approach becomes inefficient. A straightforward method is to allocate a portion of the available resources for channel learning and then use the resulting channel estimate to perform a \ac{tin} step, followed by \ac{sic}. This approach has been implemented in~\cite{Fengler2022}. Pilot detection is closely related to the \ac{cs} problem. In the case of multiple receiver antennas, this problem is known as the \ac{mmv} problem. The energy efficiency of the pilot-based scheme is shown in~\Fig{fig:chap6:ebno_ka_fasura} as a light green line. As a reference, we have re-evaluated the achievability bounds presented in~\Fig{fig:chap6:fbl_mimo_numerical_results}. The current state-of-the-art solution~\cite{FASURA} integrates activity detection, single-user coding, pilot- and temporally decision-aided iterative channel estimation, decoding, \ac{mmse} estimation, and \ac{sic}. The energy efficiency of this scheme is given by a magenta line.

Notably, there is a drastic improvement in energy efficiency compared to the single-antenna case. The resulting energy efficiency improves by approximately $10 \times \log_{10} L$ dB compared to the single-antenna case, due to receive beamforming. Additionally, the number of simultaneously active users increases: up to $\Ka \approx 800$ at a frame length of $n=3200$, compared to $\Ka \approx 600$ at $n=3\times 10^4$.

\chapter{Open problems}\label{chap7}

In this monograph, we have presented a comprehensive overview of the \ac{ura} problem in noisy channels. The \ac{ura} model provides an information-theoretic framework that accounts for \ac{fbl} effects and energy efficiency. This framework enables the derivation of fundamental limits, facilitates the comparison and optimization of existing uncoordinated access schemes, and supports the development of new ones (e.g. coded \ac{cs} schemes). Despite the substantial progress made and the numerous results obtained, several open problems remain:

\begin{itemize}

\item \textit{Are there any better ``good'' regions which improve the achievability bound based on the Fano's trick?} Gordon's inequality seems to be useful as it improves the asymptotic bound~\cite{ZPT-isit19}. Can this method improve finite-length results?

\item \textit{Is it possible to close the gap between achievability and converse bounds in the asymptotic regime (see Appendix~\ref{app:asymptotics})?} It seems counterintuitive that for high user density per channel use, the current achievability bounds coincide with the converse bound, whereas for low user density, we observe a gap of approximately $0.8$~dB.

\item \textit{Is there any polynomial complexity coding scheme with non-linear user codes that outperforms preamble-based approaches?} We note preamble-based approach is one of the methods to create non-linear codes. Let us briefly recall that the same codebook is constructed as follows: there are $L$ linear codes $\mathcal{C}_1, \mathcal{C}_2, \ldots, \mathcal{C}_L$. The user randomly chooses the codebook and sends the codebook index and the codeword to the channel. The clear drawback of this scheme is the dependence on the preamble length on the number of active users and the required $P_e$. 

\item \textit{Are there better codes for the A-channel and A-channel with errors?} The solutions from the literature are based either on the tree-coding or \ac{rs} codes -- the classes of codes suitable for the list recovery problem. In particular, it is interesting whether it is possible to modify classical \ac{ldpc} or polar codes to work in this setup. Note the \ac{scl} decoder seems to be very close to the tree-code decoder, namely it recovers next symbol based on the results for the previous symbols. But there are no results on the use of polar codes for the A-channel.

\item \textit{Are there any other principles to construct sensing matrices with polynomial decoding algorithms?} Standard \ac{cs} decoding algorithms check all correlations which is impossible in high-dimensions. One of the ideas behind this question in the CHIRRUP approach from \cite{CHIRRUP}, which utilizes binary chirps which are simply codewords in the second order \ac{rm} code. Can we go further while preserving polynomial complexity?

\item \textit{Is it possible to obtain closed-form expression of the bounds for the fading \ac{mac}s with \ac{mimo} receiver (and avoid the expectation over the codebook)?} The expected values need to be calculated in a precise manner, which is computationally expensive and probably not feasible.

\item \textit{Are there good low-complexity schemes for the \ac{mimo} \ac{ura} setup which do not perform channel estimation and do not use pilots?}

\item \textit{General fading channels.} One more open problem is the design of \ac{ura} schemes for more general fading channels. Most schemes are designed for the quasi-static Rayleigh fading channel and rely on the quasi-static property. However, quasi-static fading is rarely observed in practice, especially over the relatively long blocklengths considered in \ac{ura}.

\item \textit{Asynchronous \ac{ura}.} All the schemes and bounds presented in this monograph assume perfect per-frame synchronization. At the same time this condition is extremely difficult to fulfill as we consider low-power autonomous devices. The question of asynchronous \ac{mac} was considered in the literature~\cite{asynch}, it is important to propose low-complexity schemes and bounds for this case. We mention several papers that may serve as useful starting points \cite{Async1, Async2, Async3}.

\end{itemize}

\appendix

\chapter{\Acl{mmse} estimation}\label{a:mmse}

Consider the following Markov chain
\[
\rvec{x} \to \rvec{y} \to \hat{\rvec{x}},
\]
where $\hat{\rvec{x}} = \hat{\rvec{x}}(\dvec{y})$ is the estimate of $\rvec{x}$.

The goal is to minimize the \ac{mse}
\[
\mathsf{MSE} = \mathbb{E}[(\hat{\rvec{x}} - {\rvec{x}})^2].
\]
The \ac{mmse} estimator is defined as the estimator that achieves the minimal \ac{mse}:
\[
\hat{\dvec{x}}_*(\dvec{y}) =  \argmin_{\dvec{x}} \mathsf{MSE}.
\]

It is known that, for a given $\dvec{y}$,
\[
\hat{\dvec{x}}_*(\dvec{y}) = \mathbb{E}[\rvec{x} | \dvec{y}].
\]

In many cases, finding an analytical expression for the \ac{mmse} estimator is not feasible. Additionally, the numerical evaluation of the conditional expectation is computationally expensive. Thus, it is reasonable to consider a specific class of estimators -- linear estimators, i.e.,
\[
\hat{\dvec{x}}^{(l)}(\dvec{y}) =  \dmat{W} \dvec{y} + \dvec{b}.
\]

The linear estimator is optimal if it minimizes the \ac{mse} within the class of linear estimators, i.e., solving
\[
(\dmat{R}_*, \dvec{b}_*)  =  \argmin_{\dmat{W}, \dvec{b}} \mathsf{MSE},
\]
where
\[
\hat{\dvec{x}}^{(l)}(\dvec{y}) =  \dmat{W} \dvec{y} + \dvec{b}.
\]

The optimal pair is given by
\begin{flalign}
\dmat{W}_* &= \cov\sbr{\rvec{x}, {\rvec{y}}} (\cov\sbr{\rvec{y},{\rvec{y}}})^{-1}, \\
\dmat{b}_* &= \mathbb{E}\sbr{\rvec{x}} - \dmat{W}_* \mathbb{E}\sbr{\rvec{y}}, 
\end{flalign}
where $\cov\sbr{\rvec{a}, \rvec{b}} = \mathbb{E}\sbr{\left( \rvec{a} - \mathbb{E}\sbr{\rvec{a}} \right) \left( \rvec{b} - \mathbb{E}\sbr{\rvec{b}} \right)^T}$.

In what follows, we refer to the optimal linear estimator as the \ac{lmmse} estimator, given by
\[
\hat{\dvec{x}}^{(l)}_*(\dvec{y}) =  \dmat{W}_* \dvec{y} + \dvec{b}_*.
\]

The \ac{mse} value achievable by the \ac{lmmse} estimator is given by
\[
\mathsf{MSE}^{(l)}_* = \mathrm{tr} \dmat{\Sigma},
\]
where
\[
\dmat{\Sigma} = \cov\sbr{\rvec{x},{\rvec{x}}} - \cov\sbr{\rvec{x}, {\rvec{y}}} (\cov\sbr{\rvec{y},{\rvec{y}}})^{-1} \cov\sbr{\rvec{y}, {\rvec{x}}}.
\]

Now consider a linear observation process
\[
\rvec{y} = \dmat{A} \rvec{x} + \rvec{z},
\]
where $\rvec{x}$ and $\rvec{z}$ are independent, $\rvec{z} \sim \mathcal{N}(\dvec{0}, \dmat{I}_n)$, $\mathbb{E}[\rvec{x}] = \dvec{\mu}$ and $\cov\sbr{\rvec{x},{\rvec{x}}} = \dmat{\Gamma}$. We have
\begin{flalign*}
\cov\sbr{\rvec{x}, {\rvec{y}}} &= \Gamma A^T,\\
\cov\sbr{\rvec{y},{\rvec{y}}} &= (\dmat{A} \dmat{\Gamma} \dmat{A}^{T}+\dmat{I}_{{n}}), 
\end{flalign*}
and thus,
\[
\hat{\dvec{x}}^{(l)}_*(\dvec{y}) =  \dmat{\Gamma} A^T (\dmat{A} \dmat{\Gamma} \dmat{A}^{T}+\dmat{I}_{{n}})^{-1} (\dvec{y} - \dmat{A} \dvec{\mu}) + \dvec{\mu}.
\]

\begin{remark}
If $\rvec{x} \sim \mathcal{N}(\dvec{\mu}, \dmat{\Gamma})$, then the \ac{mmse} estimator coincides with the \ac{lmmse} estimator presented above. 
\end{remark} 
\chapter{Standard \acs{cs} codebooks and decoders}\label{a:cs_methods}

Let us start with a noisy \ac{cs} problem, also known as a sparse linear regression problem. The main task is to recover the sparse vector $\dvec{u}$ given $n$ noisy linear measurements of the form $\dvec{a}^T_i \dvec{u} + \rscalar{z}_i$, $i \in [n]$. That is, we have
\begin{equation}\label{eq:annx:cs_setup}
\rvec{y} = \dmat{A} \dvec{u} + \rvec{z}, \quad \rvec{z}\sim \mathcal{N}\left(\dvec{0}, \dmat{I}_n\right),
\end{equation}
where $\dvec{u} \in \mathbb{R}^M$, and $\norm{\dvec{u}}_0 = t$. Recall that 
$\dmat{A} \in \mathbb{R}^{n \times M}$ is referred to as the \emph{sensing matrix}.

In most cases throughout this monograph, we are primarily interested in recovering only the support of $\dvec{u}$, defined as 
$$
\mathrm{supp}(\dvec{u}) = \{i: u_i \ne 0\}.
$$
Such problems are known as \emph{support recovery problems}.

In this monograph, we encounter two instances of problem~\eqref{eq:annx:cs_setup}, namely:
\begin{itemize}
\item Case A (\ac{gmac}): $\dvec{u} \in \{0,1\}^M$, $\norm{\dvec{u}}_0 = t$;
\item Case B (Fading\footnote{Strictly speaking, we should write $\dvec{u} \in \mathbb{C}^M$, but we focus on the real case in this chapter. Note that the approaches remain largely the same.} \ac{mac}): $\dvec{u} \in \mathbb{R}^M$, $\norm{\dvec{u}}_0 = t$;
\end{itemize}

\begin{remark}
The support recovery problem is not necessarily simpler than the general \ac{cs} problem. The reasoning is as follows: suppose we know the support $\dset{I}$, then we can apply \ac{mmse} or linear \ac{mmse} estimators (see Appendix~\ref{a:mmse}) to obtain an estimate $\hat{\dvec{u}}_\dset{I}$ that minimizes the mean squared error, leading to a good overall estimate $\hat{\dvec{u}}$.
\end{remark}

In this chapter, we describe standard \ac{cs} approaches, focusing on decoding algorithms and codebook design strategies for cases where $M$ is small. In particular, we consider $M\leq 2^{15}$. The elements of these codebooks serve as preambles in practical schemes, as discussed in Section~\ref{chap5}.

\section{Decoding algorithms}\label{a3:alg_awgn}

\subsection{Non-negative least squares}\label{a3:alg:nnls}

Since we are dealing with additive white Gaussian noise (\acs{awgn}) $\rvec{z}$, one possible approach to solving problem~\eqref{eq:annx:cs_setup} is to find a vector $\dvec{u}$ that minimizes the residual norm:
\begin{equation}
\label{eq:annx:min_norm}
\hat{\dvec{u}} = \argmin_{\dvec{u}}\left\|\rvec{y} - \dmat{A} \dvec{u}\right\|^2.
\end{equation}

The problem above is convex and can be solved using standard algorithms, either analytically or via gradient-based methods. However, the solution~\eqref{eq:annx:min_norm} may contain negative elements in $\hat{\dvec{u}}$, which can be undesirable in some scenarios.

To enforce non-negativity, the solution can be obtained through a \acf{nnls} approach, which imposes an additional constraint:
$$
\hat{\dvec{u}} = \argmin_{\dvec{u} \geq \dvec{0}}\left\|\rvec{y} - \dmat{A} \dvec{u}\right\|^2,
$$
where $\dvec{u} \geq \dvec{0}$ means that $u_i \geq 0$ for all $i \in [M]$. Since $\left\|\dvec{x}\right\|^2 = \dvec{x}^T\dvec{x}$, the problem above can be reformulated equivalently as
$$
\hat{\dvec{u}} = \argmin_{\dvec{u} \geq \dvec{0}}\left(\frac{1}{2}\dvec{u}^T\dmat{Q}\dvec{u} + \dvec{c}^T\dvec{u}\right), \quad \dmat{Q} = \dmat{A}^T\dmat{A}, \quad \dvec{c} = -\dmat{A}^T\dvec{y},
$$
which is a convex optimization problem (see~\cite{cvxoptbook}).

\subsection{LASSO: Sparse linear regression}\label{a3:alg:lasso}

\Ac{cs} can be viewed as a linear regression problem. To enforce sparsity in the solution, $\ell_1$-regularization can be applied:
$$
\hat{\dvec{u}} = \argmin_{\dvec{u}}\left\|\rvec{y} - \dmat{A} \dvec{u}\right\|^2 + \lambda \norm{\dvec{u}}_1.
$$
This regularized linear regression approach is known as LASSO~\cite{bishop2007}. The parameter $\lambda$ controls the sparsity of the solution.

\subsection{\Acl{amp}}\label{a3:alg:amp}

\Ac{amp}~\cite{Donoho2010AMP} is a family of low-complexity iterative algorithms used to solve a wide range of problems, including\eqref{eq:annx:cs_setup}.

In this section, we provide a simplified description from~\cite{Truhachev2024}, listed in Algorithm~\ref{alg:amp_cs_awgn}. In the algorithm description, subscripts indicate iterations, $\langle\cdot\rangle$ denotes the average of a vector, $\eta\left(\cdot\right)$ is a nonlinear denoising function (see, e.g.,~\cite{Truhachev2024}), and $\eta'\left(\cdot\right)$ represents the component-wise derivative.

\begin{algorithm}
\caption{\Acf{amp} for~\eqref{eq:annx:cs_setup} problem.}\label{alg:amp_cs_awgn}
\begin{algorithmic}[1]
\State \textbf{Input:} $\dmat{A}$, $\dvec{y}$ \Comment Design matrix, observed vector
\State $\tilde{\dvec{y}}_0 = \dvec{y}$ \Comment Soft residual signal
\State $\hat{\dvec{u}}_0 = 0$ \Comment Soft activity pattern estimate
\For {$t = 1, \ldots, T$} \Comment Run $T$ iterations
\State $\hat{\dvec{u}}_{t + 1} = \eta_t\left(\hat{\dvec{u}}_t + \dmat{A}^T\tilde{\dvec{y}}_t\right)$
\State $\tilde{\dvec{y}}_{t + 1} = \dvec{y} - \dmat{A}\hat{\rvec{u}}_{t + 1} + \frac{M}{n}\tilde{\dvec{y}}\left\langle \eta'\left( \hat{\dvec{u}}_t + \dmat{A}^T\tilde{\dvec{y}}_t \right) \right\rangle$
\EndFor
\State \Return $\hat{\dvec{u}}$
\end{algorithmic}
\end{algorithm}

The main idea of the algorithm is as follows. Consider the prior distribution of the elements of $\dvec{u}$ as
$$
p\left(\dvec{u}\right) = \prod\limits_{i = 1}^{M}p\left(\dvec{u}_i\right) = \prod\limits_{i = 1}^{M}\exp\left\{-\beta \left|\dvec{u}_i\right|\right\}
$$
Given the Gaussian channel, let us consider the posterior distribution $\mu$:
$$
\mu\left(\rvec{u}\right) = \frac{1}{Z}
\prod\limits_{i = 1}^{n}\exp\left\{-\frac{\beta}{2}\left(\dvec{y}_i - \left(\dmat{A}\rvec{u}\right)_i\right)^2\right\}
 \prod\limits_{i = 1}^{M}\exp\left\{-\beta \left|\dvec{u}_i\right|\right\},
$$
where $Z$ is a normalizing constant.

To compute the posterior probabilities, an iterative sum-product algorithm on a dense graph can be used~\cite{Donoho2010AMP}. The idea of \ac{amp} is to apply the large system limit ($M\rightarrow \infty$) and sharpen the prior probability distribution as $\beta \rightarrow \infty$~\cite{Donoho2010AMP}.

Note that~\cite[Section V]{Donoho2010AMP} also describes the iterative algorithm that solves a LASSO problem.

The idea of \ac{amp} for Case B of the problem~\eqref{eq:annx:cs_setup} is similar and is described in~\cite{Rangan2011gamp}.

\subsection{\Acl{omp}}\label{a3:alg:omp}

The idea of \ac{omp} is based on a \ac{sic} approach and the near-orthogonal property of the columns of random sensing matrices. The observed vector $\dvec{y}$ in~\eqref{eq:annx:cs_setup} (case B) is a linear combination of the columns of the sensing matrix $\dmat{A}$. Thus, if these columns are nearly orthogonal, the concept of an energy detector from \ac{cdma} can be applied here. \ac{omp} is an iterative algorithm. At each iteration, it identifies the column of the design matrix that is most correlated with the received signal, adds the corresponding column to the set of \emph{active} columns, and then performs cancellation of the already detected columns. The algorithm is specified in~\cite{Cai2011} and listed in Algorithm~\ref{alg:omp}.

\begin{algorithm}
\caption{\Acl{omp} for~\eqref{eq:annx:cs_setup} problem.}\label{alg:omp}
\begin{algorithmic}[1]
\State \textbf{Input:} $\dmat{A}$, ${\dvec{y}}$, $t$ \Comment Sensing matrix, observed vector, output list size
\State $\tilde{\dvec{y}} = \dvec{y}$ \Comment{Residual signal}
\State $\dset{D} = \varnothing$ \Comment Estimate of $\mathrm{supp}(\dvec{u})$
\While {$\left|\dset{D}\right| < t$} \Comment Run until $t$ candidates detected
\State $i = \argmax_{j \in [M] \backslash \dset{D}}\left| \dvec{y}^T \dvec{a}_i \right|$ \Comment{Find the highest correlation coefficient}
\State ${\dset{D}} = {\dset{D}} \bigcup \{i\}$ \Comment Update the detected columns set
\State $\hat{\dvec{u}}_{\dset{D}} = \dmat{A}_{\dset{D}}^\dagger \dvec{y}$ \Comment Get weights estimate by pseudoinverse
\State $\tilde{\dvec{y}} = \dvec{y} - \dmat{A}_{\dset{D}} \hat{\dvec{u}}_{\dset{D}}$ \Comment update the residual signal
\EndWhile
\State \Return $\hat{\dset{D}}$, $\hat{\dvec{u}}$
\end{algorithmic}
\end{algorithm}

The weight coefficients estimate should be calculated jointly for the entire set of detected columns $\hat{\dset{T}}$, as the columns of the design matrix $\dmat{A}$ may not be orthogonal. When calculating the estimate of weights $\hat{\dvec{h}}$, we treat the set of design matrix columns as another matrix.

Note that the \ac{omp} is also suitable for case A in~\eqref{eq:annx:cs_setup}, where $\hat{\dvec{u}}_i = 1$ for $i \in \dset{D}$, without the need to calculate a pseudoinverse.

\section{Codebook design}\label{a3:cwbook}

All the algorithms described above calculate correlations and thus rely on codebooks with good correlation properties. In this section, we describe several approaches to constructing such codebooks.

Our goal is to design a sensing matrix $\dmat{A}$ such that $|\dvec{a}^T_i \dvec{a}_j|$, for $i, j \in [M]$ and $i \ne j$, is small. Recall also the energy constraint $\norms{\dvec{a}_i} \leq Pn$, for $i \in [M]$. In this section, we focus on $P=1$, i.e., the columns of $\dmat{A}$ have unit norms. The case of arbitrary $P$ is obtained by multiplying the codebooks above by $\sqrt{P}$.

\subsection{Gaussian codebook}\label{a3:cwbook:gauss}

The most popular approach is to utilize the Gaussian codebook, i.e., each element $a_{i,j}$ of $\dmat{A}$ is sampled i.i.d. from $\mathcal{N}\br{0,1-\varepsilon}$ for some $\varepsilon > 0$. We need $\varepsilon$ to satisfy the power constraint. Such codebooks have been proven to be effective for large $n$. Indeed, for any $i,j \in [M]$, $i \ne j$, we have
\[
\frac{1}{n}\dvec{a}^T_i \dvec{a}_j \overset{P}{\to} 0.
\]

To be precise, we can use the following equality:
\[
\dvec{a}^T_i \dvec{a}_j = {\norms{\frac{\dvec{a}_i+\dvec{a}_j}{2}} - \norms{\frac{\dvec{a}_i-\dvec{a}_j}{2}}},
\]
where the vectors $\be_1 = (\dvec{a}_i + \dvec{a}_j)/2$ and $\be_2 = (\dvec{a}_i - \dvec{a}_j)/2$ are independent. We note that $\be_1, \be_2 \sim \mathcal{N}(\dvec{0}, (1-\varepsilon)\dmat{I}_n)$.

Let us consider the variance:
\[
\mathrm{Var}\sbr{\frac{1}{n}\dvec{a}^T_i \dvec{a}_j} = \frac{1}{n^2}\left(\mathrm{Var}\sbr{\norms{\be_1}} - \mathrm{Var}\sbr{\norms{\be_2}}\right) = \frac{4(1-\varepsilon)^2}{n},
\]
where the last equality follows from the fact that the variance of a chi-squared distribution with $n$ degrees of freedom is $2n$.

Thus, the random variable $\frac{1}{n}\dvec{a}^T_i \dvec{a}_j$ is well-concentrated around $0$.

\subsection{Spherical codebook}\label{a3:cwbook:sphere}

To avoid power constraint issues, it is reasonable to use a spherical codebook. Let $\mathbb{S}^{n-1}$ be the unit sphere in $n$ dimensions. The codewords are uniformly and independently sampled from $\mathbb{S}^{n-1}$, i.e.,
\[
\dvec{a}_i \overset{i.i.d}{\sim} \mathrm{Unif}(\mathbb{S}^{n-1}), \quad i \in [M].
\]

\begin{remark}
The standard way to generate such a column is to sample $\tilde{\dvec{a}}_i \sim \mathcal{N}\br{\dvec{0}, \dmat{I}_n}$ and then normalize the result as $\dvec{a}_i = \tilde{\dvec{a}}_i / \norm{\tilde{\dvec{a}}_i}$. It is easy to show that the resulting vector is distributed uniformly on $\mathbb{S}^{n-1}$.
\end{remark}

Note that for the conventional \ac{gmac}\footnote{Different codebook case, all active users, fixed $\Ka$} \cite{Yavas2021}, spherical codebooks have been shown to achieve a better second-order (dispersion) term in the asymptotic expansion of the achievable rate region as $n \to \infty$ compared to Gaussian codebooks. However, it is not clear if this result is valid for the \ac{gura} setup. For the \ac{ura} setup in the case of the \ac{mimo} quasi-static Rayleigh fading channel, the authors of~\cite{Gao2024sphere} demonstrated that the use of spherical codebooks is beneficial in the \ac{fbl} regime.

\subsection{\acs{bch} subcodes}\label{a3:cwbook:sub_bch}

The codebooks described in the previous sections are easy to generate and work well for large $n$. However, when considering the preamble design task, we are required to make the preambles as short as possible to avoid additional overhead. In this regime, Gaussian and spherical codebooks may not perform as well.

To overcome this issue, we describe how to construct a codebook with good correlation properties for existing error-correction codes, particularly from \ac{bch} codes.

Let us start with the $[n, k]$-\ac{bch} code $\mathcal{C}$ that corrects $t$ errors. The columns of $\dmat{A}$ are to be chosen from the set $\dset{X} = {\dvec{x} = \tau(\dvec{c}): \dvec{c} \in \mathcal{C} }$, where $\tau\br{\cdot}$ is a \ac{bpsk} modulation. Let us consider the correlation properties, specifically calculating ($i,j \in [M], i \ne j$):
\begin{flalign*}
\dvec{x}^T_i \dvec{x}_j &= \frac{1}{2} (\norms{\dvec{x}_i} + \norms{\dvec{x}_j} - \norms{\dvec{x}_i - \dvec{x}_j}) = n - 2 d_H(\dvec{x}_i, \dvec{x}_j).
\end{flalign*}
Thus,
\[
n - 2 \mathrm{wt}_{\max}(\mathcal{C}) \leq \dvec{x}^T_i \dvec{x}_j \leq n - 2 \mathrm{wt}_{\min}(\mathcal{C}) = n - 2 (2t+1).
\]
An interesting observation is that we need to control not only the minimal weight (code distance) but also the maximal weight of the code. The ideal case is to use an equidistant code.

\begin{example}
For the $[n=63, k=10, d=27]$-\ac{bch} code, we have
\[
-63 \leq \dvec{x}^T_i \dvec{x}_j \leq 9
\] 
since the code contains the all-one codeword of weight $63$.

It seems to be easy to solve our problem for the \ac{bch} code. We consider only a subcode $\mathcal{C}_0 \subset \mathcal{C}$, where $\dvec{1} \not\in \mathcal{C}_0$. Let us consider the $[n=63, k=9, d=27]$-\ac{bch} subcode. We have
\[
-9 \leq \dvec{x}^T_i \dvec{x}_j \leq 9
\] 
since the maximal weight is equal to $36$.
\end{example}

Let $\mathcal{C}_0$ be the desired subcode with $\dim(\mathcal{C}_0) = k_p$. We construct the sensing matrix $\dmat{A}$ with $\dvec{a}_i = \tau(\dvec{c}_i)$, where $\dvec{c}_i \in \mathcal{C}_0$. The size of $\dmat{A}$ is $n \times 2^{k_p}$.

\subsection{Codebook from the parity-check matrix of a \texorpdfstring{$t$}{t}-error-correcting code}\label{a3:cwbook:bch}

To obtain a good codebook, we need all the sums $\sum\nolimits_{i \in \dset{I}} \dvec{a}_i$, where $\dset{I} \subset [M]$ and $|\dset{I}| = t$, to be distinct and well-separated (to account for noise). Such codes are called \emph{superimposed codes}. In what follows, we present one of the possible approaches to constructing such codes.

Let us utilize the technique proposed in \cite{BarDavid1993}. Let $\dmat{H} = [\dvec{h}_1, \ldots, \dvec{h}_n]$ be the parity-check matrix of a $t$-error-correcting code over $\mathbb{F}_2$. Set
\[
\dvec{a}_i = \tau(\dvec{h}_i), \quad i \in [M].
\]
We have the desired property: all sums of up to $t$ columns of $\dmat{A}$ are distinct.

\begin{example}
Consider a \ac{bch} code of length $n$ that corrects $t$ errors. In this case, we can construct a codebook of size approximately $t \log n \times n$ (since \ac{bch} codes are known to be quasi-perfect).
\end{example}

\subsection{Binary chirps}\label{a3:cwbook:calderbank}

In this section, we describe a codebook based on \ac{rm} codes. Let us introduce the main definitions.

\begin{definition}[Algebraic power]
\[
x^{\sigma} = \left\{ \begin{array}{l} x, \quad \sigma=1 \\ 1, \quad \sigma=0\end{array} \right.
\]
\end{definition}

\begin{definition}[Algebraic normal form or Zhegalkin polynomial]
$$
f(\dvec{x}) = \bigoplus_{\dvec{\sigma} \in \{0,1\}^m} b_{\dvec{\sigma}} \dvec{x}^{\dvec{\sigma}}, \quad \textbf{x}^{\dvec{\sigma}} = x_1^{{\sigma}_1} \cdot \ldots \cdot x_m^{{\sigma}_m}, \quad  \deg( \dvec{x}^{\dvec{\sigma}}) = \left\| \dvec{\sigma} \right\|_0.
$$
\end{definition}
Note that each Boolean function $f$ has a unique algebraic normal form representation.

\begin{definition}
The degree of the function $f(\dvec{x})$ is defined as
\[
\deg(f) = \max\limits_{\dvec{\sigma}: a_{\dvec{\sigma}} = 1} \deg(\dvec{x}^{\dvec{\sigma}}).
\]
\end{definition}

\begin{definition}[\Acf{rm} code]
Let $f: \left\{0, 1\right\}^m \rightarrow \left\{0, 1\right\}$ be the Boolean function with $m$ arguments, and let
$$
\mathrm{Eval}(f) = [f(\dvec{\sigma}_1), \ldots, f(\dvec{\sigma}_{2^m})],
$$
where $\dvec{\sigma}_i \in \{0,1\}^m$, $i \in [2^m]$, then the \ac{rm} code of order $r$ is defined as follows:
$$
RM(m, r) = \left\{ \mathrm{Eval}(f) : \deg f\leq r \right\}.
$$
\end{definition}

The code is linear over $\mathbb{F}_2$ has length $n=2^m$, and dimension $k=\sum_{i=0}^r \binom{m}{i}$.

Now, let us consider the codebook proposed in \cite{CHIRRUP}. Let us start with the first-order \ac{rm} code (or the case when $r=1$). Consider all $f$ such that $\deg(f) = 1$\footnote{The reader may ask why the function of degree $0$ was removed. The reason is exactly the same as in Section~\ref{a3:cwbook:sub_bch} -- the all-one codeword is removed to improve the correlation properties.}. These functions can be expressed as
\[
f(\dvec{x}) = \dvec{b}^T \dvec{x} \quad \mod 2.
\]

Now, consider the modulated codeword. It can be represented as $\mathrm{Eval}\left((-1)^f\right)$, which is equivalent to $\mathrm{Eval}\left((-1)^{\dvec{b}^T \dvec{x}}\right)$\footnote{Note that $\mod 2$ may be omitted.}. The functions
\[
\phi_{\dvec{b}}(\dvec{x}) = (-1)^{\dvec{b}^T \dvec{x}}
\]
are called the Walsh functions, $\phi_{\dvec{b}}: \{0,1\}^m \to \mathbb{R}$. These functions are orthogonal. However, we can construct only $2^m$ such functions.

In \cite{CHIRRUP}, it is suggested to use the second-order \ac{rm} codes. Again, the constant function is removed, and we consider
\[
\phi_{\dvec{b}, \dmat{P}}(\dvec{x}) = (-1)^{\dvec{b}^T \dvec{x} + \frac{1}{2}\dvec{x}^T \dmat{P} \dvec{x}},
\]
where $\dvec{b} \in \brc{0,1}^m$ and the matrix $\dmat{P} \in \brc{0,1}^{m \times m}$ is symmetric and has zeroes on the main diagonal\footnote{We only consider the real codebook here, but note that the authors of~\cite{CHIRRUP} also suggest the complex codebook $\phi_{\dvec{b}, \dmat{P}}\br{\dvec{x}} = i^{2\dvec{b}^T \dvec{x} + \dvec{x}^T \dmat{P} \dvec{x}}$, where $\dmat{P}$ is symmetric, but diagonal elements are allowed to be non-zero.}.

We set
\[
\dvec{a}_{i(\dvec{b}, \dmat{P})} = \mathrm{Eval}(\phi_{\dvec{b}, \dmat{P}}),
\]
where $i\br{\dvec{b}, \dmat{P}}$ is the function that returns the column index given $\dvec{b}$ and $\dmat{P}$. The size of $\dmat{A}$ is $2^m \times 2^{m + \binom{m}{2}}$.

Finally, let us mention the main benefit of this codebook. The standard decoders (e.g., the \ac{omp} algorithm) calculate the correlation between the current measurement residual and every column, selecting the one with the largest absolute value. This is an $O\br{\left|\mathcal{C}\right|}$ procedure and, therefore, computationally intractable for large codebook sizes. In contrast, the reconstruction algorithm proposed in~\cite{CHIRRUP} performs a small number of Walsh-Hadamard transforms to identify the matrix $\dmat{P}$ row by row, and subsequently identifies the vector $\dvec{b}$. The overall complexity of the procedure for finding a codeword with maximum correlation is $O(m^2 2^m)$, which is sub-linear in the size of the codebook.
\chapter{\acs{cs} sensing matrices}\label{a:cs_matrices}

As mentioned in Chapter~\ref{chap3}, the \ac{ura} problem can be considered a \ac{cs} problem. In this section, we illustrate the \ac{cs} sensing matrices corresponding to the \ac{irsa} protocol presented in Section~\ref{ch5:sec_irsa_sf} and the \ac{ccs} protocol presented in Section~\ref{chap5:ccs_description}.

\begin{figure}[t]
\centering
\includegraphics{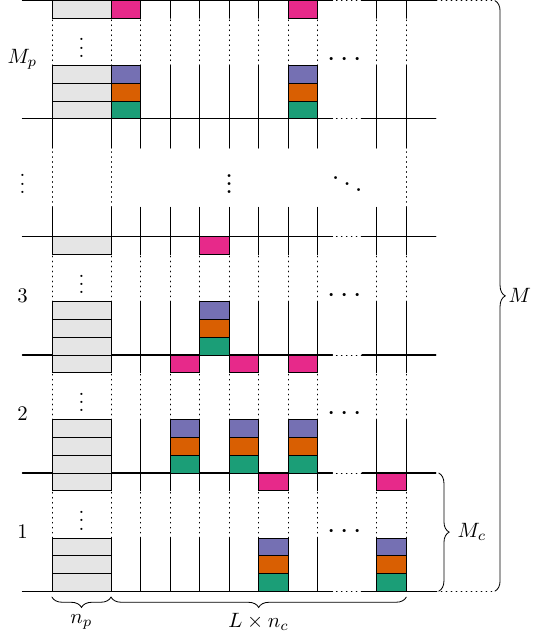}
\caption{\acs{irsaf} Sensing Matrix Example. The preamble length is $n_p$, and the slot length is $n_c$. Different colors represent different codewords of user messages. Each user message appears $M_p$ times in the resulting sensing matrix, with varying replica counts and slot indices.
\label{fig:irsa_f_cs}}
\end{figure}

Let us start with an \ac{irsaf} protocol presented in~\Fig{fig:irsa-f} and consider the corresponding \ac{cs} problem~\eqref{eq:ch5:cs_preamble}. Each user transmits $k=\log_2M$ information bits. The user message $W \in [M]$ is split into $W_p$ and $W_c$. Then, $W_p$ determines a preamble, while the remaining part, $W_c$, is encoded using a code of length $n_c = \br{n - n_p} / L$. The encoded message is further replicated across multiple slots. The replica count and slot indices are determined by both $W_p$ and $W_c$. The resulting sensing matrix is illustrated in~\Fig{fig:irsa_f_cs}.

\begin{figure}[t]
\centering
\includegraphics{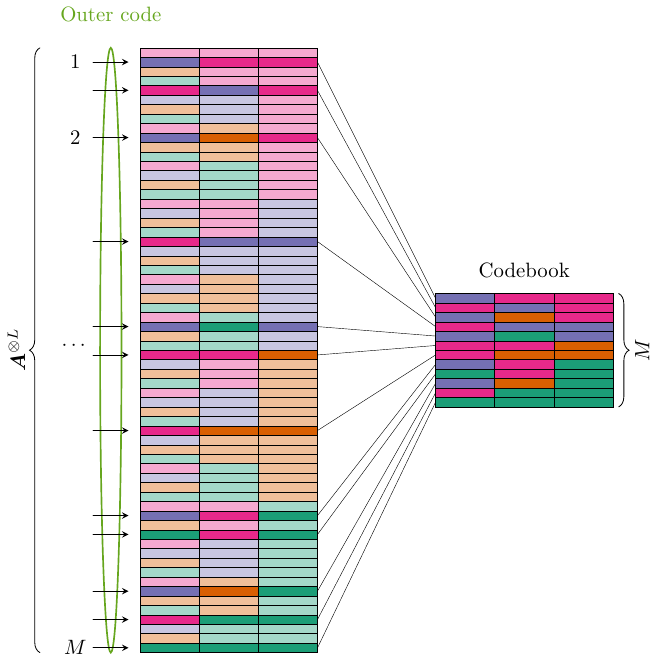}
\caption{\acs{ccs}: sensing matrix. An example for $L=3$ slots and a given outer code. Different colors represent different rows of the inner codebook matrix $\dmat{A}$ of size $n'\times Q$. In this illustrative case, $Q=4$. The matrix construction begins by combining all possible inner-code symbol combinations across $L$ slots, forming the large matrix $\dmat{A}^{\otimes L}$ of size $n'L\times Q^L$ (left). Then, the outer code selects $M$ valid outer codewords from the $Q^L$ variants. These valid codewords are highlighted with saturated colors on the left side of the figure, while the resulting \ac{cs} matrix is shown on the right side.\label{fig:ccs_sensing_matrix}}
\end{figure}

In the case of \ac{ccs}, the sensing matrix construction procedure is presented in \Fig{fig:ccs_sensing_matrix}. Within each slot, transmission is performed using an inner code with a codebook matrix $\dmat{A}$ of size $n'\times Q$. Given that there are $L$ slots in the frame, the sensing matrix construction begins by applying the direct product $L$ times to the matrix $\dmat{A}$, resulting in $\dmat{A}^{\otimes L}$ of size  $n'L\times Q^L$. Finally, only $M$ rows of the resulting matrix are selected -- forming a set of valid codewords for the outer code.
\chapter{\acs{irsa}: replica count distribution}\label{a:irsa}

Replica count distributions optimized using the density evolution procedure (over $10$ iterations) are shown in Tables \ref{DE_T1}, \ref{DE_T2}, and \ref{DE_T4} for $T=1$, $2$, and $4$, respectively. Recall that the replica count distribution is given by the polynomial $\Lambda\br{x}$~\eqref{eq:ch5:irsa_poly}, where the coefficient of the $i$-th degree term corresponds to the probability that $i$ replicas will be chosen.

Notably, smaller values of $T$ require greater time diversity, which is reflected in the higher-degree terms of the replica count polynomial. On the other hand, the decoder for $T=4$ can successfully decode slots with high collision orders, making additional replicas unnecessary in this case. Moreover, fewer replicas reduce the overall system load and improve the energy efficiency of the transmission scheme.

\begin{table}
\caption {Optimal node degree distribution and $E_b/N_0$ (dB) for $T=1$, $10$ iterations} 
\label{DE_T1}
\centering
\begin{tabular}{|r|llll|r|}
\hline
$\Ka$ & $\Lambda(x)$: & &&& $E_b/N_0$, dB \\
\hline
25   & $0.0928x$ & $+0.9072x^2$ &              &              &  3.71 \\ 
50   &           & $+1.0000x^2$ &              &              &  4.48 \\ 
100  &           & $\textcolor{white}{+}1.0000x^2$ & &        &  5.89 \\ 
150  & & $\textcolor{white}{+}0.6211x^2$ & $+0.3789x^3$ &     &  7.17 \\ 
200  & & $\textcolor{white}{+}0.4781x^2$ & $+0.5219x^3$ &     &  8.31 \\ 
250  & $0.0706x$ & $+0.2011x^2$ & $+0.7283x^3$ &              &  9.42 \\ 
300  & $0.1297x$ &              & $+0.8703x^3$ &              & 10.52 \\ 
350  & $0.1234x$ &              & $+0.8766x^3$ &              & 11.62 \\ 
400  & $0.1184x$ &              & $+0.8816x^3$ &              & 12.76 \\ 
450  & $0.1247x$ &              & $+0.7991x^3$ & $+0.0763x^4$ & 13.90 \\ 
500  & $0.1396x$ &              & $+0.6716x^3$ & $+0.1889x^4$ & 15.05 \\ 
550  & $0.1474x$ &              & $+0.5906x^3$ & $+0.2620x^4$ & 16.20 \\ 
600  & $0.1549x$ &              & $+0.5239x^3$ & $+0.3212x^4$ & 17.37 \\ 
\hline
\end{tabular}
\end{table}

\begin{table}
\vskip 0.3cm
\caption {Optimal node degree distribution and $E_b/N_0$ (dB) for $T=2$, $10$ iterations} 
\label{DE_T2}
\centering
\begin{tabular}{|r|l|r|}
\hline
$\Ka$ & $\Lambda(x)$ & $E_b/N_0$, dB \\
\hline
 25  & $1.0000x$                         &  1.63 \\ 
 50  & $0.4443x + 0.5557x^2$             &  3.09 \\ 
100  & $0.2099x + 0.7901x^2$             &  4.00 \\ 
150  & $0.1680x + 0.8320x^2$             &  4.92 \\ 
200  & $0.1382x + 0.8618x^2$             &  5.88 \\ 
250  & $0.1205x + 0.8795x^2$             &  6.86 \\ 
300  & $0.1008x + 0.8992x^2$             &  7.86 \\ 
350  & $0.0895x + 0.9105x^2$             &  8.87 \\ 
400  & $0.1206x + 0.7609x^2 + 0.1185x^3$ &  9.90 \\ 
450  & $0.1609x + 0.5953x^2 + 0.2438x^3$ & 10.92 \\ 
500  & $0.1867x + 0.4916x^2 + 0.3217x^3$ & 11.94 \\ 
550  & $0.2061x + 0.4160x^2 + 0.3779x^3$ & 12.97 \\ 
600  & $0.2184x + 0.3639x^2 + 0.4177x^3$ & 14.00 \\
\hline
\end{tabular}
\end{table}

\begin{table}
\centering
\caption {Optimal node degree distribution and $E_b/N_0$ (dB) for $T=4$, $10$ iterations} 
\label{DE_T4}
\begin{tabular}{|r|l|r|r|r|}
\hline
$K$ & $\Lambda(x)$ & $E_b/N_0$, dB \\
\hline
 25  & $1.0000x$             &  0.53 \\ 
 50  & $1.0000x$             &  1.10 \\ 
100  & $0.5781x + 0.4219x^2$ &  2.84 \\ 
150  & $0.3987x + 0.6013x^2$ &  3.54 \\ 
200  & $0.3764x + 0.6236x^2$ &  4.31 \\ 
250  & $0.3620x + 0.6380x^2$ &  5.13 \\ 
300  & $0.3538x + 0.6462x^2$ &  5.97 \\ 
350  & $0.3479x + 0.6521x^2$ &  6.83 \\ 
400  & $0.3416x + 0.6584x^2$ &  7.71 \\ 
450  & $0.3349x + 0.6651x^2$ &  8.60 \\ 
500  & $0.3302x + 0.6698x^2$ &  9.52 \\ 
550  & $0.3269x + 0.6731x^2$ & 10.45 \\ 
600  & $0.3225x + 0.6775x^2$ & 11.40 \\
\hline
\end{tabular}
\end{table} 
\chapter{Coded CS: further results}\label{a:ccs}

\section{Capacity of the channel for the outer code}\label{sec:outer_channel_capacity}

In this section, we estimate the capacity of the channel $\mathfrak{W}$ presented in Section~\ref{sec:outer_code_channel}. Recall~\Fig{fig:t_error_code} for an illustrative example. For simplicity, we will not perform optimization over all independent distributions of $\rscalar{x}^{(i)}$, $i\in[K_a]$. Instead, we consider only the uniform distribution. Thus, we obtain an expression for the achievable rate rather than the true capacity. We have
\begin{equation}\label{eq:cu}
I_u = I(\rscalar{x}^{(1)}, \ldots, \rscalar{x}^{(K_a)}; \dset{Y}) = H(\dset{Y}) - H(\dset{Y} | \rscalar{x}^{(1)}, \ldots, \rscalar{x}^{(K_a)}),
\end{equation}
where $\rscalar{x}^{(i)} \overset{i.i.d.}{\sim} \mathrm{Unif}([Q])$.

Let $\Omega = |\dset{Y}^{(A)}|$, then we have
\begin{flalign*}
H(\dset{Y} | \rscalar{x}^{(1)}, \ldots, \rscalar{x}^{(K_a)}) &= \mathbb{E}_{\Omega} [H(\dset{Y} | \Omega)] \\
&= \mathbb{E}_{\Omega} \left[ \Omega h(\pM) + (Q-\Omega) h(\pF)
\right] \\
&= \mathbb{E}\left[ \Omega \right] h(\pM) + \left(Q-\mathbb{E}\left[ \Omega \right]\right) h(\pF)
\end{flalign*}
$$
= Q \left( 1 - \left(\frac{Q-1}{Q} \right)^{K_a} \right) h(\pM) + Q \left( \frac{Q-1}{Q} \right)^{K_a} h(\pF),
$$
where the second equality follows from the following facts: the channel $\mathfrak{W}_E$ operates on elements independently, $H\left(1_{\{x \in \dset{Y} | x \in \dset{Y}^{(A)} \}}\right) = h(\pM)$ and $H\left(1_{\{x \in \dset{Y} | x \not\in \dset{Y}^{(A)} \}}\right) = h(\pF)$. The fourth equality follows from
\begin{flalign*}
\mathbb{E}\left[ \Omega \right] &= \sum\limits_{x \in [Q]} \mathbb{E} \left[ 1_{\{x \in \dset{Y}^{(A)}\}} \right] = Q \prob{1 \in \dset{Y}^{(A)}} \\
&= Q \left( 1-\left(\frac{Q-1}{Q}\right)^{K_a} \right).
\end{flalign*}

The exact calculation of $H(\dset{Y})$ is more complicated. Let us introduce the notation
\begin{equation}\label{eq:py}
\mu_r = \left( 1-\left(\frac{Q-1}{Q}\right)^{r} \right) \left(1-\pM\right) + \left(\frac{Q-1}{Q}\right)^{r}  \pF
\end{equation}
and use the following estimate:
\[
H(\dset{Y}) \leq \sum\limits_{x \in [Q]} H\left(1_{\{x \in \dset{Y}\}}\right) = Q h\left(\prob{1 \in \dset{Y}}\right) = Q h\left(\mu_{K_a}\right).
\]

We note that the estimates above are quite simple and can be found in~\cite{fengler2020sparcs}. These results are useful for formulating the following theorem.

\begin{theorem}[Outer channel converse, \cite{Frolov2022TCOM}]
\label{prop:converse}
Consider $K_a$ users transmitting messages $W_i \in [M]$, $i \in [K_a]$, over the channel $\mathfrak{W}$ within $L$ channel uses, under a maximum tolerable \ac{pupe} $P_e$. Under the following conditions: (a) the users employ a uniform input distribution, and (b) the decoder output list size is equal to $\ell_0$, the following inequality holds:
\begin{equation}\label{eq:converse_outer}
\log_2{M} \leq 
\frac{1}{\left(1 - P_e\right)}\left(
\frac{L I_u}{K_a} + h\left(P_e\right) + \left(1 - P_e\right)\log_2{\ell_0}\right),
\end{equation}
where $I_u$ is given by~\eqref{eq:cu}.
\end{theorem}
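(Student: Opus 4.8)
The plan is to mirror the proof of the multi-user converse in Theorem~\ref{the:converse_ura}, replacing the per-channel-use Gaussian capacity $C(\Ka P)$ by the per-slot achievable rate $I_u$ of the outer channel $\mathfrak{W}$ from~\eqref{eq:cu}, and the blocklength $n$ by the outer-code length $L$. First I would fix the $i$-th user and set $\varepsilon_i = \prob{\rscalar{W}_i \not\in \dset{R}}$, so that $\frac{1}{K_a}\sum_{i=1}^{K_a}\varepsilon_i = P_e$. Applying Fano's list inequality exactly as in~\eqref{eq:list_fano}, but with the received object $\rvec{y}$ replaced by the sequence of output lists $\vdset{Y} = (\dset{Y}_1,\ldots,\dset{Y}_L)$, gives $H(\rscalar{W}_i \mid \vdset{Y}) \leq h(\varepsilon_i) + (1-\varepsilon_i)\log_2\ell_0 + \varepsilon_i \log_2(M-\ell_0)$. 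Summing over $i$ and using $H(\rscalar{W}_i\mid\vdset{Y}) = \log_2 M - I(\rscalar{W}_i;\vdset{Y})$ converts this into a lower bound on $\sum_i I(\rscalar{W}_i;\vdset{Y})$.

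Next I would upper-bound $\sum_i I(\rscalar{W}_i;\vdset{Y})$ by the single-letter quantity $L I_u$ through the same chain of inequalities used in the Gaussian case. Because the messages are independent, $\sum_{i=1}^{K_a} I(\rscalar{W}_i;\vdset{Y}) \leq I(\rscalar{W}_1,\ldots,\rscalar{W}_{K_a};\vdset{Y})$; the Markov chain $(\rscalar{W}_1,\ldots,\rscalar{W}_{K_a}) \to (\ind{\dvec{x}}{1}{},\ldots,\ind{\dvec{x}}{K_a}{}) \to \vdset{Y} \to \dset{R}$ together with the data-processing inequality gives $I(\rscalar{W}_1,\ldots,\rscalar{W}_{K_a};\vdset{Y}) \leq I(\ind{\dvec{x}}{1}{},\ldots,\ind{\dvec{x}}{K_a}{};\vdset{Y})$. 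Since $\mathfrak{W}$ acts independently across the $L$ slots, subadditivity of entropy together with the memorylessness identity $H(\vdset{Y}\mid \ind{\dvec{x}}{1}{},\ldots,\ind{\dvec{x}}{K_a}{}) = \sum_{l=1}^L H(\dset{Y}_l\mid \ind{x}{1}{l},\ldots,\ind{x}{K_a}{l})$ single-letterizes the mutual information into $\sum_{l=1}^L I(\ind{x}{1}{l},\ldots,\ind{x}{K_a}{l};\dset{Y}_l)$. Under the uniform-input assumption (a), the per-slot inputs are i.i.d. uniform on $[Q]$ across users, so each summand equals $I_u$ by the very definition~\eqref{eq:cu}, and the sum is $L I_u$.

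It then remains to assemble the two bounds. Substituting $\sum_i I(\rscalar{W}_i;\vdset{Y}) \leq L I_u$, using $\sum_i\varepsilon_i = K_a P_e$, the concavity bound $\sum_i h(\varepsilon_i) \leq K_a h(P_e)$ from Jensen's inequality, and $\log_2(M-\ell_0) \leq \log_2 M$, one obtains $K_a(1-P_e)\log_2 M \leq L I_u + K_a h(P_e) + K_a(1-P_e)\log_2\ell_0$ after moving the $K_a P_e \log_2 M$ term to the left; dividing by $K_a(1-P_e)$ yields exactly~\eqref{eq:converse_outer}. The one genuinely delicate step is the per-slot evaluation: the identity $I(\ind{x}{1}{l},\ldots,\ind{x}{K_a}{l};\dset{Y}_l) = I_u$ is where the uniform-input hypothesis (a) is indispensable, which is what makes the result an achievable-rate converse rather than a true capacity converse. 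One must also take care that, although the outer-codeword symbols $\ind{x}{i}{1},\ldots,\ind{x}{i}{L}$ of a single user are correlated (all determined by $W_i$), the single-letterization only uses the channel's memorylessness and the fact that at a fixed slot the $K_a$ symbols are i.i.d. uniform \emph{across users}, which is all that is required.
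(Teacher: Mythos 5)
Your proposal is correct and follows essentially the same route as the paper: per-user Fano list inequality, summation, data processing, and the bound $\sum_i I(\rvec{x}^{(i)};\vdset{Y}) \leq L I_u$, with your extra single-letterization detail merely filling in a step the paper asserts without elaboration. Note also that your direction $\sum_i h(\varepsilon_i) \leq K_a h(P_e)$ (Jensen with concave $h$) is the one actually needed, whereas the paper's proof states this inequality with the opposite (and incorrect) sign.
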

\begin{proof}
Consider the $i$-th user. We have the following Markov chain:
$$
\rscalar{W}_i \to \rvec{x}^{(i)} \to \vdset{Y} \to {\dset{R}},
$$
where $\rvec{x}^{(i)} \in [Q]^L$, $\vdset{Y} = (\dset{Y}_1, \ldots, \dset{Y}_L)$, $\dset{Y}_l \subseteq [Q]$, $l \in [L]$, and $|{\dset{R}}| = \ell_0$. Let $\varepsilon_i = \prob{\rscalar{W}_i \not\in {\dset{R}}}$. Using Fano's list inequality, we have~\eqref{eq:list_fano}, repeated below
$$
H\left(\rscalar{W}_i | \vdset{Y} \right) \leq h(\varepsilon_i) + \left(1 - \varepsilon_i \right) \log_2{\ell_0} + \varepsilon_i  \log_2{\left(M - \ell_0\right)}.
$$

Summing the inequalities~\eqref{eq:list_fano} for $i \in [K_a]$ and using the fact that
$$
H\left(\rscalar{W}_i | \vdset{Y} \right) = H(\rscalar{W}_i) - I(\rscalar{W}_i; \vdset{Y}),
$$
along with the data processing inequality
$$
I\left(\rvec{x}^{(i)};\vdset{Y} \right) \geq I\left(\rscalar{W}_i;\vdset{Y} \right),
$$
we obtain
\begin{flalign*}
\sum\limits_{i=1}^{K_a}I\left(\rvec{x}^{(i)};\vdset{Y} \right) \geq&
\sum\limits_{i=1}^{K_a} H(\rscalar{W}_i) \\
& - \sum\limits_{i=1}^{K_a} h\left(\varepsilon_i\right) - \left(1 - \sum\limits_{i=1}^{K_a} \varepsilon_i \right) \log_2{\ell_0} \\
& - \sum\limits_{i=1}^{K_a} \varepsilon_i \log_2{\left(M - \ell_0\right)}.
\end{flalign*}

Finally, since
$$
H(\rscalar{W}_i) = \log_2M, \ \text{for} \ i \in [K_a],
$$
$$
\sum\limits_{i=1}^{K_a}I\left(\rvec{x}^{(i)};\vdset{Y} \right) \leq L I_u, \quad \sum\limits_{i=1}^{K_a} \varepsilon_i = K_a P_e,
$$
and
$$
\sum\limits_{i=1}^{K_a} h\left(\varepsilon_i\right) \geq K_a h(P_e),
$$
we obtain the result stated in the theorem.

\end{proof}

\section{\texorpdfstring{$t$}{t}-tree code: the average number of paths}\label{appendix:t_tree_linear_paths}

Recall the practical scheme based on linear codes, as presented in Section~\ref{sec:treecode_practical}. In this section, we analyze the \ac{far} and the average decoding complexity. Note that at each decoding step $l \in [L]$, the list of messages $\dset{V}_l$ can be expressed as the following disjoint union:
\begin{equation}\label{eq:total_paths}
\dset{V}_l = \dset{V}_l^{(c)} \bigsqcup \dset{V}_l^{(f)}, \:\: l \in [L],
\end{equation}
where $\dset{V}_l^{(c)}$ represents the beginnings of transmitted messages, and $\dset{V}_l^{(f)}$ corresponds to falsely alarmed messages.

We begin by calculating $\mathbb{E}[|\dset{V}_l^{(c)}|]$ and $\mathbb{E}[|\dset{V}_l^{(f)}|]$ for $l \in [L]$. To do so, we apply random coding to the following ensemble.

\begin{define}
The elements of the ensemble $\mathcal{E}_2\br{b_1, \ldots, b_L, L}$ are obtained by randomly selecting the generator matrix $\dmat{G}$ with the structure defined by~\eqref{eq:G}, i.e., the elements of each matrix $\dmat{G}_{l',l}$, where $1\leq l' \leq l \leq L$, are sampled i.i.d. from the $\mathrm{Bern}(1/2)$ distribution.
\end{define}

\begin{theorem}\label{th:linear_av}
Consider the ensemble $\mathcal{E}_2(b_1, \ldots, b_L, L)$. Let $M_l = 2^{B_l}$. The following bounds hold for $l \in [L]$:

\begin{eqnarray}
\mathbb{E}[|\dset{V}_l^{(c)}|] &=& {v}_l^{(c)} \triangleq M_l \rho_l \lambda_l, \label{eq:vlc} \\
\mathbb{E}[|\dset{V}_l^{(f)}|] &\leq& {v}_l^{(f)} \triangleq M_l \sum\limits_{j=0}^{l-1} \rho_j \lambda_j, \label{eq:vlf}
\end{eqnarray}
where
\begin{flalign}\label{eq:Pl}
&\lambda_j = \left(1 - \frac{1}{M_{j + 1}}\right)^{K_a} - \left(1 - \frac{1}{M_{j}}\right)^{K_a}, \:\: j\in [l-1], \nonumber\\
&\lambda_0 = \left(1 - \frac{1}{M_{1}}\right)^{K_a}, \nonumber\\
&\lambda_l = 1 - \left(1 - \frac{1}{M_{l}}\right)^{K_a},
\end{flalign}
and
\[
\rho_j = \sum\limits_{\begin{array}{c} 0 \leq x \leq j\\ 0 \leq y \leq l-j\\ x+y \leq t\end{array}} \binom{j}{x} \binom{l-j}{y} \pM^x (1-\pM)^{j-x} 
\gamma_1^{y} \gamma_2^{l-j-y}, 
\]
where
\[
\gamma_1 = \left( \frac{K_a}{Q} \right) \pM + \left(1 - \frac{1}{Q} \right)(1-\pF),
\]
and
\[
\gamma_2 =  \left( \frac{K_a}{Q} \right) (1-\pM) + \left(1 - \frac{1}{Q} \right)\pF.
\]
\end{theorem}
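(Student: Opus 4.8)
The plan is to compute both expectations by linearity over the $M_l = 2^{B_l}$ candidate partial messages $\dvec{v}_l \in \{0,1\}^{B_l}$, writing $\mathbb{E}[|\dset{V}_l^{(c)}|] = \sum_{\dvec{v}_l} \Pr[\dvec{v}_l \in \dset{V}_l \text{ and correct}]$ and similarly for $\dset{V}_l^{(f)}$, where the randomness comes from the ensemble $\mathcal{E}_2$ (random $\dmat{G}$), the uniform transmitted messages $W_1, \ldots, W_{K_a}$, and the outer channel $\mathfrak{W}$. The central device is to classify each candidate by the length $j$ of the longest prefix (counted in chunks) that it shares with the transmitted set: for a correct path $j = l$, while a false path necessarily diverges before the end, so $j \in \{0, \ldots, l-1\}$. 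The two factors $\lambda_j$ and $\rho_j$ will then be identified respectively with the probability of ``sharing exactly $j$ chunks'' and the conditional probability of ``surviving the distance-$t$ test''.

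First I would establish the prefix-sharing probabilities. Let $A_j$ denote the event that the length-$j$ prefix of a fixed $\dvec{v}_l$ coincides with the length-$j$ prefix of at least one transmitted message. Since a transmitted message is uniform on $[M]$, its length-$j$ prefix is uniform on $[M_j]$, giving $\Pr[A_j] = 1 - (1 - 1/M_j)^{K_a}$; as $A_{j+1} \subseteq A_j$, the event ``shares exactly $j$ chunks'' is $A_j \setminus A_{j+1}$ with probability $\Pr[A_j] - \Pr[A_{j+1}]$, which telescopes to $\lambda_j$ in \eqref{eq:Pl} (the boundary cases $\lambda_0$ and $\lambda_l$ following from $\Pr[A_0] = 1$ and from $A_l$ being the event ``$\dvec{v}_l$ is a transmitted prefix''). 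In particular $M_l \lambda_l$ is exactly the expected number of distinct transmitted length-$l$ prefixes, which is the natural count underlying $\dset{V}_l^{(c)}$.

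Next I would compute the survival probability $\rho_j$ conditioned on divergence at chunk $j+1$. The first $j$ symbols of such a candidate equal the genuine transmitted symbols of the matched user, hence each lies outside its received list independently with probability $\pM$, contributing a $\mathrm{Binomial}(j,\pM)$ count of misses. For positions $j+1, \ldots, l$ the key step is the tree-code randomization argument: because the candidate differs from every transmitted prefix in chunk $j+1$ and the blocks $\dmat{G}_{j+1,l'}$ have i.i.d.\ $\mathrm{Bern}(1/2)$ entries, the symbols $f_{O,l'}(\dvec{v}_{[l']})$ for $l' > j$ are jointly uniform on $[Q]$ and independent of the transmitted symbols and of the received lists $\dset{Y}_{l'}$. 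Each such fresh symbol then falls outside its list with true probability $p_{\mathrm{tx}}\pM + (1-p_{\mathrm{tx}})(1-\pF)$; upper-bounding $p_{\mathrm{tx}} \le K_a/Q$ and $1 - p_{\mathrm{tx}} \le 1 - 1/Q$ yields the per-symbol miss bound $\gamma_1$ and, symmetrically, the hit bound $\gamma_2$. The distance-$t$ test survives iff the total number of misses is at most $t$, so summing the product of the two binomials over $x + y \le t$ and replacing the true (complementary) miss/hit probabilities by $\gamma_1,\gamma_2$ term by term gives exactly $\rho_j$ as an \emph{upper} bound on survival --- the term-wise replacement is legitimate despite $\gamma_1 + \gamma_2 > 1$ because each summand $\binom{l-j}{y}\gamma_1^y\gamma_2^{l-j-y}$ is monotone increasing in both $\gamma_1$ and $\gamma_2$. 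Assembling $\mathbb{E}[|\dset{V}_l^{(c)}|] = M_l \lambda_l \rho_l$ (exact, since $j=l$ involves no fresh symbols) and $\mathbb{E}[|\dset{V}_l^{(f)}|] \le M_l \sum_{j=0}^{l-1} \lambda_j \rho_j$ then completes the argument.

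The hard part will be making the randomization claim fully rigorous: one must verify that conditioning on divergence at chunk $j+1$ renders the tail symbols not only uniform but jointly independent across positions and independent of the received lists, and that the $\lambda_j$ and $\rho_j$ factors genuinely decouple under the expectation over $\mathcal{E}_2$ and $\mathfrak{W}$. A secondary subtlety is controlling the double counting when a false path shares prefixes with several transmitted messages simultaneously; this is precisely what forces the inequality in \eqref{eq:vlf}, and I would handle it by a union bound over the matched user together with the term-wise $\gamma$-bounds just described.
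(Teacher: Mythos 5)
Your proposal follows essentially the same route as the paper's proof: the same partition of candidates by the length $j$ of the longest matching transmitted prefix (your $A_j\setminus A_{j+1}$ is exactly the paper's $E'_j=E_{j+1}\setminus E_j$, with the identical telescoping giving $\lambda_j$), the same split of $\rscalar{D}$ into a $\mathrm{Binomial}(j,\pM)$ part plus fresh-symbol terms bounded via $\Pr[\hat{x}_{l'}\in\dset{Y}^{(A)}_{l'}\mid E'_j]\leq K_a/Q$, yielding $\gamma_1,\gamma_2$ and the term-wise bound $\rho_j$. One small correction: the inequality in \eqref{eq:vlf} is forced by the $\gamma_1,\gamma_2$ relaxation (needed because the linear map makes the users' symbols dependent), not by any double counting over matched users --- the $E'_j$ decomposition is a disjoint partition and $\Pr[E'_j]=\lambda_j$ holds exactly.
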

\begin{proof}
Assume that the messages
\begin{equation}\label{eq:outer_code_tx_sample}
\ind{\rvec{u}}{1}{}, \ind{\rvec{u}}{2}{}, \ldots, \ind{\rvec{u}}{K_a}{}
\end{equation}
were transmitted. Let $\hat{\dvec{u}}$ be an information word and define $\hat{\dvec{x}}_{[l]} = f_O(\hat{\dvec{u}}_{[l]})$. In what follows, we assume $\hat{\dvec{u}}$ to be fixed, while~\eqref{eq:outer_code_tx_sample} are chosen uniformly at random. 

Introduce the r.v.
$$
\rscalar{D} = d(\vdset{Y}_{[l]}, \hat{\dvec{x}}_{[l]})
$$
and the events
\[
E_j = \bigcap\limits_{i=1}^{K_a} E^c_{i,j}, \quad j \in [l],
\]
where $E_{i,j} = \{ \beg{\hat{\dvec{u}}}{j} = \ind{\rvec{u}}{i}{[j]} \}$ for $i \in [K_a]$. Note that the sequence of events satisfies $E_1 \subseteq E_2 \subseteq \ldots \subseteq E_l$.

Observe that $\beg{\hat{\dvec{u}}}{l}$ is included in $\dset{V}_l^{(c)}$ if it was chosen by at least one user (event $E_l^c$) and $\rscalar{D} \leq t$, and is included in $\dset{V}_l^{(f)}$ if it was not chosen (event $E_l$) and $\rscalar{D} \leq t$. Thus, we have
\begin{eqnarray*}
\mathbb{E}[|\dset{V}_l^{(c)}|] &=& M_l \prob{\rscalar{D} \leq t, E_l^c}, \\
\mathbb{E}[|\dset{V}_l^{(f)}|] &=& M_l \prob{\rscalar{D} \leq t, E_l}.
\end{eqnarray*}

\textit{Calculation of $\mathbb{E}[|\dset{V}_l^{(c)}|]$.} Note that $\prob{E_l^c} = \lambda_l$ since there are $M_l$ possibilities for each $\ind{\rvec{u}}{i}{[l]}$, $i \in [K_a]$, and
\[
\prob{\rscalar{D} \leq t \: | \: E_l^c} = \rho_l = \sum\limits_{i=1}^{t} \binom{l}{i} \pM^i (1-\pM)^{l-i}.
\]
Multiplying by $M_l$, we obtain~\eqref{eq:vlc}.

\textit{Estimation of $\mathbb{E}[|\dset{V}_l^{(f)}|]$.} The main difference compared to the proof of Theorem~\ref{thm:rcb} is as follows. The beginning of the information word $\beg{\hat{\dvec{u}}}{l}$ may coincide with the beginning of one of the transmitted information words. In this case, the beginnings of the codewords will also coincide, which must be accounted for in the analysis. 

Introduce the events $E'_j$ indicating that the longest match length is equal to $j$, i.e.,
\begin{eqnarray}
E'_j = E_{j+1} \setminus E_j, \quad j = 0, \ldots, l-1, \quad \text{where} \quad E_0 \triangleq \emptyset. \label{eq:set_lambda_j}
\end{eqnarray}

We note that $E_l = \bigsqcup_{j=0}^{l-1} E'_j$, and thus
\[
 \prob{\rscalar{D} \leq t, E_l} =  \sum\limits_{j=0}^{l-1} \prob{\rscalar{D} \leq t | E'_j} \cdot \prob{E'_j}.
\]
According to~\eqref{eq:set_lambda_j}, the probability $\prob{E'_j} = \lambda_j$, where $\lambda_j$ is given by~\eqref{eq:Pl}. 

Now consider $\prob{\rscalar{D} \leq t \: | \: E'_j}$. Clearly,
$$
\prob{\hat{x}_{l'} \in \dset{Y}^{(A)}_{l'} \: | \: E'_j} = 1, \quad \text{for} \quad l' \in [j].
$$
Thus, an error can only occur due to missed detection, and the result is determined by i.i.d. r.v.s $\xi_{l'} \sim \mathrm{Bern}(p_m)$.

Consider a slot $l' \in [l] \setminus [j]$. Unlike in Theorem~\ref{thm:rcb}, the symbols $\ind{\rvec{x}}{i}{l'}$ are not independent due to the linear mapping $f_{O,l'}$. Consequently, we apply the following bounds:
\[
\frac{1}{Q} \leq \prob{\hat{x}_{l'} \in \dset{Y}^{(A)}_{l'} | E'_j} \leq \frac{K_a}{Q}, \quad l' \in [l] \setminus [j].
\]
Hence, 
\[
\prob{\hat{x}_{l'} \not\in \dset{Y}_{l'} | E'_j} \leq \gamma_1,
\quad
\prob{\hat{x}_{l'} \in \dset{Y}_{l'} | E'_j} \leq \gamma_2,
\]
for $l' \in [l] \setminus [j]$.

Under the condition $E'_j$, the r.v. $\rscalar{D}$ can be expressed as follows:
\[
\rscalar{D} = \sum\limits_{l' = 1}^j \xi_{l'} + \sum\limits_{l' = j+1}^l \psi_{l'},
\]
where $\xi_{l'} \sim \mathrm{Bern}(p_m)$ and $\psi_{l'} \sim \mathrm{Bern}(\tilde{p})$. The following bounds hold: $\tilde{p} \leq \gamma_1$ and $1-\tilde{p} \leq \gamma_2$. 

Finally, we note that $\prob{\rscalar{D} \leq t \: | \: E'_j} = \rho_j$. Combining the obtained results, we obtain the theorem statement.
\end{proof}

\begin{corollary}\label{thm:linear_bound}
There exists a code $\code \in \mathcal{E}_2(b_1, \ldots, b_L, L)$, such that
\[
P_e = \sum\limits_{i=t+1}^{L} \binom{L}{i} \pM^i (1-\pM)^{L-i} \quad \text{and} \quad P_f \leq {v}^{(f)}_L,
\]
where ${v}^{(f)}_L$ is given by~\eqref{eq:vlf}.
\end{corollary}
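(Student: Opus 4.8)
The plan is to derive Corollary~\ref{thm:linear_bound} directly from Theorem~\ref{th:linear_av} together with the decoding guarantee of Algorithm~\ref{alg:dec}, closing the argument with the probabilistic method. The key observation that makes everything fall into place is that the per-user error probability $P_e$ is \emph{constant} over the ensemble $\mathcal{E}_2(b_1,\ldots,b_L,L)$, so only the false-alarm count needs to be controlled by a random-coding existence argument.

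First I would establish the exact expression for $P_e$. By the Remark following Algorithm~\ref{alg:dec}, the decoder is a bounded-distance list decoder: it includes the transmitted message $\dvec{u}$ in the output list $\dset{V}_L$ whenever $d(\vdset{Y}, f_O(\dvec{u})) \leq t$, i.e.\ whenever at most $t$ of the $L$ transmitted outer symbols fail to appear in their recovered lists. Recalling the channel model of Section~\ref{sec:outer_code_channel}, the true symbol $x_l$ always belongs to the A-channel output and is missed independently across slots with probability $\pM$; hence the number of missed positions of a fixed transmitted codeword is distributed as $\mathrm{Bino}(L,\pM)$, and the transmitted message is lost precisely when this count exceeds $t$. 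This yields $P_e = \Pr[\mathrm{Bino}(L,\pM) > t] = \sum_{i=t+1}^{L}\binom{L}{i}\pM^i(1-\pM)^{L-i}$, as an equality. Since the list decoder returns \emph{every} message within distance $t$ regardless of the generator matrix, this value is identical for all codes $\code\in\mathcal{E}_2$.

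Next I would handle the false-alarm term. Identifying the falsely detected messages $\dset{R}\setminus\dset{T}$ with the false paths $\dset{V}_L^{(f)}$ and following the convention of Theorem~\ref{thm:rcb} (treating $P_f$ as the expected number of accepted spurious codewords, an upper bound on the ratio in~\eqref{eq:ch3:p_false}), Theorem~\ref{th:linear_av} supplies the ensemble average $\mathbb{E}[|\dset{V}_L^{(f)}|]\leq v^{(f)}_L$, where the expectation is over the random generator matrix and the random transmitted messages/channel realization. The probabilistic method then finishes the proof: because $P_e$ is a single constant shared by every code in the ensemble, while the ensemble-average false-alarm count is at most $v^{(f)}_L$, there must exist at least one generator matrix $\dmat{G}$ — hence a code $\code\in\mathcal{E}_2$ — whose realized false-alarm count is $\leq v^{(f)}_L$. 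That same code attains the exact $P_e$ above and satisfies $P_f\leq v^{(f)}_L$, which is the claim.

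The genuine content of the corollary is entirely carried by Theorem~\ref{th:linear_av}, which I am assuming; the remaining work is the bookkeeping above. The only point demanding care — and the natural place for an objection — is the simultaneity requirement: one must exhibit a \emph{single} code meeting both conditions. This is clean here exactly because the decoder's bounded-distance inclusion property makes $P_e$ code-independent, so averaging over the ensemble to control $P_f$ does not disturb $P_e$. A secondary subtlety worth flagging explicitly is the normalization of $P_f$; I would state up front that $P_f$ denotes the expected number of false codewords, so that the bound $v^{(f)}_L$ on $\mathbb{E}[|\dset{V}_L^{(f)}|]$ transfers without a spurious denominator.
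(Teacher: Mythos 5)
Your proposal is correct and follows essentially the same route as the paper: $P_e$ is obtained exactly as in Theorem~\ref{thm:rcb} from the bounded-distance inclusion property of Algorithm~\ref{alg:dec} (and is indeed code-independent), while $P_f$ is controlled via Theorem~\ref{th:linear_av} and the probabilistic method. The only cosmetic difference is that the paper reaches $\mathbb{E}[|\dset{V}^{(f)}_L|]$ by writing $P_f = \prob{|\dset{R}\setminus\dset{T}|\geq 1}$ and applying Markov's inequality, whereas you bound the \ac{far} ratio by the expected false-path count directly; both chains terminate at the same quantity $v^{(f)}_L$.
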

\begin{proof}
The proof for $P_e$ is exactly the same as in Theorem~\ref{thm:rcb}. To estimate $P_f$, we apply Markov's inequality and use Theorem~\ref{th:linear_av}. We have
\[
P_f = \prob{|{\dset{R}} \backslash \dset{T}| \geq 1} \leq \mathbb{E}[|{\dset{R}} \backslash \dset{T}|] = \mathbb{E}[|\dset{V}^{(f)}_L|] \leq {v}^{(f)}_L.
\]
\end{proof}

\chapter{Asymptotics}\label{app:asymptotics}

Let us consider asymptotic regime as described in Section~\ref{sec:many_mac2}, where $n \to \infty$ and $K_{a} \to \infty$ such, that $K_a = \mu n$, where $\mu$ represents density of users per channel use, and the number of bits transmitted by each user remains fixed at $k = \log_2M$. As shown, in such a setup, it makes sense to consider only the case where each user has a different codebook of size $M$.

In what follows, we will be interested in the fundamental trade-off between the user density $\mu$ and minimal required energy per bit:
\[
\br{\frac{E_{b}}{N_{0}}}_{\min} = \inf\brc{\frac{nP}{2\log_{2}M}},
\]
where the infimum is taken over all $(n, M, P)$ for which there exist $K_{a} = \mu n$ codebooks, each of size $M$, that are decodable with \ac{pupe} $\leq \epsilon$. 

To formulate the asymptotic bounds, it is convenient to introduce the following notation of the total energy $P_{tot} = \Ka P$, so that 
\[
P_{tot} = 2\frac{E_{b}}{N_{0}}\mu\log_{2}M.
\]
Thus, $P_{tot}$ determines $\frac{E_{b}}{N_{0}}$ and is fixed. Additionally, let $t=\theta \Ka$, then $Pt = P\Ka\theta = \theta P_{tot}$.

\section{Converse bound}
Let us consider the converse bound from \cite{polyanskiy2017perspective}. In this case, we have the best of the following two bounds.

The first one is given by Fano's inequality:
\begin{equation}
\label{eq:asympt_conv_fano}
\br{1-\epsilon} \mu \log_{2} M \leq \frac{1}{2}\log_{2}\br{1 + P_{tot}} + \mu h\br{\epsilon},
\end{equation}
where $h\br{\epsilon} = -\epsilon\log_{2}\br{\epsilon} - \br{1 - \epsilon}\log_{2}\br{1 - \epsilon}$ is the binary entropy function.

The second bound follows from the fact that each user transmits only finitely many bits, $\log_{2}M$ \cite[Theorem 2]{ppv2011}
\begin{equation}
\label{eq:asympt_conv_bit_num}
\log_{2}M \leq -\log_{2} Q\br{\sqrt{\frac{P_{tot}}{\mu}} + Q^{-1}\br{1 - \epsilon}},
\end{equation}
where $Q^{-1}\br{\cdot}$ denotes the inverse of the standard Gaussian tail function $Q\br{\cdot}$. 

For both bounds, we need to determine such minimal $\frac{E_{b}}{N_{0}}$ that corresponding inequality holds and then select the maximal among them.

\section{Achievability bounds}
Now, let us consider the achievability bounds for the asymptotic regime. Recall the \ac{pupe} definition from \eqref{eq:chapter4:pupe_definition}
\[
P_e \leq \sum_{t = 1}^{K_a} \frac{t}{K_a} \Pb{\dset{E}_t} + p_0.
\]
It is clear that in this case, $p_0=0$ due to concentration theorem and different codebooks. 

Let us rewrite the notation of \ac{pupe} in the following form for the fixed $\epsilon > 0$ (omitting $p_0$)
\[
\Tilde{P}_e \leq \sum_{t = \lfloor \epsilon\Ka \rfloor}^{K_a} \frac{t}{K_a} \exp\brs{-nE_{t}}.
\]
and consider the following exponent
\begin{flalign*}
E = & \lim\limits_{n \to \infty} \frac{-\ln \Tilde{P}_{e}}{n} \\
& \geq \lim\limits_{n \to \infty}-\frac{1}{n}\ln \sum\limits_{t=\lfloor{\epsilon\Ka}\rfloor}^{\Ka}\frac{t}{\Ka}\exp\brs{-nE_{t}} \\
& \geq \lim\limits_{n \to \infty}-\frac{1}{n}\ln \sum\limits_{t=\lfloor{\epsilon\Ka}\rfloor}^{\Ka}\exp\brs{-nE_{t}} \\
& \geq \lim\limits_{n \to \infty}-\frac{1}{n}\ln \br{ \Ka \max\limits_{\lfloor \epsilon\Ka \rfloor \leq t \leq \Ka}\brc{\exp\brs{-nE_{t}}}} \\
& = E_{min} = \min\limits_{\epsilon \leq \theta \leq 1}E_{\theta}
\end{flalign*}
where the interval $\epsilon \leq \theta \leq 1$ is justified by the fact that ensuring $E_{min} > 0$ in the asymptotic regime guarantees that \ac{pupe} $\leq \epsilon$.

Thus, for the fixed $\epsilon > 0$, the fundamental tradeoff of $\frac{E_{b}}{N_{0}}$ vs. $\mu$ is defined by the following condition:
$$
E_{min} = \min\limits_{\epsilon \leq \theta \leq 1}E_{\theta} > 0.
$$
In other words, for the fixed $\epsilon$ and $\mu,$ we need to find minimal $\br{\frac{E_{b}}{N_{0}}}_{min}$, such that this condition holds.

To formulate asymptotic achievability bounds, we introduce some helpful notations. Consider the binomial coefficients $R_{1}$ and $R_{2}$, used in the bounds formulation of the Section~\ref{ch4:sec_ach_bounds}, in the asymptotic regime. Since we consider different codebooks, $R_{1} = \frac{1}{n}\log\binom{M-\Ka}{t}$ is replaced by $\frac{1}{n}\log{\br{M-1}^{t}}$. Thus,
\[
\Tilde{R}_{1} = \lim\limits_{n \to \infty}\frac{1}{n}\ln{\br{M-1}^{t}} = \lim\limits_{n \to \infty}\frac{t}{n}\ln{\br{M-1}} = \theta\mu\ln{\br{M-1}}
\]
and
\[
\Tilde{R}_{2} = \lim\limits_{n \to \infty}R_{2} = \lim\limits_{n \to \infty}\frac{1}{n}\ln\binom{\Ka}{t} = \lim\limits_{n \to \infty}\frac{1}{n}\ln\binom{\Ka}{\theta\Ka} = \mu H\br{\theta},
\]
where $H\br{\theta} = -\theta\ln\br{\theta} - \br{1 - \theta}\ln\br{1 - \theta}$ is entropy function. Moreover,
\[
\lim\limits_{n \to \infty}\frac{1}{-n}\ln\br{\exp\brs{-nE_{1}} + \exp\brs{-nE_{2}}} = \min\brc{E_{1}, E_{2}}.
\]

Now, we are ready to formulate the achievability bounds obtained in Section~\ref{ch4:sec_ach_bounds} in the asymptotic regime.

For Theorem~\ref{th:polyanskiy_bound}, we have
\begin{equation}
\label{eq:asympt_polyanskiy_bound}
E_{\theta} = \max\limits_{0 \leq \rho_{1},\rho_{2} \leq 1} \brs{-\rho_{1}\rho_{2}\Tilde{R}_{1} - \rho_{1}\Tilde{R}_{2} + E_{0}\br{\rho_{1}, \rho_{2}} },
\end{equation}
where
\[
E_0\br{\rho_{1}, \rho_{2}} = \frac{1}{2}\br{\rho_1 a + \log(1-2b\rho_1)},
\]
\[
a = \rho_{2} \log (1+2\lambda\theta P_{tot}) + \log (1+2\mu\theta P_{tot}), \quad
b = \rho_{2} \lambda - \frac{\mu}{1+2\mu\theta P_{tot}},
\]
\[
\mu = \frac{\rho_{2} \lambda}{1+2\lambda\theta P_{tot}}, \quad 
\lambda = \frac{\theta P_{tot} -1 + \sqrt{D}}{4 (1+ \rho_{1} \rho_{2}) \theta P_{tot}},
\]
\[
D = (\theta P_{tot}-1)^2 + 4\theta P_{tot} \frac{1 + \rho_{1} \rho_{2}}{1 + \rho_{2}}.
\]

For Theorem~\ref{th:gauss_codebook}, we can write
\begin{equation}
\label{eq:asympt_gauss_codebook}
E_{\theta} = {\max\limits_{\alpha, \beta \geq 0}{\min\br{\Tilde{E}_{1}, \Tilde{E}_{2}}}},
\end{equation}
where
\[
\Tilde{E}_{1} = \max\limits_{u,v > 0, \lambda_{\Tilde{\bA}} > 0}\brc{-\Tilde{R}_{1} - \Tilde{R}_{2} + \frac{1}{2}\log{\det}{\br{\mathbf{I}_{3} - 2\Tilde{\bA}}} - v\beta},
\]
\[
\Tilde{E}_{2} = \max\limits_{\delta > 0, \lambda_{\Tilde{\bB}} > 0}\brc{-\Tilde{R}_{2} + \frac{1}{2}\log{\det{ \br{\mathbf{I}_{2} - 2\delta \Tilde{\bB}} }} + \delta\beta},
\]
\[
\Tilde{\bA} = \br{\begin{array}{rrr}
\br{\alpha - 1}v & \br{\alpha v - u}\sqrt{\theta P_{tot}} & u\sqrt{\theta P_{tot}} \\
\br{\alpha v - u}\sqrt{\theta P_{tot}} & \br{\alpha v - u} \theta P_{tot} & u\theta P_{tot} \\
u\sqrt{\theta P_{tot}} & u\theta P_{tot} & -u\theta P_{tot} \\
\end{array}},
\]
\[
\Tilde{\bB} = \br{\begin{array}{rr}
1 - \alpha & -\alpha\sqrt{\theta P_{tot}} \\
-\alpha\sqrt{\theta P_{tot}} & -\alpha \theta P_{tot} \\
\end{array}},
\]
and $\lambda_{\Tilde{\bA}}$ and $\lambda_{\Tilde{\bB}}$ are the minimum eigenvalues of $\bI_{3} - 2\Tilde{\bA}$ and $\bI_{2} - 2\delta\Tilde{\bB}$ as previous.

Finally, for Theorem~\ref{th:comb_tricks}, we obtain
\begin{equation}
\label{eq:asympt_comb_tricks}
E_{\theta} = \max\limits_{\alpha,\tau > 0}\min\brc{\Tilde{E}_{1}, \Tilde{E}_{2}'},
\end{equation}
\[
\Tilde{E}_{1}=\max\limits_{0 \leq \gamma \leq 1}\brc{-\gamma \Tilde{R}_{1} - \Tilde{R}_{2} + E_{1}\br{\gamma}},
\]
\[
\Tilde{E}_{2} = \max\limits_{\delta > 0}\brc{E_{2}\br{\delta}},
\]
\[
E_{1}\br{\gamma} = \frac{1}{2}\br{\gamma \log(1+\theta P_{tot}) + \log\br{1 - \gamma^2 (1-\rho_e^2)} -\gamma \tau},
\]
\[
E_{2}\br{\delta} = \frac{1}{2} \br{\log\left(1 - \delta^2 (1-\rho_r^2) \right) + \tau \delta},
\]
\[
\rho_e^2 = \frac{(1 + \alpha \theta P_{tot})^2}{(1 + \alpha^2 P_{tot}) (1 + \theta P_{tot})}, \quad
\rho_r^2 = \frac{1}{1+\alpha^2 P_{tot}}.
\]

In \cite{ZPT-isit19}, the best asymptotic achievability bound was obtained by applying Gordon's lemma (see Section~\ref{ch4:sec:gordon_lemma}):
\begin{equation}
\label{eq:asympt_gordon_lemma}
E_{\theta} = -\Tilde{R}_{1} - \Tilde{R}_{2} + \frac{1}{2}\ln{\br{1 + 2\lambda\theta P_{tot}}} + \lambda\frac{\psi\br{\theta, \mu}}{1 + 2\lambda\theta P_{tot}} - \lambda,
\end{equation}
where
\[
\lambda = \frac{\theta P_{tot} + \sqrt{\br{\theta P_{tot}}^{2} + 4\psi\br{\theta,\mu}^{2} - 2}}{4\theta P_{tot}},
\]
\[
\psi\br{\theta,\mu} = \sqrt{1 + \theta P_{tot}} - \gamma\br{\theta}\sqrt{\mu P_{tot}},
\]
\[
\gamma\br{\theta} = \frac{1}{\sqrt{2\pi}}\exp\brs{-\frac{1}{2}\br{Q^{-1}\br{\theta}}^{2}}.
\]

\section{Numerical results}
\begin{figure}
\centering
\includegraphics{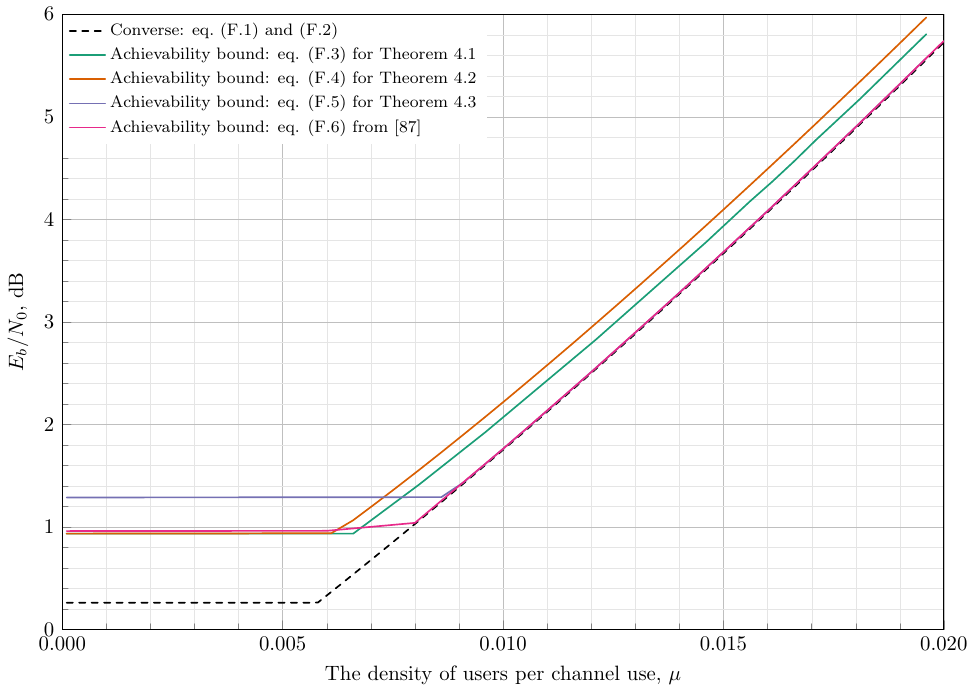}
\caption{Asymptotic bounds for $\epsilon = 10^{-3}$ and $k=100$.}
\label{fig:ebno_mu}
\end{figure}

Let us consider a setup with a \ac{pupe} constraint of $\epsilon = 10^{-3}$ and $k=100$ information bits transmitted by each user. In \Fig{fig:ebno_mu}, we analyze the minimum energy per bit ($E_b/N_0$) as a function of the user density per channel use, $\mu$.

Surprisingly, the energy per bit remains nearly constant until reaching a critical value of $\mu$ (for the converse bound, this value is approximately $\mu \approx 0.006$). The behavior of Theorems~\ref{th:polyanskiy_bound}, \ref{th:gauss_codebook}, and \ref{th:comb_tricks} is almost the same in both the finite-length and asymptotic regimes. Specifically, Theorems~\ref{th:polyanskiy_bound} and \ref{th:gauss_codebook} outperform Theorem~\ref{th:comb_tricks} for small values of $\mu$, but the latter performs better for larger values of $\mu$.

Interestingly, Theorem~\ref{th:gauss_codebook} loses to Theorem~\ref{th:polyanskiy_bound} for large values of $\mu$ in the asymptotic regime, although they almost coincide in the finite-length regime. It is also noteworthy that Theorem~\ref{th:comb_tricks} achieves the converse bound for large values of $\mu$. The best achievability bound is the one from \cite{ZPT-isit19}, although it slightly underperforms compared to Theorems~\ref{th:polyanskiy_bound} and \ref{th:gauss_codebook} at small values of $\mu$.

The most interesting observation is the gap of approximately $0.8$~dB between the converse bound and the achievability bounds for small values of user density, although, the bounds coincide for larger values of user density.
\chapter{Some useful lemmas}\label{a:expectations}

In this chapter, we provide an overview of some useful lemmas that are used throughout this monograph. We present the Chernoff bound (Lemma~\ref{lemma:chernoff}) and a closed-form solution for the expectation of a quadratic form exponent (Lemma~\ref{lem:quad_form_rv}).

\begin{lemma}[Chernoff bound, \cite{cover2012elements}]\label{lemma:chernoff}
Let $\chi_1$ and $\chi_2$ be arbitrary random variables. Then, for any $u, v > 0$, the following bounds hold:
\[
\prob{\chi_1 \geq 0} \leq \mathbb{E}_{\chi_1} \left[ \exp\left( u \chi_1 \right) \right]
\]
and
\[
\prob{\chi_1 \geq 0, \chi_2 \geq 0} \leq \mathbb{E}_{\chi_1, \chi_2} \left[ \exp\left( u \chi_1 + v \chi_2 \right) \right].
\]
\end{lemma}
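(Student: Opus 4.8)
The plan is to prove both bounds from the single fundamental fact that for a nonnegative exponent weight and a random variable, $\mathbbm{1}\{\chi \geq 0\} \leq \exp(u\chi)$ whenever $u > 0$. First I would establish the single-variable case. Fix any $u > 0$ and observe that the indicator $\mathbbm{1}\{\chi_1 \geq 0\}$ is pointwise dominated by $\exp(u\chi_1)$: when $\chi_1 \geq 0$ we have $u\chi_1 \geq 0$ so $\exp(u\chi_1) \geq 1 = \mathbbm{1}\{\chi_1 \geq 0\}$, and when $\chi_1 < 0$ the indicator is $0$ while $\exp(u\chi_1) > 0$. Taking expectations of both sides and using monotonicity of expectation, together with $\Pr[\chi_1 \geq 0] = \mathbb{E}[\mathbbm{1}\{\chi_1 \geq 0\}]$, yields
\[
\Pr[\chi_1 \geq 0] = \mathbb{E}_{\chi_1}\left[\mathbbm{1}\{\chi_1 \geq 0\}\right] \leq \mathbb{E}_{\chi_1}\left[\exp(u\chi_1)\right].
\]

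For the joint bound I would proceed analogously, now dominating the product indicator. Fix $u, v > 0$ and note the pointwise inequality
\[
\mathbbm{1}\{\chi_1 \geq 0,\, \chi_2 \geq 0\} = \mathbbm{1}\{\chi_1 \geq 0\}\,\mathbbm{1}\{\chi_2 \geq 0\} \leq \exp(u\chi_1)\exp(v\chi_2) = \exp(u\chi_1 + v\chi_2).
\]
On the event $\{\chi_1 \geq 0, \chi_2 \geq 0\}$ the left side is $1$ and the exponent $u\chi_1 + v\chi_2$ is nonnegative so the right side is at least $1$; off this event the left side is $0$ and the right side is strictly positive, so the inequality holds everywhere. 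Taking expectations over the joint distribution of $(\chi_1, \chi_2)$ gives
\[
\Pr[\chi_1 \geq 0,\, \chi_2 \geq 0] \leq \mathbb{E}_{\chi_1, \chi_2}\left[\exp(u\chi_1 + v\chi_2)\right].
\]

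I do not expect a genuine obstacle here, since the statement is the elementary Markov/Chernoff device and both inequalities are immediate from pointwise domination of indicators by exponentials followed by monotonicity of expectation. The only point requiring a word of care is the joint case: one must check the domination holds on \emph{all four} sign regions of $(\chi_1, \chi_2)$, not merely on the event of interest, so that the pointwise inequality is legitimately integrated against the full joint law. Once that case analysis is recorded, both displays follow directly, and no integrability or distributional assumptions beyond the existence of the stated expectations are needed (the bounds are trivially true when the right-hand expectation is infinite).
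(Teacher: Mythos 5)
Your proof is correct: the pointwise domination $\1_{\{\chi \geq 0\}} \leq \exp(u\chi)$ for $u>0$, followed by monotonicity of expectation, is exactly the standard argument, and your case check over all sign regions in the joint case is the right point to be careful about. The paper itself gives no proof for this lemma (it defers to the cited reference), and your argument is precisely the one that reference supplies, so there is nothing further to reconcile.
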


\begin{lemma}[Theorem 3.2a.1, \cite{mathai1992quadratic}]\label{lem:quad_form_rv}
Let $\be \sim \mathcal{N}(\bzero, \bI_n)$, $\mathbf{b}$ be an arbitrary vector of length $n$, $\bA$ be a symmetric matrix of size $n \times n$. If $\bI_{n} - 2\bA$ is positive definite, then
\begin{flalign*}
&\mathbb{E}_{\be}\exp\brs{ \be^{T}\bA\be + \mathbf{b}^{T}\be} = \nonumber \\
&= \exp\brs{-\frac{1}{2}\log\det\br{\bI_{n} - 2\bA} + \frac{1}{2}\mathbf{b}^{T}\br{\bI_{n}-2\bA}^{-1}\mathbf{b}}. \nonumber
\end{flalign*}
\end{lemma}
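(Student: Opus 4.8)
The plan is to evaluate the expectation directly as a Gaussian integral and reduce it to a single matrix completion of the square. Writing out the standard normal density, I would start from
$$
\mathbb{E}_{\be}\exp\brs{\be^{T}\bA\be + \mathbf{b}^{T}\be} = \frac{1}{(2\pi)^{n/2}}\int_{\mathbb{R}^{n}} \exp\brs{\be^{T}\bA\be + \mathbf{b}^{T}\be - \tfrac{1}{2}\be^{T}\be}\, d\be.
$$
First I would merge the two quadratic terms, using that $\be^{T}\bA\be - \tfrac{1}{2}\be^{T}\be = -\tfrac{1}{2}\be^{T}\dmat{M}\be$ with $\dmat{M} = \bI_{n} - 2\bA$. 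The hypothesis that $\dmat{M}$ is positive definite is exactly what forces the integrand to decay in every direction, so the integral converges; this is the one place where the assumption is genuinely needed, and I would flag it explicitly.

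Next I would complete the square in matrix form. Since $\dmat{M}$ is symmetric positive definite it is invertible, and
$$
-\tfrac{1}{2}\be^{T}\dmat{M}\be + \mathbf{b}^{T}\be = -\tfrac{1}{2}\br{\be - \dmat{M}^{-1}\mathbf{b}}^{T}\dmat{M}\br{\be - \dmat{M}^{-1}\mathbf{b}} + \tfrac{1}{2}\mathbf{b}^{T}\dmat{M}^{-1}\mathbf{b}.
$$
The second term is constant in $\be$, so it factors out of the integral and produces the desired factor $\exp\br{\tfrac{1}{2}\mathbf{b}^{T}\dmat{M}^{-1}\mathbf{b}}$. The remaining integrand is a shifted centered Gaussian, and by translation invariance of Lebesgue measure I substitute $\be' = \be - \dmat{M}^{-1}\mathbf{b}$ to remove the shift, leaving $\int_{\mathbb{R}^{n}} \exp\br{-\tfrac{1}{2}\be'^{T}\dmat{M}\be'}\,d\be'$.

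To evaluate this normalization integral I would invoke the spectral theorem: write $\dmat{M} = \dmat{Q}\,\mathrm{diag}(\lambda_{1},\dots,\lambda_{n})\,\dmat{Q}^{T}$ with $\dmat{Q}$ orthogonal and each $\lambda_{i} > 0$, then change variables via the rotation $\be'' = \dmat{Q}^{T}\be'$, whose Jacobian is $1$. This decouples the integral into a product of one-dimensional Gaussian integrals, giving $\prod_{i}\sqrt{2\pi/\lambda_{i}} = (2\pi)^{n/2}(\det\dmat{M})^{-1/2}$. The prefactor $(2\pi)^{-n/2}$ cancels, leaving $(\det\dmat{M})^{-1/2}$, which I rewrite as $\exp\br{-\tfrac{1}{2}\log\det\dmat{M}}$; combining with the constant pulled out earlier yields the claimed formula. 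The argument is essentially routine, so there is no deep obstacle — the only points demanding care are the convergence bookkeeping tied to positive definiteness and ensuring the completion of the square is performed with the symmetric matrix $\dmat{M} = \bI_{n} - 2\bA$ rather than with $\bA$ itself.
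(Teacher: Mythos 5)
Your derivation is correct: merging the quadratic terms into $-\tfrac{1}{2}\be^{T}(\bI_{n}-2\bA)\be$, completing the square with the symmetric positive definite matrix $\dmat{M}=\bI_{n}-2\bA$, and evaluating the resulting Gaussian normalization via the spectral theorem is the standard proof of this identity, and you correctly identify positive definiteness as the hypothesis guaranteeing convergence. The paper itself gives no proof --- it simply cites Theorem 3.2a.1 of Mathai and Provost --- so your argument supplies a complete, self-contained justification consistent with that reference.
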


\begin{corollary}\label{cor:quad_form_rv_corollary}
Let $\rvec{z} \sim \mathcal{N}(\bzero, \dmat{I}_n)$, and let $\dvec{b}$ be a fixed vector of length $n$. Then, for any $a > 0$ and $\gamma > -\frac{1}{2a}$, we have
\[
\mathbb{E}_{\rvec{z}}\exp\brs{-\gamma\norm{\sqrt{a}\rvec{z} + \mathbf{b}}^{2}_{2}}=\frac{\exp\brs{-\frac{\gamma\norms{\mathbf{b}}}{1 + 2a\gamma}}}{\br{1 + 2a\gamma}^{\frac{n}{2}}}.
\]
\end{corollary}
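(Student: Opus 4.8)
**The plan is to derive Corollary~\ref{cor:quad_form_rv_corollary} as a direct specialization of Lemma~\ref{lem:quad_form_rv}.** The target expectation is
\[
\mathbb{E}_{\rvec{z}}\exp\brs{-\gamma\norm{\sqrt{a}\rvec{z} + \mathbf{b}}^{2}_{2}},
\]
and the first step is to expand the squared norm so that it appears as a quadratic-plus-linear form in $\rvec{z}$. Writing $\norm{\sqrt{a}\rvec{z}+\mathbf{b}}^2 = a\,\rvec{z}^T\rvec{z} + 2\sqrt{a}\,\mathbf{b}^T\rvec{z} + \norms{\mathbf{b}}$, the exponent becomes $-\gamma a\,\rvec{z}^T\rvec{z} - 2\gamma\sqrt{a}\,\mathbf{b}^T\rvec{z} - \gamma\norms{\mathbf{b}}$. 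The constant term $-\gamma\norms{\mathbf{b}}$ pulls out of the expectation as a scalar factor, leaving an expression of exactly the shape $\mathbb{E}_{\rvec{z}}\exp[\rvec{z}^T\dmat{A}\rvec{z} + \mathbf{b}'^T\rvec{z}]$ covered by Lemma~\ref{lem:quad_form_rv}.

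Next I would read off the matching data for the lemma: set $\dmat{A} = -\gamma a\,\dmat{I}_n$ (symmetric, as required) and $\mathbf{b}' = -2\gamma\sqrt{a}\,\mathbf{b}$. The positive-definiteness hypothesis of the lemma, namely that $\dmat{I}_n - 2\dmat{A} = (1+2a\gamma)\dmat{I}_n$ be positive definite, reduces to $1 + 2a\gamma > 0$, i.e. $\gamma > -\tfrac{1}{2a}$, which is precisely the stated hypothesis of the corollary. Substituting into the lemma's formula gives two pieces to evaluate. The determinant term is $-\tfrac12\log\det\br{(1+2a\gamma)\dmat{I}_n} = -\tfrac{n}{2}\log(1+2a\gamma)$, which exponentiates to the factor $(1+2a\gamma)^{-n/2}$. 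The quadratic-form term is $\tfrac12\mathbf{b}'^T\br{(1+2a\gamma)\dmat{I}_n}^{-1}\mathbf{b}' = \tfrac{1}{2(1+2a\gamma)}\cdot 4\gamma^2 a\,\norms{\mathbf{b}} = \tfrac{2a\gamma^2\norms{\mathbf{b}}}{1+2a\gamma}$.

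Finally I would collect the three contributions—the pulled-out constant $-\gamma\norms{\mathbf{b}}$, the quadratic term $+\tfrac{2a\gamma^2\norms{\mathbf{b}}}{1+2a\gamma}$, and the determinant factor—and simplify the coefficient of $\norms{\mathbf{b}}$. The exponent from the $\norms{\mathbf{b}}$ terms is $\norms{\mathbf{b}}\br{-\gamma + \tfrac{2a\gamma^2}{1+2a\gamma}} = \norms{\mathbf{b}}\cdot\tfrac{-\gamma(1+2a\gamma)+2a\gamma^2}{1+2a\gamma} = -\tfrac{\gamma\norms{\mathbf{b}}}{1+2a\gamma}$, which is exactly the numerator in the claimed result. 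Combining with the determinant factor yields the stated closed form. There is no real obstacle here—the only point requiring minor care is the algebraic cancellation in the last step and the bookkeeping of the factors of $\sqrt{a}$ and $2$ when forming $\mathbf{b}'$; everything else is a mechanical substitution into an already-established lemma.
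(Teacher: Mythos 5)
Your derivation is correct and is exactly the intended route: the paper states this as an unproved corollary of Lemma~\ref{lem:quad_form_rv}, and the specialization $\dmat{A}=-\gamma a\,\dmat{I}_n$, $\mathbf{b}'=-2\gamma\sqrt{a}\,\mathbf{b}$ with the cancellation $-\gamma+\tfrac{2a\gamma^2}{1+2a\gamma}=-\tfrac{\gamma}{1+2a\gamma}$ is precisely the mechanical substitution the authors rely on. The positivity condition $\gamma>-\tfrac{1}{2a}$ is also correctly identified as the lemma's positive-definiteness hypothesis.
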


\begin{corollary}\label{cor:quad_form_corr_rv_corollary}
Let $\rvec{z} = [\rscalar{z}_1, \rscalar{z}_2] \sim \mathcal{N}(\bzero, \dmat{\Gamma})$ with
\[
\dmat{\Gamma} = \begin{bmatrix}
1 & \rho \\
\rho & 1
\end{bmatrix}.
\]
Then, for any $\lambda_{1}, \lambda_{2}$ satisfying $\lambda_{1}\lambda_{2}\rho^{2} > \br{\lambda_1 - 1}\br{1 + \lambda_2}$, we have
\[
\mathbb{E}_{\rscalar{z}_1,\rscalar{z}_2}\exp\brs{\frac{\lambda_{1}}{2}\rscalar{z}_1^2 - \frac{\lambda_{2}}{2}\rscalar{z}_2^2} = \frac{1}{\br{1 - \lambda_{1} + \lambda_{2} - \lambda_{1}\lambda_{2}\br{1 - \rho^{2}}}^{\frac{1}{2}}}.
\]
\end{corollary}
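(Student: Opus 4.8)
The plan is to reduce the statement to Lemma~\ref{lem:quad_form_rv}, which applies only to a standard normal vector, by first whitening the correlated pair. Since $\dmat{\Gamma}$ is positive definite, I would write $\rvec{z} = \dmat{\Gamma}^{1/2}\be$ with $\be \sim \mathcal{N}\br{\bzero, \bI_2}$, and observe that the exponent is a pure quadratic form: with $\bA = \tfrac12\,\mathrm{diag}\br{\lambda_1, -\lambda_2}$ we have $\tfrac{\lambda_1}{2}\rscalar{z}_1^2 - \tfrac{\lambda_2}{2}\rscalar{z}_2^2 = \rvec{z}^T\bA\rvec{z} = \be^T\widetilde{\bA}\be$, where $\widetilde{\bA} = \dmat{\Gamma}^{1/2}\bA\dmat{\Gamma}^{1/2}$ is symmetric. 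Thus the target expectation is exactly $\mathbb{E}_{\be}\exp\brs{\be^T\widetilde{\bA}\be}$, i.e.\ the quantity handled by Lemma~\ref{lem:quad_form_rv} with $\dvec{b} = \bzero$.

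Applying that lemma gives $\mathbb{E}_{\be}\exp\brs{\be^T\widetilde{\bA}\be} = \det\br{\bI_2 - 2\widetilde{\bA}}^{-1/2}$, provided $\bI_2 - 2\widetilde{\bA}$ is positive definite. To evaluate the determinant I would avoid computing $\dmat{\Gamma}^{1/2}$ explicitly and instead use the cyclic determinant identity $\det\br{\bI - XY} = \det\br{\bI - YX}$, which yields $\det\br{\bI_2 - 2\dmat{\Gamma}^{1/2}\bA\dmat{\Gamma}^{1/2}} = \det\br{\bI_2 - 2\bA\dmat{\Gamma}}$. A direct $2\times2$ computation then gives $\bI_2 - 2\bA\dmat{\Gamma} = \begin{pmatrix} 1-\lambda_1 & -\lambda_1\rho \\ \lambda_2\rho & 1+\lambda_2 \end{pmatrix}$, whose determinant is $\br{1-\lambda_1}\br{1+\lambda_2} + \lambda_1\lambda_2\rho^2 = 1 - \lambda_1 + \lambda_2 - \lambda_1\lambda_2\br{1-\rho^2}$, matching the expression under the square root in the statement.

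It remains to line up the hypotheses. The assumed inequality $\lambda_1\lambda_2\rho^2 > \br{\lambda_1-1}\br{1+\lambda_2}$ is precisely positivity of $\det\br{\bI_2 - 2\bA\dmat{\Gamma}}$, which by the cyclic identity above equals $\det\br{\bI_2 - 2\widetilde{\bA}}$. For a symmetric $2\times2$ matrix a positive determinant together with one positive diagonal entry forces positive definiteness, and the relevant entry is controlled by $\lambda_2$ (nonnegative in every application of this corollary), so $\bI_2 - 2\widetilde{\bA}\succ 0$ and Lemma~\ref{lem:quad_form_rv} is legitimately invoked. The point to be careful about — rather than any genuine difficulty — is exactly this whitening and congruence bookkeeping: verifying that the non-standard covariance is fully absorbed by the substitution $\rvec{z} = \dmat{\Gamma}^{1/2}\be$, and that the convergence condition of the lemma coincides with the scalar inequality assumed, rather than with a strictly stronger positive-definiteness requirement.
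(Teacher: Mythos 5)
Your proof is correct and follows exactly the route the paper intends: the corollary is stated without proof as a direct consequence of Lemma~\ref{lem:quad_form_rv}, and your whitening $\rvec{z}=\dmat{\Gamma}^{1/2}\be$ plus the cyclic identity $\det\br{\bI-2\dmat{\Gamma}^{1/2}\bA\dmat{\Gamma}^{1/2}}=\det\br{\bI-2\bA\dmat{\Gamma}}$ reproduces the stated determinant and matches the hypothesis. One small tidy-up: rather than inspecting a diagonal entry of $\bI_2-2\widetilde{\bA}$, it is cleaner to note that its trace equals $2-\lambda_1+\lambda_2$, which together with $\lambda_2\geq 0$ (true in every invocation) and the determinant inequality forces $\lambda_1<1+\lambda_2$, hence positive trace and positive definiteness — you are right that the determinant condition alone would not suffice for arbitrary signs of $\lambda_1,\lambda_2$.
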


\begin{lemma}[\cite{mathai1992quadratic}]\label{lemma:qf_inv}
Let $\rvec{x} \sim \mathcal{CN}\left(0, \dmat{\Sigma}\right)$, and let $\dmat{A}$ be a Hermitian matrix. If $\dmat{I} -\dmat{A} \dmat{\Sigma}$ is positive definite, then
\[
\mathbb{E}_{\rvec{x}} \left[ \Exp{\rvec{x}^H \dmat{A} \rvec{x} } \right] = \determ{\dmat{I} -\dmat{A} \dmat{\Sigma}}^{-1}.
\]
\end{lemma}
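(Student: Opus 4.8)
Let $\rvec{x} \sim \mathcal{CN}\left(0, \dmat{\Sigma}\right)$ and let $\dmat{A}$ be Hermitian. If $\dmat{I} -\dmat{A} \dmat{\Sigma}$ is positive definite, then
\[
\mathbb{E}_{\rvec{x}} \left[ \Exp{\rvec{x}^H \dmat{A} \rvec{x} } \right] = \determ{\dmat{I} -\dmat{A} \dmat{\Sigma}}^{-1}.
\]

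The plan is to compute the expectation directly from the complex Gaussian density, since the quantity is just a Gaussian integral of an exponential with a quadratic form in the exponent. First I would write out the density of $\rvec{x}$ explicitly. For a zero-mean circularly symmetric complex Gaussian vector of dimension $n$ with covariance $\dmat{\Sigma}$ (positive definite), the density is
\[
p(\dvec{x}) = \frac{1}{\pi^{n}\determ{\dmat{\Sigma}}} \Exp{-\dvec{x}^{H}\dmat{\Sigma}^{-1}\dvec{x}}.
\]
Then the expectation becomes
\[
\mathbb{E}_{\rvec{x}} \left[ \Exp{\rvec{x}^H \dmat{A} \rvec{x} } \right] = \frac{1}{\pi^{n}\determ{\dmat{\Sigma}}} \int_{\mathbb{C}^{n}} \Exp{-\dvec{x}^{H}\br{\dmat{\Sigma}^{-1} - \dmat{A}}\dvec{x}}\, d\dvec{x},
\]
so the task reduces to evaluating a complex Gaussian integral whose ``inverse covariance'' is $\dmat{M} \triangleq \dmat{\Sigma}^{-1} - \dmat{A}$.

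Next I would verify that $\dmat{M}$ is Hermitian positive definite, which is exactly where the hypothesis enters. Since $\dmat{\Sigma}$ is Hermitian positive definite and $\dmat{A}$ is Hermitian, $\dmat{M}$ is Hermitian. Writing $\dmat{M} = \dmat{\Sigma}^{-1}\br{\dmat{I} - \dmat{\Sigma}\dmat{A}}$, and noting $\dmat{\Sigma}\dmat{A}$ is similar to $\dmat{A}\dmat{\Sigma}$ (hence has the same eigenvalues), the assumption that $\dmat{I} - \dmat{A}\dmat{\Sigma}$ is positive definite guarantees the eigenvalues of $\dmat{A}\dmat{\Sigma}$ are real and less than $1$, which forces $\dmat{M}$ to be positive definite; this makes the integral convergent. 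I would then invoke the standard complex Gaussian integration identity, namely that for Hermitian positive definite $\dmat{M}$,
\[
\int_{\mathbb{C}^{n}} \Exp{-\dvec{x}^{H}\dmat{M}\dvec{x}}\, d\dvec{x} = \frac{\pi^{n}}{\determ{\dmat{M}}},
\]
which follows by diagonalizing $\dmat{M}$ and reducing to a product of scalar integrals $\int_{\mathbb{C}} e^{-\lambda |z|^2}\,dz = \pi/\lambda$. Substituting gives
\[
\mathbb{E}_{\rvec{x}} \left[ \Exp{\rvec{x}^H \dmat{A} \rvec{x} } \right] = \frac{1}{\determ{\dmat{\Sigma}}\determ{\dmat{\Sigma}^{-1} - \dmat{A}}} = \frac{1}{\determ{\dmat{\Sigma}\br{\dmat{\Sigma}^{-1}-\dmat{A}}}} = \frac{1}{\determ{\dmat{I} - \dmat{\Sigma}\dmat{A}}}.
\]
Finally I would use $\determ{\dmat{I}-\dmat{\Sigma}\dmat{A}} = \determ{\dmat{I}-\dmat{A}\dmat{\Sigma}}$ (the identity $\determ{\dmat{I}-\dmat{B}\dmat{C}}=\determ{\dmat{I}-\dmat{C}\dmat{B}}$) to land on the claimed form.

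I do not expect a serious obstacle here, as this is a textbook Gaussian-integral computation; the one point requiring care is the sign/positive-definiteness bookkeeping. The main subtlety is confirming that the positive-definiteness hypothesis on $\dmat{I} - \dmat{A}\dmat{\Sigma}$ (a product that need not be Hermitian in general) correctly translates into positive-definiteness of the genuinely Hermitian matrix $\dmat{M} = \dmat{\Sigma}^{-1}-\dmat{A}$ that governs convergence. I would handle this cleanly by working with the congruent form $\dmat{\Sigma}^{1/2}\dmat{M}\dmat{\Sigma}^{1/2} = \dmat{I} - \dmat{\Sigma}^{1/2}\dmat{A}\dmat{\Sigma}^{1/2}$, which is Hermitian and congruent (hence equidefinite) to $\dmat{M}$, and is also similar to $\dmat{I}-\dmat{A}\dmat{\Sigma}$; this ties the hypothesis directly to convergence and closes the argument. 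The remaining determinant manipulations are routine.
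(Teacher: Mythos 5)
Your proof is correct. Note that the paper does not actually prove this lemma---it is stated as a citation to Mathai and Provost---so there is no in-text argument to compare against; your derivation is a valid self-contained proof of the cited result. The computation is the standard one: combining the $\mathcal{CN}(0,\dmat{\Sigma})$ density with the integrand reduces the expectation to $\frac{1}{\pi^{n}\determ{\dmat{\Sigma}}}\int_{\mathbb{C}^{n}}\exp\left(-\dvec{x}^{H}(\dmat{\Sigma}^{-1}-\dmat{A})\dvec{x}\right)d\dvec{x}$, the complex Gaussian integral gives $\pi^{n}/\determ{\dmat{\Sigma}^{-1}-\dmat{A}}$, and $\determ{\dmat{\Sigma}}\determ{\dmat{\Sigma}^{-1}-\dmat{A}}=\determ{\dmat{I}-\dmat{\Sigma}\dmat{A}}=\determ{\dmat{I}-\dmat{A}\dmat{\Sigma}}$ closes the argument. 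You also correctly identified and resolved the one genuinely delicate point: the hypothesis is phrased in terms of the generally non-Hermitian product $\dmat{I}-\dmat{A}\dmat{\Sigma}$, while convergence of the integral is governed by the Hermitian matrix $\dmat{\Sigma}^{-1}-\dmat{A}$; the congruence $\dmat{\Sigma}^{1/2}(\dmat{\Sigma}^{-1}-\dmat{A})\dmat{\Sigma}^{1/2}=\dmat{I}-\dmat{\Sigma}^{1/2}\dmat{A}\dmat{\Sigma}^{1/2}$, together with the similarity of the latter to $\dmat{I}-\dmat{A}\dmat{\Sigma}$, ties the two together exactly as needed. This mirrors the structure of the real-valued analogue (Lemma~\ref{lem:quad_form_rv}) that the paper also states without proof.
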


\addcontentsline{toc}{chapter}{Bibliography}
\emergencystretch 2em 
\bibliographystyle{unsrt}
\bibliography{main}

\end{document}